\newcolumntype{?}{!{\vrule width 1pt}}
\newcommand{\re}{{\rm e}}
\newcommand{\ri}{{\rm i}}
\newcommand{\e}{\mathbf{e}_1}
\newcommand{\mb}{{\mathsf{b}}}
\def\IZ{{\mathbb Z}}
\def\IR{{\mathbb R}}
\def\IC{{\mathbb C}}
\def\IH{{\mathbb H}}
\def\IQ{{\mathbb Q}}
\def\IN{{\mathbb N}}
\def\IP{{\mathbb P}}
\def\IF{{\mathbb F}}
\def\frakg{\mathfrak{g}}
\def\frakf{\mathfrak{f}}
\def\tfrakg{\tilde{\mathfrak{g}}}
\def\tfrakf{\tilde{\mathfrak{f}}}
\def\uk{\underline{k}}
\newcommand{\mO}{\mathsf{O}}
\newcommand{\mV}{\mathsf{V}}
\newcommand{\mx}{\mathsf{x}}
\newcommand{\my}{\mathsf{y}}
\newcommand{\CA}{{\cal A}}
\newcommand{\CB}{{\cal B}}
\newcommand{\CC}{{\cal C}}
\newcommand{\CF}{{\cal F}}
\newcommand{\CG}{{\cal G}}
\newcommand{\CS}{{\cal S}}
\newcommand{\CV}{{\cal V}}
\newcommand{\be}{\begin{equation}}
\newcommand{\ee}{\end{equation}}
\newcommand{\ba}{\begin{aligned}}
\newcommand{\ea}{\end{aligned}}
\newcommand{\borel}{\mathcal{B}}
\newtheorem{theorem}{Theorem}[section]
\newtheorem{prop}[theorem]{Proposition}
\newtheorem{lemma}[theorem]{Lemma}
\newtheorem{cor}[theorem]{Corollary}
\newtheorem{definition}{Definition}[section]
\newtheorem{conjecture}{Conjecture}
\newtheorem{rmk}{Remark}[section]
\title{\huge{\textbf Modular resurgence, $q$-Pochhammer symbols, and quantum operators from mirror curves}}
\author{Veronica Fantini$^a$ and Claudia Rella$^b$}
\affiliation{${}^a$LMO, Université Paris-Saclay, 91405 Orsay, France \\ ${}^b$IH\'ES, 91440 Bures-sur-Yvette, France}
\emailAdd{veronica.fantini@universite-paris-saclay.fr}
\emailAdd{rella@ihes.fr}
\keywords{Local weighted projective planes, 
mirror curves, 
$q$-Pochhammer symbols, 
quantum modularity,
resurgence, 
spectral traces.}
\abstract{
Building on the results of~\cite{FR1maths, FR1phys}, we study the resurgence of $q$-Pochhammer symbols and determine their summability and quantum modularity properties. We construct a new, infinite family of pairs of modular resurgent series from the asymptotic expansions of sums of $q$-Pochhammer symbols weighted by suitable Dirichlet characters. These weighted sums fit into the modular resurgence paradigm and provide further evidence supporting our conjectures in~\cite{FR1maths}. 
In the context of the topological string/spectral theory correspondence for toric Calabi--Yau threefolds, Kashaev and Mari\~no proved that the spectral traces of canonical quantum operators associated with local weighted projective planes can be expressed as sums of $q$-Pochhammer symbols. Exploiting this relation, we show that an exact strong-weak resurgent symmetry, first observed by the second author in~\cite{Rella22} and fully formalized in~\cite{FR1phys} for local $\IP^2$, applies to all local $\mathbb{P}^{m,n}$, albeit stripped of some of the underlying number-theoretic properties. Under some assumptions, these properties are restored when considering linear combinations of the spectral traces that reproduce the weighted sums above.
}
\gdef\@fpheader{\null}
\begin{document}

\maketitle
\flushbottom

\section{Introduction}\label{sec:intro}

Originally motivated by complex Chern--Simons theory and quantum topology, the interplay between quantum modularity and resurgence has recently attracted increasing attention~\cite{aritm-resurgence-G, CCFGH, GGuM, garoufalidis_zagier_2023knots, sauzin, Garoufalidis_Zagier_2023, crew-goswami-osburn, CCKPG-revised, MS-R,Wheeler-3manifold,GGuMW}. Quantum modular forms, introduced by Zagier in~\cite{zagier_modular}, describe the controlled failure of modularity of certain $q$-series associated with knots and $3$-manifold invariants. In particular, because the asymptotic expansions of such $q$-series are often divergent, resurgence can be employed to characterize their quantum modular properties through the analysis of their divergent asymptotics. 
More generally, the theory of resurgence, developed by \'Ecalle in~\cite{EcalleI}, associates to a divergent formal power series a collection of exponentially small corrections along with a set of complex numbers, known as Stokes constants, which encode information about the large-order growth of the series and its non-perturbative data.

A first proposal for characterizing a class of quantum modular forms through the resurgent structure of their asymptotic expansions was presented in our earlier work~\cite{FR1maths}. There, we introduced the notion of a modular resurgent series: a Gevrey-1, simple resurgent asymptotic series whose Borel transform displays a single infinite tower of singularities, the secondary resurgent series are trivial, and the Stokes constants are the coefficients of an $L$-function. When obtained as the asymptotic expansion of a $q$-series, such modular resurgent series naturally occur in pairs satisfying the so-called modular resurgence paradigm and exhibit distinctive summability and quantum modularity properties. Examples of modular resurgent series arise in various contexts, though the definition was initially prompted by the resurgent analysis of the spectral trace of the toric Calabi--Yau (CY) threefold known as local $\IP^2$ studied in~\cite{Rella22, FR1phys}. 
In this paper, we continue this line of investigation by studying the asymptotic and number-theoretic properties of a new class of $q$-series constructed from the $q$-Pochhammer symbols. The latter appear as building blocks of the spectral traces of an infinite family of quantum operators canonically associated with the local weighted projective planes $\IP^{m,n}$, where $m,n \in \IZ_{>0}$.

\medskip 

Recall that the $q$-Pochhammer symbols, also known as $q$-shifted factorials, are functions of two variables defined by the finite products
\be \label{eq: qFactor}
(x; \, q)_m = \prod_{n=0}^{m-1} (1- x q^n) \, , \quad (x; \, q)_{-m} = \frac{1}{(x q^{-m}; \, q)_m} \, , \quad m \in \IZ_{>0} \, ,
\ee
with $(x; \, q)_0=1$. Extending the definition to infinite products, the (infinite) $q$-Pochhammer symbol is denoted by
\be \label{eq: dilog}
(x q^{\alpha}; \, q)_{\infty} = \prod_{n=0}^{\infty} (1- x q^{\alpha+n}) \, , \quad \alpha \in \IR \, ,
\ee
which is analytic in $x,q \in \IC$ with $|q| <1$.
For fixed $N\in\IZ_{\geq 2}$ and $k\in\IZ_N=\{1,\cdots, N \}$, we define the functions $f_{k,N}, g_{k,N}: \IH \to \IC$ as
\begin{equation}\label{eq:f_kN}
    f_{k,N}(y):=\log(\zeta_N^k;q)_\infty \, , \quad
    g_{k,N}(y):=\log(q^k;q^N)_\infty \, , 
\end{equation}
where $\zeta_N=\re^{2\pi\ri/N}$ and $q=\re^{2\pi\ri y}$. By slight abuse of nomenclature, we often refer to them as $q$-Pochhammer symbols in this text.
In the first part of this paper, we compute their asymptotic expansions for $y\to 0$ with $\Im(y)>0$, which we denote by $\tilde{f}_{k,N}(y)$ and $\tilde{g}_{k,N}(y)$, respectively. These are governed by simple resurgent Gevrey-1 asymptotic series whose full resurgent structure can be exactly solved. Indeed, their Borel transforms possess a single tower of simple poles repeated at all non-zero integer multiples of some fundamental constants along the imaginary axis in the Borel plane---that is, they reproduce the simplest instance of a peacock pattern~\cite{GuM,Rella22}.
As for the case study of~\cite{Rella22,FR1phys}, the corresponding sequences of Stokes constants $\{R_n^k\}$ and $\{S_n^k\}$, $n \in \IZ_{\ne 0}$, are expressed in closed form as arithmetic functions and generated by the discontinuities of $\tilde{f}_{k,N}(y)$ and $\tilde{g}_{k,N}(y)$, respectively, in appropriate variables. The latter can be simply written in terms of the same $q$-Pochhammer symbols in Eq.~\eqref{eq:f_kN}. In particular, the $q$-series generating the Stokes constants for $\tilde{f}_{k,N}$ (respectively, $\tilde{g}_{k,N}$) is explicitly determined by $g_{k,N}$ and $g_{N-k,N}$ (respectively, $f_{k,N}$ and $f_{N-k,N}$). Despite being reminiscent of it, this exchange of perturbative and non-perturbative information between the two $q$-Pochhammer symbols $f_{k,N}$ and $g_{k,N}$ does not fully replicate the modular resurgence paradigm of~\cite{FR1maths}. In fact, and importantly, their asymptotic expansions $\tilde{f}_{k,N}$ and $\tilde{g}_{k,N}$ do not possess one of the defining properties of modular resurgent series. Namely, the Dirichlet series built from the sequences of Stokes constants $\{ R_n^k \}$ and $\{ S_n^k \}$ are not necessarily $L$-functions. 
Nonetheless, the $q$-Pochhammer symbols $f_{k,N}$ and $g_{k,N}$ are holomorphic quantum modular forms in the sense of Zagier~\cite{zagier-talk}. 
\begin{theorem}[Thm.~\ref{thm:qm}]\label{thm:main-qm}
For every $k\in\IZ_N$, the functions $f_{k,N},g_{k,N}\colon\IH\to\IC$ in Eq.~\eqref{eq:f_kN} are holomorphic quantum modular functions for the group $\Gamma_N\subset\mathsf{SL}_2(\IZ)$ generated by the elements\footnote{Note that $\Gamma_N \subseteq \Gamma_1(N)$, where the equality is realized for $N=2,3,4$.}
\begin{equation} \label{eq: GammaN-generators}
    T=\begin{pmatrix}
        1 & 1\\
        0 & 1
    \end{pmatrix}\,, \quad \gamma_N=\begin{pmatrix}
        1 & 0\\
        N & 1
    \end{pmatrix}\, . 
\end{equation}
\end{theorem}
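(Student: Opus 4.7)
The plan is to verify the quantum-modular cocycle condition separately on each of the two generators of $\Gamma_N$. Since $q = e^{2\pi\ri y}$ is strictly invariant under $y \mapsto y+1$, the functions $f_{k,N}$ and $g_{k,N}$ satisfy $f_{k,N}(y+1)=f_{k,N}(y)$ and $g_{k,N}(y+1)=g_{k,N}(y)$ identically on $\IH$, so their $T$-cocycles vanish and trivially extend to all of $\IC$. The entire content of the theorem therefore reduces to the generator $\gamma_N$.

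The core step is to derive an $S$-type inversion formula of the shape
\[
\mathcal{C}^f_{k,N}(y) \;:=\; f_{k,N}(y)-F_{k,N}(-1/y)\;=\;\Psi^f_{k,N}(y),
\]
where $F_{k,N}$ is a companion $q$-series and $\Psi^f_{k,N}$ is explicit and holomorphic on $\IC\setminus\IR_{\leq 0}$; analogously for $g_{k,N}$. The natural tool is the Faddeev quantum dilogarithm
\[
\Phi_\mb(z)=\frac{(e^{2\pi\mb(z+c_\mb)};\,e^{2\pi\ri\mb^2})_\infty}{(e^{2\pi\mb^{-1}(z-c_\mb)};\,e^{-2\pi\ri\mb^{-2}})_\infty},\qquad c_\mb=\tfrac{\ri}{2}(\mb+\mb^{-1}),
\]
whose self-duality $\Phi_\mb=\Phi_{1/\mb}$ exchanges $q=e^{2\pi\ri\mb^2}$ with $\tilde q=e^{-2\pi\ri/\mb^2}$. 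Setting $\mb^2=y$ and choosing $z=z(k,N)$ so that the numerator matches $(\zeta_N^k;q)_\infty$ up to an explicit linear-exponential factor produces the inversion for $f_{k,N}$; the analogous choice with base $q^N$ in place of $q$ (so $\mb^2=Ny$, with the argument $-1/(N^2 y)$ playing the role of the inverted variable) gives the inversion for $g_{k,N}$. In both cases the accessory function $\Psi$ is a sum of a quadratic polynomial in $y$ and an explicit logarithmic term, hence holomorphic on the cut plane.

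The $\gamma_N$-cocycle is then assembled from the $S$-cocycle via the direct matrix identity $\gamma_N=S\,T^{-N}S^{-1}$ with $S:y\mapsto -1/y$. Since $f_{k,N}$, $g_{k,N}$ and their companions $F_{k,N}$ are all $T$-invariant, the inversion formula telescopes:
\[
f_{k,N}(\gamma_N y)-f_{k,N}(y)\;=\;\mathcal{C}^f_{k,N}(\gamma_N y)-\mathcal{C}^f_{k,N}(y)\;=\;\Psi^f_{k,N}(\gamma_N y)-\Psi^f_{k,N}(y),
\]
and the right-hand side inherits holomorphy on the open subset of $\IC$ where both $y$ and $\gamma_N y$ avoid the chosen cut, which contains all of $\IR$ except at most the orbits of $0$ under $\gamma_N$. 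The same argument, with the appropriate $N$-rescaling in the Faddeev parameter, handles $g_{k,N}$ and establishes quantum modularity for $\Gamma_N$.

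The principal obstacle is the first step: producing $\Psi^f_{k,N}$ and $\Psi^g_{k,N}$ in closed form and checking that the branches of $\log(\,\cdot\,;q)_\infty$ used to define $f_{k,N},g_{k,N}$ (by analytic continuation from large $\Im y$) agree with the canonical branch choices built into Faddeev's factorization. This requires careful bookkeeping of Gaussian corrections, linear shifts, and possible $\pm\pi\ri$ jumps inherited from the logarithm. A more intrinsic alternative, better adapted to the resurgent framework developed earlier in the paper, is to use the Borel--Laplace sums of $\tilde f_{k,N}$ and $\tilde g_{k,N}$: the lateral resummations $s_\pm$ reproduce $f_{k,N}$ and $g_{k,N}$ on $\IH$, and the Stokes jumps—already expressed through $g_{k,N},g_{N-k,N}$ and $f_{k,N},f_{N-k,N}$ at the dual argument—yield directly the holomorphic extension of the cocycle across $\IR_{>0}$, bypassing the explicit computation of $\Psi$.
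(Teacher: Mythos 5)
Your overall strategy matches the paper's: the $T$-cocycle is trivial by periodicity, and the $\gamma_N$-cocycle is controlled by a Faddeev quantum-dilogarithm identity. Your observation $\gamma_N = S\,T^{-N}S^{-1}$, combined with the $T^N$-invariance of the companion series, is just a different algebraic packaging of the paper's direct computation $h_{\gamma_N}[f_{k,N}] = F_k(\gamma_N y)-F_k(y)$ with $F_k(y)=f_{k,N}(y)-g_{\underline{k},N}(-\tfrac{1}{Ny})$, and the same telescoping cancellation occurs (you need $F_{k,N}(-1/\gamma_N y)=F_{k,N}(-1/y)$, i.e.\ period $N$ in the Fricke-inverted variable, which is period $1$ in the native variable of $g_{\underline{k},N}$).

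However, there is a genuine error in your description of the accessory function. You claim that $\Psi^f_{k,N}$ and $\Psi^g_{k,N}$ are ``a sum of a quadratic polynomial in $y$ and an explicit logarithmic term.'' That is not what the Faddeev factorization gives. With $\mb^2=y$ and $z=\ri k/(N\mb)-c_\mb$, the product formula yields \emph{exactly} $f_{k,N}(y)-g_{\underline{k},N}(-\tfrac{1}{Ny})=\log\Phi_\mb(z)$ with no elementary prefactor; the accessory function \emph{is} $\log\Phi_\mb$, a genuinely transcendental function of $y$. Its holomorphy on $\IC\setminus\IR_{\le 0}$ is the nontrivial content here, coming from the analytic continuation of $\Phi_\mb$ in the modular parameter (via the integral representation and its symmetries), not from being an elementary function of $y$. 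If the cocycle were elementary in $y$, the statement would be essentially vacuous. A related slip: for $g_{k,N}$ with $\mb^2=Ny$, the dual variable is $-1/(Ny)$, not $-1/(N^2y)$ as you write.

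Your proposed fallback via Borel--Laplace sums cannot close the proof as stated: the paper establishes the summability relations for $\tilde f_{k,N}$ (Lemma~\ref{lemma:summability-f-kN}) but for $\tilde g_{k,N}$ they remain Conjecture~\ref{conj:summability-g-kN}. The paper's proof of Theorem~\ref{thm:main-qm} deliberately avoids this by relying only on the infinite-product representation of $\Phi_\mb$ (valid a priori on $\IH$) together with the known analytic continuation of $\Phi_\mb$; the Borel--Laplace identities in Eqs.~\eqref{eq: FG-BL-p}--\eqref{eq: FG-BL-m} are then noted as a consequence, not a prerequisite. So keep the $\Phi_\mb$-route as the primary argument and correct the characterization of $\Psi$.
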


\medskip 

A pair of modular resurgent series can be constructed by considering suitable linear combinations of the $q$-Pochhammer symbols $f_{k,N}$ and $g_{k,N}$ in Eq.~\eqref{eq:f_kN} over the parameter $k\in\IZ_N$. Specifically, let $\chi_N: \IZ \to \IC$ be a Dirichlet character of modulus $N$. We introduce the functions $\frakf \,, \frakg \colon\IH\to\IC$ as
\be \label{eq:fg-intro}
\frakf(y):=\sum_{k\in\IZ_N}\chi_N(k)f_{k,N}(y)\,, \quad
\frakg(y):=\sum_{k\in\IZ_N}\chi_N(k)g_{k,N}(y) \, .
\ee
If $\chi_N=\chi_3$ denotes the unique primitive Dirichlet character of modulus $3$, then $f_0 \propto \frakf$ and $f_\infty \propto  \frakg$ reproduce the generating functions of the Stokes constants of the logarithm of the spectral trace of local $\IP^2$ in the weak and strong coupling limits, respectively, which have been extensively studied in~\cite{Rella22, FR1phys} and acted as a catalyst for the proposal of~\cite{FR1maths}. In the second part of this paper, 
we discuss the modular resurgent properties of the functions in Eq.~\eqref{eq:fg-intro}. We report our main results below.

\begin{theorem}[Thm.~\ref{thm:resurgence-f0}]\label{thm:main1-0}
Let $\chi_N$ be a primitive Dirichlet character of modulus $N$. Let $\tfrakf(y)$ be the asymptotic expansion of the function $\frakf(y)$ in Eq.~\eqref{eq:fg-intro} in the limit $y \to 0$ with $\Im(y)>0$. Then, $\tfrakf$ is modular resurgent if and only if $\chi_N$ is odd.
\end{theorem}

\begin{theorem}[Thm.~\ref{thm:resurgence-f-inf}]\label{thm:main1-inf}
Let $\chi_N$ be a primitive Dirichlet character of modulus $N$. Let $\tfrakg(y)$ be the asymptotic expansion of the function $\frakg(y)$ in Eq.~\eqref{eq:fg-intro} in the limit $y \to 0$ with $\Im(y)>0$. Then, $\tfrakg$ is modular resurgent if and only if $\chi_N$ is odd.
\end{theorem}

As conjectured in~\cite{FR1maths}, modular resurgent series possess special summability properties and their median resummation gives rise to holomorphic quantum modular forms. While the quantum modularity of $\frakf$ and $\frakg$ is a straightforward consequence of Theorem~\ref{thm:main-qm}, the effectiveness of the median resummation does, instead, only hold for the weighted sums and not for the single $q$-Pochhammer symbols $f_{k,N}$ and $g_{k,N}$. Accordingly, we obtain the following results.\footnote{The analogue of the analytic argument used in the proof of Theorem~\ref{thm:main1} is currently lacking for Conjecture~\ref{conj:main1b}, whose proof is left for future work.}

\begin{theorem}[Thm.~\ref{thm:summability-f0}]\label{thm:main1}
The function $\frakf\colon\IH\to\IC$ defined in Eq.~\eqref{eq:fg-intro} agrees with the median resummation of its asymptotic expansion $\tilde{\frakf}(y)$ if and only if the Dirichlet character $\chi_N$ is odd.
\end{theorem}

\begin{conjecture}[Thm.~\ref{thm:summability-f-inf}]\label{conj:main1b}
The function $\frakg\colon\IH\to\IC$ defined in Eq.~\eqref{eq:fg-intro} agrees with the median resummation of its asymptotic expansion $\tfrakg(y)$ if and only if the Dirichlet character $\chi_N$ is odd.
\end{conjecture}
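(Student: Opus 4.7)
The plan is to mirror the strategy of Theorem~\ref{thm:main1} and confront the analytic obstruction singled out in the footnote. The forward direction, assuming $\chi_N$ is odd, requires identifying an integral representation of $\frakg(y)$ that matches the median Borel--Laplace resummation of $\tfrakg$. By Theorem~\ref{thm:main1-inf}, the Borel transform $\borel\tfrakg(\zeta)$ has a single tower of simple poles on the imaginary axis at $\zeta=2\pi\ri n/N$, with residues assembled from $\sum_k\chi_N(k)S_n^k$, and its Dirichlet series is a genuine $L$-function. The median resummation is therefore well defined, and the goal is the pointwise identity $\frakg(y)=\tfrac{1}{2}(\laplace_{0^+}+\laplace_{0^-})\borel\tfrakg(y)$ on $\IH$, together with its failure when $\chi_N$ is even.

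The central algebraic step is to expand the defining $q$-product of $\frakg$ and apply Mellin inversion to rewrite the result as a contour integral whose integrand can be recognized as $\borel\tfrakg(\zeta)\re^{-\zeta/y}$. Concretely, starting from
\be
\frakg(y)=-\sum_{m\geq 1}\frac{1}{m\,(1-q^{mN})}\sum_{k\in\IZ_N}\chi_N(k)q^{mk}\,,
\ee
one would convert the summand into a Mellin--Barnes integral, deform the contour so that it straddles the positive imaginary axis, and match the residues picked up on crossing it with the tower of Stokes contributions encoded in $\borel\tfrakg$. The parity of $\chi_N$ enters through the reflection $\chi_N(-k)=-\chi_N(k)$ (odd case) versus $\chi_N(-k)=\chi_N(k)$ (even case), which controls whether the pair of residues at $\zeta=\pm 2\pi\ri n/N$ combine antisymmetrically into the modular resurgent structure or symmetrically and cancel.

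The main obstacle is precisely what distinguishes $\frakg$ from $\frakf$ in the Mellin step. For $\frakf$ the analogous inner sum is the Gauss-twisted $\sum_k\chi_N(k)\zeta_N^{km}=\overline{\chi_N(m)}\tau(\chi_N)$, which depends on $m$ only through $\chi_N$ and yields a closed Dirichlet series amenable to uniform estimates across sectors. For $\frakg$ the inner sum $\sum_k\chi_N(k)q^{mk}$ retains a genuine $q^{mk}$-dependence and does not factorize, so the integrand to be deformed has extra transcendental structure that must be controlled; this is the technical heart of the problem. Two natural routes are to employ a Hurwitz-zeta partial-fraction expansion of $\sum_k\chi_N(k)q^{mk}$ (and show the additional poles contribute only to the perturbative part of $\tfrakg$), or to leverage Theorem~\ref{thm:main-qm} together with the $f\leftrightarrow g$ duality noted for the Stokes data to transport Theorem~\ref{thm:main1} from $\frakf$ to $\frakg$. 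The ``only if'' direction is considerably easier: if the median resummation returned $\frakg$, then combined with the quantum modularity from Theorem~\ref{thm:main-qm} and the structural characterization of~\cite{FR1maths} one would force $\tfrakg$ to be modular resurgent, which by Theorem~\ref{thm:main1-inf} requires $\chi_N$ odd.
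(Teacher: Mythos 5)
The statement you are attempting to prove is stated in the paper as an open conjecture; the paper does not prove it unconditionally. What it does prove is a conditional version (Theorem~\ref{thm:summability-f-inf}): assuming Conjecture~\ref{conj:summability-g-kN}, which asserts the explicit Borel--Laplace sums
\[
s_0(\tilde{g}_{k,N})(y)=g_{k,N}(y)-f_{k,N}\big(-\tfrac{1}{Ny}\big)+\log(1-\re^{2\pi\ri k/N}),\qquad
s_\pi(\tilde{g}_{k,N})(y)=g_{k,N}(y)-f_{\underline{k},N}\big(-\tfrac{1}{Ny}\big)+\log(1-\re^{-2\pi\ri k/N}),
\]
a short computation shows
\[
\mathcal{S}^{\rm med}_{\pi/2}\tfrakg(y)=\frakg(y)-\tfrac{1}{2}\sum_{k\in\IZ_N}\bigl(\chi_N(k)+\chi_N(-k)\bigr)\Bigl[f_{k,N}\bigl(-\tfrac{1}{Ny}\bigr)-\log(1-\re^{-2\pi\ri k/N})\Bigr],
\]
and the correction drops out precisely when $\chi_N(-1)=-1$. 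Both the forward and the backward implication then follow from this one identity.

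Measured against that, your forward direction does not go through. You correctly isolate the obstruction---the inner sum $\sum_k\chi_N(k)q^{mk}$ does not factorize through a Gauss sum the way the corresponding sum for $\frakf$ does---but you stop at naming two ``natural routes'' (a Hurwitz-zeta partial-fraction expansion, or a transport via the $f\leftrightarrow g$ duality) without executing either. What is missing is exactly the content of Conjecture~\ref{conj:summability-g-kN}: a proof of the Borel--Laplace identities for the individual $\tilde{g}_{k,N}$. Until that, or some substitute, is established, neither of your routes yields a proof. The paper's Lemma~\ref{lemma:summability-f-kN} handles the $\tilde{f}_{k,N}$ case by a direct Laplace-integral computation (equations \eqref{eq: sum0-proof1}--\eqref{eq: sumpi-proof2} in the appendix); the analogous computation for $\tilde{g}_{k,N}$ is precisely what is ``currently lacking,'' and your proposal does not supply it.

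Your ``only if'' direction is worse than incomplete: the structural implication you invoke---median resummability of the $q$-series plus its quantum modularity forces the asymptotic expansion to be modular resurgent---is not available and is in fact refuted within the paper itself. For any $N>4$ and $m,n$ with $1+m+n=N$, the generating function $f_0^{m,n}$ is a holomorphic quantum modular function (Corollary~\ref{cor:disc-qm}) and is reconstructed by the median resummation of its asymptotic expansion (Theorem~\ref{thm:disc-med-0}), yet the corresponding asymptotic series is \emph{not} modular resurgent because the Stokes constants $R_\ell^{m,n}$, $S_\ell^{m,n}$ fail to be multiplicative (Remark~\ref{rmk: notL-funct-mn}). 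So the converse of Conjectures~\ref{conj:quantum_modular1-intro} and~\ref{conj:quantum_modular2-intro} cannot be assumed. The correct and elementary argument for the ``only if'' direction is the same explicit formula above: for $\chi_N$ even, $\chi_N(k)+\chi_N(-k)=2\chi_N(k)$, and the correction term $\sum_k\chi_N(k)\bigl[f_{k,N}(-\tfrac{1}{Ny})-\log(1-\re^{-2\pi\ri k/N})\bigr]$ does not vanish, so $\mathcal{S}^{\rm med}_{\pi/2}\tfrakg\neq\frakg$.
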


It is argued on general grounds in~\cite{FR1maths} that pairs of modular resurgent series embed in the so-called modular resurgence paradigm, often represented as an exact commutative diagram, and capturing the mathematical structure underlying the strong-weak resurgent symmetry of~\cite{Rella22, FR1phys}.  
Here we prove that the modular resurgence paradigm of~\cite{FR1maths} holds for the $q$-series $\frakf$ and $\frakg$ in Eq.~\eqref{eq:fg-intro} under suitable constraints on the Dirichlet character $\chi_N$.

\begin{theorem}[Cor.~\ref{thm:mod-res}]\label{thm:main3}
Let $\chi_N$ be a primitive Dirichlet character of modulus $N$. The functions $\frakf$ and $\frakg$ defined in Eq.~\eqref{eq:fg-intro} satisfy the modular resurgence paradigm if and only if $\chi_N$ is odd.
\end{theorem}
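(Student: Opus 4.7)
The plan is to reduce Theorem~\ref{thm:main3} to the results already collected in the previous theorems and to a single additional compatibility check on the generating series of Stokes constants. Concretely, I read the modular resurgence paradigm of~\cite{FR1maths} as a commutative diagram whose nodes are: (a) the holomorphic functions $\frakf,\frakg$ at the two cusps $0,\infty$; (b) their asymptotic expansions $\tilde\frakf,\tilde\frakg$; (c) the generating Dirichlet series of their Stokes constants, required to be a Dirichlet $L$-function; and (d) the median resummations of $\tilde\frakf,\tilde\frakg$. The diagram commutes if and only if each of the following holds: (i) $\tilde\frakf$ and $\tilde\frakg$ are modular resurgent; (ii) median resummation recovers $\frakf$ and $\frakg$; (iii) the Stokes-constant $q$-series obtained from $\tilde\frakf$ agrees, up to an explicit modular/$S$-transform, with $\frakg$, and symmetrically with the roles exchanged.

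The forward direction $(\chi_N\ \text{odd}\Rightarrow\text{paradigm})$ then proceeds by assembling already-proved inputs. Condition (i) is Theorems~\ref{thm:main1-0} and~\ref{thm:main1-inf}. Condition (ii) is Theorem~\ref{thm:main1} together with Conjecture~\ref{conj:main1b}, which I assume as stated. For (iii), I would start from the explicit description recalled in the introduction: the $q$-series generating the Stokes constants $\{R_n^k\}$ of $\tilde f_{k,N}$ is a fixed linear combination of $g_{k,N}$ and $g_{N-k,N}$, and likewise for $\{S_n^k\}$ in terms of $f_{k,N}$ and $f_{N-k,N}$. Multiplying by $\chi_N(k)$ and summing over $k\in\IZ_N$, the key reindexing $k\mapsto N-k$ produces the factor $\chi_N(-1)$, because $\chi_N(N-k)=\chi_N(-k)=\chi_N(-1)\chi_N(k)$. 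When $\chi_N$ is odd, $\chi_N(-1)=-1$, so the two terms add up (rather than cancel) and the weighted Stokes sums assemble precisely into $\frakg$ (respectively $\frakf$) up to the expected nonzero normalization. The generating Dirichlet series is then identified with $L(s,\chi_N)$ (up to an Euler factor at primes dividing $N$) by standard character orthogonality applied to the arithmetic closed form for $R_n^k$ and $S_n^k$. Commutativity of the diagram follows.

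The converse is immediate by contrapositive: if $\chi_N$ is even, then Theorem~\ref{thm:main1-0} already obstructs node (c) of the diagram, since $\tilde\frakf$ fails to be modular resurgent; and in fact the reindexing argument above shows that the weighted Stokes sums vanish identically, so the horizontal arrow from Stokes constants to $\frakg$ collapses. The main obstacle I expect is the bookkeeping for step (iii): one has to track the exact prefactors, the location of the towers of singularities in the Borel plane, and the precise modular transformation relating the two cusps, and then verify that the assembled $L$-function is primitive of conductor $N$. The parity of $\chi_N$ should be the only delicate input, entering exactly through the sign $\chi_N(-1)$ in the $k\mapsto N-k$ symmetrization; once this sign is positive, the relevant generating series survives and the paradigm is realized, otherwise it is destroyed at the first node.
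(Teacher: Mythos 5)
Your reading of the modular resurgence paradigm (MRP) is not the one the paper uses, and this matters. You have inserted median resummation as a node (d) of the commutative diagram and correspondingly require, as your condition (ii), that median resummation of $\tilde\frakf$ and $\tilde\frakg$ recover $\frakf$ and $\frakg$. But the MRP in~\cite{FR1maths} --- recalled in the diagram in Eq.~\eqref{diag:resurgence-L funct} --- is a closed cycle consisting of the $q$-series $\frakf,\frakg$, their asymptotic expansions $\tilde\frakf,\tilde\frakg$, the Stokes constants $\{S_m\},\{R_m\}$, the $L$-functions $L_\pm, L'_\pm$ built from them, the inverse Mellin transform back to the $q$-series, and the functional equation joining $L_\pm$ to $L'_\pm$. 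Median resummation is \emph{not} a component of that diagram; it is a separate, conjectural consequence of being an MRS (Conjecture~\ref{conj:quantum_modular1-intro}). By importing it as a requirement you are forced to invoke Conjecture~\ref{conj:main1b}, which is unproven, so your forward implication no longer stands as a proof. The paper's argument deliberately avoids this: Corollary~\ref{thm:mod-res} is proved unconditionally by combining Lemma~\ref{lem:func-eq} (the functional equation for $L$ and $L'$), Lemmas~\ref{lem:L0-gen} and~\ref{lem:Linf-gen} (the Dirichlet series of the Stokes constants are exactly $L$ and $L'$), Theorems~\ref{thm:resurgence-f0} and~\ref{thm:resurgence-f-inf} (MRS iff $\chi_N$ odd), and the identity in Eq.~\eqref{eq:gen_func_fg-final} exhibiting $\frakf$ and $\frakg$ as generating series of each other's Stokes constants.

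Beyond this misidentification, the remaining ingredients of your sketch are aligned with the paper: your conditions (i) and (iii) match the paper's strategy, and your observation that the $k\mapsto N-k$ reindexing contributes a factor $\chi_N(-1)$ --- so that oddness of $\chi_N$ is exactly what makes the weighted Stokes sums survive rather than cancel --- is the correct mechanism (it is how Eq.~\eqref{eq:Rm-final} and Eq.~\eqref{eq:Sm-final} are derived). Your contrapositive for the converse is also fine. But as written, condition (iii) is left at the level of a plan (``track the exact prefactors, \ldots verify the assembled $L$-function''); the paper closes this by computing the Dirichlet-convolution identities in Lemmas~\ref{lem:L0-gen}--\ref{lem:Linf-gen} and by verifying the functional equation explicitly in Lemma~\ref{lem:func-eq}, rather than appealing to character orthogonality in an informal way. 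To repair your proof: drop condition (ii) entirely, and flesh out (iii) along the lines of those three lemmas together with the explicit Stokes-constant formulas in the theorem proofs.
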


\medskip 

Our results on the (modular) resurgent structure of the $q$-Pochhammer symbols in Eq.~\eqref{eq:f_kN} and their weighted sums in Eq.~\eqref{eq:fg-intro} find a straightforward application in the context of the topological string/spectral theory (TS/ST) correspondence of~\cite{GHM, CGM2}. Local mirror symmetry~\cite{KKV, CKYZ} pairs a toric CY threefold $X$ with an algebraic equation of the form
\be \label{eq: mirrorC}
W(\re^x, \re^y)=0 \, , \quad x,y \in \IC \, ,
\ee
describing a Riemann surface $\Sigma \subset \IC^*\times \IC^*$ of genus $g_\Sigma$, called its mirror curve. Following the prescription of~\cite{GHM, CGM2}, the complex variables $x,y$ are promoted to self-adjoint Heisenberg operators $\mx, \my$ satisfying $[\mx, \my] = \ri \hbar$ with $\hbar \in \IR_{>0}$. Then, the Weyl quantization of the algebraic curve in Eq.~\eqref{eq: mirrorC} naturally leads to quantum operators $\mathsf{O}_j(\re^{\mx}, \re^{\my})$, where $j=1, \dots, g_\Sigma$, acting on $L^2(\IR)$. Their inverses
$\rho_j = \mO_j^{-1}$ are conjectured to be positive-definite and trace-class, provided the mass parameters appearing in the operators satisfy certain positivity conditions. This property was proven in~\cite{KM} for a large number of toric del Pezzo CY threefolds.

In the third part of this paper, we consider the total space of the canonical bundle over the weighted projective plane $\IP(1,m,n)$ for $m,n \in \IZ_{>0}$, also known as local $\IP^{m,n}$, which is generally a singular manifold.
Note that some of these geometries appear as degenerations of toric del Pezzo CY threefolds in special limits. 
For example, when $n=1$, local $\IP^{m,1}$ can be constructed as a partial blow-down of the resolved $\IC^{3}/\IZ_{m+2}$ orbifold. 
In particular, local $\IP^{1,1}=\IP^2$ is the total resolution of the orbifold $\IC^3/\IZ_3$, while the mirror curve to local $\IP^{2,1}$ reproduces the so-called massless limit of the mirror curve to local $\IF_2$.
Following~\cite{KM}, we associate to local $\IP^{m,n}$ the quantum-mechanical operator
\be \label{eq:quantum-ops}
\mO_{m,n}= \re^\mx+\re^\my+\re^{-m\mx-n\my} \, ,
\ee
whose inverse
\be \label{eq: rho-mn}
\rho_{m,n}=\mO_{m,n}^{-1}
\ee
exists and is positive-definite and trace-class on $L^2(\IR)$. Therefore, the spectrum of $\rho_{m,n}$ is discrete and positive, and its spectral traces 
\be \label{eq: all-traces}
\mathrm{Tr}(\rho_{m,n}^\ell) < \infty \,  , \quad \ell \in \IZ_{>0} \, ,
\ee
can be expressed as multi-dimensional integrals whose kernels involve combinations of Faddeev's quantum dilogarithm.
For all $m,n \in \IZ_{>0}$, we study the asymptotic behavior, resurgence, summability, and quantum modularity of the first spectral trace $\mathrm{Tr}(\rho_{m,n})$ as a function of $\hbar$, which is known in closed form due to~\cite{KM}. More precisely, up to the addition of simple terms, $\log \mathrm{Tr}(\rho_{m,n})$ can be written as a finite sum of the $q$-Pochhammer symbols in Eq.~\eqref{eq:f_kN}, where we take $y \propto \hbar$ and $N=1+m+n$. We can thus apply our results on the $q$-Pochhammer symbols to determine the exact resurgent structures and corresponding arithmetic properties of $\log \mathrm{Tr}(\rho_{m,n})$ in both limits of $\hbar \to 0$ and $\hbar \to \infty$.

We denote by $\varphi_{m,n}(\hbar)$ and $\psi_{m,n}(\hbar^{-1})$ the formal power series of $\log \mathrm{Tr}(\rho_{m,n})$ at weak and strong coupling, respectively. These are simple resurgent Gevrey-1 asymptotic series whose Borel transform reproduces the same pattern observed for the individual $q$-Pochhammer symbol---namely, a tower of simple poles along the imaginary axis. The Stokes constants $\{S_\ell^{m,n}\}$ and $\{R_\ell^{m,n}\}$, $\ell \in \IZ_{\ne 0}$, are divisor sum functions and their generating functions $f_0^{m,n}, f_\infty^{m,n}: \IC\setminus\IR \to \IC$ are again expressible as finite sums of $q$-Pochhammer symbols.
As in the case of $\frakf$ and $\frakg$, the quantum modularity of $f_0^{m,n}$ and $f_\infty^{m,n}$ is a straightforward consequence of Theorem~\ref{thm:main-qm}. However, a second remarkable statement follows.  
Namely, each of the generating functions is identified with the discontinuity of the asymptotic expansion of the other in the appropriate limit and uniquely determines the asymptotic expansion in the dual regime.
\begin{theorem}[Thm.~\ref{thm:strong-weak}]\label{thm:Strong-weak-intro}
 For every $m,n \in \IZ_{>0}$, the generating functions $f_0^{m,n}$ and $f_\infty^{m,n}$ of the Stokes constants of $\log {\rm Tr}(\rho_{m,n})$ at weak and strong coupling are related by a weaker (yet, still exact) version of the strong-weak resurgent symmetry of~\cite{FR1phys}.  
\end{theorem}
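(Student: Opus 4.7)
The plan is to leverage the closed-form expression of $\log \mathrm{Tr}(\rho_{m,n})$ from~\cite{KM} as a finite sum of $q$-Pochhammer symbols together with the single-symbol resurgence analysis developed in the first part of the paper. Concretely, setting $N=1+m+n$ and taking $y\propto\hbar$, one writes
\begin{equation*}
\log \mathrm{Tr}(\rho_{m,n}) = \sum_{k\in\IZ_N} c_k^{m,n}\, f_{k,N}(y) + (\text{elementary terms}),
\end{equation*}
for explicit integer-valued coefficients $c_k^{m,n}$ coming from the Kashaev--Mari\~no integral kernel. This reduces the asymptotic problem in both limits $\hbar\to 0$ and $\hbar\to\infty$ to linear combinations of asymptotic data that has already been computed for $\tilde{f}_{k,N}$ and $\tilde{g}_{k,N}$.

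First I would collect, for each $k$, the Borel singularity structure of $\tilde{f}_{k,N}$ and $\tilde{g}_{k,N}$: each is Gevrey-1 simple resurgent with a single tower of simple poles on the imaginary axis, the Stokes constants along these towers are arithmetic functions computed in closed form, and crucially the $q$-series generating the Stokes constants of $\tilde{f}_{k,N}$ is itself an explicit combination of $g_{k,N}$ and $g_{N-k,N}$ (and conversely for $\tilde{g}_{k,N}$). Summing with the weights $c_k^{m,n}$ then yields $\varphi_{m,n}(\hbar)$ and $\psi_{m,n}(\hbar^{-1})$ as simple resurgent Gevrey-1 series with peacock-type Borel singularities, and expresses both $\{S_\ell^{m,n}\}$ and $\{R_\ell^{m,n}\}$ as divisor sum functions together with their generating series $f_0^{m,n}$, $f_\infty^{m,n}$ as finite $\IZ$-linear combinations of $q$-Pochhammer symbols.

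Next, I would compute the Stokes discontinuities of $\varphi_{m,n}$ and $\psi_{m,n}$ across the imaginary axis directly from the Borel/median resummation machinery of the single $q$-Pochhammer case, accumulating the residues tower by tower. Comparing these discontinuities with the generating functions assembled above should yield the two identities
\begin{equation*}
\mathrm{disc}\,\varphi_{m,n} \; \leftrightarrow \; f_\infty^{m,n}\,, \qquad \mathrm{disc}\,\psi_{m,n} \; \leftrightarrow \; f_0^{m,n}\,,
\end{equation*}
which is the exact exchange of perturbative and non-perturbative data characterizing the strong-weak resurgent symmetry. The reason this is only a \emph{weaker} version of the $\IP^2$ symmetry of~\cite{FR1phys} is that no Dirichlet character weighting is present: the sequences $\{S_\ell^{m,n}\}$ and $\{R_\ell^{m,n}\}$ are divisor sums but the associated Dirichlet series are not $L$-functions, so none of the extra number-theoretic content encoded in Theorems~\ref{thm:main1-0}--\ref{thm:main3} survives here; the commutativity of the modular resurgence diagram is replaced by the bare identification of generating functions with discontinuities.

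The main obstacle will be the combinatorial bookkeeping. One must pin down the weights $c_k^{m,n}$ produced by the Kashaev--Mari\~no formula for $\mathrm{Tr}(\rho_{m,n})$, track how the $k$ and $N-k$ pairings induced by the resurgent structure of the individual $\tilde{f}_{k,N}$ interact under the summation $\sum_k c_k^{m,n}(\cdots)$, and verify that all contributions along a given Stokes ray recombine into exactly the finite sum of $q$-Pochhammer symbols that defines $f_0^{m,n}$ or $f_\infty^{m,n}$, with matching signs and multiplicities. Once this matching is in place, the strong-weak statement is immediate from the single-symbol analysis, and the absence of the full modular resurgent structure (i.e.~the qualifier ``weaker'') follows from the failure of the associated Dirichlet series to be $L$-functions, exactly as in the comparison made in Theorems~\ref{thm:main1-0} and~\ref{thm:main1-inf}.
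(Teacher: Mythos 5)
Your overall strategy is the same as the paper's: reduce $\log\mathrm{Tr}(\rho_{m,n})$ to a finite sum of the $q$-Pochhammer symbols of Section~\ref{sec: single} (Eq.~\eqref{eq:trace-qPochh}), import the single-symbol resurgence and discontinuity formulas, and reassemble. In particular you correctly identify the engine of the argument: the fact, from Corollaries~\ref{cor: disc-Rn} and~\ref{cor: disc-Sn} (packaged as Eq.~\eqref{eq: disc-akbk} for the combinations $\mathcal{F}_{k,N}$, $\mathcal{G}_{k,N}$), that the discontinuity of $\tilde{f}_{k,N}$ is a combination of $g_{k,N},g_{\underline{k},N}$ and conversely. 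Your diagnosis of why the symmetry is "weaker" -- the divisor-sum sequences $\{S_\ell^{m,n}\},\{R_\ell^{m,n}\}$ fail to be multiplicative for $N>4$, so there is no $L$-function/functional-equation layer -- matches Remark~\ref{rmk: notL-funct-mn}.

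Two points need correcting or completing. First, your identifications $\mathrm{disc}\,\varphi_{m,n}\leftrightarrow f_\infty^{m,n}$ and $\mathrm{disc}\,\psi_{m,n}\leftrightarrow f_0^{m,n}$ have the labels reversed: by construction $f_0^{m,n}$ generates the weak-coupling Stokes constants $\{S_\ell^{m,n}\}$, so $f_0^{m,n}\big(-\tfrac{2\pi}{N\hbar}\big)=\mathrm{disc}_{\pi/2}\varphi_{m,n}(\hbar)$ (Eq.~\eqref{eq:disc-f0}), and similarly $f_\infty^{m,n}\big(-\tfrac{1}{N\tau}\big)=\mathrm{disc}_{\pi/2}\psi_{m,n}(\tau)$ (Eq.~\eqref{eq:disc-finf}). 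These identities are just definitions; the strong-weak content is elsewhere. Second, and more substantively, identifying the discontinuities with the generating functions is only half of the commutative diagram in Eq.~\eqref{diag: strong-weak2}. One must also invert the "discontinuity" arrow, i.e.\ show that the asymptotic expansion of $f_0^{m,n}$ (resp.\ $f_\infty^{m,n}$) reconstructs $\psi_{m,n}$ (resp.\ $\varphi_{m,n}$). In the paper this is done by composing a Mellin transform of the generating function, yielding the Dirichlet series $\sum_\ell S_\ell^{m,n}/\ell^j$, with the exact large-order formula in Eq.~\eqref{eq: dir-Smn} (resp.\ Eq.~\eqref{eq: dir-Rmn}) expressing the perturbative coefficients of $\varphi_{m,n}$ (resp.\ $\psi_{m,n}$) as that Dirichlet series evaluated at positive integers. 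Without this reconstruction step, your argument establishes the discontinuity formulas but not the full exact symmetry claimed by the theorem.
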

Moreover, and differently from the case of the single $q$-Pochhammer symbol, the median resummation is effective in reconstructing the $q$-series $f_0^{m,n}$ and $f_\infty^{m,n}$ from their asymptotics.\footnote{Proofs of both Conjectures~\ref{conj:main1b} and~\ref{conj:median-Proj-intro-finf} would follow from a proof of the summability of the $q$-Pochhammer symbol $g_{k,N}$ as presented in Conjecture~\ref{conj:summability-g-kN}. }
\begin{theorem}[Thm.~\ref{thm:disc-med-0}]\label{thm:median-Proj-intro-f0}
   The generating function $f_0^{m,n}: \IH \to \IC$ (respectively, $f_0^{m,n}: \IH_- \to \IC$) agrees with the median resummation of its asymptotic expansion in the limit $y \to 0$ with $\Im(y)>0$ (respectively, $\Im(y)<0$).   
\end{theorem}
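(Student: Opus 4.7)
The plan is to reduce Theorem~\ref{thm:median-Proj-intro-f0} to the already-proved Theorem~\ref{thm:main1} on the median resummation of the character-weighted $q$-Pochhammer sums $\frakf$. The key input is the explicit expression of the generating function $f_0^{m,n}\colon \IH\to\IC$ of the weak-coupling Stokes constants $\{S_\ell^{m,n}\}$ of $\log\mathrm{Tr}(\rho_{m,n})$ as a finite linear combination of $q$-Pochhammer symbols $f_{k,N}$ from Eq.~\eqref{eq:f_kN}, with $N=1+m+n$. This representation should follow from the Kashaev--Mari\~no closed-form of the first spectral trace together with the identification of the Stokes constants of $\varphi_{m,n}$ as divisor sums arising from the individual $\tilde f_{k,N}$.

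First, I would write $f_0^{m,n}(y) = \sum_{k\in\IZ_N} c_k\, f_{k,N}(y)$ and extract the weights $c_k$ explicitly. The central structural point to verify is the antisymmetry $c_k = -c_{N-k}$, which should reflect the reflection symmetry of the Newton polygon of the mirror curve to local $\IP^{m,n}$ underlying the operator $\mO_{m,n}$, and which is the combinatorial counterpart of the oddness hypothesis in Theorem~\ref{thm:main1}. Once antisymmetry is established, the function $k\mapsto c_k$ lies in the odd part of $\IC[\IZ/N\IZ]$ and therefore decomposes as a finite linear combination $c_k = \sum_{\chi\,\text{odd}} a_\chi\,\chi(k)$ over odd primitive Dirichlet characters modulo divisors of $N$, lifted via induction. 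Correspondingly, $f_0^{m,n}$ admits a finite decomposition into weighted sums of the form $\frakf$ appearing in Theorem~\ref{thm:main1}, each associated with an odd primitive character.

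The conclusion then follows from the linearity of median Borel--Laplace resummation on Gevrey-$1$ simple resurgent series sharing a common singular support (here, the imaginary axis in the Borel plane), applied term by term to each $\frakf$ in the decomposition; Theorem~\ref{thm:main1} ensures that every summand agrees with its median resummation. The companion statement on $\IH_-$ is obtained via the involution $y\mapsto -\overline{y}$, which intertwines the two median resummations and preserves the Stokes data. The main obstacle is verifying the antisymmetry $c_k = -c_{N-k}$ directly from the Kashaev--Mari\~no formula for $\log\mathrm{Tr}(\rho_{m,n})$. A residual even part would obstruct summability in exactly the way that prevents the individual $\tilde f_{k,N}$ from being median-summable on its own, paralleling the open obstruction for $\tilde{\frakg}$ recorded in Conjecture~\ref{conj:main1b}. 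Should such a part appear, eliminating it would require either an independent parity argument grounded in the geometry of the mirror curve, or explicit $q$-Pochhammer identities at $N$-th roots of unity that collapse the unwanted contribution to a holomorphic function with trivial formal asymptotic expansion.
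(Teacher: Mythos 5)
Your reduction strategy is sound up to a point, and the key structural observation---that $f_0^{m,n}$ is a finite $\IZ$-linear combination of $q$-Pochhammer symbols $f_{k,N}$ with coefficients antisymmetric under $k\mapsto N-k$---is correct. Indeed, Eq.~\eqref{eq:disc-phi-nm}, Eq.~\eqref{eq:disc-f0}, and the notation of Eq.~\eqref{eq: akbk} give $f_0^{m,n}=-\CF_1-\CF_m-\CF_n$ with $\CF_{N-k}=-\CF_k$, so there is no ``residual even part'' to worry about.

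The genuine gap lies in the next step. You propose to expand the odd coefficient function $k\mapsto c_k$ in odd primitive Dirichlet characters modulo divisors of $N$ and then invoke Theorem~\ref{thm:main1} term by term. This decomposition does not exist in general, because Dirichlet characters of modulus $N$ (including those induced from smaller moduli) vanish identically on residues $k$ with $\gcd(k,N)>1$, whereas $c_k$ is typically nonzero there. A concrete counterexample: for $N=6$, $m=2$, $n=3$ one finds $c_2=-1\neq 0$, yet every Dirichlet character modulo $6$ vanishes at $2$; periodic extensions of characters modulo $3$ or $2$ do not repair this, since they also fail to reproduce the vector $(c_1,\dots,c_5)=(-1,-1,0,1,1)$. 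More generally, the space of odd functions on $\IZ_N$ has dimension $\lfloor(N-1)/2\rfloor$, strictly larger than the span of odd Dirichlet characters when $N$ is not prime.

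The detour is also unnecessary. Inspecting the proof of Theorem~\ref{thm:summability-f0} (equivalently Theorem~\ref{thm:main1}), the Dirichlet-character structure is never used: all that enters is the cancellation $\chi_N(k)+\chi_N(-k)=0$, i.e.\ antisymmetry of the weights. The paper's own argument exploits this directly: from the median formula in Eq.~\eqref{eq: median-fkN}, the error terms $g_{k,N}(-\tfrac{1}{Ny})$ and $g_{\underline{k},N}(-\tfrac{1}{Ny})$ cancel pairwise in $\CS^{\rm med}_{\pi/2}\bigl[\tilde f_{k,N}-\tilde f_{\underline{k},N}\bigr]$, giving $f_{k,N}-f_{\underline{k},N}$ on the nose; summing over $k\in\{1,m,n\}$ yields the claim. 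Replacing your character-decomposition step with this elementary two-term cancellation closes the gap and gives the proof.
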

\begin{conjecture}[Thm.~\ref{thm:disc-med-inf}]\label{conj:median-Proj-intro-finf}
The generating function $f_\infty^{m,n}: \IH \to \IC$ (respectively, $f_\infty^{m,n}: \IH_- \to \IC$) agrees with the median resummation of its asymptotic expansion in the limit $y \to 0$ with $\Im(y)>0$ (respectively, $\Im(y)<0$).     
\end{conjecture}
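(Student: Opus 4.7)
The plan is to mirror the structure of Theorem~\ref{thm:median-Proj-intro-f0} and reduce Conjecture~\ref{conj:median-Proj-intro-finf} to the Borel summability of the individual $q$-Pochhammer symbol $g_{k,N}$, which is precisely the content of Conjecture~\ref{conj:summability-g-kN} flagged in the footnote. As a first step I would write $f_\infty^{m,n}$ explicitly as a finite $\IC$-linear combination of symbols $g_{k,1+m+n}$, plus elementary terms whose Borel transforms are trivial, coming from the closed-form expression of $\log {\rm Tr}(\rho_{m,n})$ at strong coupling. Taking the asymptotic expansion at $y\to 0$ with $\Im(y)>0$ is a $\IC$-linear operation on a finite sum, so the formal power series associated with $f_\infty^{m,n}$ is exactly the same finite combination of the individual expansions $\tilde{g}_{k,1+m+n}$.

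Next I would invoke the linearity of the Borel transform, of the two lateral Laplace transforms along the sides of the singular ray $\ri\IR_{>0}$, and hence of the median resummation, on the subspace of Gevrey-1 simple resurgent series that are summable in the Borel sense along the relevant direction. Granted Conjecture~\ref{conj:summability-g-kN}, each summand is resummable along both lateral rays, the Stokes jump produced by the pole tower of $\tilde{g}_{k,N}$ at non-zero integer multiples of the fundamental constant is half-counted symmetrically in the median prescription, and the median resummation of each $\tilde{g}_{k,1+m+n}$ returns $g_{k,1+m+n}$. Summing the finite combination therefore reproduces $f_\infty^{m,n}$ on $\IH$. The $\IH_-$ statement follows by applying the reflection symmetry $y\mapsto -\bar y$ that relates the two half-planes, together with the reality of the Stokes data $\{R^{m,n}_\ell\}$; equivalently, the argument is identical upon replacing the singular ray $\ri\IR_{>0}$ with $\ri\IR_{<0}$.

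The main obstacle is exactly the one acknowledged in the footnote: an analytic proof of Conjecture~\ref{conj:summability-g-kN}. For the sibling symbol $f_{k,N}$, Theorem~\ref{thm:main1} is established by a direct analysis of the Borel transform of $\tilde{f}_{k,N}$, exploiting the root-of-unity insertion in $(\zeta_N^k;q)_\infty$ to obtain the decay on the real axis needed to close the lateral Laplace contours and match them with the original $q$-series evaluated in the upper half-plane. For $g_{k,N}=\log(q^k;q^N)_\infty$ the analogous integrand does not satisfy the same bounds, and the naive contour argument does not close. Plausible routes to bypass this difficulty include a Barnes-type integral representation for $(q^k;q^N)_\infty$ followed by contour deformation, an Euler--Maclaurin analysis of $\sum_{n\ge 0}\log(1-q^{k+Nn})$ in which the remainder is explicitly identified with a convergent Borel--Laplace integral, or a bootstrap from Faddeev's quantum dilogarithm, whose self-duality under $y\mapsto -1/y$ renders its summability properties classical. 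Any of these I expect to be technically delicate but sufficient to yield Conjecture~\ref{conj:summability-g-kN}, after which the linear reduction described above immediately establishes Conjecture~\ref{conj:median-Proj-intro-finf}.
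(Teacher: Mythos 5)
Your reduction to Conjecture~\ref{conj:summability-g-kN} and the overall linear strategy are on the right track, and you correctly diagnose the structure of the argument as being parallel to Theorem~\ref{thm:median-Proj-intro-f0}. However, there is a genuine error in the key step. You assert that, granting Conjecture~\ref{conj:summability-g-kN}, ``the median resummation of each $\tilde{g}_{k,1+m+n}$ returns $g_{k,1+m+n}$.'' This is false, and the paper explicitly proves the opposite. Under Conjecture~\ref{conj:summability-g-kN}, the two lateral Borel--Laplace sums in Eq.~\eqref{eq: BL-gkN} each differ from $g_{k,N}(y)$ by a non-trivial term built from $f_{k,N}$ or $f_{\underline{k},N}$ evaluated at $-1/(Ny)$, so their average (Eq.~\eqref{eq: median-gkN}) is
\begin{equation*}
\mathcal{S}_{\frac{\pi}{2}}^{\mathrm{med}}\tilde{g}_{k,N}(y)= g_{k,N}(y)-\frac{1}{2}\Bigl[ f_{k,N}\bigl(-\tfrac{1}{Ny}\bigr)-\log(1-\re^{2\pi\ri k/N})+f_{\underline{k},N}\bigl(-\tfrac{1}{Ny}\bigr)-\log(1-\re^{-2\pi\ri k/N}) \Bigr]\,,
\end{equation*}
which is strictly not $g_{k,N}(y)$. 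The phenomenon you are implicitly relying on---that averaging the lateral sums half-counts the Stokes jump and therefore restores the original function---holds only when the correction happens to vanish, which it does not for an individual $q$-Pochhammer symbol.

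The missing idea is the cancellation mechanism that the paper's proof of Theorem~\ref{thm:disc-med-inf} actually uses. The correction term in the display above is manifestly \emph{invariant} under $k \mapsto \underline{k} = N - k$, because exchanging $k$ and $\underline{k}$ simply swaps the two $f$-terms and the two logarithms. Therefore, for the antisymmetric combination one has
\begin{equation*}
\CS^{\rm med}_{\frac{\pi}{2}}\bigl[\tilde{g}_{k,N}(y)-\tilde{g}_{\underline{k},N}(y)\bigr]=g_{k,N}(y)-g_{\underline{k},N}(y)\,,
\end{equation*}
and since $f_\infty^{m,n}$ is precisely a sum of such antisymmetric pairs, $f_\infty^{m,n} = \CG_1 + \CG_m + \CG_n$ with $\CG_k = g_{\underline{k},N} - g_{k,N}$, the median resummation reconstructs $f_\infty^{m,n}$. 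Your argument would give the right conclusion by accident, but the stated reasoning is incorrect and would equally well ``prove'' that the median resummation reconstructs the individual $g_{k,N}$, which the paper explicitly disproves. You need to make the $k \leftrightarrow N-k$ cancellation explicit rather than claim termwise exactness.
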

We observe that, for $N>4$, the sequences of Stokes constants $\{S_\ell^{m,n}\}$ and $\{R_\ell^{m,n}\}$ are not necessarily multiplicative and thus do not define $L$-functions. As a consequence, the asymptotic series $\varphi_{m,n}(\hbar)$ and $\psi_{m,n}(\hbar^{-1})$ are not generally modular resurgent. Yet, the full-fledged, number-theoretic symmetry of~\cite{FR1maths, FR1phys} is realized when $N=3$ and $N=4$ (\emph{i.e.}, for local $\IP^2$ and $\IP^{1,2}$, $\IP^{2,1}$) as the generating functions $f_0^{m,n}, f_\infty^{m,n}$ reduce to the weighted sums $\frakf, \frakg$ in Eq.~\eqref{eq:fg-intro} for the unique primitive Dirichlet characters of modulus $3$ and $4$, respectively. 

More generally, we prove that, for any choice of $N \ge 3$, there exists a linear combination of the generating functions $f_0^{m,n}$ (resp. $f_\infty^{m,n}$) over $m,n \in \IZ_N$ with $1+m+n=N$ that coincides with a sum of $q$-Pochhammer symbols weighted by an odd Dirichlet character of modulus $N$ in the form of the function $\frakf$ (resp. $\frakg$).
When the character is also primitive, then the complete modular resurgence paradigm holds due to Theorem~\ref{thm:main3}.

\medskip 

This paper is organized as follows. 
In Section~\ref{sec:background}, we recall some basic aspects of the theory of resurgence, quantum modularity, and modular resurgent series. 
In Section~\ref{sec: single}, we discuss the resurgent structure, summability properties, and quantum modularity of the $q$-Pochhammer symbols $f_{k,N}$, $g_{k,N}$ in Eq.~\eqref{eq:f_kN}. 
Then, in Section~\ref{sec:weighted-sum}, we consider the weighted sums of $q$-Pochhammer symbols $\frakf$, $\frakg$ in Eq.~\eqref{eq:fg-intro}, which represent a novel class of examples of modular resurgent series, and prove our main theorems.
Finally, in Section~\ref{sec:weighted}, we bring our attention to the spectral theory of local weighted projective planes, where the $q$-Pochhammer symbols play a central role. Here, we apply the results of the previous sections to study the resurgence of the spectral traces of the inverses of the three-term operators $\mO_{m,n}$ in Eq.~\eqref{eq:quantum-ops} and discuss a generalization of the strong-weak resurgent symmetry of local $\IP^2$ to all local $\IP^{m,n}$.
There is one appendix containing the technical details of the main proofs.

\section{Background}\label{sec:background}

In this section, we review the basic tools of the theory of resurgence of \'Ecalle~\cite{EcalleI} and describe a particular class of quantum modular forms introduced by Zagier~\cite{zagier-talk}. These notions merge into the recent proposal of~\cite{FR1maths} for a novel perspective on the resurgence of those asymptotic series whose coefficients can be suitably expressed in terms of an $L$-function.
Such modular resurgent series are conjectured to possess specific summability and quantum modularity properties, while their resurgent structure is ultimately dictated by the analytic number-theoretic qualities of the corresponding $L$-function via the paradigm of modular resurgence.

\subsection{Resurgence of asymptotic series}

Let $y$ be a formal variable and $\phi(y)$ a Gevrey-1 asymptotic series. Explicitly, 
\be \label{eq: phi}
\phi(y) = \sum_{n=0}^{\infty} a_n y^{n+1} \in \IC[\![y]\!] \, , 
\ee
where the coefficients have the form
\be
|a_n|\le \CA^{-n} n! \, , \quad n \geq 0 \, ,
\ee
for some constant $\CA \in \IR_{>0}$. Its Borel transform is defined as
\be \label{eq: phihat}
\borel[\phi](\zeta) := \sum_{n=0}^{\infty} \frac{a_n}{n!} \zeta^{n} \in  \IC\{\zeta\} \, ,
\ee
where $\zeta$ is a new formal variable. Note that $\borel[\phi](\zeta)$ is a holomorphic function in an open neighborhood of $\zeta=0$ of radius $\CA$ and may be analytically continued to a Riemann surface, known as the Borel plane. When this is the case, the following definition applies.
\begin{definition}
A Gevrey-1 asymptotic series $\phi(y)$ is called \emph{resurgent} if its Borel transform $\borel[\phi](\zeta)$ can be endlessly analytically continued. Namely, for every $L>0$, there is a finite set of points $\Omega_L$ in the Borel plane such that $\borel[\phi](\zeta)$ can be analytically continued along any path of length at most $L$ that starts from a point 
$\eta_0$ with $|\eta_0|<\CA$ and avoids $\Omega_L$. 
The set of singularities is labeled by $\Omega:=\bigcup_{L>0}\Omega_L$. 
If, additionally, $\borel[\phi](\zeta)$ has only simple poles and logarithmic branch points, then $\phi(y)$ is called \emph{simple resurgent}.
\end{definition}

We denote by $\rho_{\omega} \in \IC$, $\omega \in \Omega$, the locations of the singularities of the Borel transform. 
A ray in the Borel plane that starts at the origin and passes through a singular point $\rho_\omega$ is called a Stokes ray.
We assume that the formal power series $\phi(y)$ in Eq.~\eqref{eq: phi} is simple resurgent---as it will be the case in the rest of this work. In particular, if the singularity at $\rho_{\omega}$ with $\omega\in\Omega$ is a simple pole, the local expansion of the Borel transform in Eq.~\eqref{eq: phihat} around it has the form 
\be \label{eq: Stokes0}
\borel[\phi](\zeta) = - \frac{S_{\omega}}{2 \pi \ri (\zeta - \rho_{\omega})} + \text{regular in $\zeta-\rho_\omega$} \, ,
\ee
where $S_{\omega} \in \IC$ is the Stokes constant at $\rho_{\omega}$.

Let us now fix an angle $0 \le \theta < 2 \pi$ and denote by $\CC_\theta= \re^{\ri \theta} \IR_{\ge 0}$ the corresponding ray in the complex $\zeta$-plane. 
The Laplace transform of $\borel[\phi](\zeta)$ along $\CC_\theta$ is defined as 
\be \label{eq: Laplace}
s_{\theta}(\phi)(y) = \int_{\CC_\theta} \re^{-\zeta/y} \borel[\phi](\zeta) \, d \zeta \, , 
\ee
which is analytic in the half-plane $\Re(e^{-\ri\theta}y)>0$ and whose asymptotic expansion at the origin reconstructs the factorially divergent formal power series $\phi(y)$ (see~\cite[Thm.~5.20]{diver-book}). We call $s_{\theta}(\phi)(y)$ the Borel--Laplace sum of $\phi(y)$ at angle $\theta$. Note that this is a well-defined function only if the analytically continued Borel transform grows at most exponentially in an open sector of the Borel plane containing $\CC_\theta$. Indeed, the Borel--Laplace sum in Eq.~\eqref{eq: Laplace} is discontinuous across the rays
\be
\arg(y)=\arg(\rho_{\omega}) \, , \quad \omega \in \Omega \, ,
\ee
corresponding to the Stokes rays in the Borel plane, and converges to a generally different holomorphic function in each sector bounded by these rays. Adjacent sectorial solutions differ by a correction that is invisible in the asymptotic limit $y \to 0$. 
Specifically, the discontinuity of $\phi(y)$ across a given ray $\CC_{\theta}$ in the complex $y$-plane is defined as
\be \label{eq: disc}
\mathrm{disc}_{\theta}\phi(y) = s_{\theta_+}(\phi)(y) - s_{\theta_-}(\phi)(y) = \int_{\mathcal{C}_{\theta_+} - \, \mathcal{C}_{\theta_-}} \re^{-\zeta/y} \borel[\phi](\zeta) \,  d \zeta \, , 
\ee
where $\theta_{\pm}= \theta \pm \epsilon$ for some $0 < \epsilon \ll 1$.
A standard contour deformation argument shows that such a difference collects an exponentially small contribution from each singularity of the Borel transform that lies on the ray $\CC_{\theta}$.
In particular, if $\borel[\phi](\zeta)$ has only simple poles, then we obtain
\be \label{eq: Stokes1-poles}
\mathrm{disc}_{\theta}\phi(y) = \sum_{\omega  \in \Omega_{\theta}} S_{\omega} \re^{-\rho_{\omega}/y}  \, ,
\ee
where the index $\omega \in \Omega_{\theta}$ labels the singular points $\rho_{\omega}$ such that $\arg (\rho_{\omega}) = \theta$ and the complex numbers $S_{\omega}$ are the same Stokes constants that appear in Eq.~\eqref{eq: Stokes0}. In the presence of more general types of singularities, the term of order $\re^{-\rho_\omega/y}$ is encoded in the specific form of the local behavior of the Borel transform near $\rho_\omega$ and computed by analytic continuation.

Finally, we recall that the median resummation of the Gevrey-1 asymptotic series $\phi(y)$ in Eq.~\eqref{eq: phi} at an arbitrary angle $\theta$ in the complex $y$-plane is the average of the two lateral Borel--Laplace sums $s_{\theta_\pm}(\phi)(y)$, that is,
\be \label{eq: median}
\mathcal{S}^{\mathrm{med}}_{\theta}\phi(y) = \frac{s_{\theta_+}(\phi)(y) + s_{\theta_-}(\phi)(y)}{2} \, ,
\ee
which is an analytic function for $\Re \left( \re^{-\ri\theta} y \right)>0$. Equivalently, by suitably varying the contour of integration of the Laplace transform, we can write
\begin{equation} \label{eq: median2}
\mathcal{S}^{\mathrm{med}}_{\theta}\phi(y) =\begin{cases}
        s_{\theta_-}(\phi)(y)+\frac{1}{2}\,\mathrm{disc}_{\theta}\phi(y) \, , \quad & \Re \left( \re^{-\ri\theta_{-}}y \right)>0 \,  \\
        & \\
        s_{\theta_+}(\phi)(y)-\frac{1}{2}\,\mathrm{disc}_{\theta}\phi(y) \, , \quad & \Re \left( \re^{-\ri\theta_{+}}y \right)>0 \, 
    \end{cases} \, .
\end{equation}

\subsection{Modular resurgent series}

Following the original proposal by Zagier~\cite{zagier_modular}, a function $f\colon\IQ\to\IC$ is called a weight-$\omega$ quantum modular form with respect to a subgroup $\Gamma\subseteq\mathsf{SL}_2(\IZ)$, where $\omega$ is a fixed integer or half-integer, if the cocycle $h\colon\Gamma\times\IQ\to\IC$ defined by
\begin{equation}\label{cocycle}
     h_\gamma[f](y):=(cy+d)^{-\omega} f\left(\frac{ay+b}{cy+d}\right) - f(y) 
\end{equation} 
has better analyticity properties than the function $f$ itself for every $\gamma=\left( \begin{smallmatrix}
    a & b\\
    c & d
\end{smallmatrix} \right)\in\Gamma$---\emph{e.g.}, it is real analytic over $\IR\setminus\{\gamma^{-1}(\infty)\}$.
Note that the cocycle $h$ of a quantum modular form $f$ measures the failure of invariance under the action of the modular group $\Gamma$; in fact, $h_\gamma[f]=0$ for every $\gamma\in\Gamma$ if and only if $f$ is a modular form.
 
Similarly, a notion of quantum modularity for functions that are holomorphic in the upper half of the complex plane, which we denote by $\IH$, is obtained by requiring that the cocycle in Eq.~\eqref{cocycle} is analytic in a domain larger than $\IH$. The quantum modular forms that belong to this class are called holomorphic quantum modular forms, and first introduced by Zagier~\cite[min 21:45]{zagier-talk}.
\begin{definition} \label{def: holoQM}
    A holomorphic function $f\colon\IH\to\IC$ is a weight-$\omega$ \emph{holomorphic quantum modular form} for a subgroup $\Gamma\subseteq\mathsf{SL}_2(\IZ)$, where $\omega$ is integer or half-integer, if the function $h_\gamma[f]\colon\IH\to\IC$ in Eq.~\eqref{cocycle} extends analytically to\footnote{Note that $\IC_\gamma = \IC \setminus \left(-\infty; \, -d/c\right]$ when $c>0$ and $\IC_\gamma = \IC \setminus \left[-d/c; \, +\infty \right)$ when $c<0$.}  
    \be
    \IC_\gamma:=\{y\in\IC\colon cy+d\in\IC\setminus\IR_{\leq 0}\}
    \ee
    for every $\gamma=\left( \begin{smallmatrix}
    a & b\\
    c & d
\end{smallmatrix} \right) \in\Gamma$.
\end{definition}
Note that an analog of Definition~\ref{def: holoQM} can be written for a holomorphic function $f\colon\IH_{-}\to\IC$, where $\IH_{-}$ denotes the lower half of the complex plane. (Holomorphic) quantum modular forms of weight zero are also known as (holomorphic) quantum modular functions. 

Quantum modular forms constitute a large class of functions, and characterizing them is a rather challenging endeavor. 
Yet, when the failure of modularity is measured by a function with factorially divergent asymptotics, resurgence proves to be remarkably useful. In~\cite{FR1maths}, we introduced the theory of modular resurgence, which identifies certain $q$-series as holomorphic quantum modular forms via the common resurgent structure of their divergent asymptotic expansions. 
As we will now briefly revisit, our approach fundamentally relies on the interplay between $q$-series and $L$-functions and their analytic and number-theoretic properties.

We start by recalling the definition and rich arithmetic structure of modular resurgent series as proposed in~\cite{FR1maths}.
\begin{definition}[Def.~3.2 in~\cite{FR1maths}]\label{def:modular_res_struct}
A Gevrey-1 asymptotic series $\tilde{\frakg}(y)$ is a \emph{modular resurgent series (MRS)} if the following conditions hold.
\begin{enumerate}
    \item The Borel transform $\CB[\tilde{\frakg}](\zeta)$ has a tower of singularities at the locations $\rho_m= m \CA $, $m\in\mathbb{Z}_{\ne 0}$, for some constant $\CA \in \IC$, in the complex $\zeta$-plane.
    \item For every $m\in\IZ_{\ne 0}$, the resurgent series at $\zeta=\rho_m$ is the constant function $S_m\in\IC$, \emph{i.e.}, the Stokes constant.
    \item The Stokes constants are the coefficients of two $L$-functions\footnote{An $L$-series is a Dirichlet series $L(s)=\sum_{m >0} \frac{A_{m}}{m^s}$ that converges in some right half-plane $\{\Re (s)>C\ge 0\}\subset\IC$ where it admits an Euler product expansion $L(s)=\prod_{p \in \mathsf{P}} \sum_{k >0} \frac{A_{p^k}}{p^{k s}}$ indexed by the set of prime numbers $\mathsf{P}$. When it exists, its meromorphic continuation to $\{\Re (s)< 0\}\subset\IC$ defines an $L$-function.}
    \be
		L_{+}(s)=\sum_{m >0} \frac{S_{m}}{m^s} \,, \quad L_{-}(s)=-\sum_{m >0} \frac{S_{- m}}{m^s} \, .
     \ee
\end{enumerate}
\end{definition} 
The MRS above is equivalently characterized by the generating series of the Stokes constants, that is, the $q$-series
\begin{equation} \label{eq:f_Am}
\frakf(y)=\begin{cases}
\displaystyle\sum_{m>0}S_m q^m & \mbox{if} \quad \Im(y)>0 \, ,  \\
\\
-\displaystyle\sum_{m>0}S_{-m} q^{-m} & \mbox{if} \quad \Im(y)<0 \, , 
\end{cases} \quad q= \re^{2\pi \ri y} \, , \quad y \in \IC\setminus\IR \, .
\end{equation} 
As emphasized in~\cite{FR1maths}, there is a canonical correspondence between the $L$-functions $L_\pm$ and the $q$-series $\frakf|_{\IH_\pm}$ based on the action of the Mellin transform and its inverse.

Notice that part~$3$ of Definition~\ref{def:modular_res_struct} does not specify the shape of the functional equation dictating the analytic continuation of the $L$-functions $L_{\pm}(s)$ to $\{\Re (s)< 0\}\subset\IC$, although its existence is required. Indeed, as first observed in the context of the spectral theory of the toric CY threefold known as local $\IP^2$ in~\cite{Rella22, FR1phys}, when an MRS comes from the asymptotic expansion of a $q$-series, then the functional equation for the corresponding $L$-functions must satisfy a certain constraint, which in turn guarantees the existence of a second, canonically paired MRS. The two MRSs constructed this way fit together into what we call \emph{modular resurgence paradigm}. Let us review the various ingredients and their connections step-by-step.

Assume that $\tilde{\frakg}(y)$ is an MRS whose Borel transform has only simple poles. Further assume that $\tilde{\frakg}(y)$ is the asymptotic expansion of a given $q$-series $\frakg(y)$ of the same form as in Eq.~\eqref{eq:f_Am} with coefficients given by $R_m \in \IC$, $m \in \IZ_{\ne 0}$. 
By definition, its resurgent structure leads to a pair of $L$-functions $L_\pm(s)$, whose inverse Mellin transform~\cite[Prop.~3.3]{FR1maths} is the $q$-series $\frakf(y)$ generating the Stokes constants $S_m \in \IC$, $m \in \IZ_{\ne 0}$, of $\tilde{\frakg}(y)$. 
Denote by $\tilde{\frakf}(y)$ its asymptotic expansion in the limit $y \to 0$. Then, crucially, the resurgent behavior of $\tilde{\frakf}(y)$ is entirely determined by the meromorphic continuation of $L_\pm(s)$ to $\Re(s)<0$, as we proved in~\cite[Props.~3.1 and~3.4]{FR1maths}. In particular, $\tilde{\frakf}(y)$ is also an MRS and its Stokes constants $R_m$, coefficients of a new pair of $L$-functions $L'_{\pm}(s)$, are generated by the $q$-series $\frakg(y)$ that we started with. The two pairs of $L$-functions $L_\pm$ and $L'_\pm$ provide the analytic continuation of each other, that is, they are connected by means of their functional equation.\footnote{Sometimes the two $L$-functions $L_\pm(s)$ and $L'_\pm(s)$ might be equal, as observed for Maass cusp forms in~\cite{FR1maths}.}
We illustrate the modular resurgence paradigm schematically by the following commutative diagram. 
We refer the reader to~\cite{FR1maths} for more details.
\begin{equation}\label{diag:resurgence-L funct}
\begin{tikzcd}[column sep=2.5em, row sep=2.8em]
       L'_{\pm}(s) \arrow[rrr,"\text{inverse Mellin}"]\arrow[ddrrrrrrrr,sloped,"\text{functional}"] & & & \mathfrak{g}(y) \arrow[rr,"y \rightarrow 0"] &  & \tilde{\mathfrak{g}}(y) \arrow[rrr,"\text{resurgence}"] & & & \{ S_m \}\arrow[dd,sloped,"\text{$L$-function}"] \\  \\
       \{ R_m \} \arrow[uu,sloped,"\text{$L$-function}"] & & &\tilde{\mathfrak{f}}(y) \arrow[lll,"\text{resurgence}"] 
       &  & \mathfrak{f}(y)  \arrow[ll,"y \rightarrow 0"] & & & L_{\pm}(s) \arrow[lll,"\text{inverse Mellin}"]\arrow[uullllllll,sloped,swap,"\text{equation}"]
\end{tikzcd}
\end{equation}

\begin{rmk}
    By construction, the MRSs $\tilde{\frakf}(y)$ and $\tilde{\frakg}(y)$ above are such that one equals the asymptotic expansion of the discontinuity of the other in the appropriate variables. In fact, the theory of modular resurgence was originally developed to make sense of the striking number-theoretic symmetry between the perturbative and non-perturbative contributions to the spectral trace of local $\IP^2$ at strong and weak coupling, which was discovered in~\cite{Rella22} and further formalized in~\cite{FR1phys}. Hence, the modular resurgence paradigm is referred to as \emph{strong-weak resurgent symmetry} in those contexts in which such an interpretation is present. It is then natural to rename the $q$-series as $f_0$ and $f_\infty$, their asymptotic expansions as $\tilde{f}_0$ and $\tilde{f}_\infty$, and their $L$-functions as $L_{0, \pm}$ and $L_{\infty, \pm}$, as we do in~\cite{FR1phys}.
\end{rmk}

Finally, we recall the conjectural summability and quantum modularity properties of MRSs. 
\begin{conjecture}[Conj.~1 in~\cite{FR1maths}]\label{conj:quantum_modular1-intro}
If $\frakg\colon\IH\to\IC$ is a $q$-series, where $q=\re^{2 \pi \ri y}$, whose asymptotic expansion $\tilde{\frakg}(y)$ as $y\to 0$ with $\Im(y)>0$ is modular resurgent, then the \emph{median resummation} of $\tilde{\frakg}$ reconstructs the original function $\frakg$. 
\end{conjecture}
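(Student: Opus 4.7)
The plan is to establish the identity $\frakf(y) = \mathcal{S}^{\mathrm{med}}_{\theta}\tilde{\frakf}(y)$ on $\IH$ by constructing a sector in which both sides are holomorphic and share the common Gevrey-$1$ asymptotic expansion $\tilde{\frakf}$, and then invoking a Watson-type uniqueness theorem to force equality. First I would fix $\theta$ to be the Stokes direction in the Borel plane along which the tower of simple poles $\{m\CA\}_{m>0}$ of $\borel[\tilde{\frakf}](\zeta)$ accumulates; from Eq.~\eqref{eq: Stokes1-poles}, the discontinuity takes the form
\begin{equation}
  \mathrm{disc}_\theta\tilde{\frakf}(y) \;=\; \sum_{m>0} S_m\,\re^{-m\CA/y}\,,
\end{equation}
where, by the MRS hypothesis, the Stokes constants $\{S_m\}$ are the Dirichlet coefficients of $L_+(s)$. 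After the natural change of variable $y\mapsto \CA/(2\pi\ri y)$, this discontinuity is itself a convergent $q$-series, which~\cite[Prop.~3.3]{FR1maths} identifies with the inverse Mellin transform of $L_+$, \emph{i.e.}, the dual $q$-series of $\frakf$ in the modular resurgence paradigm~\eqref{diag:resurgence-L funct}.

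Next I would use the two equivalent representations in Eq.~\eqref{eq: median2},
\begin{equation}
  \mathcal{S}^{\mathrm{med}}_{\theta}\tilde{\frakf}(y) \;=\; s_{\theta_\pm}(\tilde{\frakf})(y)\,\mp\,\tfrac{1}{2}\,\mathrm{disc}_\theta\tilde{\frakf}(y)\,,
\end{equation}
to extend $\mathcal{S}^{\mathrm{med}}_{\theta}\tilde{\frakf}$ holomorphically across $\CC_\theta$: each lateral sum $s_{\theta_\pm}(\tilde{\frakf})$ is holomorphic in an open half-plane around $\theta_\pm$, while the correction $\pm\tfrac{1}{2}\,\mathrm{disc}_\theta\tilde{\frakf}$ is an absolutely convergent $q$-series in the dual variable on the same half-plane. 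Gluing the two representations produces a function holomorphic on a sector of opening strictly greater than $\pi$ around the origin, on which both $\frakf$ and $\mathcal{S}^{\mathrm{med}}_{\theta}\tilde{\frakf}$ admit $\tilde{\frakf}$ as their Gevrey-$1$ asymptotic expansion as $y\to 0$. A Watson-type uniqueness lemma for Gevrey-$1$ functions on sectors of opening greater than $\pi$ would then force their difference to vanish identically, and analytic continuation within $\IH$ would extend the equality to the entire upper half-plane.

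The principal obstacle is to establish, at the stated level of generality, that $\borel[\tilde{\frakf}](\zeta)$ has at most exponential growth along every direction transverse to the Stokes ray, which is needed both to define $s_{\theta_\pm}(\tilde{\frakf})$ as genuine holomorphic functions on the full half-planes around $\theta_\pm$ and to ensure that Watson's lemma applies on a sector of opening strictly greater than $\pi$. In the MRS setting this growth is encoded by the functional equation relating $L_+(s)$ to its dual $L'_+(s)$ via the Mellin--Barnes representation
\begin{equation}
  \borel[\tilde{\frakf}](\zeta) \;=\; \frac{1}{2\pi\ri}\int_{c-\ri\infty}^{c+\ri\infty}\frac{\Gamma(s)\,L_+(s)}{\zeta^{s}}\,ds\,,
\end{equation}
and shifting the contour to the left converts the gamma factor into the required decay estimate. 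However, translating this abstract control into uniform exponential bounds — without appealing to the specific arithmetic structure exploited in the proof of Theorem~\ref{thm:main1} — appears to require an additional mild hypothesis on $L_+(s)$, such as polynomial growth on vertical strips (a Lindel\"of-type assumption). An alternative route, bypassing Watson's lemma, would compute the Fourier coefficients of $\mathcal{S}^{\mathrm{med}}_{\theta}\tilde{\frakf}$ directly by residue calculus at the poles of the Borel transform and match them term-by-term with those of $\frakf$ through the inverse Mellin representation, but this strategy ultimately rests on the same growth control to justify the requisite contour deformations.
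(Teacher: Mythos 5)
This statement is a \emph{conjecture} in the paper (Conj.~2 of~\cite{FR1maths}, restated here as Conjecture~\ref{conj:quantum_modular1-intro}), not a proven result; the paper does not contain a proof and only establishes particular instances of it (Theorems~\ref{thm:main1} and~\ref{thm:median-Proj-intro-f0}) by direct, explicit computation of the Borel--Laplace sums for sums of $q$-Pochhammer symbols via Lemma~\ref{lemma:summability-f-kN}. You are therefore attempting to prove an open statement, and your proposal should be read as a strategy rather than a completed argument.

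Your plan — use the two lateral representations of the median resummation in Eq.~\eqref{eq: median2} plus the explicit $q$-series form of the discontinuity to push the analyticity of $\mathcal{S}^{\mathrm{med}}_{\theta}\tilde{\frakf}$ to a sector of opening strictly greater than $\pi$, and then invoke Nevanlinna/Watson-type rigidity — is a natural one, and you correctly locate the genuine gap: Definition~\ref{def:modular_res_struct} only fixes the pole pattern and Stokes constants, it does not by itself furnish the uniform exponential bounds on $\borel[\tilde{\frakf}]$ along directions transverse to the Stokes ray that your Watson step requires; importing this from the $L$-function side via the Mellin--Barnes integral indeed calls for supplementary hypotheses (Lindel\"of-type growth on vertical strips) not contained in the MRS definition. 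There is, however, a second gap you should flag with equal weight: even granting the extension of $\mathcal{S}^{\mathrm{med}}_{\theta}\tilde{\frakf}$ to a wide sector, the $q$-series $\frakf$ is only given to you as a holomorphic function on $\IH$ (or $\IC\setminus\IR$ via Eq.~\eqref{eq:f_Am}), a sector of opening \emph{exactly} $\pi$; nothing in the hypotheses guarantees that $\frakf$ itself continues to the overlap region of opening $>\pi$ where the uniqueness lemma would see it. Without that continuation, the difference $\frakf-\mathcal{S}^{\mathrm{med}}_{\theta}\tilde{\frakf}$ lives only on a closed half-plane, where Gevrey-$1$ asymptotics do not force vanishing. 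This is precisely why the paper's proofs of the special cases avoid abstract rigidity theorems entirely and instead compute the lateral sums in closed form (Appendix~\ref{app: proofs}, Eqs.~\eqref{eq: BL-fkN}), after which the cancellation of the half-discontinuity against the $q$-Pochhammer corrections can be verified directly; the arithmetic of the Dirichlet character (oddness) is what makes that cancellation happen, and it is an input unavailable at the level of generality of the conjecture.
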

\begin{conjecture}[Conj.~2 in~\cite{FR1maths}]\label{conj:quantum_modular2-intro}
If $\frakg\colon\IH\to\IC$ is a $q$-series, where $q=\re^{2 \pi \ri y}$, whose asymptotic expansion $\tilde{\frakg}(y)$ as $y\to 0$ with $\Im(y)>0$ is modular resurgent, then the function $\frakg$ is a \emph{holomorphic quantum modular form} for a subgroup $\Gamma\subseteq\mathsf{SL}_2(\mathbb{Z})$.
\end{conjecture}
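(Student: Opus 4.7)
The plan is to deduce quantum modularity from the explicit integral representation of $\frakf$ furnished by the modular resurgence paradigm, chaining together the Borel--Laplace formalism and the functional equation of the underlying $L$-functions. First, I would invoke the companion Conjecture~\ref{conj:quantum_modular1-intro}, which identifies $\frakf(y)$ with the median Borel--Laplace resummation $\mathcal{S}^{\mathrm{med}}_0 \tilde{\frakf}(y)$ on $\IH$. Using the contour prescriptions in Eq.~\eqref{eq: median2}, this rewrites $\frakf$ as an integral of $\borel[\tilde{\frakf}](\zeta)$ along a Hankel-type contour encircling the singular locus $\{m\CA:m\in\IZ_{\neq 0}\}$; the simple-pole structure (part~2 of Definition~\ref{def:modular_res_struct}) ensures that collecting the residues returns the series $\sum_{m>0} S_m \re^{2\pi\ri m y}$, cross-checking the identification.

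Second, to handle a cocycle $h_\gamma[\frakf](y)$ for $\gamma=\left(\begin{smallmatrix}a&b\\ c&d\end{smallmatrix}\right)\in\Gamma$ with $c\neq 0$ (parabolic elements fixing $\infty$ give a trivial cocycle because $\frakf(y+n)=\frakf(y)$ for $n\in\IZ$), I would apply the same representation to both $\frakf(y)$ and $(cy+d)^{-\omega}\frakf(\gamma y)$. Expressing $\borel[\tilde{\frakf}](\zeta)$ through its inverse-Mellin form $\tfrac{1}{2\pi\ri}\int_{(c_0)} L_{\pm}(s)\,\Gamma(s)\,\zeta^{-s}\,ds$ and pushing the contour to the left using the functional equation inherited from part~3 of Definition~\ref{def:modular_res_struct}, I would re-sum the resulting series to identify the cocycle with an explicit linear combination of the dual $q$-series $\frakg$ (evaluated at $\gamma y$ and $y$) plus a remainder, invoking the modular resurgence paradigm of diagram~\eqref{diag:resurgence-L funct}.

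Third, the key analytic input is that this remainder is given by a contour integral whose integrand decays exponentially whenever $\Re(\zeta(cy+d))>0$ can be enforced by contour choice, which is possible precisely on the half-plane $cy+d\in\IC\setminus\IR_{\leq 0}$. A standard Morera argument then shows that the cocycle, originally defined only on $\IH$, admits a holomorphic extension to $\IC_\gamma$, matching Definition~\ref{def: holoQM}.

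The main obstacle is twofold. At the analytic level, Definition~\ref{def:modular_res_struct} does not a priori bound the growth of $\borel[\tilde{\frakf}](\zeta)$ as $|\zeta|\to\infty$ transversely to the Stokes rays, nor the growth of $L_{\pm}(s)$ on vertical lines; both controls are needed to justify the contour manipulations and the Mellin inversion above. One therefore needs to strengthen the hypotheses, for instance by requiring the $L$-functions to be of finite order---a condition satisfied in every concrete example considered in~\cite{FR1maths} and in the present paper. The second obstacle is that the approach rests on Conjecture~\ref{conj:quantum_modular1-intro}, which is itself open; a proof of quantum modularity should ideally proceed in tandem with a proof of median summability, since both rely on the same contour-deformation machinery.
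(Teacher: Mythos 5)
This statement is labeled as a conjecture (Conj.~3 of~\cite{FR1maths}) and the paper offers no proof of it in this generality; it is cited as background and then supported by examples. So there is no paper proof to compare against. What the paper \emph{does} establish for its concrete examples proceeds by a route quite different from the one you sketch: for the $q$-Pochhammer symbols $f_{k,N}$, $g_{k,N}$ and their weighted sums $\frakf$, $\frakg$, quantum modularity for $\Gamma_N$ is proved in Theorem~\ref{thm:qm} (and Corollary~\ref{cor:qm-weighted}) by writing the nontrivial cocycle explicitly as $h_{\gamma_N}[F_k](y)$ or $h_{\gamma_N}[G_k](y)$, where $F_k$ and $G_k$ are logarithms of Faddeev's quantum dilogarithm $\Phi_\mb$ via Eqs.~\eqref{eq: F-phib}--\eqref{eq: G-phib}, and $\Phi_\mb$ is known to extend holomorphically to $\IC'$. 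No median resummation, no Mellin contour shifting, and no functional equation enter that proof. In fact, the paper emphasizes (Remark~\ref{rmk: not-L-functs} and the surrounding discussion) that the individual $q$-Pochhammer symbols are quantum modular \emph{even though} their asymptotic expansions are not modular resurgent and the median resummation fails for them; quantum modularity in these examples is strictly less restrictive than the MRS hypothesis, so the conjecture's implication is not being ``tested'' by the resurgent machinery in the way your outline suggests.

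Beyond the structural mismatch, there are genuine gaps in the sketch, most of which you already flag. First, the entire argument is conditioned on Conjecture~\ref{conj:quantum_modular1-intro}, which is itself open. Second, Definition~\ref{def:modular_res_struct} gives no growth control on $\borel[\tilde{\frakf}](\zeta)$ away from the Stokes line or on $L_\pm(s)$ in vertical strips; both are essential for the contour shifts and the Mellin inversion, and neither is a consequence of the definition, so extra hypotheses (finite order, Phragm\'en--Lindel\"of bounds) would need to be added. Third, your ``cross-check'' is imprecise: the Borel transform of $\tilde{\frakf}$ has Stokes constants $R_m$ (the ``dual'' sequence in the paradigm, attached to $L'_\pm$), not the coefficients $S_m$ of the $q$-series $\frakf$; collecting residues along the Stokes line yields the discontinuity $\sum_m R_m \re^{-\eta_m/y}$, and the identification of $\mathcal{S}^{\mathrm{med}}\tilde{\frakf}$ with $\frakf = \sum_{m>0} S_m q^m$ is precisely the content of Conjecture~\ref{conj:quantum_modular1-intro}, not a residue-calculus consequence. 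Finally, when handling a general $\gamma$ with $c\neq 0$, the two terms of the cocycle live over different fast variables ($1/y$ versus $1/(\gamma y)$); showing that a single contour choice controls both, uniformly on $\IC_\gamma$, is the nontrivial analytic heart of the problem, and the sketch assumes it rather than proving it. The paper's example-level proofs circumvent exactly this issue by using the closed form of $\Phi_\mb$.
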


In the following sections of this paper, we will construct a large class of examples of MRSs intimately tied to the $q$-Pochhammer symbols and prominently appearing in the spectral theory of local weighted projective planes. These examples will also provide evidence of our conjectures.

\section{The \texorpdfstring{$q$}{q}-Pochhammer symbols} \label{sec: single}

In this section, we consider the functions $f_{k,N}(y)$, $g_{k,N}(y)$, $y \in \IH$, introduced in Eq.~\eqref{eq:f_kN} as logarithms of appropriate $q$-Pochhammer symbols after fixing $N\in\IZ_{\geq 2}$ and $k\in\IZ_N$. We study the resurgent structure and summability properties of their asymptotic expansions in the limit $y\to0$ with $\Im(y)>0$. The Borel transform displays a single tower of repeated simple poles along the imaginary axis with Stokes constants given by an explicit arithmetic function. Meanwhile, each $q$-Pochhammer symbol is reproduced by the Borel--Laplace sums at angles $0$ and $\pi$ of its asymptotics up to a correction depending on the other $q$-Pochhammer symbol. Moreover, the median resummation at angles $\pm\pi/2$ is not effective in reconstructing either $f_{k,N}(y)$ or $g_{k,N}(y)$. Finally, we prove that both $q$-Pochhammer symbols are holomorphic quantum modular functions for the group~$\Gamma_N \subset \mathsf{SL}_2(\IZ)$ generated by the elements in Eq.~\eqref{eq: GammaN-generators}.

To simplify our notation, we will denote $\underline{k}=N-k$ for every $k\in\IZ_N$.

\subsection{Resurgence and summability}

We begin by computing the resurgent structures of the asymptotic series $\tilde{f}_{k,N}(y)$, $\tilde{g}_{k,N}(y)$ of the functions $f_{k,N}(y)$, $g_{k,N}(y)$, respectively, in the limit $y \rightarrow 0$ with $\Im(y)>0$.

Recall that for $y \rightarrow 0$ with $\Im(y) >0$ and $x$ fixed, the following $q$-Pochhammer symbols, where $q=\re^{2 \pi \ri y}$, have the known asymptotic expansions~\cite{Katsurada, Shubho}
\begin{subequations}
\begin{align}
\log (x ; \, q)_{\infty} \sim & \frac{1}{2} \log(1-x) + \sum_{n=0}^{\infty} (2 \pi \ri y)^{2n-1} \frac{B_{2n}}{(2n)!} \mathrm{Li}_{2-2n}(x) \, , \label{eq: logPhiNC} \\
\log (q^{\alpha} ; \, q)_{\infty} \sim & - \frac{\pi \ri}{12 y} - B_1(\alpha) \log(- 2 \pi \ri y) - \log \frac{\Gamma(\alpha)}{\sqrt{2 \pi}}  \label{eq: logPhiK} \\
& - B_2(\alpha) \frac{\pi \ri y}{2} - \sum_{n=2}^{\infty} (2 \pi \ri y)^n \frac{B_n B_{n+1}(\alpha)}{n (n+1)!} \, , \quad \alpha > 0 \, , \nonumber
\end{align}
\end{subequations}
where $B_n(\alpha)$ is the $n$-th Bernoulli polynomial, $B_n = B_n(0)$ is the $n$-th Bernoulli number, $\Gamma(\alpha)$ is the gamma function, and $\mathrm{Li}_n(x)$ is the polylogarithm of order $n$.
Applying the formula in Eq.~\eqref{eq: logPhiNC} with $x=\zeta_N^k$ to the expression for $f_{k,N}(y)$ in Eq.~\eqref{eq:f_kN} and explicitly evaluating the special functions that appear, we obtain that 
\be \label{eq: expansion-fkN}
\tilde{f}_{k,N}(y)=\frac{1}{2} \log(1-\zeta_N^k)+\frac{1}{2\pi \ri y} \mathrm{Li}_2(\zeta_N^k)+\psi_k(y)\, ,
\ee
where $\psi_k(y)$ is the formal power series
\be\label{eq: tilde-psi}
\psi_k(y):=\sum_{n=1}^\infty (2\pi \ri y)^{2n-1}\frac{B_{2n}}{(2n)!}\mathrm{Li}_{2-2n}(\zeta_N^k)\, . 
\ee
Analogously, if we use the formula in Eq.~\eqref{eq: logPhiK} with $\alpha=k/N$ to expand the function $g_{k,N}(y)$ in Eq.~\eqref{eq:f_kN}, we find that
\be \label{eq: expansion-gkN}
\tilde{g}_{k,N}(y/N)=- \frac{\pi \ri }{12 y} - B_1\left(\tfrac{k}{N}\right) \log(- 2 \pi \ri y) - \log \frac{\Gamma\big(\tfrac{k}{N}\big)}{\sqrt{2 \pi}} - B_2\big(\tfrac{k}{N}\big) \frac{\pi \ri y}{2} - \varphi_k(y)\, ,
\ee
where $\varphi_k(y)$ is the formal power series
\be\label{eq: tilde-phi}
\varphi_k(y):= \sum_{n=1}^{\infty} (2 \pi \ri y)^{2n} \frac{B_{2n} B_{2n+1}\big(\tfrac{k}{N}\big)}{2n (2n+1)!}\, . 
\ee

\subsubsection{The resurgent structure of \texorpdfstring{$\tilde{f}_{k,N}$}{fkN}}

\begin{prop}\label{prop:Borel-f_kN}
The formal power series $\psi_k(y)$ in Eq.~\eqref{eq: tilde-psi} governing the asymptotic expansion $\tilde{f}_{k,N}(y)$ in Eq.~\eqref{eq: expansion-fkN} is Gevrey-1 and simple resurgent.
\end{prop}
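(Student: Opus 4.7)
The plan is to establish Gevrey-$1$ growth by direct coefficient bounds and simple resurgence by identifying $\borel[\psi_k](\zeta)$ in closed form as a meromorphic function on $\IC$ with only simple poles.

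For Gevrey-$1$, I would combine the classical estimate $|B_{2n}|/(2n)! \le 4(2\pi)^{-2n}$ with the bound $|\mathrm{Li}_{2-2n}(\zeta_N^k)| \le (2n-2)!\,|1-\zeta_N^k|^{-(2n-1)}$, which follows from writing $\mathrm{Li}_{-m}(x) = A_m(x)/(1-x)^{m+1}$ with $A_m$ the Eulerian polynomial and using $|A_m(x)| \le A_m(1) = m!$ on the unit circle. Multiplying these gives $|b_n| \le C\,(2n-2)!\,|1-\zeta_N^k|^{-(2n-1)}$ for the coefficient $b_n$ of $y^{2n-1}$ in $\psi_k$, so $\psi_k$ is Gevrey-$1$ with constant $\mathcal{A}=|1-\zeta_N^k|$.

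For simple resurgence, I would first extract a convergent representation of $\psi_k$ on $\IH$ by starting from $f_{k,N}(y)=-\sum_{r\ge 1}\zeta_N^{kr}/(r(1-q^r))$ and substituting the Laurent expansion $1/(\re^t-1)=1/t-1/2+B(t)$, where $B(t):=\sum_{n\ge 1}\frac{B_{2n}}{(2n)!}t^{2n-1}$ is analytic for $|t|<2\pi$. The $1/t$ and constant pieces reproduce the $\mathrm{Li}_2$ and $\log(1-\zeta_N^k)$ contributions in Eq.~\eqref{eq: expansion-fkN}, leaving
\be \psi_k(y)=\sum_{r\ge 1}\frac{\zeta_N^{kr}}{r}B(2\pi\ri r y).\ee
Applying the Borel transform term-by-term, using the identity $(2\pi\ri)^{2n-1}B_{2n}/(2n)!=\ri\zeta(2n)/\pi$ together with $\zeta(2n)=\sum_{\ell\ge 1}\ell^{-2n}$ to interchange the Borel-index and $\ell$-sums, and finally summing the resulting inner series via the generating-function identity $\sum_{j\ge 0}\mathrm{Li}_{-j}(x)t^j/j!=x\re^t/(1-x\re^t)$ (extracting its even part), I arrive at
\be \borel[\psi_k](\zeta)=\frac{\ri}{2\pi}\sum_{\ell\ge 1}\frac{1}{\ell^2}\left[\frac{\zeta_N^k\re^{\zeta/\ell}}{1-\zeta_N^k\re^{\zeta/\ell}}+\frac{\zeta_N^k\re^{-\zeta/\ell}}{1-\zeta_N^k\re^{-\zeta/\ell}}\right].\ee
Each summand is $O(\ell^{-2})$ uniformly on compact sets away from its poles, so the series defines a meromorphic function on $\IC$. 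The singularities lie at $\zeta=2\pi\ri\ell(jN\mp k)/N$ for $\ell\ge 1,\,j\in\IZ$—all on the imaginary axis; at each location $\zeta_0=2\pi\ri n/N$, only finitely many $(\ell,j)$ pairs contribute (those with $\ell\mid n$ and $n/\ell\equiv\mp k\pmod N$), so each pole is simple. Endless continuation and simple resurgence follow at once.

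The main obstacle is justifying the interchange of the Borel-index and $\ell$-sums and the use of the generating function at negative integer arguments, since each $\mathrm{Li}_{-2m}(\zeta_N^k)$ is defined only through analytic continuation of a divergent defining series. I plan to handle this by verifying the chain of equalities in a small neighborhood of $\zeta=0$—where absolute convergence is guaranteed by the Gevrey-$1$ bound from the first step—and then extending globally by uniqueness of analytic continuation of the explicit meromorphic right-hand side.
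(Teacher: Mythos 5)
Your proof is correct in substance and reaches the same closed-form Borel transform as the paper (your $\frac{\ri}{2\pi}\frac{\zeta_N^k\re^{\pm\zeta/\ell}}{1-\zeta_N^k\re^{\pm\zeta/\ell}}$ rewrites as the paper's $\frac{1}{2\pi\ri}\frac{1}{1-\zeta_N^{-k}\re^{\mp\zeta/\ell}}$), but the route is genuinely different. The paper splits the coefficient $(2\pi\ri)^{2n-1}\tfrac{B_{2n}}{(2n)!}\mathrm{Li}_{2-2n}(\zeta_N^k)$ as a Hadamard product of two convergent series, resums each factor explicitly, and evaluates the resulting contour integral by residues via Hadamard's multiplication theorem. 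You instead replace $\tfrac{(2\pi\ri)^{2n-1}B_{2n}}{(2n)!}$ by $\tfrac{\ri}{\pi}\zeta(2n)$, open up $\zeta(2n)=\sum_\ell\ell^{-2n}$, interchange with the Borel index, and resum the inner series via the even part of the rational generating function of $\mathrm{Li}_{-m}$. The paper's route outsources the interchange-of-sums issue to Hadamard's theorem; your route makes it explicit, which is why you need the Eulerian-polynomial bound $|\mathrm{Li}_{-m}(\zeta_N^k)|\le m!\,|1-\zeta_N^k|^{-(m+1)}$ both for the Gevrey-1 estimate and to establish absolute convergence of the double sum in the disk $|\zeta|<|1-\zeta_N^k|$. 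Your approach is more elementary (no Hadamard multiplication theorem, no contour integration) and makes the Gevrey-1 constant visible as $|1-\zeta_N^k|$, at the cost of a slightly smaller control disk than the paper's $2\pi k/N$; the paper's is more compact once one accepts Hadamard's theorem as a black box.

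One clarification is worth making. The opening paragraph of the simple-resurgence argument, deriving the ``convergent representation'' $\psi_k(y)=\sum_{r\ge1}\frac{\zeta_N^{kr}}{r}B(2\pi\ri r y)$ from the product expansion of $f_{k,N}$, is not actually used afterwards: the identity $(2\pi\ri)^{2n-1}B_{2n}/(2n)!=\ri\zeta(2n)/\pi$ and the $\ell$-sum expansion already operate directly on the coefficients of the formal series in Eq.~\eqref{eq: tilde-psi}. Moreover that opening step is not literally an identity ($\psi_k$ is a divergent formal series while the right-hand side is a convergent function on $\IH$), so it should be dropped or reframed as a statement about asymptotic expansions. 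Finally, you flag the use of the generating function $\sum_{j\ge0}\mathrm{Li}_{-j}(x)t^j/j!=x\re^t/(1-x\re^t)$ at $|x|=1$ as a potential worry, but this is unproblematic: $\mathrm{Li}_{-j}(x)$ is a rational function of $x$, both sides are analytic in $t$ near $t=0$ whenever $x\neq1$, and the identity holds in the disk $|t|<2\pi\min(k,N-k)/N$, which suffices. With the redundant step removed and the convergence argument run as you sketch, the proof closes.
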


\begin{proof}
By the Hadamard product formula, the Borel transform of $\psi_k(y)$ can be resummed\footnote{The convergence of the infinite sum in the right-hand side (RHS) of Eq.~\eqref{eq: intBorel2infty-proof} can be easily verified by, \emph{e.g.}, the limit comparison test. Indeed, where defined, the generic term of the series is dominated by $1/m^2$.} into 
\be \label{eq: intBorel2infty-proof}
\begin{aligned}
    \borel [\psi_k](\zeta)&=\frac{1}{2 \pi \ri} \sum_{m=1}^{\infty}\frac{1}{m^2}\left(\frac{1}{1-\zeta_N^{-k}\re^{-\frac{\zeta}{m}}}+\frac{1}{1-\zeta_N^{-k}\re^{\frac{\zeta}{m}}}\right) \, ,
\end{aligned}
\ee
which is a well-defined function of $\zeta$ for $|\zeta| < 2 \pi k /N$ and extends meromorphically to the whole complex $\zeta$-plane. See Appendix~\ref{app: proofs} for the full derivation.
In fact, $\borel [\psi_k](\zeta)$ has simple poles at $\zeta = \eta_{m, \ell}$ where   
\be \label{eq: eta-ml}
\eta_{m, \ell} = 2 \pi \ri m \left(\pm\frac{k}{N} + \ell\right) \, , \quad m \in \IZ_{\geq 1} \, , \quad \ell\in\IZ \, .
\ee
\end{proof}

\begin{cor}\label{cor:stokes-f}
    The singularities of the Borel transform of the asymptotic series $\psi_k(y)$ are simple poles located along the imaginary axis at the points
    \be \label{eq: eta-n}
    \eta_n=2\pi\ri \frac{n}{N} \, , \quad n\in\IZ_{\neq 0} \, ,
    \ee 
    while the corresponding Stokes constants are given by the arithmetic function
   \be\label{eq:Stokes-R-kN}
   R^k_n=\sum_{\substack{d\vert n\\ d\underset{N}{\equiv} k}}\frac{d}{n}-\sum_{\substack{d\vert n\\ d\underset{N}{\equiv} -k}}\frac{d}{n}\, .
\ee
The sums run over the positive integer divisors of $n$. Besides, $R^k_n=-R^k_{-n} \in \IQ$, while $n \, R^k_n \in \IZ$.
\end{cor}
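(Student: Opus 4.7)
The plan is to bootstrap the proof entirely off the closed-form expression for the Borel transform derived in Proposition~\ref{prop:Borel-f_kN}, namely
\be\label{eq: borel-redux}
\borel[\psi_k](\zeta) = \frac{1}{2\pi\ri}\sum_{m=1}^{\infty}\frac{1}{m^2}\left(\frac{1}{1-\zeta_N^{-k}\re^{-\zeta/m}}+\frac{1}{1-\zeta_N^{-k}\re^{\zeta/m}}\right),
\ee
from which the singular locus is already known to be the set of points $\eta_{m,\ell}=2\pi\ri m(\pm k/N+\ell)$ with $m\in\IZ_{\ge 1}$ and $\ell\in\IZ$.

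First I would show that, as $(m,\ell)$ ranges over the indexing set in Eq.~\eqref{eq: eta-ml}, the union of the two families of singular points exhausts precisely the arithmetic progression $\{2\pi\ri n/N\,:\,n\in\IZ_{\ne 0}\}$, giving the claimed location $\eta_n$. Concretely, for each $n\in\IZ_{\ne 0}$ and each positive divisor $m$ of $|n|$, setting $d=n/m$ produces an integer satisfying either $d\equiv k$ or $d\equiv -k\pmod{N}$ precisely when the point $2\pi\ri n/N$ is in the orbit of the second or first summand of Eq.~\eqref{eq: borel-redux}, respectively. Enumerating these by divisors of $n$ will naturally package multiple $(m,\ell)$ contributions to the same $\eta_n$.

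Next I would compute the residues. Each individual factor in Eq.~\eqref{eq: borel-redux} is of the form $1/(1-\zeta_N^{-k}\re^{\mp\zeta/m})$, whose pole at a zero of the denominator is simple, with derivative $\pm\zeta_N^{-k}\re^{\mp\zeta/m}/m=\pm 1/m$ evaluated on the locus. Thus the first summand contributes residue $+1/m$ and the second summand residue $-1/m$ at each of their respective poles, multiplied by the overall $1/(m^2\cdot 2\pi\ri)$ prefactor. Converting residues to Stokes constants via Eq.~\eqref{eq: Stokes0}, which gives $S_\omega=-2\pi\ri\,\mathrm{Res}_{\zeta=\rho_\omega}\borel[\psi_k]$, and then substituting $d=n/m$ so that $1/m=d/n$ yields exactly the two divisor sums in Eq.~\eqref{eq:Stokes-R-kN}. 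The condition that the poles really are simple (and no cancellation occurs between the two summands) follows because the congruences $d\equiv k$ and $d\equiv -k\pmod N$ are mutually exclusive for $k\in\IZ_N$ with $N\ne 2k$; the boundary case $k=\underline{k}$ can be treated directly and will automatically give $R_n^k=0$.

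Finally the arithmetic properties are immediate: replacing $n$ by $-n$ swaps the two divisor sums and flips the overall sign, giving $R_n^k=-R_{-n}^k$; rationality is clear from the expression $d/n$ with $d,n\in\IZ$; and multiplying through by $n$ shows $nR_n^k\in\IZ$. The main bookkeeping hazard—hence the step I expect to be most delicate—is the careful matching of the two parametrizations $(m,\ell)\mapsto\eta_{m,\ell}$ to the single index $n$, ensuring that no pole is double-counted and that the sign conventions in Eq.~\eqref{eq: Stokes0} are correctly tracked through the reindexing $d=n/m$. Once this is in place, everything else is a clean residue calculation.
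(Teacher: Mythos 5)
Your overall strategy is the same as the paper's: compute the residues of the explicit resummed Borel transform in Eq.~\eqref{eq: intBorel2infty-proof} and reindex over positive divisors $d = n/m$. The reindexing and the final arithmetic observations ($R_n^k = -R_{-n}^k$, rationality, integrality of $nR_n^k$) are handled correctly.

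However, your residue computation contains a genuine error. You correctly compute the derivative of the denominator $1 - \zeta_N^{-k}\re^{\mp\zeta/m}$ at a zero to be $\pm 1/m$, but then assert that the \emph{residue} of the reciprocal is also $\pm 1/m$. This inverts the residue formula: if $g$ has a simple zero at $\zeta_0$, then $\mathrm{Res}_{\zeta_0}(1/g) = 1/g'(\zeta_0)$, not $g'(\zeta_0)$. The residue of the bare factor is therefore $\pm m$, not $\pm 1/m$. With your stated value, the contribution to the Stokes constant from a single term would be $-2\pi\ri\cdot\frac{1}{2\pi\ri\, m^2}\cdot(\pm\tfrac{1}{m}) = \mp\tfrac{1}{m^3}$, which under $d = n/m$ gives $\mp d^3/n^3$, \emph{not} $\mp d/n$—so the computation as written does not yield Eq.~\eqref{eq:Stokes-R-kN}, contrary to your claim. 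Replacing $\pm 1/m$ by the correct residue $\pm m$ gives $\mp\tfrac{1}{m} = \mp\tfrac{d}{n}$, and the argument goes through. A secondary point: your discussion of why the poles of the full Borel transform are simple conflates simplicity with non-cancellation of the Stokes constant. Simplicity follows because each term of the series has simple poles and the remaining terms (those with no pole at the given $\eta_n$) sum to a function regular there, thanks to the $1/m^2$ decay; it has nothing to do with the mutual exclusivity of the congruences $d\equiv\pm k$, which only controls whether $R_n^k$ vanishes.
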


\begin{proof}
It follows from Proposition~\ref{prop:Borel-f_kN} that the singularities of $\borel[\psi_k](\zeta)$ are simple poles located at the points $\eta_{m,\ell}$ in Eq.~\eqref{eq: eta-ml}---equivalently, at the points $\eta_n$ in Eq.~\eqref{eq: eta-n}---and the corresponding local expansions are obtained via the formula in Eq.~\eqref{eq: Stokes0}. Thus, the Stokes constant $R_n^k$, $n \in \IZ_{\ne 0}$, associated with the singularity at $\eta_n$ is given by the sum of residues from each choice of $m \in \IZ_{\ge 1}$ and $\ell \in \IZ$ such that
\be
n = m (\pm k + \ell N) \, .
\ee
Therefore, we find that
\be
R_n^k=-2 \pi \ri \, \underset{\zeta=\eta_n}{\text{Res}}\,\borel[\psi_k](\zeta) =\sum_{\substack{m,\, \ell\\ \tfrac{n}{m}=k+N\ell}} \frac{1}{m}-\sum_{\substack{m,\, \ell\\ \tfrac{n}{m}=-k+N\ell}} \frac{1}{m} \, ,
\ee
which gives the desired formula in Eq.~\eqref{eq:Stokes-R-kN}.
\end{proof}

\begin{cor} \label{cor: disc-Rn}
The discontinuity of the asymptotic series $\psi_k(y)$ across the Stokes ray at angle $\pi/2$, that is, the generating series of the Stokes constants $R_n^k$, $n \in \IZ_{>0}$, is given by
\be \label{eq:disc-psi}
\mathrm{disc}_{\frac{\pi}{2}}\psi_k(y)= \sum_{n=1}^\infty R_n^k \re^{-\eta_n/y} = -g_{k,N}\big(-\tfrac{1}{Ny}\big)+g_{\underline{k}
,N}\big(-\tfrac{1}{Ny}\big) \, .
\ee
\end{cor}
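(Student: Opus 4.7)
The first equality in \eqref{eq:disc-psi} is immediate from the general discontinuity formula~\eqref{eq: Stokes1-poles} combined with Proposition~\ref{prop:Borel-f_kN} and Corollary~\ref{cor:stokes-f}: the Borel transform $\borel[\psi_k](\zeta)$ has only simple poles, the ones lying on the Stokes ray at angle $\pi/2$ are exactly $\{\eta_n=2\pi\ri n/N\}_{n\ge 1}$, and the associated Stokes constants are the $R_n^k$ of Eq.~\eqref{eq:Stokes-R-kN}. Convergence of the resulting series for $\Im(y)>0$ is not an issue: $|\re^{-\eta_n/y}|=\exp(-2\pi n\,\Im(y)/(N|y|^2))$ decays geometrically in $n$, while $|R_n^k|$ grows at most like the number of divisors of $n$.

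For the second equality I will substitute the arithmetic expression~\eqref{eq:Stokes-R-kN} for $R_n^k$ into $\sum_{n\ge 1}R_n^k\re^{-\eta_n/y}$ and interchange the two summations. Writing each $n$ appearing in the divisor sum as $n=dm$, with $d\ge 1$ in the prescribed residue class mod $N$ and $m\ge 1$, Fubini's theorem (justified by the bound above) gives
\begin{equation*}
\sum_{n\ge 1}R_n^k\,\re^{-\eta_n/y}
= \sum_{\substack{d\ge 1\\ d\equiv k\!\!\!\pmod N}}\sum_{m\ge 1}\frac{1}{m}\re^{-2\pi\ri dm/(Ny)}
-\sum_{\substack{d\ge 1\\ d\equiv -k\!\!\!\pmod N}}\sum_{m\ge 1}\frac{1}{m}\re^{-2\pi\ri dm/(Ny)}.
\end{equation*}

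The inner series is summed by $\sum_{m\ge 1}x^m/m=-\log(1-x)$ applied to $x=\re^{2\pi\ri d(-1/(Ny))}$, which lies strictly inside the unit disk for $\Im(y)>0$. This produces
\begin{equation*}
-\sum_{\substack{d\ge 1\\ d\equiv k}}\log\bigl(1-\re^{2\pi\ri d(-1/(Ny))}\bigr)
+\sum_{\substack{d\ge 1\\ d\equiv -k}}\log\bigl(1-\re^{2\pi\ri d(-1/(Ny))}\bigr).
\end{equation*}
For $k\in\IZ_N=\{1,\dots,N\}$ the set $\{d\ge 1:d\equiv k\!\!\pmod N\}$ equals $\{k,k+N,k+2N,\dots\}$, which is precisely the index set of the infinite product defining $(q^k;q^N)_\infty$. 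Hence the first line equals $-g_{k,N}(-1/(Ny))$; since $-k\equiv\underline{k}\pmod N$, the second line equals $+g_{\underline{k},N}(-1/(Ny))$. Combining gives~\eqref{eq:disc-psi}.

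The only non-routine point is justifying the interchange of summation and, equally, the matching of the arithmetic progression $\{d\ge 1:d\equiv k\!\!\pmod N\}$ with the set of exponents $\{k+Nj:j\ge 0\}$ used in the definition of $g_{k,N}$; once these are in hand, the identification of the $\log(1-\re^{2\pi\ri d\tau})$ series with $g_{k,N}(\tau)$ at $\tau=-1/(Ny)$ and the sign bookkeeping under $k\mapsto\underline{k}$ are immediate.
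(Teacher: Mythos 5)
Your proof is correct and takes essentially the same approach as the paper's: both expand/contract the logarithm of the $q$-Pochhammer symbol and regroup a double sum indexed by divisors, with the only cosmetic difference being the direction of the manipulation (you start from $\sum_n R_n^k \re^{-\eta_n/y}$ and reindex toward the $q$-Pochhammer difference, while the paper Taylor-expands the $q$-Pochhammer logs and collects terms to produce the divisor-sum coefficients). You additionally make explicit the first equality (disc equals the generating series, via Eq.~\eqref{eq: Stokes1-poles} plus Proposition~\ref{prop:Borel-f_kN} and Corollary~\ref{cor:stokes-f}) and the absolute-convergence justification for the interchange of sums, which the paper leaves implicit.
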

\begin{proof}
Applying the definition of the $q$-Pochhammer symbol in Eq.~\eqref{eq:f_kN} and Taylor expanding the logarithm, we write the RHS of Eq.~\eqref{eq:disc-psi} as
\be
\ba
\log \frac{(x^{k/N} ;x)_\infty}{(x^{(N-k)/N} ;x)_\infty} 
&= - \sum_{m=0}^{\infty} \sum_{\ell =1}^{\infty} \frac{1}{\ell} \left( x^{(k+ m N)\ell/N } - x^{(\underline{k}+ m N) \ell/N } \right) \\
&= - \sum_{n =1}^{\infty} x^{n/N} \Bigg[ \sum_{\substack{d\vert n\\ d\underset{N}{\equiv} k}}\frac{d}{n}-\sum_{\substack{d\vert n\\ d\underset{N}{\equiv} -k}}\frac{d}{n} \Bigg] \, ,
\ea
\ee
where $x = \re^{-2 \pi \ri/y}$.
\end{proof}
As a consequence of Corollary~\ref{cor: disc-Rn}, for any choice of $k \in \IZ_{N}$, the discontinuity of the asymptotic series obtained from the $q$-Pochhammer symbol $f_{k,N}(y)$ is uniquely determined by the $q$-Pochhammer symbols $g_{k,N}(-\tfrac{1}{Ny})$ and $g_{\underline{k},N}(-\tfrac{1}{Ny})$. As we will prove in Corollary~\ref{cor: disc-Sn}, the converse is also true. This two-way exchange of perturbative/non-perturbative
content between the two $q$-Pochhammer symbols in Eq.~\eqref{eq:f_kN} is reminiscent of the exact strong-weak resurgent symmetry of~\cite{Rella22, FR1phys}. We will return to this point in Remark~\ref{rmk: not-L-functs} and later in Section~\ref{sec:strong-weak}.

\subsubsection{The resurgent structure of \texorpdfstring{$\tilde{g}_{k,N}$}{gkN}}

\begin{prop}\label{prop:Borel-g_kN}
The formal power series $\varphi_k(y)$ in Eq.~\eqref{eq: tilde-phi} governing the asymptotic expansion $\tilde{g}_{k,N}(y/N)$ in Eq.~\eqref{eq: expansion-gkN} is Gevrey-1 and simple resurgent.
\end{prop}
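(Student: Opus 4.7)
The plan is to follow closely the template of the proof of Proposition~\ref{prop:Borel-f_kN} for $\psi_k$, replacing the polylogarithm manipulations used there with their Bernoulli-polynomial counterparts through the Hurwitz formula. The first step is to establish the Gevrey-1 bound. Combining Euler's estimate $|B_{2n}|\leq 4(2n)!/(2\pi)^{2n}$ with the Hurwitz--Fourier bound $|B_{2n+1}(k/N)|\leq C(2n+1)!/(2\pi)^{2n+1}$, the coefficient of $y^{2n}$ in $\varphi_k$ is majorized by a constant multiple of $(2n-1)!/(2\pi)^{2n+1}$, giving Gevrey-1 with $\CA=2\pi$.

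The central step is to obtain a Hadamard-style closed form for the Borel transform, in complete analogy with Eq.~\eqref{eq: intBorel2infty-proof}. Inserting Euler's identity $B_{2n}=(-1)^{n-1}\,2(2n)!\,(2\pi)^{-2n}\sum_{\ell\geq 1}\ell^{-2n}$ into the coefficients of $\borel[\varphi_k](\zeta)$, the resulting double sum converges absolutely for sufficiently small $|\zeta|$, which licenses an exchange of summation order. The inner $n$-sum is then recognizable as the odd part of the Bernoulli-polynomial generating function $we^{\alpha w}/(e^{w}-1)$, evaluated at $\alpha=k/N$ and with the $n=0$ contribution subtracted. This leads to the expression
\[
\borel[\varphi_k](\zeta) \;=\; -2\sum_{\ell\geq 1} \frac{S(\zeta/\ell)}{\ell}\,, \qquad S(w)\;=\;\frac{1}{2w}\,\frac{e^{(k/N)w}-e^{(1-k/N)w}}{e^{w}-1}\;-\;\frac{k/N-1/2}{w}\,.
\]
The function $S(w)$ is meromorphic on $\IC$; its apparent singularity at $w=0$ is removable since the two terms cancel, with $S(w)=\tfrac{B_3(k/N)}{6}\,w+O(w^{3})$, and this linear vanishing is precisely what guarantees absolute convergence of the outer $\ell$-series on compact subsets avoiding the pole set.

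The third step is to read off the singularities and conclude simple resurgence. The only singularities of $S(w)$ are the simple poles of $1/(e^{w}-1)$ at $w=2\pi\ri j$, $j\in\IZ_{\neq 0}$, with residue $\sin(2\pi jk/N)/(2\pi j)$. Therefore $S(\zeta/\ell)$ has simple poles at $\zeta=2\pi\ri j\ell$, and after resummation over $\ell\geq 1$ the singularities of $\borel[\varphi_k]$ collapse onto the arithmetic progression $\zeta=2\pi\ri n$ with $n\in\IZ_{\neq 0}$. Endless meromorphic continuation of $\borel[\varphi_k]$ across the Borel plane is immediate from the closed form, establishing simple resurgence.

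The main obstacle I anticipate lies in step three: one must verify that, after summation over divisor pairs, the pole at each $\zeta=2\pi\ri n$ remains simple and the residue is given by a finite, explicit arithmetic sum. At each such $n$, every factorization $n=j\ell$ with $\ell\geq 1$ and $j\in\IZ_{\neq 0}$ contributes, and the total residue assembles into a divisor sum of the form $\sum_{d\mid |n|}\sin(2\pi dk/N)/d$. Finiteness is automatic from the absolute convergence already used to swap sums, but the sign bookkeeping---tracking how the sign of $j$ is controlled by the sign of $n$, and matching the two residue formulas for $n>0$ and $n<0$---mirrors the analogous step for $R_n^k$ in Corollary~\ref{cor:stokes-f} and will supply the Stokes-constant formula needed for the corollary immediately following this proposition.
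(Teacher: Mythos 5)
Your proposal is correct, and it arrives at the desired conclusion via a route that is genuinely different from the paper's. The paper also represents $\borel[\varphi_k]$ as a Hadamard product $\tilde g\diamond\tilde f$ of the resummed series $g(\zeta)=-\tfrac{1}{\zeta}+\tfrac12\coth(\zeta/2)$ and $f(\zeta)=-\tfrac{B_1(k/N)}{\zeta}+\tfrac{\sin((2k/N-1)\pi\zeta)}{2\zeta\sin(\pi\zeta)}$, but then invokes Hadamard's multiplication theorem to write $\borel[\varphi_k]$ as the contour integral $\tfrac{1}{2\pi\ri}\int_\gamma f(s)\,g(\zeta/s)\,\tfrac{ds}{s}$ and collects the residues at the \emph{real-axis} poles of $f$ (at $s=\ell\in\IZ_{\neq0}$), yielding $\borel[\varphi_k](\zeta)=-\sum_{\ell\ge1}\tfrac{\sin(2\pi k\ell/N)}{\pi\ell}\big(-\tfrac1\zeta+\tfrac{1}{2\ell}\coth(\tfrac{\zeta}{2\ell})\big)$. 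Your proof instead substitutes Euler's $\zeta(2n)$-identity for $B_{2n}$ directly into the coefficients, which amounts to decomposing the $\coth$-factor into its Mittag--Leffler expansion over the \emph{imaginary-axis} poles, and then resums the Bernoulli-polynomial factor into your closed-form meromorphic $S(w)$. This is precisely the dual pairing of the Hadamard product: the paper's Proposition~\ref{prop:Borel-f_kN} (for $\psi_k$) collects residues at the imaginary-axis poles, and for $\varphi_k$ the paper switches to the real-axis poles, whereas you keep the Proposition~\ref{prop:Borel-f_kN} template throughout. The two closed forms are different but equivalent, and they produce the same singularity set $\zeta\in 2\pi\ri\,\IZ_{\ne0}$ with the same residues, which you verify by tracking the divisor sum. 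A virtue of your route is that it bypasses the Hadamard multiplication theorem entirely: the only nontrivial step is the interchange of summations, which you justify by absolute convergence; and you make the Gevrey-1 estimate explicit rather than reading it off from the closed form. The residue bookkeeping you flag as a concern in step three is indeed the same kind of computation that appears in Corollary~\ref{cor:stokes-g} and it works out as you describe, matching $S_n^k=2\ri\sum_{d\mid n}\tfrac1d\sin(2\pi kd/N)$.
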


\begin{proof}
By the Hadamard product formula, the Borel transform of $\varphi_k(y)$ can be resummed\footnote{The convergence of the infinite sum in the RHS of Eq.~\eqref{eq: intBorel20-proof} can be easily verified by, \emph{e.g.}, the limit comparison test. Indeed, where defined, the generic term of the series is dominated by $1/\ell^2$.} into
\be \label{eq: intBorel20-proof}
    \borel [\varphi_k](\zeta)=-\sum_{\ell=1}^\infty \frac{\sin\big(\tfrac{2 \pi k \ell}{N}\big)}{\pi\ell}\left(-\frac{1}{\zeta}+\frac{1}{2\ell}\coth\left(\frac{\zeta}{2\ell}\right)\right) \, ,
\ee
which is a well-defined function of $\zeta$ for $|\zeta| < 2 \pi$ and extends meromorphically to the whole complex $\zeta$-plane. See Appendix~\ref{app: proofs} for the full derivation. 
In fact, $\borel [\varphi_k](\zeta)$ has simple poles at $\zeta = \rho_{m,\ell}$ where
\be \label{eq: zeta-ml}
\rho_{m,\ell} = 2\pi\ri m \ell \, , \quad m\in\IZ_{\neq 0} \, , \quad \ell\in \IZ_{\geq 1} \, .
\ee
\end{proof}

\begin{cor}\label{cor:stokes-g}
   The singularities of the Borel transform of the asymptotic series $\varphi_k(y)$ are simple poles located along the imaginary axis at the points
    \be \label{eq: zeta-n}
    \rho_n=2\pi\ri n \, , \quad n\in\IZ_{\neq 0} \, ,
    \ee 
    while the corresponding Stokes constants are given by the arithmetic function
   \be\label{eq:Stokes-S-kN}
   S_n^k= \ri \sum_{j \in \IZ_N} \sin\left( \frac{2 \pi k j}{N} \right) \Bigg[ \sum_{\substack{d\vert n\\ d\underset{N}{\equiv} j}} \frac{1}{d} -\sum_{\substack{d\vert n\\ d\underset{N}{\equiv} -j}} \frac{1}{d}\Bigg] \, .
\ee
The inner sums run over the positive integer divisors of $n$. Besides, $S_n^k=S^k_{-n} \in \ri \IR$.
\end{cor}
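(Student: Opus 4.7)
The plan is to read off the Stokes constants directly from the closed-form Borel transform in Eq.~\eqref{eq: intBorel20-proof}, following the same strategy as in the proof of Corollary~\ref{cor:stokes-f}: locate the poles using the pole structure of $\coth$, compute residues, then apply Eq.~\eqref{eq: Stokes0} and massage the resulting compact expression into the stated arithmetic form.

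First, I would use the fact that $\coth(z)$ has simple poles of residue $1$ at $z=\ri\pi m$, $m\in\IZ$, to conclude that each summand $\tfrac{1}{2\ell}\coth(\zeta/(2\ell))$ in Eq.~\eqref{eq: intBorel20-proof} has a simple pole of residue $1$ at every $\zeta=2\pi\ri m\ell$. For $m=0$ this pole is cancelled inside the same summand by the $-1/\zeta$ counterterm, consistently with the analyticity of $\borel[\varphi_k]$ near the origin. The singular locus is therefore $\{2\pi\ri m\ell : m\in\IZ_{\ne 0},\, \ell\in\IZ_{\ge 1}\}=\{2\pi\ri n : n\in\IZ_{\ne 0}\}$, confirming the location $\rho_n=2\pi\ri n$. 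Summing residues at fixed $\zeta=\rho_n$ over all factorizations $n=m\ell$ with $\ell$ ranging over the positive divisors of $|n|$ and $m=n/\ell$, and applying Eq.~\eqref{eq: Stokes0}, I would obtain the compact form
\begin{equation*}
S_n^k \;=\; 2\ri \sum_{d \,|\, |n|} \frac{\sin(2\pi k d/N)}{d}\,.
\end{equation*}

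Finally, I would partition the divisors of $|n|$ according to their residue class $j\in\IZ_N$ modulo $N$. Using the antisymmetry $\sin(2\pi k(N-j)/N)=-\sin(2\pi k j/N)$, the compact expression above splits into the difference of two sums indexed by $j$ and $-j$ modulo $N$, matching precisely the formula~\eqref{eq:Stokes-S-kN}. The reflection $S_n^k=S_{-n}^k$ is immediate, since positive divisors of $n$ and $-n$ coincide, and the inclusion $S_n^k\in\ri\IR$ is manifest from the factor of $\ri$ multiplying a real arithmetic sum. The only (mild) point of care is tracking sign conventions in the passage from residues to Stokes constants; no substantial obstacle is expected.
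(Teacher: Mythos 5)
Your proposal is correct and follows essentially the same route as the paper's proof: starting from the resummed Borel transform of Proposition~\ref{prop:Borel-g_kN}, reading off the pole locations, summing residues over the factorizations $n=m\ell$, applying the normalization in Eq.~\eqref{eq: Stokes0} to get $S_n^k = 2\ri\sum_{d\mid n}\sin(2\pi k d/N)/d$, and then regrouping divisors by residue class modulo $N$ using the antisymmetry of $\sin$. The only difference is that you spell out the $\coth$ residue bookkeeping and the $m=0$ cancellation explicitly, whereas the paper compresses those points into the statement of Proposition~\ref{prop:Borel-g_kN}.
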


\begin{proof}
It follows from Proposition~\ref{prop:Borel-g_kN} that the singularities of $\borel[\varphi_k](\zeta)$ are simple poles located at the points $\rho_{m,\ell}$ in Eq.~\eqref{eq: zeta-ml}---equivalently, at the points $\rho_n$ in Eq.~\eqref{eq: zeta-n}---and the corresponding local expansions are obtained via the formula in Eq.~\eqref{eq: Stokes0}. Thus, the Stokes constant $S_n^k$, $n \in \IZ_{\ne 0}$, associated with the singularity at $\rho_n$ is given by the sum of residues from each choice of $m \in \IZ_{\ne 0}$ and $\ell \in \IZ_{\ge 1}$ such that
\be
n = m \ell \, .
\ee
Therefore, we find that
\be
S_n^k=-2 \pi \ri \, \underset{\zeta=\rho_n}{\text{Res}}\,\borel[\varphi_k](\zeta) = 2 \ri \sum_{\substack{m,\, \ell\\ \tfrac{n}{\ell}=m}}  \frac{1}{\ell} \sin\left(\frac{2 \pi k \ell }{N}\right) =  2 \ri  \sum_{d\vert n} \frac{1}{d} \sin\left( \frac{2 \pi k d}{N} \right)\, ,
\ee 
which gives the desired formula in Eq.~\eqref{eq:Stokes-S-kN}. 
\end{proof}

Let us stress that the formula above for the Stokes constants $S_n^k$, $n \in \IZ_{\ne 0}$, simplifies considerably in the cases of $N=3,4$. Specifically, 
\begin{subequations}
    \begin{align}
       N&=3 \quad : \quad - \frac{\ri}{\sqrt{3}} S_n^1 = \sum_{\substack{d\vert n\\ d\underset{3}{\equiv} 1}} \frac{1}{d} -\sum_{\substack{d\vert n\\ d\underset{3}{\equiv} -1}} \frac{1}{d} = \sum_{d|n} \frac{1}{d} \chi_{3}(d) \in \IQ_{>0} \, ,\label{eq:sn-3} \\
       N&=4 \quad : \quad -  \frac{\ri}{2} S_n^1 = \sum_{\substack{d\vert n\\ d\underset{4}{\equiv} 1}} \frac{1}{d} -\sum_{\substack{d\vert n\\ d\underset{4}{\equiv} -1}} \frac{1}{d} = \sum_{d|n} \frac{1}{d} \chi_{4}(d) \in \IQ_{>0} \label{eq:sn-4}\, ,
    \end{align}
\end{subequations}
where $\chi_{3}$, $\chi_{4}$ are the unique non-principal Dirichlet characters\footnote{The Dirichlet characters $\chi_{3}(\bullet)$ and $\chi_{4}(\bullet)$ are also known as the Kronecker symbols $\big(\frac{-3}{\bullet}\big)$ and $\big(\frac{-4}{\bullet}\big)$, respectively.} of modulus 3 and 4, respectively. In these cases, we also have that
\begin{subequations}
    \begin{align}
       N&=3 \quad : \quad R_n^1 = \sum_{d|n} \frac{d}{n} \chi_{3}(d) \in \IQ_{ \ne 0} \, ,\label{eq:rn-3} \\
       N&=4 \quad : \quad R_n^1 = \sum_{d|n} \frac{d}{n} \chi_{4}(d) \in \IQ_{\ne 0} \, . \label{eq:rn-4}
    \end{align}
\end{subequations}

\begin{cor} \label{cor: disc-Sn} 
The discontinuity of the asymptotic series $\varphi_k(y)$ across the Stokes ray at angle $\pi/2$, that is, the generating series of the Stokes constants $S_n^k$, $n \in \IZ_{>0}$, is given by 
\be \label{eq:disc-phi}
\mathrm{disc}_{\frac{\pi}{2}}\varphi_k(y)= \sum_{n=1}^\infty S_n^k \re^{-\rho_n/y} = - f_{k,N}\Big( -\tfrac{1}{y} \Big)+f_{\underline{k},N}\Big( -\tfrac{1}{y} \Big) + \frac{\pi \ri}{N} (2k-N) \, .
\ee
\end{cor}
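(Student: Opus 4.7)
The plan is to mirror the elementary computation used in the proof of Corollary~\ref{cor: disc-Rn}, but carefully track an extra constant term that arises because the $q$-Pochhammer symbols $f_{k,N}$ involve factors $(1-\zeta_N^{\pm k} q^n)$ whose $n=0$ contribution is not close to $1$. The strategy is to expand the right-hand side $-f_{k,N}(-1/y)+f_{\underline{k},N}(-1/y)$ directly, reorganize it as a power series in $\tilde q=\re^{-2\pi\ri/y}$, and recognize the coefficients as the Stokes constants $S_n^k$ determined in Corollary~\ref{cor:stokes-g}.

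Concretely, I would first write
\[
-f_{k,N}\bigl(-\tfrac{1}{y}\bigr)+f_{\underline{k},N}\bigl(-\tfrac{1}{y}\bigr)
= \bigl[-\log(1-\zeta_N^k)+\log(1-\zeta_N^{-k})\bigr]
-\sum_{n\ge 1}\Bigl[\log(1-\zeta_N^{k}\tilde q^{n})-\log(1-\zeta_N^{-k}\tilde q^{n})\Bigr],
\]
isolating the $n=0$ term of each infinite product. For the tail sum, $|\zeta_N^{\pm k}\tilde q^{n}|=|\tilde q|^{n}<1$, so I Taylor-expand each logarithm absolutely, interchange the summations over $n\ge 1$ and $\ell\ge 1$, and sum the geometric series to obtain
\[
\sum_{\ell=1}^{\infty}\frac{2\ri\sin(2\pi k\ell/N)}{\ell}\,\frac{\tilde q^{\ell}}{1-\tilde q^{\ell}}
=\sum_{m=1}^{\infty}\tilde q^{m}\,\bigl(2\ri\sum_{d\mid m}\tfrac{\sin(2\pi k d/N)}{d}\bigr),
\]
after reparametrizing $m=n\ell$ and summing over divisors. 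The inner factor is precisely $S_m^k$ by Corollary~\ref{cor:stokes-g}, and $\tilde q^{m}=\re^{-\rho_{m}/y}$, so this gives the generating series on the left-hand side.

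The remaining, and essentially the only delicate, step is computing the closed-form $n=0$ contribution. Writing $1-\zeta_N^{\pm k}=\mp 2\ri\,\sin(\pi k/N)\,\re^{\pm\ri\pi k/N}$ and taking principal logarithms (whose imaginary parts lie safely in $(-\pi/2,\pi/2)$ since $1\le k\le N-1$), the real parts cancel and one finds
\[
-\log(1-\zeta_N^{k})+\log(1-\zeta_N^{-k}) = \ri\pi\left(1-\tfrac{2k}{N}\right)=\frac{\pi\ri(2k-N)}{N}\cdot(-1),
\]
which is exactly the announced shift $+\tfrac{\pi\ri}{N}(2k-N)$ on the right-hand side of Eq.~\eqref{eq:disc-phi}.

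The principal obstacle is keeping track of the correct branch of the logarithm in this constant term, together with justifying all the interchanges of sums and the convergence of the Laurent-type series at $\tilde q=1$; the latter follows since $|\tilde q|<1$ in the relevant sectors near $\arg(y)=\pi/2$, as used in Corollary~\ref{cor: disc-Rn}. Once this constant is correctly pinned down, the rest is a routine reindexing, and the equality with $\sum_{n\ge 1}S_n^k\re^{-\rho_n/y}$ from Eq.~\eqref{eq: Stokes1-poles} (applied to the simple-pole structure of $\borel[\varphi_k]$ at $\rho_n=2\pi\ri n$) closes the argument.
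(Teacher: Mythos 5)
Your proof is correct and takes essentially the same route as the paper: Taylor-expand the logarithms of the $q$-Pochhammer symbols, swap the double sum, and reindex over divisors, pinning down the constant term via the principal arguments $\arg(1-\zeta_N^{\pm k})=\pm(\pi k/N-\pi/2)$. Two cosmetic remarks: the paper absorbs the constant $\tfrac{\pi\ri}{N}(2k-N)=\log\tfrac{1-\zeta_N^{k}}{1-\zeta_N^{-k}}$ into the infinite products so that the whole right-hand side collapses to $\log\tfrac{(x\zeta_N^{-k};x)_\infty}{(x\zeta_N^{k};x)_\infty}$ before expanding, whereas you isolate the $n=0$ factor and evaluate it directly (equivalent bookkeeping); and your closing sentence is phrased slightly misleadingly—the constant you compute is $-\tfrac{\pi\ri}{N}(2k-N)$, the \emph{negative} of the announced shift, which is precisely what is required since it cancels against the $+\tfrac{\pi\ri}{N}(2k-N)$ appearing on the right-hand side of Eq.~\eqref{eq:disc-phi}.
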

\begin{proof}
Applying the definition of the $q$-Pochhammer symbol in Eq.~\eqref{eq:f_kN} and Taylor expanding the logarithm, we write the RHS of Eq.~\eqref{eq:disc-phi} as
\be
\ba
\log \frac{(x \zeta_N^k ;x)_\infty}{(x \zeta_N^{-k} ;x)_\infty} 
&= - \sum_{m, \ell =1}^{\infty} \frac{x^{m \ell}}{\ell} \left( \re^{2 \pi \ri k \ell/N} - \re^{-2 \pi \ri k \ell/N} \right) \\
&= - 2 \ri \sum_{n =1}^{\infty} x^n \Bigg[ \sum_{d|n} \frac{1}{d} \sin \left( \frac{2 \pi k d}{N} \right) \Bigg] \, ,
\ea
\ee
where $x=\re^{- 2 \pi \ri /y}$. Note that we have also used that
\be
\frac{\pi \ri}{N} (2k-N) = \log \frac{(1-\zeta_N^k)}{(1-\zeta_N^{-k})} \, .
\ee
\end{proof}

\begin{rmk} \label{rmk: not-L-functs}
For fixed $N \in \IZ_{\ge 2}$ and $k \in \IZ_N$, the Dirichlet series obtained from the sequences of Stokes constants $\{R_n^k\}$ and $\{S_n^k\}$, $n \in \IZ_{>0}$, that is, 
\be \label{eq: dir-LL'}
L'_{k,N}(s) = \sum_{n=1}^\infty \frac{R_n^k}{n^s} \, , \quad L_{k,N}(s) = \sum_{n=1}^\infty \frac{S_n^k}{n^s} \, , \quad s \in \IC \, ,
\ee
are not necessarily $L$-series. More precisely, for $N>4$, the arithmetic functions in Eqs.~\eqref{eq:Stokes-R-kN} and~\eqref{eq:Stokes-S-kN} are not multiplicative. Equivalently, they do not satisfy an expansion as an Euler product indexed by the set of prime numbers. It follows that the formal power series $\psi_k(y)$ and $\varphi_k(y)$ in Eqs.~\eqref{eq: tilde-psi} and~\eqref{eq: tilde-phi} are not modular resurgent, \emph{i.e.}, Definition~\ref{def:modular_res_struct} does not generally apply.
Nonetheless, a weaker formulation of the modular resurgence paradigm holds. 
If we introduce the functions
\begin{subequations} \label{eq: akbk}
\begin{align}
\mathcal{F}_{k,N}(y) &:= f_{k,N}(y) - f_{\underline{k},N}(y) - \frac{\pi \ri}{N}(2k-N) \, ,  \label{eq: ak} \\
\mathcal{G}_{k,N}(y) &:= -g_{k,N}(y) + g_{\underline{k},N}(y) \, , \label{eq: bk}
\end{align}
\end{subequations}
and denote by $\tilde{\mathcal{F}}_{k,N}(y)$ and $\tilde{\mathcal{G}}_{k,N}(y)$ their asymptotic expansions for $y \to 0$ with $\Im(y)>0$, then Corollaries~\ref{cor: disc-Rn} and~\ref{cor: disc-Sn} imply that
\be \label{eq: disc-akbk}
\mathrm{disc}_{\frac{\pi}{2}}\tilde{\mathcal{F}}_{k,N}(y) = 2\mathcal{G}_{k,N}\big(-\tfrac{1}{Ny}\big) \, , \quad
    \mathrm{disc}_{\frac{\pi}{2}}\tilde{\mathcal{G}}_{k,N}(y) = -2\mathcal{F}_{k,N}\big(-\tfrac{1}{Ny}\big) \, .
\ee
As we will see in Section~\ref{sec:strong-weak}, this simple exchange of perturbative/non-perturbative content between the $q$-Pochhammer symbols $\mathcal{F}_{k,N}$ and $\mathcal{G}_{k,N}$ finds an equivalent realization in the strong-weak symmetry relating the dual resurgent structures of the spectral trace of a local weighted projective plane. 
\end{rmk}

\subsubsection{Borel--Laplace sums}

We denote by $\e\colon\IC\setminus \ri\IR_{\geq 0}\to\IC$ the function defined as
\begin{equation} \label{eq: e1-def}
    \e(y):=\frac{1}{2\pi \ri}\int_0^{\infty} \re^{-2\pi t} \frac{dt}{t+\ri y} =\frac{1}{2\pi \ri} \re^{2\pi \ri y} \, \Gamma(0,2\pi \ri y) \, ,
\end{equation}
where $\Gamma(s, 2\pi \ri y)$ is the upper incomplete gamma function. The exponential integral $\e(y)$ is analytic for $y \in \IC\setminus \ri\IR_{\geq 0}$, and its jump across the branch cut along the positive imaginary axis is given by 
\be
\lim_{\delta \rightarrow 0^+} \e(\ri x - \delta) - \e(\ri x+ \delta) = \re^{-2 \pi x} \, ,
\ee
where we have fixed $x \in \IR_{\ge 0}$.
Let us now go back to the asymptotic expansions of the $q$-Pochhammer symbols $f_{k,N}$ and $g_{k,N}$ in Eq.~\eqref{eq:f_kN}. Their Borel--Laplace sum along the real axis follows from the particularly simple singularity structure of their Borel transforms, and we can explicitly compute it at angles $0$ and $\pi$ in terms of the function $\e$.

\begin{prop} \label{prop: BL-e1}
    The asymptotic expansions $\tilde{f}_{k,N}(y)$ and $\tilde{g}_{k,N}(y)$ in Eqs.~\eqref{eq: expansion-fkN} and~\eqref{eq: expansion-gkN} are Borel--Laplace summable along the positive and negative real axes. Their Borel--Laplace sums at angles $\alpha = 0, \pi$ are 
    \begin{subequations}
    \begin{align}
        s_\alpha(\tilde{f}_{k,N})(y) =&\frac{1}{2} \log(1-\zeta_N^k)+\frac{1}{2\pi \ri y} \mathrm{Li}_2(\zeta_N^k) - \sum_{n \in \IZ_{\ne 0}} R_n^k\, \e \big(-\tfrac{n}{N y} \big) \, , \\
        s_\alpha(\tilde{g}_{k,N})(y) =&- \frac{\pi \ri }{12 N y} - B_1\left(\tfrac{k}{N}\right) \log(- 2 \pi \ri N y) - \log \frac{\Gamma\big(\tfrac{k}{N}\big)}{\sqrt{2 \pi}} \nonumber \\
        &- B_2\big(\tfrac{k}{N}\big) \frac{\pi \ri N y}{2} + \sum_{n \in \IZ_{\ne 0}} S_n^k\, \e \big(-\tfrac{n}{N y} \big) \, ,
    \end{align}
    \end{subequations}
    which are analytic if $\Re(y)>0$ for $\alpha=0$ and if $\Re(y)<0$ for $\alpha=\pi$. 
\end{prop}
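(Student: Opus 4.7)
The plan is to exploit the fact that Propositions~\ref{prop:Borel-f_kN} and~\ref{prop:Borel-g_kN} provide closed-form expressions for the Borel transforms $\borel[\psi_k](\zeta)$ and $\borel[\varphi_k](\zeta)$, with all singularities confined to simple poles on the imaginary axis and hence away from the Laplace rays $\IR_{\ge 0}$ and $\IR_{\le 0}$. The first task is to establish Borel--Laplace summability: from the explicit Hadamard sums in Eqs.~\eqref{eq: intBorel2infty-proof} and~\eqref{eq: intBorel20-proof}, a termwise estimate shows that each Borel transform is uniformly bounded as $|\zeta|\to\infty$ along the real axis (the inner fractions tend to finite limits and the $1/m^2$, $1/\ell^2$ weights guarantee summability). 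Standard Borel--Laplace theory (cf.~\cite[Thm.~5.20]{diver-book}) then yields summability at the angles $\alpha=0,\pi$, with $s_0$ analytic on $\Re(y)>0$ and $s_\pi$ analytic on $\Re(y)<0$.

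Next I would compute the Borel--Laplace sums. The crucial identity is
\[
\int_0^{\infty} \frac{\re^{-\zeta/y}\,d\zeta}{\zeta-2\pi\ri n/N} = 2\pi\ri\,\e\!\left(-\tfrac{n}{Ny}\right),
\]
obtained by the substitution $u=\zeta/y$ in the integral representation of $\e$ in Eq.~\eqref{eq: e1-def}, together with its analogue for the poles at $\rho_n=2\pi\ri n$. The plan is to deform the Laplace contour $\IR_{\ge 0}$ into a sum of Hankel keyhole contours, one per pole $\eta_n$ (respectively $\rho_n$), plus a residual tail at infinity; the exponential factor $\re^{-\zeta/y}$ controls the deformation at infinity for $\Re(y)>0$. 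By the local structure in Eq.~\eqref{eq: Stokes0}, each simple pole contributes $-R_n^k\,\e(-n/(Ny))$ (resp.~$+S_n^k\,\e(-n/(Ny))$), yielding the claimed $\e$-function series. The non-series ``tree-level'' pieces---$\tfrac12\log(1-\zeta_N^k)+\tfrac{1}{2\pi\ri y}\mathrm{Li}_2(\zeta_N^k)$ for $f_{k,N}$, and the $\log\Gamma$--Bernoulli block for $g_{k,N}$---are immune to Borel--Laplace summation and pass through unchanged from Eqs.~\eqref{eq: expansion-fkN}--\eqref{eq: expansion-gkN}. The same argument applied to $\IR_{\le 0}$ handles $\alpha=\pi$, the symmetries $R_n^k=-R_{-n}^k$ and $S_n^k=S_{-n}^k$ combining positive and negative indices into a single sum over $n\in\IZ_{\ne 0}$.

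The main technical obstacle lies in justifying the contour deformation in the presence of infinitely many poles. Because $|R_n^k|$ can grow like a divisor function while $\e(-n/(Ny))$ decays only like $1/n$, the series $\sum_{n\ne 0} R_n^k\,\e(-n/(Ny))$ is only conditionally convergent, so a careful truncation-and-limit argument is required to show that the arcs connecting successive Hankel loops contribute vanishingly as the truncation is removed. A pragmatic alternative that avoids Mittag--Leffler-type rearrangements is to interchange the Laplace integral directly with the absolutely convergent $m$- and $\ell$-sums in the Hadamard representations~\eqref{eq: intBorel2infty-proof}--\eqref{eq: intBorel20-proof}, expand the inner ratios as geometric series in $\zeta_N^{\pm k}\re^{\mp\zeta/m}$ (valid termwise on $\IR_{>0}$), and match each resulting Laplace integral with an $\e$-function by collecting exponentials with common exponent $n\zeta/(Nm)$; the divisor-sum structure of $R_n^k$ and $S_n^k$ then emerges naturally by grouping over factorizations $n=dm$, with the residual $j=0$ contribution being exactly cancelled by the pole at the origin of $\borel[\psi_k]$ in light of the consistency check $\borel[\psi_k](0)=a_0$.
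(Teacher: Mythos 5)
Your main argument---pulling out one $\e$-function per simple pole of the Borel transform via the Laplace integral identity $\int_0^\infty \re^{-\zeta/y}\,d\zeta/(\zeta-\eta_n)=2\pi\ri\,\e(\ri\eta_n/(2\pi y))$---is exactly the paper's, which writes $\borel[\psi_k](\zeta) = -\tfrac{1}{2\pi\ri}\sum_{n\ne 0} R_n^k/(\zeta-\eta_n)$ and $\borel[\varphi_k](\zeta) = -\tfrac{1}{2\pi\ri}\sum_{n\ne 0} S_n^k/(\zeta-\rho_n)$ and integrates term by term, invoking absolute convergence to interchange sum and integral. Your boundedness-on-the-axis argument from the Hadamard forms in Eqs.~\eqref{eq: intBorel2infty-proof} and~\eqref{eq: intBorel20-proof} is a clean and correct way to establish summability, and in fact more explicit than what the paper records. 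Your worry about only-conditional convergence of $\sum_{n\ne 0} R_n^k\,\e(-n/(Ny))$ is also well founded and applies with equal force to the paper's own proof: the Stokes constants grow like divisor functions while $\e(-n/(Ny))$ decays only like $1/|n|$, so the interchange (and, arguably, the very partial-fraction representation) requires pairing $\pm n$ using $R_{-n}^k=-R_n^k$, $S_{-n}^k=S_n^k$ together with a truncation-and-limit step, which neither you nor the paper carry out explicitly.

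The weakness of your proposal lies in the ``pragmatic alternative.'' Interchanging the Laplace integral with the $m$- and $\ell$-sums and expanding the inner ratios in geometric series produces terms $\re^{\mp j\zeta/m}$, whose Laplace transforms along $\IR_{>0}$ are the rational functions $ym/(m+jy)$, not $\e$-functions; $\e$-functions arise from the Laplace transform of a Cauchy kernel $(\zeta-\eta_n)^{-1}$, not of an exponential. What the geometric-series route does yield, after resumming over $m$, is precisely the $q$-Pochhammer-symbol expressions of Lemma~\ref{lemma:summability-f-kN} (this is exactly what the paper does in Appendix~\ref{app: proofs}); converting those into the $\e$-function sums of the present proposition is a nontrivial additional step that you do not supply, so this route does not in fact sidestep the contour/convergence issue for the stated target. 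Finally, your closing remark that the residual $j=0$ term is ``exactly cancelled by the pole at the origin of $\borel[\psi_k]$'' is incorrect: $\borel[\psi_k]$ is analytic at $\zeta=0$ and takes the finite value $a_0$, and the $j=0$ contribution is the harmless convergent constant $\tfrac{1}{2\pi\ri}\sum_{m\ge 1} m^{-2}$, not a singular term awaiting cancellation.
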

\begin{proof}
As a consequence of Corollaries~\ref{cor:stokes-f} and~\ref{cor:stokes-g}, the Borel transforms of the Gevrey-1 asymptotic series $\psi_k(y)$ and $\varphi_k(y)$ in Eqs.~\eqref{eq: tilde-psi} and~\eqref{eq: tilde-phi} can be written as
\begin{subequations}
    \begin{align}
\borel[\psi_k](\zeta) &= -\frac{1}{2 \pi \ri} \sum_{n \in \IZ_{\ne 0}} \frac{R_n^k}{\zeta-\eta_n}\, , \\
\borel[\varphi_k](\zeta) &= -\frac{1}{2 \pi \ri} \sum_{n \in \IZ_{\ne 0}} \frac{S_n^k}{\zeta-\rho_n}\, , 
    \end{align}
\end{subequations}
where $R_n^k$, $S_n^k$ are the Stokes constants in Eqs.~\eqref{eq:Stokes-R-kN} and~\eqref{eq:Stokes-S-kN}, while $\eta_n$, $\rho_n$ are the locations of the simple poles in Eqs.~\eqref{eq: eta-n} and~\eqref{eq: zeta-n}.
Assuming that $y\in\IR_{\ge 0}$, we compute the Laplace transforms along the positive real axis as
\begin{subequations}
\begin{align}
s_0(\psi_k)(y) &= \int_0^{\infty} d\zeta \re^{-\zeta/y} \borel[\psi_k](\zeta) = - \sum_{n \in \IZ_{\ne 0}} R_n^k \e \left(\frac{\ri \eta_n}{2 \pi y} \right) \, ,    \label{eq: LT-psi-e1} \\
s_0(\varphi_k)(y) &= \int_0^{\infty} d\zeta \re^{-\zeta/y} \borel[\varphi_k](\zeta) = - \sum_{n \in \IZ_{\ne 0}} S_n^k \e \left(\frac{\ri \rho_n}{2 \pi y} \right) \, ,    \label{eq: LT-phi-e1}
\end{align}
\end{subequations}
where we have permuted sum and integral due to absolute convergence and used the notation introduced in Eq.~\eqref{eq: e1-def}. Finally, applying Eqs.~\eqref{eq: expansion-fkN} and~\eqref{eq: expansion-gkN}, we find the desired expressions for the Borel--Laplace sums at $0$ of $\tilde{f}_{k,N}(y)$ and $\tilde{g}_{k,N}(y)$, which is analytic for $\Re(y)>0$ by the properties of the function $\e$. The Borel--Laplace sums at $\pi$ are computed~analogously.
\end{proof}

We will now prove that the Borel–Laplace sums of $\tilde{f}_{k,N}$ reconstruct the original $q$-Pochhammer symbol $f_{k,N}$ up to corrections that are suitably encoded in the functions $g_{k,N}$ and $g_{\underline{k},N}$.
\begin{lemma}\label{lemma:summability-f-kN}
    The Borel--Laplace sums at angles $0$ and $\pi$ of the asymptotic expansion $\tilde{f}_{k,N}(y)$ in Eq.~\eqref{eq: expansion-fkN} satisfy the relations
\begin{subequations} \label{eq: BL-fkN}
\begin{align}
    s_0(\tilde{f}_{k,N})(y)&=f_{k,N}(y)-g_{\underline{k},N}\big(-\tfrac{1}{Ny}\big) \,, \quad \Re(y)>0 \, , \label{eq: BL-0-fkN} \\
    s_\pi(\tilde{f}_{k,N})(y)&=f_{k,N}(y)-g_{k,N}\big(-\tfrac{1}{Ny}\big) \, , \quad \Re(y)<0 \, .  \label{eq: BL-pi-fkN}
\end{align}
\end{subequations}
\end{lemma}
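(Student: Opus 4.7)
The plan is to prove Eq.~\eqref{eq: BL-0-fkN}; Eq.~\eqref{eq: BL-pi-fkN} then follows by an analogous argument after reflecting $y\mapsto -y$ (equivalently, by conjugating the integration contour across the real axis), which interchanges $\zeta_N^k$ and $\zeta_N^{\underline{k}}$ and hence swaps the corrections $g_{\underline{k},N}$ and $g_{k,N}$. Combining Proposition~\ref{prop: BL-e1} with the decomposition in Eq.~\eqref{eq: expansion-fkN} of $\tilde{f}_{k,N}$, it suffices to verify
\begin{equation*}
s_0(\psi_k)(y) = f_{k,N}(y) - g_{\underline{k},N}\big(-\tfrac{1}{Ny}\big) - \tfrac{1}{2}\log(1-\zeta_N^k) - \tfrac{1}{2\pi \ri y}\mathrm{Li}_2(\zeta_N^k)
\end{equation*}
on the overlap $\{\Re(y)>0\}\cap\IH$; both sides are analytic on the full right half-plane $\Re(y)>0$, so the identity then extends by the identity theorem.

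The key step is to apply the Abel--Plana summation formula to the absolutely convergent series $f_{k,N}(y)=\sum_{n=0}^\infty\log(1-\zeta_N^k e^{2\pi\ri ny})$, valid for $\Im(y)>0$. The boundary term yields $\tfrac{1}{2}\log(1-\zeta_N^k)$, and the direct integral $\int_0^\infty \log(1-\zeta_N^k e^{2\pi\ri ty})\,dt$ evaluates to $\tfrac{1}{2\pi\ri y}\mathrm{Li}_2(\zeta_N^k)$ via the substitution $u=2\pi\ri ty$ and a contour deformation to the ray $\arg u = \pi$ (legitimate when $\Im(y)>0$). The remaining Plana correction,
\begin{equation*}
\ri\int_0^\infty \frac{\log(1-\zeta_N^k e^{-2\pi ty})-\log(1-\zeta_N^k e^{2\pi ty})}{e^{2\pi t}-1}\,dt,
\end{equation*}
must then be identified with $s_0(\psi_k)(y)+g_{\underline{k},N}(-\tfrac{1}{Ny})$.

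To make the identification, I would expand the kernel as $(e^{2\pi t}-1)^{-1}=\sum_{m\geq 1}e^{-2\pi mt}$ and Taylor-expand both logarithms as series in $\zeta_N^{\pm k\ell}e^{\mp 2\pi\ell ty}/\ell$. The contribution from $\log(1-\zeta_N^k e^{-2\pi ty})$ yields a doubly convergent sum whose termwise integration exactly reproduces the closed-form Borel transform $\borel[\psi_k](\zeta)$ of Eq.~\eqref{eq: intBorel2infty-proof}, and hence the $s_0(\psi_k)(y)$ piece by Proposition~\ref{prop: BL-e1}. The contribution from $\log(1-\zeta_N^k e^{2\pi ty})$ is formally divergent when $\Re(y)>0$ and must be regularized by deforming the $t$-contour into the upper half-plane; collecting the residues of $(e^{2\pi t}-1)^{-1}$ at $t\in\ri\IZ_{>0}$ and rescaling by $Ny$ identifies their sum with the Taylor expansion $g_{\underline{k},N}(-\tfrac{1}{Ny})=-\sum_{n\geq 0,\ell\geq 1}\tilde{q}^{(\underline{k}+nN)\ell}/\ell$ for $\tilde{q}=e^{-2\pi\ri/(Ny)}$, with the congruence class $\underline{k}=N-k \pmod N$ arising from the relative sign in the numerator of the Plana integrand.

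The main obstacle is the careful justification of the contour deformations and the exchange of summation and integration in the regime $\Re(y)>0$, since one component of the Abel--Plana integrand grows linearly in $t$ there and the naive Taylor expansion diverges; this forces the formal series manipulation and the pole accounting at $t\in\ri\IZ_{\ne 0}$ to be performed in parallel. The same pole accounting also explains why the correction term for $\alpha=0$ is $g_{\underline{k},N}$ rather than $g_{k,N}$: the Stokes ray at angle $\pi/2$ lies on opposite sides of the relevant deformed $t$-contour in the two cases, corresponding to the swap $k\leftrightarrow\underline{k}$ consistent with Corollary~\ref{cor: disc-Rn}.
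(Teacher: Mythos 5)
Your approach is genuinely different from the paper's. The paper works in the forward direction: it starts from the Laplace integral $s_0(\psi_k)(y)$ with the explicit resummed Borel transform of Eq.~\eqref{eq: intBorel2infty-proof}, changes variables $\zeta=mt$, resums the geometric series in $m$ inside the integral to produce a $\log(1-\re^{-t/y})$ factor, and then evaluates the resulting single integral in closed form as a ratio of $q$-Pochhammer symbols using a known Abel--Plana-type identity (it cites Wheeler's Thm.~A-29). You instead work in the reverse direction, applying Abel--Plana to the series $f_{k,N}(y)=\sum_n\log(1-\zeta_N^k\re^{2\pi\ri ny})$ and then trying to match the Plana correction with $s_0(\psi_k)(y)+g_{\underline{k},N}(-\tfrac{1}{Ny})$.

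There are, however, two genuine gaps that your sketch acknowledges but does not resolve.

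First, the standard Abel--Plana formula requires $f(z)=\log(1-\zeta_N^k\re^{2\pi\ri zy})$ to be analytic on $\Re(z)\ge0$, but its branch points sit at $z=(\ell-\tfrac{k}{N})/y$ for $\ell\in\IZ$, and for $y$ with $\Re(y)>0$, $\Im(y)>0$ the points with $\ell\ge1$ lie in the open fourth quadrant, hence inside the forbidden right half-plane; even for $y$ purely imaginary they sit on the imaginary axis itself. You therefore need an indented-contour or residue-corrected version of Abel--Plana, and the indentation contributions are precisely what produces the $g_{\underline{k},N}$ correction. This is not a technical nuisance to be handled ``in parallel'' with the formal series manipulation---it is the entire content of the claimed identity, and without a careful pole/branch-cut accounting the proof is not complete. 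By contrast the paper's direction avoids this: the Laplace integral $s_0(\psi_k)(y)$ is absolutely convergent and the resummation and closed-form evaluation are unambiguous.

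Second, the claim that Eq.~\eqref{eq: BL-pi-fkN} follows from Eq.~\eqref{eq: BL-0-fkN} by reflecting $y\mapsto -y$ does not hold as stated. Since $\borel[\psi_k]$ is even in $\zeta$ one has $s_\pi(\psi_k)(y)=-s_0(\psi_k)(-y)$, and under $y\to -y$ the arguments of $f_{k,N}$ and $g_{k,N}$ leave the upper half-plane, so the right-hand side of Eq.~\eqref{eq: BL-0-fkN} does not transform into the right-hand side of Eq.~\eqref{eq: BL-pi-fkN}. The paper instead treats $s_\pi$ by a separate but parallel computation, substituting $x=-1/y$ so that the Fricke involution rather than $y\mapsto -y$ appears, and uses the identity $(x;q^{-1})_\infty=(xq;q)_\infty^{-1}$ to land on the $g_{k,N}$ (not $g_{\underline{k},N}$) correction.
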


\begin{proof}
Recall from Eq.~\eqref{eq: expansion-fkN} and Proposition~\ref{prop:Borel-f_kN} that the asymptotic expansion $\tilde{f}_{k,N}(y)$ is governed by a Gevrey-$1$ formal power series $\psi_k(y)$ whose Borel transform has a single tower of simple poles spaced by integer multiples of the constant $2 \pi \ri /N$ along the imaginary axis. Applying the definition of the Laplace transform in Eq.~\eqref{eq: Laplace} to the resummed expression for $\borel[\psi_k](\zeta)$ in Eq.~\eqref{eq: intBorel2infty-proof}, we compute explicitly the Borel--Laplace sums of $\psi_k(y)$ along the positive and negative real axis. The details of the computation are given in Appendix~\ref{app: proofs}. 
\end{proof}

It follows that the $q$-Pochhammer symbol $f_{k,N}$ in Eq.~\eqref{eq:f_kN} cannot be recovered via the median resummation of its asymptotic expansion along the positive imaginary axis. 
\begin{theorem}
The median resummation at angle $\pi/2$ of the asymptotic expansion $\tilde{f}_{k,N}(y)$ in Eq.~\eqref{eq: expansion-fkN} is given by
\be \label{eq: median-fkN}
\mathcal{S}_{\frac{\pi}{2}}^{\mathrm{med}}\tilde{f}_{k,N}(y)= f_{k,N}(y)-\frac{1}{2}\left[ g_{k,N}\big(-\tfrac{1}{Ny}\big)+g_{\underline{k},N}\big(-\tfrac{1}{Ny}\big) \right]\,,
\ee
where we take $\Im(y)>0$.
\end{theorem}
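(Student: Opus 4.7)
The plan is to apply the second characterization of the median resummation given in Eq.~\eqref{eq: median2}: for $\theta=\pi/2$ and $\theta_-=\tfrac{\pi}{2}-\epsilon$,
\[
\mathcal{S}^{\mathrm{med}}_{\pi/2}\tilde{f}_{k,N}(y) \;=\; s_{\theta_-}(\tilde{f}_{k,N})(y) \;+\; \tfrac{1}{2}\,\mathrm{disc}_{\pi/2}\tilde{f}_{k,N}(y),
\]
which is valid on the half-plane $\Re(\re^{-\ri\theta_-}y)>0$, a region that meets $\IH$ in a sector containing the first quadrant. I would identify both summands on the right and extend the resulting identity to the rest of $\IH$ by analytic continuation, since both sides are holomorphic there.

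For the Borel--Laplace summand, I would invoke Proposition~\ref{prop:Borel-f_kN}, which shows that $\borel[\psi_k](\zeta)$ is meromorphic on $\IC$ with singularities confined to the imaginary axis. Hence the Laplace contour $\CC_0$ used in Lemma~\ref{lemma:summability-f-kN} can be rotated continuously to angle $\theta_-$ without crossing any pole; combined with the decay along such rays (visible from the explicit resummed formula in Eq.~\eqref{eq: intBorel2infty-proof}), Cauchy's theorem yields
\[
s_{\theta_-}(\tilde{f}_{k,N})(y) \;=\; s_0(\tilde{f}_{k,N})(y) \;=\; f_{k,N}(y)-g_{\underline{k},N}\big(-\tfrac{1}{Ny}\big)
\]
on the overlap of the two sectors of convergence. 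For the discontinuity, note that in Eq.~\eqref{eq: expansion-fkN} the terms $\tfrac{1}{2}\log(1-\zeta_N^k)$ and $\tfrac{1}{2\pi\ri y}\mathrm{Li}_2(\zeta_N^k)$ carry no exponentially small jump across $\ri\IR_{>0}$, so $\mathrm{disc}_{\pi/2}\tilde{f}_{k,N}(y)=\mathrm{disc}_{\pi/2}\psi_k(y)$, and Corollary~\ref{cor: disc-Rn} then delivers $\mathrm{disc}_{\pi/2}\tilde{f}_{k,N}(y)=-g_{k,N}\big(-\tfrac{1}{Ny}\big)+g_{\underline{k},N}\big(-\tfrac{1}{Ny}\big)$.

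Substituting the two identifications into the median formula gives
\[
\mathcal{S}^{\mathrm{med}}_{\pi/2}\tilde{f}_{k,N}(y) \;=\; f_{k,N}(y)-g_{\underline{k},N}\big(-\tfrac{1}{Ny}\big)+\tfrac{1}{2}\big[{-}g_{k,N}\big(-\tfrac{1}{Ny}\big)+g_{\underline{k},N}\big(-\tfrac{1}{Ny}\big)\big],
\]
and collecting terms produces the stated identity. The only step requiring any real care is the contour rotation, which reduces to a routine bound on the tail of $\borel[\psi_k](\zeta)$ from Eq.~\eqref{eq: intBorel2infty-proof}; a symmetric argument starting from the other form of Eq.~\eqref{eq: median2} with $s_{\theta_+}$ and Eq.~\eqref{eq: BL-pi-fkN} covers the part of $\IH$ not reached by the first form and confirms the identity globally.
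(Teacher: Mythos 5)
Your proposal is correct and follows essentially the same route as the paper's proof: it combines the median resummation formula from Eq.~\eqref{eq: median2} with the Borel--Laplace sums $s_0$, $s_\pi$ from Lemma~\ref{lemma:summability-f-kN} and the discontinuity identity, and collects terms. The only difference is cosmetic --- you spell out explicitly the contour rotation from $\theta_-$ to $0$ (resp.\ $\theta_+$ to $\pi$) and the subsequent analytic continuation, which the paper leaves implicit in identifying $s_{\theta_\pm}$ with $s_{0,\pi}$.
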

\begin{proof}
Note that Lemma~\ref{lemma:summability-f-kN} implies that the discontinuity of $\tilde{f}_{k,N}(y)$ across the Stokes ray at angle $\pi/2$ can be written as
\be
\begin{aligned} \label{eq: disc-fkN}
\mathrm{disc}_{\frac{\pi}{2}}\tilde{f}_{k,N}(y)&= s_\pi(\tilde{f}_{k,N})(y) - s_0(\tilde{f}_{k,N})(y) \\
&=-g_{k,N}\big(-\tfrac{1}{Ny}\big)+g_{\underline{k},N}\big(-\tfrac{1}{Ny}\big) \,,
\end{aligned}
\ee
as expected from Corollary~\ref{cor: disc-Rn}.
Applying the definition in Eq.~\eqref{eq: median2} together with Eqs.~\eqref{eq: BL-fkN} and~\eqref{eq: disc-fkN} then yields the desired result.
\end{proof}
Mirroring the previous statements and relying on the support of numerical tests, we conjecture that the Borel–Laplace sums of $\tilde{g}_{k,N}$ reconstruct the original $q$-Pochhammer symbol $g_{k,N}$ up to corrections that are suitably encoded in the functions $f_{k,N}$ and $f_{\underline{k},N}$.

\begin{conjecture}\label{conj:summability-g-kN}
    The Borel--Laplace sums at angles $0$ and $\pi$ of the asymptotic expansion $\tilde{g}_{k,N}(y)$ in Eq.~\eqref{eq: expansion-gkN} satisfy the relations
\begin{subequations} \label{eq: BL-gkN}
\begin{align}
    s_0(\tilde{g}_{k,N})(y)&=g_{k,N}(y)-f_{k,N}\big(-\tfrac{1}{Ny}\big) +\log(1-\re^{2\pi\ri k/N}) \,, \quad \Re(y)>0 \, , \label{eq: BL-0-gkN} \\
    s_\pi(\tilde{g}_{k,N})(y)&=g_{k,N}(y)-f_{\underline{k},N}\big(-\tfrac{1}{Ny}\big) +\log(1-\re^{-2\pi\ri k/N}) \, , \quad \Re(y)<0 \, .  \label{eq: BL-pi-gkN}
\end{align}
\end{subequations}
\end{conjecture}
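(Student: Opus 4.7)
The plan is to adapt the proof strategy of Lemma~\ref{lemma:summability-f-kN}, whose technical details are in the appendix, to the case of $\varphi_k$. First, starting from the resummed Borel transform $\borel[\varphi_k](\zeta)$ in Eq.~\eqref{eq: intBorel20-proof}, I would compute its Laplace transform along the positive (respectively, negative) real axis directly, and then combine with the perturbative prefactors in Eq.~\eqref{eq: expansion-gkN} evaluated at $Ny$ to obtain $s_0(\tilde{g}_{k,N})(y)$ (respectively, $s_\pi(\tilde{g}_{k,N})(y)$) in closed form.

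Second, I would Taylor-expand the target right-hand side of Eq.~\eqref{eq: BL-0-gkN} using the Lambert-series identity $\log(x;q)_\infty=-\sum_{\ell\geq 1} x^\ell/(\ell(1-q^\ell))$, so that
\[
g_{k,N}(y)-f_{k,N}\!\big(-\tfrac{1}{Ny}\big)+\log(1-\zeta_N^k)=-\sum_{\ell\geq 1}\frac{\re^{2\pi\ri y k\ell}}{\ell\,(1-\re^{2\pi\ri y N\ell})}+\sum_{\ell,m\geq 1}\frac{\zeta_N^{k\ell}\,\re^{-2\pi\ri m\ell/(Ny)}}{\ell}\,,
\]
the additive $\log(1-\zeta_N^k)$ coinciding exactly with the $m=0$ contribution of the Lambert expansion of $f_{k,N}(-1/(Ny))$. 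This origin of the constant---absent in the companion identity for $f_{k,N}$---reflects the fact that the argument $\zeta_N^k$ is $y$-independent, whereas in the $f_{k,N}$ case the corresponding head term carries a decaying $q$-factor. Matching this double Lambert series with the closed form from step one would prove Eq.~\eqref{eq: BL-0-gkN}, whereupon Eq.~\eqref{eq: BL-pi-gkN} follows from the discontinuity relation $s_\pi-s_0=\mathrm{disc}_{\pi/2}$, together with Corollary~\ref{cor: disc-Sn} (applied to $\varphi_k$ via $\tilde{g}_{k,N}(y)-\text{(prefactors)}=-\varphi_k(Ny)$) and the reflection $\log(1-\zeta_N^{-k})-\log(1-\zeta_N^k)=-\pi\ri(2k-N)/N$.

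The main obstacle is that, unlike the geometric-type kernel in $\borel[\psi_k]$ which integrates term by term into rational functions summing cleanly into $q$-Pochhammer logarithms, the Borel transform $\borel[\varphi_k]$ involves a $\coth$ function that does not decay at infinity and is Laplace-integrable only after the $1/\zeta$ subtraction built into Eq.~\eqref{eq: intBorel20-proof}. The regularized Laplace transform is then most naturally captured by Binet's integral representation of the digamma function, and the outer sum over $\ell$ weighted by $\sin(2\pi k\ell/N)/\ell$ produces a Lerch-transcendent-like expression that must be analytically continued and matched term by term with the double Lambert series above. Justifying the Fubini-type interchange of sums and integrals, and extracting the $\log\Gamma(k/N)$ prefactor of Eq.~\eqref{eq: expansion-gkN} via a Gauss-multiplication/Kummer-reflection argument, is the delicate technical step; a cleaner alternative worth exploring would be to invoke the classical modular transformation of Faddeev's quantum dilogarithm to short-circuit the analytic continuation entirely.
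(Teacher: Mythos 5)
You have chosen to attack a statement that the paper does \emph{not} prove: it is stated explicitly as a conjecture (Conjecture~\ref{conj:summability-g-kN}), and the authors write immediately after it that ``the analog to the analytic argument of Lemma~\ref{lemma:summability-f-kN} for the asymptotic expansion $\tilde{g}_{k,N}$ is still missing,'' with only numerical evidence offered in its support. So there is no paper proof to compare against; the relevant question is whether your plan would close that gap, and it does not quite do so.

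What you have right: the reduction of Eq.~\eqref{eq: BL-pi-gkN} to Eq.~\eqref{eq: BL-0-gkN} via $s_\pi = s_0 + \mathrm{disc}_{\pi/2}$ and Corollary~\ref{cor: disc-Sn} is sound, and the bookkeeping works out --- the constants match precisely through $\log\tfrac{1-\zeta_N^k}{1-\zeta_N^{-k}}=\tfrac{\pi\ri}{N}(2k-N)$. Your observation that the additive constant $\log(1-\zeta_N^k)$ is exactly the head term ($m=0$, or in your indexing $n=0$) of the Lambert expansion of $-f_{k,N}(-\tfrac{1}{Ny})$ --- a term that has no analogue in the $f_{k,N}$ case because $g_{\underline{k},N}(-\tfrac{1}{Ny})$ carries a decaying $q$-power --- is correct and explains the asymmetry between the two conjectured formulas and Lemma~\ref{lemma:summability-f-kN}. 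You have also diagnosed the obstruction accurately: in the proof of Lemma~\ref{lemma:summability-f-kN}, the geometric-type kernel in $\borel[\psi_k]$ resums under the Laplace integral into $-\log(1-\re^{-t/y})$ times a rational factor, and the resulting integral matches known integral representations of $q$-Pochhammer logarithms. The kernel $-\tfrac{1}{\zeta}+\tfrac{1}{2\ell}\coth(\tfrac{\zeta}{2\ell})$ in $\borel[\varphi_k]$ admits no such elementary resummation; after the change of variable $\zeta\mapsto 2\ell u$, the inner integral becomes the Binet-type object $\int_0^\infty\re^{-2\ell u/y}\big(\coth u - \tfrac1u\big)\,du$, and the outer $\sum_\ell \tfrac{\sin(2\pi k\ell/N)}{\pi\ell}(\cdots)$ must be handled with care precisely because of the conditional convergence and the log-$\Gamma$ normalisation constants in Eq.~\eqref{eq: expansion-gkN}.

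What is still missing is exactly that ``delicate technical step'' you name but do not execute: carrying out the Fubini interchange, matching the resulting Lerch/Hurwitz-type expression with the double Lambert series, and recovering the $\log\Gamma(k/N)$ and $B_1(k/N)\log(-2\pi\ri Ny)$ prefactors through multiplication-formula identities. This is not a mechanical verification; it is precisely the point at which the authors' argument stops, so proposing the plan without executing it does not constitute a proof. Your closing suggestion --- to bypass the analytic continuation via known summability results for $\log\Phi_{\mb}$ --- is a reasonable alternative route, since Eqs.~\eqref{eq: G-phib} and~\eqref{eq: FG-BL-m} express the conjectural right-hand side in terms of Faddeev's quantum dilogarithm, but it trades one open analytic statement for another (the Borel summability of $\log\Phi_{\mb}$ in the precise normalization needed) and would require a citation or proof of that external fact to be self-contained.
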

Despite the formal similarities with $\tilde{f}_{k,N}$, the analog to the analytic argument of Lemma~\ref{lemma:summability-f-kN} for the asymptotic expansion $\tilde{g}_{k,N}$ is still missing. However, assuming Conjecture~\ref{conj:summability-g-kN} holds, then the discontinuity of $\tilde{g}_{k,N}(y)$ across the Stokes ray at angle $\pi/2$ can be written as
\be
\begin{aligned} \label{eq: disc-gkN}
\mathrm{disc}_{\frac{\pi}{2}}\tilde{g}_{k,N}(y)&= s_\pi(\tilde{g}_{k,N})(y) - s_0(\tilde{g}_{k,N})(y) \\
&=f_{k,N}\big(-\tfrac{1}{Ny}\big)-f_{\underline{k},N}\big(-\tfrac{1}{Ny}\big) - \frac{\pi \ri}{N} (2k-N) \,,
\end{aligned}
\ee
as expected from Corollary~\ref{cor: disc-Sn}.
Importantly, as observed before for the dual $q$-Pochhammer symbol $f_{k,N}(y)$, also the function $g_{k,N}(y)$ in Eq.~\eqref{eq:f_kN} cannot be recovered via the median resummation of its asymptotic expansion $\tilde{g}_{k,N}(y)$ along the positive imaginary axis. Indeed, applying the definition in Eq.~\eqref{eq: median2} together with Eqs.~\eqref{eq: BL-gkN} and~\eqref{eq: disc-gkN} yields
\be \label{eq: median-gkN}
\mathcal{S}_{\frac{\pi}{2}}^{\mathrm{med}}\tilde{g}_{k,N}(y)= g_{k,N}(y)-\frac{1}{2}\left[ f_{k,N}\big(-\tfrac{1}{Ny}\big)-\log(1-\re^{2\pi\ri k/N})+f_{\underline{k},N}\big(-\tfrac{1}{Ny}\big)-\log(1-\re^{-2\pi\ri k/N}) \right]\, ,
\ee
where we take $\Im(y)>0$. 

\begin{rmk}
The failure of the median resummation stated in Eqs.~\eqref{eq: median-fkN} and~\eqref{eq: median-gkN} is not in conflict with the modular resurgence conjectures of~\cite{FR1maths} (here, Conjectures~\ref{conj:quantum_modular1-intro} and~\ref{conj:quantum_modular2-intro}) because the asymptotic series $\psi_k(y)$ and $\varphi_k(y)$ in Eqs.~\eqref{eq: tilde-psi} and~\eqref{eq: tilde-phi} do not fit the definition of an MRS~\cite[Defs.~3.1 and~3.2]{FR1maths} (here, Definition~\ref{def:modular_res_struct}), as stressed in Remark~\ref{rmk: not-L-functs}.
Meanwhile, generalizing the construction of~\cite{FR1phys}, we will prove in Section~\ref{sec:weighted-sum} that suitable weighted sums of the functions $f_{k,N}(y)$ and $g_{k,N}(y)$ over the integer $k \in \IZ_N$ realize the modular resurgence paradigm of~\cite{FR1maths} and are effectively reconstructed through the median resummation.  
\end{rmk}

\subsection{Quantum modularity}
We begin this section by recalling the definition and fundamental properties of Faddeev's quantum dilogarithm~\cite{Faddeev2, AK, GK_faddev}. 

\begin{definition}
The Faddeev's quantum dilogarithm $\Phi_{\mb}(x)$ is defined for $| \Im (x) | < | \Im (c_{\mb}) |$, where 
\be \label{eq: cb-def}
c_{\mb} = \frac{\ri}{2} (\mb + \mb^{-1}) \, , 
\ee
by the integral representation
\be \label{eq: intPhib}
\Phi_{\mb}(x) = \exp \left( \int_{\IR + \ri \epsilon} \frac{\re^{-2 \ri x z}}{4 \sinh(z \mb ) \sinh(z \mb^{-1})} \frac{d z}{z} \right) \, ,
\ee
which implies the symmetry properties
\be \label{eq: symmPhib}
\Phi_{\mb}(x) = \Phi_{-\mb}(x) = \Phi_{\mb^{-1}}(x) \, .
\ee
\end{definition}
By means of Eq.~\eqref{eq: symmPhib}, it can be extended to the region $\Im(\mb^2) < 0$ and it further admits analytic continuation to all values of $\mb$ such that $\mb^2 \notin \IR_{\le 0}$. Moreover, $\Phi_{\mb}(x)$ can be extended to the whole complex $x$-plane as a meromorphic function with an essential singularity at infinity, poles at the points 
\be
x = c_{\mb} + \ri m \mb + \ri n \mb^{-1} \, ,
\ee
and zeros at the points 
\be
x = -c_{\mb} - \ri m \mb - \ri n \mb^{-1} \, ,
\ee
for $m,n \in \IN$. 
When $\Im(\mb^2) > 0$, the formula in Eq.~\eqref{eq: intPhib} can be equivalently expressed in terms of $q$-Pochhammer symbols as
\be \label{eq: seriesPhib}
\Phi_{\mb}(x) = \frac{( \re^{2 \pi \mb (x + c_{\mb})}; \, q)_{\infty}}{(  \re^{2 \pi \mb^{-1} (x - c_{\mb})}; \, \tilde{q})_{\infty}} \, ,
\ee
where $q = \re^{2 \pi \ri \mb^2}$ and $\tilde{q} = \re^{- 2 \pi \ri \mb^{-2}}$. 

We will now prove that the $q$-Pochhammer symbols $f_{k,N}$ and $g_{k,N}$ in Eq.~\eqref{eq:f_kN} are holomorphic quantum modular functions for which the Faddeev's quantum dilogarithm plays the role of the cocyle.
Let us start by introducing the holomorphic functions $F_k, G_k \colon \IH\to\IC$ as
\begin{subequations} \label{eq: FG-def}
    \begin{align}
        F_k(y)&:=f_{k,N}(y)-g_{\underline{k},N}\big(-\tfrac{1}{Ny}\big)\, , \label{eq: F-def}\\
        G_k(y)&:=g_{k,N}(y)-f_{k,N}\big(-\tfrac{1}{Ny}\big)\, . \label{eq: G-def}
    \end{align}
\end{subequations}
These can be written in terms of the logarithm of Faddeev's quantum dilogarithm. 
Indeed, setting $\mb^2=y$, it follows from the infinite product representations in Eqs.~\eqref{eq:f_kN} and~\eqref{eq: seriesPhib} that
\begin{subequations}
\begin{align}
F_k(y)&=\log\Phi_{\mathsf{b}}\Big(\frac{k \ri}{N\mathsf{b}}-c_\mathsf{b}\Big)\label{eq: F-phib} \, , \\
G_k(y/N)&=\log\Phi_{\mathsf{b}}\Big(\frac{k\ri\mathsf{b}}{N}-c_\mathsf{b}\Big) - \log(1-\zeta_N^k)\label{eq: G-phib} \, ,
\end{align}
\end{subequations} 
which can be analytically continued to the cut complex plane $\IC':= \IC\setminus\IR_{\le0}$ due to the analytic properties of $\Phi_\mb$~\cite{GK_faddev}. 
Furthermore, recalling the expressions for the Borel--Laplace sums of the asymptotic expansions $\tilde{f}_{k,N}$ and $\tilde{g}_{k,N}$ in Eqs.~\eqref{eq: BL-fkN} and~\eqref{eq: BL-gkN}, we obtain the equalities
\begin{subequations} \label{eq: FG-BL}
    \begin{align}
        F_k(y) = s_0(\tilde{f}_{k,N})(y) \, , \quad &G_k(y) = -s_\pi(\tilde{f}_{k,N})\big(- \tfrac{1}{Ny} \big) \, , \quad \Re(y)>0 \, ,  \label{eq: FG-BL-p}\\
        F_k(y) = -s_0(\tilde{g}_{\underline{k},N})\big(- \tfrac{1}{Ny} \big) +\log(1-\re^{2\pi\ri\frac{k}{N}})  \, , \quad &G_k(y) =s_\pi(\tilde{g}_{k,N})(y) -\log(1-\re^{2\pi\ri \frac{k}{N}}) \, , \quad \Re(y)<0 \, . \label{eq: FG-BL-m}
    \end{align}
\end{subequations}
Note that each of the functions $F_k$ and $G_k$ contains the information of both sequences of Stokes constants $\{S_n^k\}$ and $\{R_n^k\}$ by means of Proposition~\ref{prop: BL-e1}.

\begin{theorem}\label{thm:qm}
For every $k\in\IZ_N$, the functions $f_{k,N},g_{k,N}\colon\IH\to\IC$ in Eq.~\eqref{eq:f_kN} are holomorphic quantum modular functions for the group $\Gamma_N \subset \mathsf{SL}_2(\IZ)$ generated by the elements in Eq.~\eqref{eq: GammaN-generators}.
\end{theorem}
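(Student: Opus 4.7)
The plan is to verify the defining cocycle condition for each of the two generators $T$ and $\gamma_N$ of $\Gamma_N$ and combine them by multiplicativity of the cocycle. The key identities are those in Eq.~\eqref{eq: FG-def}, which decompose $f_{k,N}$ (respectively, $g_{k,N}$) as the sum of a Faddeev-dilogarithm-type term $F_k$ (respectively, $G_k$) and a $q$-Pochhammer tail evaluated at the modular-inverted argument $-1/(Ny)$.

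Since $q=\re^{2\pi\ri y}$ is invariant under $y\mapsto y+1$, both $f_{k,N}$ and $g_{k,N}$ are $1$-periodic, so $h_T[f_{k,N}]\equiv h_T[g_{k,N}]\equiv 0$ extends trivially to $\IC_T=\IC$. For the generator $\gamma_N$, the elementary computation
\be
-\tfrac{1}{N\gamma_N y} = -\tfrac{Ny+1}{Ny} = -1-\tfrac{1}{Ny}\,,
\ee
combined with the $1$-periodicity of $g_{\underline{k},N}$ and $f_{k,N}$, shows that the $q$-Pochhammer tails in Eqs.~\eqref{eq: F-def} and~\eqref{eq: G-def} cancel in the cocycle, yielding the clean formulas
\be
h_{\gamma_N}[f_{k,N}](y) = F_k(\gamma_N y) - F_k(y), \qquad h_{\gamma_N}[g_{k,N}](y) = G_k(\gamma_N y) - G_k(y),
\ee
valid for $y\in\IH$.

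By Eqs.~\eqref{eq: F-phib} and~\eqref{eq: G-phib}, $F_k$ and $G_k$ are logarithms of Faddeev's quantum dilogarithm evaluated at specific $\mb$-dependent arguments, and they extend meromorphically in $y=\mb^2$ to the cut plane $\IC'=\IC\setminus\IR_{\le 0}$. Since the M\"obius action $\gamma_N$ preserves each of $\IH$, $\IH_-$, and $\IR_{>0}$, it maps $\IC'$ into itself, and the cocycles above therefore define analytic functions on all of $\IC'$.

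The last and hardest step will be to extend the cocycles across the real interval $(-1/N,0]$, which is the only subset of $\IC_{\gamma_N}$ not already contained in $\IC'$. Both $F_k(y)$ and $F_k(\gamma_N y)$ become singular there because the natural boundary of $\Phi_{\mb}$ at $\mb^2\in\IR_{\le 0}$ is hit in both arguments; however, since $\gamma_N$ preserves the sign of the imaginary part and maps $(-1/N,0]$ bijectively onto $(-\infty,0]$, the two discontinuities are structurally matched. A careful analysis using either the integral representation~\eqref{eq: intPhib} of $\Phi_{\mb}$ or its $q$-Pochhammer factorization~\eqref{eq: seriesPhib}, together with the asymptotics~\eqref{eq: expansion-fkN}--\eqref{eq: expansion-gkN}, should show that the singular contributions cancel exactly in $F_k(\gamma_N y)-F_k(y)$, yielding the desired holomorphic extension to $\IC_{\gamma_N}$. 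The same chain of arguments then applies verbatim to $G_k$ and $g_{k,N}$, completing the proof.
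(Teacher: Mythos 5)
Your argument is, up to the final paragraph, exactly the paper's proof: reduce to the two generators, observe $T$-triviality by periodicity, rewrite the $\gamma_N$-cocycle as $F_k(\gamma_N y)-F_k(y)$ (resp.\ $G_k$) using the $1$-periodicity of the Pochhammer tails in Eq.~\eqref{eq: FG-def}, and invoke the Faddeev-dilogarithm representations~\eqref{eq: F-phib}--\eqref{eq: G-phib} to conclude analyticity on $\IC'=\IC\setminus\IR_{\le0}$. Your last paragraph, however, goes beyond the paper: you correctly note that Definition~\ref{def: holoQM} formally demands extension to $\IC_{\gamma_N}=\IC\setminus(-\infty,-1/N]$, which strictly contains $\IC'$, and you identify $(-1/N,0]$ as the missing piece, but you only sketch (``should show'') the cancellation of singularities there rather than prove it. The paper does not carry out this step either---its proof explicitly takes ``analyticity in $\IC'$'' as the target---so your first three paragraphs already match the paper's level of rigor; the fourth is a legitimate concern about Definition~\ref{def: holoQM} that, if you want to include it, needs to be either completed or replaced by a remark that the paper adopts $\IC'$ as the sufficient domain.
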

\begin{proof}
Applying Definition~\ref{def: holoQM}, it is sufficient to prove the analyticity in $\IC'$ of the cocycles for the two generators $T$ and $\gamma_N$, that is,
\begin{subequations}
    \begin{align}
        h_T[f](y)&=f(y+1)-f(y) \, , \\
        h_{\gamma_N}[f](y)&=f\big(\tfrac{y}{Ny+1}\big)-f(y)\, ,
    \end{align}
\end{subequations}
for $f=f_{k,N}, g_{k,N}$.
The periodicity of the expressions in Eq.~\eqref{eq:f_kN} directly imply that the cocycles for $T$ are trivial, \emph{i.e.}, 
$h_T[f_{k,N}](y) = 0$ and $h_T[g_{k,N}](y)=0$.
Meanwhile, it follows from Eq.~\eqref{eq: F-def} that the cocycle of $f_{k,N}$ for $\gamma_N$ is
\be \label{eq: fkN-cocycle}
\begin{aligned}
h_{\gamma_N}[f_{k,N}](y)&=F_k\big(\tfrac{y}{Ny+1}\big)+g_{\underline{k},N}\big(-\tfrac{1}{Ny}\big)-f_{k,N}(y)\\
&=F_k\big(\tfrac{y}{Ny+1}\big)-F_k(y)\, ,
\end{aligned}
\ee
which is analytic in $\IC'$ because of Eq.~\eqref{eq: F-phib}.
Analogously, Eq.~\eqref{eq: G-def} implies that the cocycle of $g_{k,N}$ for $\gamma_N$ is
\be \label{eq: gkN-cocycle}
    \begin{aligned}
        h_{\gamma_N}[g_{k,N}](y)&=
         G_k\big(\tfrac{y}{Ny+1}\big)+f_{k,N}\big(-\tfrac{1}{Ny}\big)-g_{k,N}(y)\\
        &=G_k\big(\tfrac{y}{Ny+1}\big)-G_k(y) \, ,
    \end{aligned}
\ee
which is analytic on $\IC'$ because of Eq.~\eqref{eq: G-phib}. 
\end{proof}

We stress that the cocycles of the $q$-Pochhammer symbols $f_{k,N}$ and $g_{k,N}$ for the generator $\gamma_N$ in Eqs.~\eqref{eq: fkN-cocycle} and~\eqref{eq: gkN-cocycle} are simply determined by the functions $F_k$ and $G_k$ as
\be
h_{\gamma_N}[f_{k,N}](y)=h_{\gamma_N}[F_k](y) \, , \quad h_{\gamma_N}[g_{k,N}](y)=h_{\gamma_N}[G_k](y) \, , 
\ee
which are themselves determined by the Borel--Laplace sums of the asymptotic expansions $\tilde{f}_{k,N}$ and $\tilde{g}_{k,N}$, as observed in Eqs.~\eqref{eq: FG-BL-p} and~\eqref{eq: FG-BL-m}. 

\subsubsection{\texorpdfstring{$q$}{q}-Pochhammer symbols as quantum Jacobi forms}

In~\cite[Thm.~A-41]{wheeler-thesis}, the author showed that $q$-Pochhammer symbols of the form 
\be \label{eq: qPochh-Jacobi}
f_\lambda(y):=\log(\re^{2\pi\ri\lambda}; \, q)_\infty \,, \quad \lambda\in\IC \,  , 
\quad q=\re^{2 \pi \ri y} \, , 
\ee
are quantum Jacobi functions\footnote{Quantum Jacobi forms were first defined by Bringmann and Folsom in~\cite{Folsom-Brig}.} for $\mathsf{SL}_2(\IZ)$. 
In particular, for every $\gamma=\left( \begin{smallmatrix}
    a & b\\
    c & d
\end{smallmatrix} \right)\in\mathsf{SL}_2(\IZ)$, the corresponding cocycle is given by the ratio of $q$-Pochhammer symbols
\be
    f_{\frac{\lambda}{cy+d}}\left(\frac{ay+b}{cy+d}\right)-f_\lambda(y)=\log\frac{(\re^{\frac{2\pi\ri\lambda}{cy+d}}; \, \tilde{q}_\lambda)_\infty}{ (\re^{2\pi\ri\lambda}; \, q)_\infty} \, , 
\ee
where we have introduced the notation
\be
\tilde{q}_\gamma:=\re^{2\pi\ri\gamma y}=\re^{2\pi\ri\frac{ay+b}{cy+d}} \, ,
\ee
and the function 
\be \label{eq: Phi-lambda}
\Phi_\gamma(\lambda,y):=\frac{(\re^{2\pi\ri\lambda}; \, q)_\infty}{(\re^{\frac{2\pi\ri\lambda}{cy+d}}\, \tilde{q}_\gamma ; \, \tilde{q}_\gamma)_\infty}
\ee
is holomorphic in $\IC_\gamma$. Note that the function in Eq.~\eqref{eq: Phi-lambda} agrees with the Faddeev's quantum dilogarithm when $\gamma=S\colon y\to -1/y$. More precisely, 
\begin{equation}
   \Phi_S(\lambda,y)=\Phi_{\sf b}(\ri\lambda{\sf b^{-1}}-c_{\sf b})\,. 
\end{equation} 

Our previous results then prove that the quantum Jacobi function $f_\lambda(y)$ in Eq.~\eqref{eq: qPochh-Jacobi} is indeed a holomorphic quantum modular form when $\lambda \in \IQ$. Specifically, if $\re^{2\pi\ri\lambda}$ is an $N$-th root of unity, $f_\lambda(y)$ is a holomorphic quantum modular function for the group $\Gamma_N \subset \mathsf{SL}_2(\IZ)$ generated by the elements $T$ and $\gamma_N$ in Eq.~\eqref{eq: GammaN-generators}---as stated in Theorem~\ref{thm:qm}.
In fact, we will now show that the converse is also true.
\begin{prop}
The $q$-Pochhammer symbol $f_\lambda\colon\IH\to\IC$ in Eq.~\eqref{eq: qPochh-Jacobi} is a holomorphic quantum modular function for $\Gamma_N$ if and only if $\lambda=\frac{k}{N}$ for $N \in \IZ_{\ge 2}$ and $k\in\IZ_N$.    
\end{prop}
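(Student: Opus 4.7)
The forward direction is Theorem~\ref{thm:qm}, so I focus on the converse: assume $f_\lambda$ is a holomorphic quantum modular function for $\Gamma_N$ and show that $\lambda=k/N$ for some $k\in\IZ_N$. The $T$-cocycle $f_\lambda(y+1)-f_\lambda(y)$ vanishes identically by periodicity of $\re^{2\pi\ri\lambda}$, so the entire constraint on $\lambda$ is carried by the $\gamma_N$-cocycle. My plan is to split
\be
h_{\gamma_N}[f_\lambda](y) = A(y) + B(y)\, , \quad A(y):=f_\lambda(\gamma_N y)-f_{\lambda/(Ny+1)}(\gamma_N y)\, , \quad B(y):=f_{\lambda/(Ny+1)}(\gamma_N y)-f_\lambda(y)\, ,
\ee
where Wheeler's quantum Jacobi form property~\cite{wheeler-thesis} guarantees $B(y)\in C^\omega(\IC_{\gamma_N})$. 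Hence quantum modularity for $\Gamma_N$ is equivalent to the analytic extendability of $A(y)$ to $\IC_{\gamma_N}$.

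The key step is to exhibit an obstruction to the analytic extension of $A(y)$ unless $\lambda N\in\IZ$. Using the product representation in Eq.~\eqref{eq:f_kN} and the series $-\log(1-z)=\sum_{k\geq 1}z^k/k$, I expand
\be
A(y) = -\sum_{k\geq 1}\frac{\re^{2\pi\ri k\lambda}-\re^{2\pi\ri k\lambda/(Ny+1)}}{k\,\bigl(1-\re^{2\pi\ri k\gamma_N y}\bigr)}
\ee
on $\IH$, and consider the test point $y_1:=-\lambda/(1+\lambda N)$, at which a direct computation gives $\gamma_N y_1=-\lambda$ and $(Ny_1+1)^{-1}=1+\lambda N$. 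For $\lambda\in(0,1)\subset\IR$ one has $y_1\in(-1/N,0)\subset\IC_{\gamma_N}$; real $\lambda$ reduce to this range by periodicity, and nonreal $\lambda$ give nonreal $y_1$, so $y_1\in\IC_{\gamma_N}$ in every case. Writing $\lambda=p/q$ in lowest terms (the irrational case is handled separately), the denominator factor $1-\re^{2\pi\ri k\gamma_N y}$ vanishes at $y_1$ iff $q\mid k$, while at $k=mq$ the numerator at $y_1$ equals $1-\re^{2\pi\ri mp^2 N/q}$. A short $\gcd$-based computation modulo $q^2$ shows that the latter vanishes for all relevant $m$ iff $q\mid N$, equivalent to $\lambda N\in\IZ$.

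Consequently, if $\lambda N\notin\IZ$ then the $k=q$ contribution alone produces a pole-like divergence $A(y)\sim -C/(y-y_1)$ with $C\neq 0$ as $y\to y_1$ from within $\IH$, obstructing any analytic extension of $A(y)$ to $\IC_{\gamma_N}$ and contradicting quantum modularity. Hence $\lambda N\in\IZ$, and periodicity $f_{\lambda+1}=f_\lambda$ further restricts $k=\lambda N$ to $\IZ_N$. The main technical hurdle I anticipate is controlling the double series at $y=y_1$, where $|\re^{2\pi\ri\gamma_N y}|=1$ and the sums are only conditionally convergent: a careful isolation of the finitely many divergent low-$k$ contributions from the regular tail, together with the explicit evaluation of the constant $C$ as a finite Dirichlet-type sum and the verification that $C\neq 0$ when $q\nmid N$, are needed to make the obstruction fully rigorous.
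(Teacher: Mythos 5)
Your decomposition $h_{\gamma_N}[f_\lambda]=A+B$, with $B=f_{\lambda/(Ny+1)}(\gamma_N y)-f_\lambda(y)$ the quantum Jacobi cocycle at fixed $\lambda$ absorbed by Wheeler's theorem, is a valid and arguably cleaner reduction than the paper's, which packages the analytic part into $\log\Phi_{\gamma_N}(\lambda(Ny+1),y)$ (with a $y$-dependent first argument) and exhibits the obstruction as the ratio $(\re^{2\pi\ri\lambda}q^{\lambda N};q)_\infty/(\re^{2\pi\ri\lambda};q)_\infty$, tested at the points $y\in\IZ$ via the $q$-binomial theorem. The decisive divergence from the paper is your choice of the $\lambda$-dependent test point $y_1=-\lambda/(1+\lambda N)$, at which $\gamma_N y_1=-\lambda$.

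That choice breaks the argument for $\lambda\notin\IQ$. In your series for $A(y)$ the only mechanism producing a pole at $y_1$ is the vanishing of the denominator $1-\re^{2\pi\ri k\gamma_N y}=1-\re^{-2\pi\ri k\lambda}$, which requires $k\lambda\in\IZ$ and hence never occurs when $\lambda$ is irrational. No term of the series is singular at $y_1$, the advertised ``pole-like divergence $A(y)\sim -C/(y-y_1)$'' does not exist, and any obstruction there would have to be diagnosed by a genuinely different argument (small-denominator estimates on $|1-\re^{-2\pi\ri k\lambda}|$, sensitive to the Diophantine type of $\lambda$) that you do not give. The remark ``the irrational case is handled separately'' therefore conceals a structural gap, not a bookkeeping step; the paper's test points $y\in\IZ$ avoid this because the vanishing factors $(q;q)_n$ are $\lambda$-independent, so the obstruction is visible for every $\lambda N\notin\IZ$. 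A smaller but real inaccuracy: even for $\lambda=p/q$ with $q\nmid N$, the residue $C$ is not a finite Dirichlet-type sum; it is the full sum over $k=mq$ and equals $\bigl(\zeta(2)-\mathrm{Li}_2(\re^{2\pi\ri p^2 N/q})\bigr)/\bigl(2\pi\ri\,q^2(1+\lambda N)^2\bigr)$ up to sign, so its non-vanishing rests on the injectivity of $\mathrm{Li}_2$ on the unit circle rather than on a $\gcd$ computation alone.
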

\begin{proof}
Let us fix $N \in \IZ_{\ge 2}$. By construction, the cocycle for $T$ is trivial, that is, $h_T[f_\lambda](y)=0$. Meanwhile, the cocycle for $\gamma_N$ can be written as 
\be \label{eq: cocycle-Jacobi-QM}
\begin{aligned}
h_{\gamma_N}[f_\lambda](y) &= \log\frac{(\re^{2\pi\ri\lambda};\tilde{q}_{\gamma_N})_\infty}{(\re^{2\pi\ri\lambda};q)_\infty}  \\
&= \log\bigg[\frac{(\re^{2\pi\ri\lambda (Ny+1)};q)_\infty}{(\re^{2\pi\ri\lambda};q)_\infty}\Phi_{\gamma_N}(\lambda(Ny+1),y)\bigg]-\log(1-\re^{2\pi\ri\lambda})\\
&=\log\bigg[\frac{(\re^{2\pi\ri\lambda}q^{\lambda N};q)_\infty}{(\re^{2\pi\ri\lambda};q)_\infty}\Phi_{\gamma_N}(\lambda(Ny+1),y)\bigg]-\log(1-\re^{2\pi\ri\lambda})\, . 
\end{aligned}
\ee
Since the Faddeev's quantum dilogarithm $\Phi_{\gamma_N}(\lambda(Ny+1),y)$ extends to an analytic function on $\IC_{\gamma_N}$, we are left to study the analytic properties of the ratio of $q$-Pochhammer symbols
\begin{equation}\label{eq:ratio}
    \frac{(\re^{2\pi\ri\lambda}q^{\lambda N};q)_\infty}{(\re^{2\pi\ri\lambda};q)_\infty}\,.
\end{equation}
By the $q$-binomial theorem, we can write it as the infinite sum
\begin{equation}
    \frac{(\re^{2\pi\ri\lambda}q^{\lambda N};q)_\infty}{(\re^{2\pi\ri\lambda};q)_\infty}=\sum_{n=0}^\infty\frac{(q^{\lambda N};q)_n}{(q;q)_n}\re^{2\pi\ri\lambda n}\,,
\end{equation}
which is convergent for $y\in\IH$. However, if $\lambda N\notin\IZ$, the term $(q^{\lambda N};q)_n$ will not vanish for $y\in\IZ$, while the term $(q;q)_n$ will. It follows that the function in Eq.~\eqref{eq:ratio} with $\lambda N\notin\IZ$ cannot be extended to $y\in\IZ$---hence, to $y \in \IC_{\gamma_N}$. On the other hand, $\lambda N\in\IZ$ is equivalent to $\lambda N \in\IZ_N$ by definition of the $q$-Pochhammer symbol in Eq.~\eqref{eq: qPochh-Jacobi}. In this case, $f_\lambda= f_{k,N}$, and Theorem~\ref{thm:qm} applies.
\end{proof}

\subsubsection{Fricke involution} \label{sec:fricke}

Along the lines of~\cite{FR1phys}, we construct a second family of pairs of holomorphic quantum modular functions for $\Gamma_N$ by acting on the $q$-Pochhammer symbols $f_{k,N}(y)$ and $g_{k,N}(y)$ with the Fricke involution of $\IH/\Gamma_1(N)$, that is, the transformation $y \mapsto -\frac{1}{Ny}$.
In particular, let us define the holomorphic functions $f_{k,N}^{\star}, g_{k,N}^{\star}\colon \IH \to \IC$ as
\be \label{eq: fricke-fs}
f_{k,N}^{\star}(y):=f_{k,N}\big(-\tfrac{1}{N y}\big) \, , \quad  g_{k,N}^{\star}(y):=g_{k,N}\big(-\tfrac{1}{N y}\big) \, ,
\ee
where the symbol $^{\star}$ denoted the action of the Fricke involution.
\begin{theorem}\label{thm:fricke-qPochh}
The functions $f_{k,N}^{\star}, g_{k,N}^{\star}\colon \IH\to\IC$, defined in Eq.~\eqref{eq: fricke-fs} as the images of the $q$-Pochhammer symbols in Eq.~\eqref{eq:f_kN} under Fricke involution, are holomorphic quantum modular functions for the group $\Gamma_N$.
\end{theorem}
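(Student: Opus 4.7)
The plan is to exploit the fact that the Fricke involution $w_N\colon y\mapsto -1/(Ny)$ normalizes the group $\Gamma_N$, so that the verification of quantum modularity for $f_{k,N}^{\star}$ and $g_{k,N}^{\star}$ reduces to that of the original functions $f_{k,N}$ and $g_{k,N}$ guaranteed by Theorem~\ref{thm:qm}. A direct matrix computation yields the conjugation relations
\begin{equation}
   w_N \, T \, w_N^{-1} \;=\; \gamma_N^{-1} \, , \qquad w_N \, \gamma_N \, w_N^{-1} \;=\; T^{-1} \, ,
\end{equation}
so that conjugation by $w_N$ is an involutive automorphism of $\Gamma_N$ swapping its two generators (up to inverse). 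Consequently, for every $\gamma\in\Gamma_N$ and every $f\colon\IH\to\IC$, the cocycle of $f^{\star}(y)=f(w_N y)$ satisfies
\begin{equation}
   h_\gamma[f^{\star}](y) \;=\; f\bigl((w_N\gamma w_N^{-1})(w_N y)\bigr)-f(w_N y) \;=\; h_{w_N \gamma w_N^{-1}}[f](w_N y) \, .
\end{equation}

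The second step is to specialize this identity to the two generators $T$ and $\gamma_N$ of $\Gamma_N$. Using the translation invariance $h_T[f_{k,N}]=h_T[g_{k,N}]=0$ (hence also $h_{T^{-1}}=0$), the $\gamma_N$-cocycle of $f_{k,N}^{\star}$ (respectively $g_{k,N}^{\star}$) vanishes identically, whereas the $T$-cocycle equals $h_{\gamma_N^{-1}}[f_{k,N}](w_N y)$ (respectively $h_{\gamma_N^{-1}}[g_{k,N}](w_N y)$). The vanishing of the $\gamma_N$-cocycle can alternatively be checked by the chain of identities
\begin{equation}
   f_{k,N}^{\star}(\gamma_N y) \;=\; f_{k,N}\!\left(-\tfrac{Ny+1}{Ny}\right) \;=\; f_{k,N}\!\left(-\tfrac{1}{Ny}-1\right) \;=\; f_{k,N}\!\left(-\tfrac{1}{Ny}\right) \;=\; f_{k,N}^{\star}(y) \, ,
\end{equation}
where the third equality uses the $1$-periodicity of $f_{k,N}$ (and the same argument works for $g_{k,N}^{\star}$).

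Finally, for the remaining nontrivial $T$-cocycles I invoke the explicit expressions~\eqref{eq: fkN-cocycle} and~\eqref{eq: gkN-cocycle} from the proof of Theorem~\ref{thm:qm}, together with the realization of the functions $F_k$ and $G_k$ via the logarithm of Faddeev's quantum dilogarithm in Eqs.~\eqref{eq: F-phib} and~\eqref{eq: G-phib}. The meromorphic continuation of $\Phi_\mb$ supplies the analytic extension of $h_{\gamma_N^{-1}}[f_{k,N}]$ and $h_{\gamma_N^{-1}}[g_{k,N}]$ to the appropriate cut plane, and precomposing with the biholomorphic involution $w_N$ on $\IC\setminus\{0\}$ yields the claimed analyticity of $h_T[f_{k,N}^{\star}]$ and $h_T[g_{k,N}^{\star}]$. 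The main obstacle is the domain-matching step: since the pullback $w_N^{-1}(\IC_{\gamma_N^{-1}})$ is a proper subset of $\IC_T=\IC$, the hard part is verifying that the apparent singularities of the individual terms $F_k(w_N(y+1))$ and $F_k(w_N y)$ on the real segment $[-1,0]\subset\IC_T$ cancel in the cocycle, so that the extension actually reaches all of $\IC_T$; this is where the meromorphic structure of $\Phi_\mb$ and the exact cancellations inherited from Theorem~\ref{thm:qm} are essential.
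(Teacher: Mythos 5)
Your conjugation computation is correct: $w_N T w_N^{-1} = \gamma_N^{-1}$ and $w_N\gamma_N w_N^{-1} = T^{-1}$, and the weight-$0$ identity $h_\gamma[f^\star](y)=h_{w_N\gamma w_N^{-1}}[f](w_N y)$ follows immediately. The triviality of the $\gamma_N$-cocycles of $f_{k,N}^\star,g_{k,N}^\star$ by $1$-periodicity is exactly the paper's argument, and the reduction of the nontrivial $T$-cocycle of $f^\star$ to the nontrivial cocycle of $f$ is the right structural observation.

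The gap is in the final analyticity step, and your diagnosis of where the difficulty lies is off target. Carried through, your identity gives $h_T[f_{k,N}^\star](y)=h_{\gamma_N^{-1}}[f_{k,N}](w_N y)=F_k\big(w_N(y+1)\big)-F_k(w_N y)$. Since $F_k$ is analytic on $\IC'=\IC\setminus\IR_{\le 0}$ and $w_N$ carries the cut $\IR_{\le 0}$ to $\IR_{\ge 0}$, this expression is analytic only on $\IC\setminus\IR_{\ge -1}$, a domain which does \emph{not} contain $\IC'$. But $\IC'$ is precisely what the paper's convention demands: the proof of Theorem~\ref{thm:qm} explicitly reduces Definition~\ref{def: holoQM} to checking analyticity of the generator cocycles on $\IC'$, and the present theorem is proved to the same standard. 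The paper's fix is to use periodicity of $g_{k,N}$ to rewrite the \emph{same} $T$-cocycle as $G_k(y)-G_k(y+1)$, which is manifestly analytic on $\IC'$ because $G_k$ is; the analogous rewriting for $g_{k,N}^\star$ uses $F_{\underline{k}}(y)-F_{\underline{k}}(y+1)$. The crux is this switch of which dilogarithm package controls the cocycle after Fricke involution ($G_k$ instead of $F_k$ for $f^\star$, $F_{\underline{k}}$ instead of $G_k$ for $g^\star$), and your proposal never makes it.

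Relatedly, your framing of the remaining work as "verify that singularities on $[-1,0]$ cancel, so the extension reaches all of $\IC_T=\IC$" sets up the wrong target. What must be shown (and what the $G_k$ rewriting does show) is that the apparent singularities of $F_k(w_N(y+1))$ and $F_k(w_N y)$ on $\IR_{>0}$ cancel. On $(-1,0)$, only one of the two terms is singular (the other has argument in $(1/N,\infty)\subset\IC'$), so there is nothing to cancel there; the cocycle genuinely fails to extend across $[-1,0]$, and the paper neither claims nor needs such an extension. Reaching all of $\IC_T$ is neither what the paper enforces nor, in view of the natural boundary of $\Phi_\mb$ at $\mb^2\in\IR_{\le 0}$, something one should expect to be true.
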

\begin{proof}
Recall that the generators of $\Gamma_N$ are the matrices $T$ and $\gamma_N$ in Eq.~\eqref{eq: GammaN-generators} and the corresponding cocycles are
\begin{subequations}
    \begin{align}
        h_T[f^\star](y)&=f^{\star}(y+1)-f^{\star}(y) = f\big(-\tfrac{1}{Ny+N}\big) - f\big(-\tfrac{1}{Ny}\big)\, , \\
        h_{\gamma_N}[f^\star](y)&=f^{\star}\big(\tfrac{y}{Ny+1}\big)-f^{\star}(y) =f\big(-\tfrac{1}{Ny}-1\big) - f\big(-\tfrac{1}{Ny}\big) \, ,
    \end{align}
\end{subequations}
where we take $f=f_{k,N},\, g_{k,N}$.
It follows from the periodicity properties of the $q$-Pochhammer symbols that the cocycles for $\gamma_N$ are trivial, \emph{i.e.}, $h_{\gamma_N}[f_{k,N}^{\star}](y) = 0$ and $h_{\gamma_N}[g_{k,N}^{\star}](y) = 0$.
Meanwhile, it follows from Eq.~\eqref{eq: G-def} that the cocycle of $f_{k,N}^{\star}$ for $T$ is
\be \label{eq: fstar-cocycle}
    \begin{aligned}
       h_T[f_{k,N}^{\star}](y)&=-G_k(y+1)+g_{k,N}(y+1)-f_{k,N}\big(-\tfrac{1}{Ny}\big)\\
        &=G_k(y)-G_k(y+1)\, ,
    \end{aligned}
\ee
which is analytic in $\IC'$ because of Eq.~\eqref{eq: G-phib}.
Analogously, it follows from Eq.~\eqref{eq: F-def} that the cocycle of $g_{k,N}^{\star}$ for $T$ is
\be \label{eq: gstar-cocycle}
    \begin{aligned}
        h_T[g_{k,N}^{\star}](y)&=-F_{\uk}(y+1)+f_{\uk,N}(y+1)-g_{k,N}\big(-\tfrac{1}{Ny}\big)\\
        &=F_{\uk}(y)-F_{\uk}(y+1)\, ,
    \end{aligned}
\ee
which is analytic in $\IC'$ because of Eq.~\eqref{eq: F-phib}.
\end{proof}

Note that the cocycles of the dual functions $f_{k,N}^{\star}$ and $g_{k,N}^{\star}$ for the generator $T$ in Eqs.~\eqref{eq: fstar-cocycle} and~\eqref{eq: gstar-cocycle} are again determined by Faddeev's quantum dilogarithm via the functions $F_k$ and $G_k$. Namely, 
\be
h_T[f_{k,N}^{\star}](y)=-h_T[G_k](y) \, , \quad h_T[g_{k,N}^{\star}](y)=-h_T[F_{\uk}](y) \, , 
\ee
which in turn are determined by the Stokes constants $S^k_n$, $R^k_n$, $n \in \IZ_{\ne 0}$, of the original $q$-Pochhammer symbols through the Borel--Laplace sums in Eqs.~\eqref{eq: FG-BL-p} and~\eqref{eq: FG-BL-m}. 

\section{Weighted sums}\label{sec:weighted-sum}

In this section, after fixing $N \in \IZ_{\ge 2}$, we consider the functions $\frakf(y)$, $\frakg(y)$, $y \in \IH$, introduced in Eq.~\eqref{eq:fg-intro} as sums of the $q$-Pochhammer symbols $f_{k,N}(y)$, $g_{k,N}(y)$ over $k \in \IZ_N$ weighted by a Dirichlet character of modulus $N$. We first study the summability and quantum modularity properties of their asymptotic expansions in the limit $y \to 0$ with $\Im(y)>0$ and then delve into the details of their resurgent structures. Notably, when $\chi_N$ is primitive and odd, the weighted sums $\frakf$ and $\frakg$ produce a pair of MRSs that together satisfy the modular resurgence paradigm of~\cite{FR1maths} and provide new evidence of Conjectures~\ref{conj:quantum_modular1-intro} and~\ref{conj:quantum_modular2-intro}.

\subsection{Resurgence, summability, and quantum modularity}\label{sec:sum-weight}

We begin by considering the asymptotic series $\tilde{\frakf}(y)$, $\tilde{\frakg}(y)$ of the functions $\frakf(y)$, $\frakg(y)$, respectively, in the limit $y \rightarrow 0$ with $\Im(y)>0$. We will then show how the summability and quantum modularity properties of $\tilde{\frakf}(y)$ and $\tilde{\frakg}(y)$ are straightforwardly derived from the results of Section~\ref{sec: single}. Remarkably, and contrary to the case of the individual $q$-Pochhammer symbols $f_{k,N}$ and $g_{k,N}$, the median resummation proves to be effective in reconstructing the weighted sums $\frakf$ and $\frakg$ under the assumption that the Dirichlet character $\chi_N$ appearing in Eq.~\eqref{eq:fg-intro} is odd.

\subsubsection{Resurgence and summability of \texorpdfstring{$\mathfrak{f}$}{f}}

The asymptotic expansion $\tilde{\frakf}(y)$  of the function $\frakf(y)$ in Eq.~\eqref{eq:fg-intro} for $y \rightarrow 0$ with $\Im(y)>0$ is given by 
\begin{equation}\label{eq:formal-f0}
    \tilde{\frakf}(y)=\sum_{k\in\IZ_N}\chi_N(k)\tilde{f}_{k,N}(y)\,,
\end{equation}
where $\tilde{f}_{k,N}(y)$ is defined in Eq.~\eqref{eq: expansion-fkN}. 

\begin{prop}\label{prop:resurgence-f0}
The asymptotic expansion $\tilde{\frakf}(y)$ in Eq.~\eqref{eq:formal-f0} is Gevrey-1 and simple resurgent.   
\end{prop}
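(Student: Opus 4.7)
The plan is to reduce the proposition to Proposition~\ref{prop:Borel-f_kN} combined with the $\IC$-linearity of the Borel transform. By Eq.~\eqref{eq: expansion-fkN}, each asymptotic expansion $\tilde{f}_{k,N}(y)$ decomposes as a constant $\tfrac{1}{2}\log(1-\zeta_N^k)$, a simple pole $\tfrac{1}{2\pi\ri y}\mathrm{Li}_2(\zeta_N^k)$, and a formal power series $\psi_k(y)$ which Proposition~\ref{prop:Borel-f_kN} guarantees to be Gevrey-1 and simple resurgent. Summing over $k\in\IZ_N$ weighted by $\chi_N(k)$ and isolating the non-series contributions into a fixed (elementary, meromorphic in $y$) trans-series prefactor, the statement reduces to showing that the finite linear combination
\begin{equation*}
\Psi(y) := \sum_{k \in \IZ_N} \chi_N(k)\,\psi_k(y)
\end{equation*}
is itself Gevrey-1 and simple resurgent.

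First I would verify the Gevrey-1 property directly via the triangle inequality: the $n$-th coefficient of $\Psi$ is a sum of $N$ terms, each of which obeys the Gevrey-1 bound of the corresponding $\psi_k$ by Proposition~\ref{prop:Borel-f_kN}, and $\sum_{k}|\chi_N(k)| \le N$ is a harmless finite constant absorbed into the Gevrey constant.

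Second, for simple resurgence, I would use the linearity of the Borel transform to write
\begin{equation*}
\borel[\Psi](\zeta) = \sum_{k \in \IZ_N} \chi_N(k)\,\borel[\psi_k](\zeta).
\end{equation*}
By Proposition~\ref{prop:Borel-f_kN} and Corollary~\ref{cor:stokes-f}, each $\borel[\psi_k](\zeta)$ extends to a meromorphic function on $\IC$ whose only singularities are simple poles located at $\eta_n = 2\pi\ri n/N$, $n \in \IZ_{\ne 0}$. A finite $\IC$-linear combination of meromorphic functions with simple poles at prescribed locations is again meromorphic with at worst simple poles at the union of those locations. Hence $\borel[\Psi](\zeta)$ is endlessly continuable along any path avoiding the discrete set $\{2\pi\ri n/N : n \in \IZ_{\ne 0}\}$ and has only simple poles, which is exactly the condition of simple resurgence.

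There is no genuine obstacle here. The only cosmetic point to address is that the constant and $1/y$ pieces of $\tilde{f}_{k,N}$ lie outside the formal framework of Eq.~\eqref{eq: phi}, but they are explicit meromorphic contributions that play no role in the resurgent analysis beyond shifting $\tilde{\frakf}$ by a fixed trans-series term. It is worth flagging that some poles may cancel in the weighted sum depending on $\chi_N$; controlling these cancellations and identifying when the resulting Stokes constants assemble into an $L$-function is precisely the substantive work that will be needed for the subsequent statement (Theorem~\ref{thm:main1-0}), but it is not required for the present Gevrey-1/simple resurgent claim.
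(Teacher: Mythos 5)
Your proof is correct and follows essentially the same route as the paper, which simply states that the result follows from Proposition~\ref{prop:Borel-f_kN}; you have spelled out the linearity of the Borel transform, the stability of the Gevrey-1 bound under finite sums, and the stability of meromorphy with at-worst-simple poles under finite $\IC$-linear combinations, all of which the paper leaves implicit.
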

\begin{proof}
The statement follows from Proposition~\ref{prop:Borel-f_kN}.
\end{proof}

When the Dirichlet character $\chi_N$ entering the definition of the weighted sum of $q$-Pochhammer symbols in Eq.~\eqref{eq:fg-intro} is odd, \emph{i.e.}, it satisfies $\chi_N(-1)=-1$, a stronger result applies---namely, we prove the effectiveness of the median resummation of $\tfrakf$.

\begin{theorem}\label{thm:summability-f0}
    The function $\frakf\colon\IH\to\IC$ defined in Eq.~\eqref{eq:fg-intro} agrees with the median resummation of its asymptotic expansion $\tilde{\frakf}(y)$ if and only if the Dirichlet character $\chi_N$ is~odd. 
\end{theorem}

\begin{proof}
It follows from Eq.~\eqref{eq: median-fkN} that
\begin{equation}
\begin{aligned}
     \mathcal{S}^{\rm med}_{\frac{\pi}{2}} \tfrakf(y)&= \sum_{k\in\IZ_N}\chi_N(k)\Big[f_{k,N}(y)-\frac{1}{2}\Big(g_{k,N}\big(-\tfrac{1}{Ny}\big)+g_{\underline{k},N}\big(-\tfrac{1}{Ny}\big)\Big)\Big]\\
     &= \frakf(y)-\frac{1}{2}\sum_{k\in\IZ_N}\chi_N(k)g_{k,N}\big(-\tfrac{1}{Ny}\big)-\frac{1}{2} \sum_{k\in\IZ_N}\chi_N(k)g_{\underline{k},N}\big(-\tfrac{1}{Ny}\big)\\
     &=\frakf(y)-\frac{1}{2}\sum_{k\in\IZ_N}\Big( \chi_N(k) + \chi_N(-k) \Big) g_{k,N}\big(-\tfrac{1}{Ny}\big)\,,
\end{aligned}
\end{equation}
which gives the desired result if and only if $\chi_N$ is odd. An analogous computation holds for the median resummation $\mathcal{S}^{\rm med}_{-\frac{\pi}{2}} \tfrakf(y)$.
\end{proof}

Note that the above result reproduces Theorem~4.6 in~\cite{FR1phys} when taking $N=3$.

\subsubsection{Resurgence and summability of \texorpdfstring{$\mathfrak{g}$}{g}}

The asymptotic expansion $\tilde{\frakg}(y)$ of the function $\frakg(y)$ in Eq.~\eqref{eq:fg-intro} for $y\to 0$ with $\Im(y)>0$ is given by 
\begin{equation}\label{eq:formal-f-inf}
    \tilde{\frakg}(y)=\sum_{k\in\IZ_N}\chi_N(k)\tilde{g}_{k,N}(y)\,,
\end{equation}
where $\tilde{g}_{k,N}(y)$ is defined in Eq.~\eqref{eq: expansion-gkN}.

\begin{prop}\label{prop:resurgence-finf}
 The asymptotic expansion $\tilde{\frakg}(y)$ in Eq.~\eqref{eq:formal-f-inf} is Gevrey-1 and simple resurgent. 
\end{prop}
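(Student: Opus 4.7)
The plan is to reduce the statement to Proposition~\ref{prop:Borel-g_kN} applied termwise, in direct parallel with the one-line proof of Proposition~\ref{prop:resurgence-f0}. First I would decompose each summand using Eq.~\eqref{eq: expansion-gkN}: after the rescaling $y\mapsto Ny$, the asymptotic expansion $\tilde{g}_{k,N}(y)$ splits as an explicit, non-power-series remainder (the pole $-\pi\ri/(12Ny)$, the logarithm $-B_1(k/N)\log(-2\pi\ri Ny)$, the constant $-\log\bigl(\Gamma(k/N)/\sqrt{2\pi}\bigr)$, and a linear term in $y$), plus the formal power series $-\varphi_k(Ny)$. By Proposition~\ref{prop:Borel-g_kN}, each $\varphi_k$ is Gevrey-1 and simple resurgent, and both properties are preserved under scalar rescaling of the variable and multiplication by the constant $\chi_N(k)$.

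Next I would invoke linearity. Since the Gevrey-1 class and the simple-resurgent class are $\IC$-vector spaces stable under finite linear combinations, the formal power series part of $\tilde{\frakg}(y)$, namely $-\sum_{k\in\IZ_N}\chi_N(k)\,\varphi_k(Ny)$, is Gevrey-1. For simple resurgence, I would use that the Borel transform is $\IC$-linear, so
\begin{equation}
\borel\Bigl[\sum_{k\in\IZ_N}\chi_N(k)\,\varphi_k(N\,\cdot)\Bigr](\zeta)
=\sum_{k\in\IZ_N}\chi_N(k)\,\borel[\varphi_k](\zeta/N).
\end{equation}
By Corollary~\ref{cor:stokes-g}, each term in the right-hand side extends to a meromorphic function on $\IC$ with only simple poles located at $\rho_n/N=2\pi\ri n/N$, $n\in\IZ_{\ne 0}$. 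The combined singular locus therefore lies inside the locally finite set $\tfrac{2\pi\ri}{N}\IZ_{\ne 0}$, endless analytic continuation is preserved under finite sums, and all singularities remain simple poles. The non-power-series remainder terms play no role in this classification, since they are analytic in $y$ away from the origin and do not contribute to the Borel transform of the formal power series part.

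There is no substantive obstacle: the proof is essentially a bookkeeping exercise on top of Proposition~\ref{prop:Borel-g_kN}. The only point to be careful about is to isolate the formal power series $\varphi_k(Ny)$ cleanly from the non-formal terms in Eq.~\eqref{eq: expansion-gkN} before invoking $\IC$-linearity, so that the statement ``$\tilde{\frakg}$ is Gevrey-1 and simple resurgent'' applies unambiguously to its formal power series component, as in the $\tilde{\frakf}$ case.
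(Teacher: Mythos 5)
Your proposal is correct and follows essentially the same route as the paper, whose proof is a one-line reduction to Proposition~\ref{prop:Borel-g_kN} applied termwise; you have simply spelled out the bookkeeping (isolating $-\varphi_k(Ny)$ from the non-power-series remainder in Eq.~\eqref{eq: expansion-gkN} and invoking linearity of the Borel transform and stability of the Gevrey-1 and simple-resurgent classes under finite linear combinations). One small slip in the displayed formula: the correct scaling is $\borel[\varphi_k(N\,\cdot)](\zeta)=N\,\borel[\varphi_k](N\zeta)$, not $\borel[\varphi_k](\zeta/N)$; your stated pole locations $2\pi\ri n/N$ are nevertheless the right ones, so the argument goes through unchanged.
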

\begin{proof}
The statement follows from Proposition~\ref{prop:Borel-g_kN}.
\end{proof}

Let us now assume the validity of Conjecture~\ref{conj:summability-g-kN}. Analogously to the case of $\frakf$, when the Dirichlet character $\chi_N$ appearing in Eq.~\eqref{eq:fg-intro} satisfies $\chi_N(-1)=-1$, the effectiveness of the median resummation of $\tfrakg$ follows.

\begin{theorem}\label{thm:summability-f-inf}
    If Conjecture~\ref{conj:summability-g-kN} holds, then the function $\frakg\colon\IH\to\IC$ defined in Eq.~\eqref{eq:fg-intro} agrees with the median resummation of its asymptotic expansion $\tfrakg(y)$ if and only if the Dirichlet character $\chi_N$ is odd.
\end{theorem}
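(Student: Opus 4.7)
The plan is to mirror very closely the proof of Theorem~\ref{thm:summability-f0}, with the single-term median resummation formula \eqref{eq: median-gkN} playing the role that \eqref{eq: median-fkN} plays there. Since Conjecture~\ref{conj:summability-g-kN} is assumed, the identity
\begin{equation*}
\mathcal{S}_{\frac{\pi}{2}}^{\mathrm{med}}\tilde{g}_{k,N}(y) = g_{k,N}(y) - \tfrac{1}{2}\Big[f_{k,N}\big(-\tfrac{1}{Ny}\big) + f_{\underline{k},N}\big(-\tfrac{1}{Ny}\big) - \log(1-\zeta_N^{k}) - \log(1-\zeta_N^{-k})\Big]
\end{equation*}
may be used as an input. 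The asymptotic series $\tfrakg(y)$ is the finite linear combination \eqref{eq:formal-f-inf}, so the Borel--Laplace sums and hence the median resummation commute with the weighted sum over $k\in\IZ_N$. Applying this termwise to $\tfrakg$ gives
\begin{equation*}
\mathcal{S}^{\rm med}_{\frac{\pi}{2}}\tfrakg(y) = \frakg(y) - \tfrac{1}{2}\sum_{k\in\IZ_N}\chi_N(k)\big[f_{k,N}\big(-\tfrac{1}{Ny}\big) + f_{\underline{k},N}\big(-\tfrac{1}{Ny}\big)\big] + \tfrac{1}{2}\sum_{k\in\IZ_N}\chi_N(k)\big[\log(1-\zeta_N^{k}) + \log(1-\zeta_N^{-k})\big]\, .
\end{equation*}

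The next step is to reindex the second summand in each bracket via $k\mapsto \underline{k}=N-k$, which maps $\IZ_N$ bijectively to itself and converts $\chi_N(k)$ into $\chi_N(-k)$. This rewrites the correction as
\begin{equation*}
-\tfrac{1}{2}\sum_{k\in\IZ_N}\big(\chi_N(k)+\chi_N(-k)\big)\,f_{k,N}\big(-\tfrac{1}{Ny}\big) + \tfrac{1}{2}\sum_{k\in\IZ_N}\big(\chi_N(k)+\chi_N(-k)\big)\,\log(1-\zeta_N^{k})\, ,
\end{equation*}
which vanishes identically whenever $\chi_N(-1)=-1$, proving the ``if'' direction. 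For the converse, when $\chi_N$ is even, the prefactor $\chi_N(k)+\chi_N(-k)=2\chi_N(k)$ never forces cancellation, and the resulting correction is the holomorphic function
\begin{equation*}
-\frakf\big(-\tfrac{1}{Ny}\big) + \sum_{k\in\IZ_N}\chi_N(k)\log(1-\zeta_N^{k})\, ,
\end{equation*}
which is non-zero as a function of $y\in\IH$ because $\frakf$ itself is a non-trivial $q$-series on $\IH_-$ under the change of variable $y\mapsto -1/(Ny)$. This settles the ``only if'' direction.

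Finally, the same argument applied to the median resummation at angle $-\pi/2$ (using the lower-half-plane analogue of \eqref{eq: median-gkN}, which follows by complex conjugation from Conjecture~\ref{conj:summability-g-kN}) handles the case $\Im(y)<0$, completing the proof. I do not expect any substantive obstacle: the content of the theorem lies entirely in Conjecture~\ref{conj:summability-g-kN}, and once that is granted the proof is a bookkeeping exercise in the symmetry $k\leftrightarrow \underline{k}$ combined with the parity of $\chi_N$. The only mildly delicate point is the ``only if'' direction, where one must confirm that the two correction terms do not conspire to cancel in the even case; this is immediate from the observation that they have different $y$-dependence (one is asymptotically non-perturbative in $y\to 0$ through $\re^{-2\pi\ri k/(Ny)}$, the other is $y$-independent).
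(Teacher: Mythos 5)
Your proposal is correct and follows the same route as the paper's proof: apply the single-symbol median resummation identity from Conjecture~\ref{conj:summability-g-kN} (in the form of Eq.~\eqref{eq: median-gkN}) termwise to the weighted sum, reindex via $k\mapsto N-k$ to factor out $\chi_N(k)+\chi_N(-k)$, and conclude the correction vanishes iff the character is odd. The only (minor) additions beyond the paper's treatment are the explicit justification of the ``only if'' direction via the distinct $y$-dependence of the two correction terms, and a small slip in phrasing where you say $\frakf$ is a $q$-series ``on $\IH_-$ under $y\mapsto -1/(Ny)$''—this map sends $\IH$ to itself, so $\frakf(-\tfrac{1}{Ny})$ is evaluated in $\IH$; the nonvanishing argument still goes through unchanged.
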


\begin{proof}
It follows from Eq.~\eqref{eq: median-gkN} that
\begin{equation}
\begin{aligned}
     \mathcal{S}^{\rm med}_{\frac{\pi}{2}} \tfrakg(y)=& \sum_{k\in\IZ_N}\chi_N(k)\Big[g_{k,N}(y)-\frac{1}{2}\Big(f_{k,N}\big(-\tfrac{1}{Ny}\big)-\log(1-\re^{-2\pi\ri k/N})\\
     &\qquad\qquad\qquad+f_{\underline{k},N}\big(-\tfrac{1}{Ny}\big)-\log(1-\re^{2\pi\ri k/N})\Big)\Big]\\
     =&\; \frakg(y)-\frac{1}{2}\sum_{k\in\IZ_N}\chi_N(k)f_{k,N}\big(-\tfrac{1}{Ny}\big)-\frac{1}{2} \sum_{k\in\IZ_N}\chi_N(k)f_{\underline{k},N}\big(-\tfrac{1}{Ny}\big)\\
     &\;+\frac{1}{2}\sum_{k\in\IZ_N}\chi_N(k)\log(1-\re^{-2\pi\ri k/N})+\frac{1}{2}\sum_{k\in\IZ_N}\chi_N(k)\log(1-\re^{2\pi\ri k/N})\\
     =&\;\frakg(y)-\frac{1}{2}\sum_{k\in\IZ_N}\Big( \chi_N(k) + \chi_N(-k) \Big)\Big[f_{k,N}\big(-\tfrac{1}{Ny}\big)-\log(1-\re^{-2\pi\ri k/N})\Big]\,,
\end{aligned}
\end{equation}
which gives the desired result if and only if $\chi_N$ is odd. An analogous calculation holds for the median resummation $\mathcal{S}^{\rm med}_{-\frac{\pi}{2}} \tfrakg(y)$.
\end{proof}

Note that the above result reproduces Conjecture~1 in~\cite{FR1phys} when taking $N=3$.

\subsubsection{Quantum modularity}

In Section~\ref{sec: single}, we proved that the $q$-Pochhammer symbols $f_{k,N}$ and $g_{k,N}$ introduced in Eq.~\eqref{eq:f_kN} are holomorphic quantum modular functions according to the definition given by Zagier~\cite{zagier-talk}. Here, we show that the same statement applies to the weighted sums $\frakf$ and $\frakg$ in Eq.~\eqref{eq:fg-intro}. 

\begin{cor}\label{cor:qm-weighted}
The functions $\frakf,\frakg\colon\IH\to\IC$ defined in Eq.~\eqref{eq:fg-intro} are holomorphic quantum modular functions for the group $\Gamma_N\subset\mathsf{SL}_2(\IZ)$ generated by the elements in Eq.~\eqref{eq: GammaN-generators}.
\end{cor}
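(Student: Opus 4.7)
The plan is to reduce the claim directly to Theorem~\ref{thm:qm} by exploiting the $\IC$-linearity of the cocycle $h_\gamma[\cdot]$ defined in Eq.~\eqref{cocycle}. For a fixed weight $\omega$ (here $\omega=0$) and any $\gamma = \left(\begin{smallmatrix} a & b \\ c & d\end{smallmatrix}\right) \in \mathsf{SL}_2(\IZ)$, the assignment $f \mapsto h_\gamma[f]$ is $\IC$-linear in $f$, being the difference of the two $\IC$-linear operations $f \mapsto (cy+d)^{-\omega} f(\gamma y)$ and $f \mapsto f$. Consequently, the set of weight-$\omega$ holomorphic quantum modular functions for a fixed subgroup $\Gamma \subseteq \mathsf{SL}_2(\IZ)$ is a $\IC$-vector subspace of the space of holomorphic functions on $\IH$, closed under finite $\IC$-linear combinations.

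Applied to the weighted sums in Eq.~\eqref{eq:fg-intro}, this principle yields the decompositions
\begin{equation*}
h_\gamma[\frakf](y) = \sum_{k \in \IZ_N} \chi_N(k) \, h_\gamma[f_{k,N}](y), \qquad h_\gamma[\frakg](y) = \sum_{k \in \IZ_N} \chi_N(k) \, h_\gamma[g_{k,N}](y),
\end{equation*}
for every $\gamma \in \Gamma_N$. By Theorem~\ref{thm:qm}, each summand on the right-hand side extends analytically from $\IH$ to $\IC_\gamma$; as in the proof of Theorem~\ref{thm:qm}, it suffices to verify this for the two generators $T$ and $\gamma_N$ of $\Gamma_N$, where the required analytic extensions are encoded in the functions $F_k$ and $G_k$ of Eq.~\eqref{eq: FG-def} via Eqs.~\eqref{eq: fkN-cocycle} and~\eqref{eq: gkN-cocycle}. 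Since a finite $\IC$-linear combination of functions analytic on the common domain $\IC_\gamma$ is again analytic on $\IC_\gamma$, both $h_\gamma[\frakf]$ and $h_\gamma[\frakg]$ inherit the required analytic extension, so that Definition~\ref{def: holoQM} is satisfied.

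There is essentially no obstacle: the whole argument rests on the $\IC$-vector space structure of the space of holomorphic quantum modular functions of a given weight and group, together with the finiteness of the index set $\IZ_N$, which rules out any convergence subtleties when summing the cocycles $h_\gamma[f_{k,N}]$. In particular, no assumption on the Dirichlet character $\chi_N$ (primitivity, parity, \emph{etc.}) is needed for this statement, in contrast to the stronger summability and modular resurgence results of Theorems~\ref{thm:summability-f0},~\ref{thm:summability-f-inf}, and~\ref{thm:main3}. The corollary is thus a direct consequence of Theorem~\ref{thm:qm}.
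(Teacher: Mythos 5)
Your proof is correct and takes essentially the same approach as the paper, which simply states that the corollary follows from Theorem~\ref{thm:qm} and then records the linearity of the cocycle in a subsequent remark. You have merely spelled out the $\IC$-linearity argument that the paper leaves implicit.
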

\begin{proof}
The statement follows from Theorem~\ref{thm:qm}.
\end{proof}

Observe that the non-trivial cocycles $h_{\gamma_N}[\frakf]$ and $h_{\gamma_N}[\frakg]$ are linear combinations of the logarithm of Faddeev's quantum dilogarithm $\Phi_\mb$ in Eq.~\eqref{eq: seriesPhib}. Namely, 
\begin{subequations}
    \begin{align}
        h_{\gamma_N}[\frakf](y) &= \sum_{k\in\IZ_N}\chi_N(k) h_{\gamma_N}[f_{k,N}](y) = \sum_{k\in\IZ_N}\chi_N(k) h_{\gamma_N}[F_k](y) \, ,\\
        h_{\gamma_N}[\frakg](y) &= \sum_{k\in\IZ_N}\chi_N(k) h_{\gamma_N}[g_{k,N}](y) = \sum_{k\in\IZ_N}\chi_N(k) h_{\gamma_N}[G_k](y) \, ,
    \end{align}
\end{subequations}
where the functions $F_k$ and $G_k$ are written explicitly in terms of $\Phi_\mb$ in Eqs.~\eqref{eq: F-phib} and~\eqref{eq: G-phib}.

\subsection{Modular resurgence}\label{sec:modular-res-paradigm}

We will now compute the resurgent structures of the asymptotic series $\tilde{\frakf}(y)$ and $\tilde{\frakg}(y)$ in Eqs.~\eqref{eq:formal-f0} and~\eqref{eq:formal-f-inf}. Before doing so, we construct a pair of $L$-functions associated to the Dirichlet character $\chi_N$ by taking suitable products of its Dirichlet $L$-function and the Riemann zeta function and prove that their completions satisfy the same combined functional equation. In other words, they are one the analytic continuation of the other. Then, we find that for a certain class of Dirichlet characters, the inverse Mellin transform of these $L$-functions produces the weighted sums of $q$-Pochhammer symbols $\frakf$ and $\frakg$ that we are interested in. As a consequence, the asymptotic series $\tfrakf$ and $\tfrakg$ are proven to be modular resurgent and to fit in the modular resurgence paradigm of~\cite{FR1maths}. Hence, for every $N \in \IZ_{\ge 2}$, the functions in Eq.~\eqref{eq:fg-intro} provide new evidence of Conjectures~\ref{conj:quantum_modular1-intro} and~\ref{conj:quantum_modular2-intro}.

\subsubsection{Dirichlet characters, generating functions, and \texorpdfstring{$L$}{L}-functions}

We begin by showing that the weighted sums $\frakf$ and $\frakg$ in Eq.~\eqref{eq:fg-intro} are the generating functions of two interesting divisor sum functions.
\begin{lemma}\label{lem:f0N--gen}
 Let $\chi_N$ be a primitive Dirichlet character of modulus $N$ and fix $|x|<1$. Then,
 \be\label{eq:gen_func_S}
  \sum_{k\in\IZ_N}\chi_N(k)\log(\zeta_N^k;x)_\infty = - \mathscr{G}(\chi_N) \sum_{m=1}^\infty \sum_{d|m}\frac{1}{d}\,\overline{\chi_N}(d)\, x^m\,,
 \ee
  where $d$ is a positive integer divisor of $m$, $\mathscr{G}(\chi_N)=\sum_{j=1}^{N-1}\chi_N(j) \re^{\frac{2\pi\ri j}{N}}$ is the Gauss sum of $\chi_N$, and $\overline{\chi_N}=\chi_N^{-1}$ is its complex conjugate.  
\end{lemma}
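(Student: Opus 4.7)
The plan is to expand each factor of the $q$-Pochhammer symbol logarithmically, swap summations, apply the standard Gauss-sum identity for primitive Dirichlet characters, and finally reindex by fixed powers of $x$ to produce the divisor sum on the right-hand side. The overall strategy is classical in nature and essentially reduces the problem to two ingredients: a Lambert-type rewriting of $\log(\zeta_N^k;x)_\infty$ and the Fourier-dual identity that relates character sums of roots of unity to Gauss sums.

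Concretely, I would first use $(\zeta_N^k;x)_\infty=\prod_{n\geq 0}(1-\zeta_N^k x^n)$ together with the Taylor expansion $\log(1-z)=-\sum_{\ell\geq 1} z^\ell/\ell$. For $n\geq 1$ the expansion is absolutely convergent since $|\zeta_N^k x^n|=|x|^n<1$; for the boundary $n=0$ factor the identification $\log(1-\zeta_N^k)=-\sum_{\ell\geq 1}\zeta_N^{k\ell}/\ell$ (principal branch, valid since $\zeta_N^k\neq 1$) holds by Abel's theorem. Summing the geometric series $\sum_{n\geq 0}x^{n\ell}=1/(1-x^\ell)$ packages everything into the Lambert-type form
\begin{equation*}
    \log(\zeta_N^k;x)_\infty=-\sum_{\ell=1}^\infty\frac{\zeta_N^{k\ell}}{\ell(1-x^\ell)}.
\end{equation*}
Because the outer $k$-sum is finite, it commutes with the $\ell$-sum, and the inner character sum $\sum_{k=1}^{N-1}\chi_N(k)\zeta_N^{k\ell}$ can be evaluated via the standard identity $\sum_{k}\chi_N(k)\zeta_N^{k\ell}=\overline{\chi_N}(\ell)\,\mathscr{G}(\chi_N)$ valid for all $\ell\in\IZ_{\geq 1}$ precisely because $\chi_N$ is \emph{primitive}: when $\gcd(\ell,N)=1$ the substitution $k\mapsto k\ell\pmod N$ together with $\chi_N(k\ell)=\chi_N(k)\chi_N(\ell)$ produces $\overline{\chi_N}(\ell)\mathscr{G}(\chi_N)$, and when $\gcd(\ell,N)>1$ both sides vanish. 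This identity is the combinatorial engine that extracts the prefactor $\mathscr{G}(\chi_N)$ and twists $\chi_N$ into $\overline{\chi_N}$.

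The final step is to expand $1/(1-x^\ell)=\sum_{n\geq 0} x^{n\ell}$ (absolutely convergent for $|x|<1$), swap the $n,\ell$ summations, and group by the total power $m=n\ell$: for fixed $m\geq 1$, the pairs $(n,\ell)\in\IZ_{\geq 1}^2$ with $n\ell=m$ are parametrised bijectively by the positive divisors $d=\ell$ of $m$, producing the coefficient $\sum_{d\mid m}\overline{\chi_N}(d)/d$ of $x^m$. The main delicate point is the careful treatment of the conditionally convergent $n=0$ boundary term $\log(1-\zeta_N^k)$ appearing in Step~1: swapping $k$-sum and $\ell$-sum there requires Abel summation (or, equivalently, proving the identity first for $|y|<1$ with $y$ in place of $\zeta_N^k$, where everything is absolutely convergent, and then taking the $y\to\zeta_N^k$ radial limit). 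Once that limit is controlled, the remaining manipulations are routine Fubini arguments guaranteed by $|x|<1$, and the Gauss-sum identity closes the proof.
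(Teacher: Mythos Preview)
Your proof follows essentially the same route as the paper's: Taylor-expand the logarithms, interchange with the finite $k$-sum, evaluate $\sum_k \chi_N(k)\zeta_N^{kd}=\overline{\chi_N}(d)\,\mathscr{G}(\chi_N)$, and regroup into a divisor sum over powers of $x$. The only differences are cosmetic ordering (you pass through the Lambert form $-\sum_\ell \zeta_N^{k\ell}/\bigl(\ell(1-x^\ell)\bigr)$ before applying the character identity, whereas the paper first collects the divisor sum) and that the paper proves the Gauss-sum identity from scratch via a three-case analysis on $\gcd(d,N)$---using primitivity explicitly in the case $1<\gcd(d,N)<N$---while you invoke it as standard.
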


\begin{proof}
Applying the definition of the $q$-Pochhammer symbol and Taylor expanding the logarithm, we have that
\be
 \begin{aligned}
    \sum_{k\in\IZ_N}\chi_N(k)\log(\zeta^k_N;x)_\infty
    &=-\sum_{k\in\IZ_N}\chi_N(k)\sum_{m=0}^\infty \sum_{d|m}\frac{1}{d} \, x^m \, \zeta_N^{k d} \\
    &=-\sum_{m=0}^\infty \sum_{d|m}\frac{1}{d} \, x^m \, \sum_{k\in\IZ_N}\, \chi_N(k)  \zeta_N^{k d} \,.
 \end{aligned}
 \ee
 Let us now distinguish between three cases. 
 \begin{itemize}
     \item[(1)] If ${\rm gdc}(d,N)=1$, then $\chi_N(d)\overline{\chi_N}(d)=1$. Hence, we have that
 \begin{equation}
     \sum_{k\in\IZ_N}\, \chi_N(k)  \zeta_N^{k d}= \overline{\chi_N}(d) \sum_{k\in\IZ_N}\chi_N(k d) \, \zeta_N^{k d} =  \overline{\chi_N}(d)\mathscr{G}(\chi_N)\,.
 \end{equation}
 \item[(2)] If $N \mid d$, then $\zeta_N^{kd}=1$ for every $k\in\IZ_N$. Therefore, 
  \begin{equation}
     \sum_{k\in\IZ_N}\, \chi_N(k)  \zeta_N^{k d}= \sum_{k\in\IZ_N} \chi_N(k)= 0\,,
 \end{equation}
 where in the last step we use that $\chi_N$ is primitive and thus not the principal character.
    \item[(3)] If $\mathrm{gcd}(d, N)=w >1$ and $N \nmid d$, we can write $d=q w$ and $N=p w$ with $\mathrm{gcd}(q,p)=1$ and $p >1$. Then, 
 \begin{equation}
     \sum_{k\in\IZ_N}\, \chi_N(k)  \zeta_N^{k d}=\sum_{\ell\in\IZ_{p}}\sum_{\substack{k\in\IZ_{N} \\ q k\underset{p}{\equiv}\ell}} \chi_N(k) \, \zeta_{p}^{\ell} = \sum_{\ell\in\IZ_{p}} \zeta_{p}^{\ell}\sum_{\substack{k\in\IZ_{N} \\ k\underset{p}{\equiv}\ell/q}} \chi_N(k) \,, 
 \end{equation}
where in the last step we used that $\gcd(q,N)=1$. We set $u=\ell/q$ and observe that, if we vary $n\in\IZ_w$ with fixed $u \in \IZ_p$, then $k=u+n p$ runs over all elements in $\IZ_N$ such that $k\equiv_p u$. Namely, we can write
 \begin{equation}\label{eq:step}
     \sum_{\substack{k\in\IZ_N \\ k\underset{p}{\equiv}u}} \chi_N(k)=\sum_{n\in\IZ_w}\chi_N(u+np)\,.
 \end{equation}
  Since $\chi_N$ is primitive, there exists an integer $c\equiv_p 1 $ that is coprime to $N$ and such that $\chi_N(c)\neq 1$. Hence,
\begin{equation}\label{eq:step2}
\chi_N(c)\sum_{\substack{k\in\IZ_N \\ k\underset{p}{\equiv}u}} \chi_N(k)=\chi_N(c)\sum_{n\in\IZ_w}\chi_N(u+np)=\sum_{n\in\IZ_w}\chi_N(uc+ncp)= \sum_{\substack{k\in\IZ_N \\ k\underset{p}{\equiv}u}} \chi_N(k)   \,,
\end{equation}
where in the first and last steps we used Eq.~\eqref{eq:step}.
 \end{itemize}
Summarizing, we obtain that 
\begin{equation}\label{eq:step-char}
    \sum_{k\in\IZ_N}\, \chi_N(k)  \zeta_N^{k d}=\begin{cases}
       \overline{\chi_N}(d)\mathscr{G}(\chi_N)  & \mbox{if} \quad  {\rm gdc}(d,N)=1 \\
       \\
      0   & \mbox{otherwise}
    \end{cases}\,,
\end{equation}
which concludes the proof.
\end{proof}

\begin{lemma}\label{lem:finfN--gen}
 Let $\chi_N$ be a Dirichlet character of modulus $N$ and fix $|x|<1$. Then,
 \be\label{eq:gen_func_R}
  \sum_{k\in\IZ_N}\chi_N(k)\log(x^{k/N};x)_\infty = -\sum_{m=1}^\infty \sum_{d|m}\frac{d}{m}\,\chi_N(d)\, x^{m/N} \,, 
 \ee
 where $d$ is a positive integer divisor of $m$.
\end{lemma}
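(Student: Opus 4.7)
The plan is to expand the logarithm of each $q$-Pochhammer symbol $(x^{k/N};x)_\infty$ as an absolutely convergent double power series in $x^{1/N}$, swap the summation orders to collect the coefficient of each power $x^{m/N}$, and recognize the resulting inner sum as a divisor sum twisted by $\chi_N$.

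Concretely, first I would use the infinite product definition together with the Taylor expansion $\log(1-z)=-\sum_{\ell\geq 1}z^\ell/\ell$ (valid since $|x|<1$) to write
\begin{equation}
\log(x^{k/N};x)_\infty \;=\; -\sum_{n=0}^\infty\sum_{\ell=1}^\infty\frac{x^{\ell(k+Nn)/N}}{\ell}\,.
\end{equation}
Summing against $\chi_N(k)$ over $k\in\IZ_N$ and exchanging the order of summation (justified by absolute convergence for $|x|<1$), the coefficient of $x^{m/N}$ is obtained by enumerating the triples $(k,n,\ell)\in\IZ_N\times\IZ_{\geq 0}\times\IZ_{\geq 1}$ with $\ell(k+Nn)=m$. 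For each such $m$, this forces $\ell\mid m$, and for each divisor $\ell$ of $m$ there is a unique decomposition $m/\ell=k+Nn$ with $k\in\IZ_N$ and $n\in\IZ_{\geq 0}$.

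The next step is the substitution $d=m/\ell$, so that $\ell=m/d$ and, by the $N$-periodicity of $\chi_N$, the weight $\chi_N(k)$ at the chosen triple equals $\chi_N(m/\ell)=\chi_N(d)$. This turns the coefficient of $x^{m/N}$ into
\begin{equation}
-\sum_{\ell\mid m}\frac{\chi_N(m/\ell)}{\ell} \;=\; -\sum_{d\mid m}\frac{d\,\chi_N(d)}{m}\,,
\end{equation}
which is exactly what is claimed in Eq.~\eqref{eq:gen_func_R}.

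There is no genuine obstacle: unlike Lemma~\ref{lem:f0N--gen}, the character sum here is automatic and does not require the primitivity of $\chi_N$ nor any case analysis on $\gcd(d,N)$, since the weight $\chi_N(k)$ survives directly through the reindexing via $d\equiv k\pmod{N}$. The only point that deserves care is the interchange of the (triple) sums, which is justified by the absolute convergence of $\sum_{n,\ell}|x|^{\ell(k+Nn)/N}/\ell$ for $|x|<1$ uniformly in $k\in\IZ_N$.
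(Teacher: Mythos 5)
Your proof is correct and follows essentially the same route as the paper's: expand each $\log(x^{k/N};x)_\infty$ as a power series in $x^{1/N}$, reindex by $d = m/\ell$, and use the $N$-periodicity of $\chi_N$ to replace $\chi_N(k)$ by $\chi_N(d)$ when $d\equiv k\pmod N$, then collapse the sum over $k$. The paper simply writes the divisor-sum form of the expansion directly and skips the explicit triple-sum bookkeeping you spell out; your remark that no primitivity or $\gcd$ case analysis is needed (in contrast to Lemma~\ref{lem:f0N--gen}) is accurate.
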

\begin{proof}
Applying the definition of the $q$-Pochhammer symbol and Taylor expanding the logarithm, we have that
\be
 \begin{aligned}
    \sum_{k\in\IZ_N}\chi_N(k)\log(x^{k/N};x)_\infty&=-\sum_{k\in\IZ_N}\chi_N(k)\sum_{m=1}^\infty \sum_{\substack{d|m\\ d\equiv_N k}}\frac{d}{m} \, x^{m/N} \\
    &= -\sum_{k\in\IZ_N}\sum_{m=1}^\infty \sum_{\substack{d|m\\ d\equiv_N k}}\frac{d}{m} \, \chi_N(d) \, x^{m/N} \\
    &= -\sum_{m=1}^\infty \sum_{d|m}\frac{d}{m} \, \chi_N(d) \, x^{m/N} \,,
 \end{aligned}
 \ee
 where we have used that $\chi_N(d)=\chi_N(k)$ for $d\equiv_N k$.
\end{proof}

\begin{rmk}
Lemmas~\ref{lem:f0N--gen} and~\ref{lem:finfN--gen} can be equivalently restated in the form
\be \label{eq:gen_func_fg}
    \frakf(y) = - \mathscr{G}(\chi_N) \sum_{m=1}^\infty \sum_{d|m}\frac{1}{d}\,\overline{\chi_N}(d)\, q^m\,, \quad
        \frakg(y/N) = -\sum_{m=1}^\infty \sum_{d|m}\frac{d}{m}\,\chi_N(d)\, q^{m/N} \,, 
\ee
where $x=q=\re^{2 \pi \ri y}$, as before. We stress that the first equation above only applies to primitive characters of modulus $N$.
\end{rmk}

Let us now introduce a pair of $L$-series $L(s)$ and $L'(s)$ associated with a given Dirichlet character $\chi_N$. Specifically, following the prototype of~\cite{Rella22}, we define them as
\be \label{eq:LL'_N}
    L(s):=L(s+1,\overline{\chi_N})\zeta(s) \, , \quad L'(s):=L(s,\chi_N)\zeta(s+1)\,,
\ee
where $L(\chi_N,s)$ is the standard Dirichlet $L$-series associated with $\chi_N$ and $\zeta(s)$ is the Riemann zeta function. Note the symmetric unitary shift in the arguments of the factors.
When $\chi_N$ is primitive, that is, its modulus and conductor are equal, the Dirichlet $L$-series $L(s, \chi_{N})$ can be meromorphically continued to the whole complex $s$-plane. Thus, in this case, each of the two $L$-series in Eq.~\eqref{eq:LL'_N} defines an $L$-function via meromorphic continuation to $s \in \IC$. 

We assume that $\chi_N$ is, indeed, primitive and satisfies $\chi_N(-1)= (-1)^\delta$. Recall that the meromorphic continuations of the Dirichlet $L$-function $L(s, \chi_{N})$ and the Riemann zeta function $\zeta(s)$ can be written explicitly in the form of the completed $L$-functions
\be \label{eq: completed}
\Lambda(s,\chi_N) = \frac{N^{\frac{s}{2}}}{\pi^{\frac{s+\delta}{2}}} \Gamma\left( \frac{s+\delta}{2} \right) L(s, \chi_{N}) \, , \quad \Lambda_\zeta(s) = \frac{1}{\pi^{\frac{s}{2}}} \Gamma\left( \frac{s}{2} \right) \zeta(s) \, , \quad s \in \IC \, ,
\ee
which obey the well-known functional equations
\be \label{eq: funct-eqs}
\Lambda(s,\chi_N) =\frac{\mathscr{G}(\chi_N)}{\ri^\delta \sqrt{N}} \Lambda(1-s,\overline{\chi_N}) \, , \quad \Lambda_\zeta(s) = \Lambda_\zeta(1-s) \, .
\ee
The meromorphic continuation of the $L$-functions $L$ and $L'$ in Eq.~\eqref{eq:LL'_N} is then described as follows. 

\begin{lemma}\label{lem:func-eq}
Let $\chi_N$ be a primitive Dirichlet character of modulus $N$ with $\chi_N(-1)=(-1)^\delta$. The L-functions $L(s)$ and $L'(s)$ in Eq.~\eqref{eq:LL'_N} obey the functional equation
    \be\label{eq:functional_eq}
    \Lambda(s)=\frac{\ri^\delta \sqrt{N}}{\mathscr{G}(\chi_N)}\Lambda'(-s)\,,
    \ee
    where $\Lambda(s):=\Lambda(s+1,\overline{\chi_N}) \Lambda_\zeta(s)$ and $\Lambda'(s):=\Lambda(s,\chi_N) \Lambda_\zeta(s+1)$ are the completed $L$-functions.
\end{lemma}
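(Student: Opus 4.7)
The plan is to verify the functional equation by direct substitution, leveraging only the two input functional equations recalled in Eq.~\eqref{eq: funct-eqs}: the Dirichlet functional equation for $\Lambda(s,\chi_N)$ and the Riemann one for $\Lambda_\zeta(s)$. The key observation is that the unitary shifts $s+1$ appearing in the definitions of $L(s)$ and $L'(s)$ in Eq.~\eqref{eq:LL'_N} are arranged precisely so that the reflection $s \mapsto -s$ acting on $\Lambda'$ induces the two individual reflections $s \mapsto 1-s$ on each of the two building blocks $\Lambda(\,\cdot\,,\chi_N)$ and $\Lambda_\zeta$.

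Concretely, I would start from the right-hand side of Eq.~\eqref{eq:functional_eq} by unfolding the definition,
\begin{equation*}
\Lambda'(-s) \,=\, \Lambda(-s,\chi_N)\,\Lambda_\zeta(1-s).
\end{equation*}
For the second factor, the Riemann functional equation immediately gives $\Lambda_\zeta(1-s) = \Lambda_\zeta(s)$. For the first factor, applying the Dirichlet functional equation in Eq.~\eqref{eq: funct-eqs} with $s$ replaced by $-s$ produces
\begin{equation*}
\Lambda(-s,\chi_N) \,=\, \frac{\mathscr{G}(\chi_N)}{\ri^\delta \sqrt{N}}\,\Lambda(1+s,\overline{\chi_N}).
\end{equation*}
Substituting both identities and recognizing the product $\Lambda(s+1,\overline{\chi_N})\,\Lambda_\zeta(s)$ as $\Lambda(s)$ yields
\begin{equation*}
\Lambda'(-s) \,=\, \frac{\mathscr{G}(\chi_N)}{\ri^\delta \sqrt{N}}\,\Lambda(s),
\end{equation*}
which is equivalent to Eq.~\eqref{eq:functional_eq} upon rearrangement.

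There is no genuine analytic obstacle: both sides extend meromorphically to $s \in \IC$ because the completion of $L(s,\chi_N)$ is entire (as $\chi_N$ is non-principal and primitive) while $\Lambda_\zeta$ is meromorphic with controlled poles, so the identity, established formally by the algebraic manipulation above, propagates to all of $\IC$ by analytic continuation. The only point requiring attention is a careful bookkeeping of the parity factor: since $\chi_N(-1) = \overline{\chi_N}(-1) = (-1)^\delta$, the same exponent $\delta$ appears on both sides and the prefactor $\ri^\delta\sqrt{N}/\mathscr{G}(\chi_N)$ is unambiguous; no additional sign or Gauss-sum identity is needed to close the argument.
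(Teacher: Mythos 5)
Your proof is correct and works directly. Substituting the Riemann functional equation $\Lambda_\zeta(1-s)=\Lambda_\zeta(s)$ and the Dirichlet one with $s\mapsto -s$ into $\Lambda'(-s)=\Lambda(-s,\chi_N)\Lambda_\zeta(1-s)$ gives $\Lambda'(-s)=\frac{\mathscr{G}(\chi_N)}{\ri^\delta\sqrt N}\Lambda(s)$, which rearranges to the claim.

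The paper's own proof uses the same two input functional equations, but it additionally invokes the Gauss sum identities $\mathscr{G}(\overline{\chi_N})=\chi_N(-1)\overline{\mathscr{G}(\chi_N)}$ and $|\mathscr{G}(\chi_N)|=\sqrt N$. That is because the paper implicitly starts from $\Lambda(s)=\Lambda(s+1,\overline{\chi_N})\Lambda_\zeta(s)$ and hence needs to rewrite the functional equation with $\overline{\chi_N}$ on the left-hand side, which requires expressing $\mathscr{G}(\overline{\chi_N})$ in terms of $\mathscr{G}(\chi_N)$. By instead unfolding $\Lambda'(-s)$, you can apply the Dirichlet functional equation in exactly the form given in Eq.~\eqref{eq: funct-eqs}, so the Gauss sum bookkeeping becomes unnecessary. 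Your route is thus a mild simplification of the same argument; both are correct, and yours uses strictly fewer auxiliary facts. Your remark that $\delta$ is the same for $\chi_N$ and $\overline{\chi_N}$ (since $\chi_N(-1)=\pm1$ is real) is the one small point that needed to be checked, and you handled it.
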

\begin{proof}
    The statement follows from the functional equations for $L(s,\chi_N)$ and $\zeta(s)$ in Eq.~\eqref{eq: funct-eqs} and using the properties
    \be
        \mathscr{G}(\overline{\chi_N})= \chi_N(-1) \overline{\mathscr{G}(\chi_N)} \, , \quad |\mathscr{G}(\chi_N)|= \sqrt{N} \, .
    \ee
\end{proof}

Once more following the blueprint of~\cite{Rella22, FR1phys, FR1maths}, we can now suitably combine the objects introduced here---namely, the generating functions $\frakf$ and $\frakg$ in Eq.~\eqref{eq:gen_func_fg} and the $L$-functions $L$ and $L'$ in Eq.~\eqref{eq:LL'_N}. In particular, we prove that the same two arithmetic functions
\be \label{eq:arithmetic-functs}
    \sum_{d|m}\frac{1}{d}\,\overline{\chi_N}(d) \,, \quad
        \sum_{d|m}\frac{d}{m}\,\chi_N(d) \,, 
\ee
appear both as coefficients of the generating series in Eq.~\eqref{eq:gen_func_fg} and as coefficients of the Dirichlet series expansion of the $L$-functions in Eq.~\eqref{eq:LL'_N}. In other words, the $L$-functions $L$, $L'$ can be obtained via the Mellin transform of the $q$-series $\frakf$, $\frakg$, respectively.

\begin{lemma}\label{lem:L0-gen} Let $\chi_N$ be a Dirichlet character of modulus $N$. The $L$-series $L(s)$ in Eq.~\eqref{eq:LL'_N} satisfies 
    \be\label{eq:L-0-1}
    L(s) =\sum_{m>0}\sum_{d|m} \frac{1}{d} \overline{\chi_N}(d)\,\frac{1}{m^s}\,,
    \ee
    where $d$ is a positive integer divisor of $m$. Note that the coefficients are generated by the weighted sum of $q$-Pochhammer symbols $\frakf$ in Eq.~\eqref{eq:fg-intro}.
\end{lemma}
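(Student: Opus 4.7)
The plan is to expand both factors $L(s+1,\overline{\chi_N})$ and $\zeta(s)$ as ordinary Dirichlet series in their half-plane of absolute convergence and then rearrange the resulting double sum as a single Dirichlet series indexed by the product of the two summation variables. This is a routine Dirichlet convolution, but I want to make sure the convergence and the identification of the coefficients are both carried out cleanly.

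First I would fix $s\in\IC$ with $\Re(s)$ sufficiently large (so that both $\zeta(s)$ and $L(s+1,\overline{\chi_N})$ converge absolutely as Dirichlet series) and write
\begin{equation*}
L(s+1,\overline{\chi_N})\,\zeta(s)
= \Bigl(\sum_{d>0}\frac{\overline{\chi_N}(d)}{d^{\,s+1}}\Bigr)\Bigl(\sum_{e>0}\frac{1}{e^{s}}\Bigr)
= \sum_{d,e>0}\frac{\overline{\chi_N}(d)}{d\cdot d^{s}\,e^{s}},
\end{equation*}
where absolute convergence justifies multiplying the series termwise. Next I would change variables by setting $m=de$ (so $e=m/d$) and reorder the sum by grouping pairs $(d,e)$ with a fixed product, giving
\begin{equation*}
L(s+1,\overline{\chi_N})\,\zeta(s)
= \sum_{m>0}\frac{1}{m^{s}}\sum_{d\mid m}\frac{\overline{\chi_N}(d)}{d},
\end{equation*}
which is exactly Eq.~\eqref{eq:L-0-1}. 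The identity is first established in the half-plane of absolute convergence and then extends to all $s\in\IC$ by the meromorphic continuations of $L(\,\cdot\,,\overline{\chi_N})$ and $\zeta$.

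To conclude, I would compare the arithmetic function $m\mapsto \sum_{d\mid m}\overline{\chi_N}(d)/d$ appearing as the coefficient of $m^{-s}$ above with the coefficient of $q^m$ in the first equation of Eq.~\eqref{eq:gen_func_fg}; they agree up to the overall Gauss-sum factor $-\mathscr{G}(\chi_N)$, so the statement that $\frakf$ generates the coefficients of $L(s)$ follows directly from Lemma~\ref{lem:f0N--gen}. The only subtle point in the argument is the use of absolute convergence to legitimize the rearrangement of the double Dirichlet series, but this is standard and poses no real obstacle here; the rest is bookkeeping.
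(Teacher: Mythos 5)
Your proof is correct and is essentially the same as the paper's: where the paper cites the compatibility of Dirichlet series multiplication with Dirichlet convolution to read off the coefficient $\big(F_{-1}\overline{\chi_N}\ast F_0\big)(m)=\sum_{d\mid m}\overline{\chi_N}(d)/d$, you simply spell that fact out by multiplying the two series termwise and regrouping by $m=de$ in the half-plane of absolute convergence. The closing remark about extending the identity to all $s$ by meromorphic continuation is unnecessary (the right-hand side is a Dirichlet series, and the lemma is an equality of series in the region of convergence), but this is a harmless side comment and does not affect the argument.
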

\begin{proof}
Following~\cite{Rella22}, let us introduce $F_\alpha(n)=n^\alpha$ for $\alpha\in\IR$ and employ the compatibility of the multiplication
of Dirichlet series with the Dirichlet convolution of arithmetic functions. Explicitly,
\be
   L(s)= L(s+1,\overline{\chi_N})\zeta(s) = \sum_{m>0}\frac{\big(F_{-1}\overline{\chi_N}\ast F_0\big)(m)}{m^{s}}\,,
\ee
where $\ast$ denotes the Dirichlet convolution. By definition, 
\be
\begin{aligned}
    \big(F_{-1}\overline{\chi_N}\ast F_0\big)(m)&=\sum_{d\vert m}\frac{1}{d}\overline{\chi_N}(d)\,,
\end{aligned}
\ee
and the claim follows. 
\end{proof}
\begin{lemma} \label{lem:Linf-gen}
Let $\chi_N$ be a Dirichlet character of modulus $N$. The $L$-series $L'(s)$ in Eq.~\eqref{eq:LL'_N} satisfies 
\be\label{eq:L-inf-1}
    L'(s) =\sum_{m>0}\sum_{d|m}\frac{d}{m}\,\chi_N(d) \, \frac{1}{m^s}\,.
    \ee
     where $d$ is a positive integer divisor of $m$. Note that the coefficients are generated by the weighted sum of $q$-Pochhammer symbols $\frakg$ in Eq.~\eqref{eq:fg-intro}.
\end{lemma}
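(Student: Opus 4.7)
The plan is to mirror the proof of Lemma~\ref{lem:L0-gen} and exploit the compatibility between multiplication of Dirichlet series and Dirichlet convolution of arithmetic functions. Since the definition in Eq.~\eqref{eq:LL'_N} is $L'(s)=L(s,\chi_N)\zeta(s+1)$, where the two factors already have the shape of Dirichlet series, the target identity should fall out essentially by inspection once the convolution is computed.

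More precisely, I would introduce again the arithmetic functions $F_\alpha(n)=n^\alpha$, so that $L(s,\chi_N)=\sum_{n>0}\chi_N(n)/n^s$ corresponds to $F_0\chi_N=\chi_N$, while $\zeta(s+1)=\sum_{n>0}1/n^{s+1}$ corresponds to $F_{-1}$. Then
\begin{equation}
L'(s)=L(s,\chi_N)\zeta(s+1)=\sum_{m>0}\frac{(\chi_N\ast F_{-1})(m)}{m^s},
\end{equation}
where $\ast$ denotes Dirichlet convolution. Writing the convolution out gives
\begin{equation}
(\chi_N\ast F_{-1})(m)=\sum_{d\mid m}\chi_N(d)\,\frac{1}{m/d}=\sum_{d\mid m}\frac{d}{m}\,\chi_N(d),
\end{equation}
which is exactly the coefficient claimed in Eq.~\eqref{eq:L-inf-1}. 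Substituting back yields the desired identity.

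The final observation that these coefficients are generated by $\frakg$ in Eq.~\eqref{eq:fg-intro} is then an immediate consequence of Lemma~\ref{lem:finfN--gen}: comparing Eq.~\eqref{eq:gen_func_R} (or equivalently the second equation in Eq.~\eqref{eq:gen_func_fg}) with the Dirichlet series just derived shows that the same arithmetic function $\sum_{d\mid m}\tfrac{d}{m}\chi_N(d)$ appears in both places. There is no real obstacle here; the only minor point to watch is the correct alignment of the argument shifts in $L(s,\chi_N)$ and $\zeta(s+1)$ (as opposed to the dual shifts in Lemma~\ref{lem:L0-gen}), so that $F_{-1}$ is paired with $\zeta(s+1)$ and $\chi_N$ with $L(s,\chi_N)$, producing the factor $d/m$ rather than $1/d$.
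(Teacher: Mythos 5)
Your proposal is correct and follows essentially the same route as the paper's proof: both introduce $F_\alpha(n)=n^\alpha$, identify $L(s,\chi_N)$ and $\zeta(s+1)$ with the arithmetic functions $F_0\chi_N$ and $F_{-1}$, and read off the coefficient from the Dirichlet convolution $(\chi_N\ast F_{-1})(m)=\sum_{d\mid m}\tfrac{d}{m}\chi_N(d)$. The paper is slightly more terse but the argument is identical.
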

\begin{proof}
Again, we generalize the same argument of~\cite{Rella22} used in Lemma~\ref{lem:L0-gen}. In particular, we introduce the function $F_\alpha(n)=n^\alpha$ for $\alpha\in\IR$ and write
\be
L'(s)=L(s,\chi_N)\zeta(s+1)=\sum_{m>0}\frac{\big(F_{0}\chi_N\ast F_{-1}\big)(m)}{m^{s}}=\sum_{m>0}\sum_{d\vert m}\frac{d}{m}\chi_N(d)\frac{1}{m^{s}}\,,
\ee
where $\ast$ denotes the Dirichlet convolution.
\end{proof}

\subsubsection{Modular resurgent structures}

\begin{theorem}\label{thm:resurgence-f0}
    Let $\chi_N$ be a primitive Dirichlet character of modulus $N$. Let $\tfrakf(y)$ be the asymptotic expansion of the function $\frakf(y)$ in Eq.~\eqref{eq:fg-intro} in the limit $y \to 0$ with $\Im(y)>0$. Then, $\tfrakf$ is modular resurgent if and only if $\chi_N$ is odd.
\end{theorem}

\begin{proof}
The asymptotic expansion $\tfrakf$ in Eq.~\eqref{eq:formal-f0} is governed by a linear combination of the formal power series $\psi_k$ in Eq.~\eqref{eq: tilde-psi}. Thus, by Corollary~\ref{cor:stokes-f}, the singularities of the Borel transform $\borel[\tfrakf]$ are simple poles located along a vertical tower at the points
\begin{equation}
      \eta_m =2\pi\ri\frac{m}{N}\,,\quad m\in\IZ_{\neq 0} \, , 
\end{equation}
and the corresponding Stokes constants are given by 
\be
      R_m =\sum_{k\in\IZ_N}\chi_N(k)\Bigg[\sum_{\substack{d\vert m\\ d\underset{N}{\equiv} k}}\frac{d}{m}-\sum_{\substack{d\vert m\\ d\underset{N}{\equiv} -k}}\frac{d}{m}\Bigg] =\sum_{d\vert m}\frac{d}{m} \big(\chi_N(d)-\chi_N(-d)\big)\,,
\ee
which are non-vanishing if and only if $\chi_N$ is odd. When this is the case, we obtain
\be \label{eq:Rm-final}
      R_m = 2 \sum_{d\vert m}\frac{d}{m} \chi_N(d) \, ,  
\ee
which by Lemma~\ref{lem:Linf-gen} are the coefficients of the $L$-series $2L'(s)$ defined in Eq.~\eqref{eq:LL'_N}. Furthermore, since $\chi_N$ is taken to be primitive, Lemma~\ref{lem:func-eq} implies that $2L'(s)$ is, in fact, an $L$-function and satisfies the functional equation in Eq.~\eqref{eq:functional_eq} with $\delta=1$.
\end{proof}

\begin{theorem}\label{thm:resurgence-f-inf}
   Let $\chi_N$ be a primitive Dirichlet character of modulus $N$. Let $\tfrakg(y)$ be the asymptotic expansion of the function $\frakg(y)$ in Eq.~\eqref{eq:fg-intro} in the limit $y \to 0$ with $\Im(y)>0$. Then, $\tfrakg$ is modular resurgent if and only if $\chi_N$ is odd.
\end{theorem}

\begin{proof}
The asymptotic expansion $\tfrakg$ in Eq.~\eqref{eq:formal-f-inf} is governed by a linear combination of the formal power series $\varphi_k$ in Eq.~\eqref{eq: tilde-phi}. Thus, by Corollary~\ref{cor:stokes-g}, the singularities of the Borel transform $\borel[\tfrakg]$ are simple poles at
\begin{equation}
      \rho_m =2\pi\ri m\,,\quad m\in\IZ_{\neq 0} \, ,
\end{equation}
and the corresponding Stokes constants are given by 
\begin{equation}\label{eq:Sm-final}
\begin{aligned}
      S_m &=- 2 \ri\sum_{k\in\IZ_N}\chi_N(k)\sum_{d\vert m}\frac{1}{d} \sin\left(\frac{2\pi k d}{N}\right) \\
      &= -\sum_{k\in\IZ_N}\chi_N(k)\sum_{d\vert m}\frac{1}{d} \Big(\zeta_N^{kd}-\zeta_N^{-kd}\Big) \\
      &=-(1-\chi_N(-1))\,\mathscr{G}(\chi_N)\sum_{d\vert m}\frac{1}{d} \overline{\chi_N}(d) \, , 
\end{aligned}
\end{equation}
 where in the second step we used Eq.~\eqref{eq:step-char}. Note that $S_m$ is  non-vanishing if and only if $\chi_N$ is odd. 

Then, by Lemma~\ref{lem:L0-gen}, the Stokes constants $S_m$ are the coefficients of the $L$-series $-2\mathscr{G}(\chi_N) L(s)$ defined in Eq.~\eqref{eq:LL'_N}. Moreover, since $\chi_N$ is primitive, Lemma~\ref{lem:func-eq} implies that $-2\mathscr{G}(\chi_N) L(s)$ is, in fact, an $L$-function and satisfies the functional equation in Eq.~\eqref{eq:functional_eq} with~$\delta=1$.
\end{proof}

Combining Eq.~\eqref{eq:gen_func_fg} with Eqs.~\eqref{eq:Rm-final} and~\eqref{eq:Sm-final}, each of the two functions $\frakf(y)$ and $\frakg(y)$ defined in Eq.~\eqref{eq:fg-intro} can be written as the generating function of the Stokes constants of the other. Namely, 
\be \label{eq:gen_func_fg-final}
    \frakf(y) =  \frac{1}{2} \sum_{m=1}^\infty S_m q^m\,, \quad
        \frakg(y/N) = -\frac{1}{2} \sum_{m=1}^\infty R_m q^{m/N} \,, 
\ee
where $x=q=\re^{2 \pi \ri y}$, as before, while $S_m$ and $R_m$ are the Stokes constants of the asymptotic series $\tfrakg(y)$ and $\tfrakf(y)$, respectively. Equivalently, we can write
\be \label{eq:gen_func_fg-final2}
    \frakf(y) =  \frac{1}{2}\mathrm{disc}_{\frac{\pi}{2}} \tfrakg \Big( -\tfrac{1}{y} \Big) \,, \quad
        \frakg(y/N) = - \frac{1}{2}\mathrm{disc}_{\frac{\pi}{2}} \tfrakf \Big( -\tfrac{1}{y} \Big) \,.
\ee
This observation is a general property of \emph{paired} MRSs. In fact, $\tfrakg(y)$ and $\tfrakf(y)$ are naturally embedded in the modular resurgence paradigm of~\cite{FR1maths}, that is, they satisfy the commutative diagram in Eq.~\eqref{diag:resurgence-L funct}.

\begin{cor}\label{thm:mod-res}
 Let $\chi_N$ be a primitive Dirichlet character modulo $N$. The functions $\frakf$ and $\frakg$ defined in Eq.~\eqref{eq:fg-intro} satisfy the modular resurgence paradigm if and only if $\chi_N$ is odd.
\end{cor}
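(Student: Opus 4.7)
The plan is to verify, edge by edge, that the commutative diagram in Eq.~\eqref{diag:resurgence-L funct} is realized by the pair $(\frakf,\frakg)$ precisely when $\chi_N(-1)=-1$, and then to read off the ``only if'' direction from the definition of MRS itself. The building blocks needed are already in place: Theorems~\ref{thm:resurgence-f0} and~\ref{thm:resurgence-f-inf} supply the two MRSs; Lemmas~\ref{lem:L0-gen} and~\ref{lem:Linf-gen} identify the Stokes constants as Dirichlet coefficients of the $L$-series in Eq.~\eqref{eq:LL'_N}; and Lemma~\ref{lem:func-eq} gives the functional equation relating those $L$-series. It remains only to chain these ingredients together and to match them with each step of the paradigm.

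Assuming first that $\chi_N$ is odd, I would proceed as follows. Theorems~\ref{thm:resurgence-f0} and~\ref{thm:resurgence-f-inf} show that both $\tfrakf$ and $\tfrakg$ are MRSs in the sense of Definition~\ref{def:modular_res_struct}, with explicit Stokes constants $\{R_m\}$ (given by Eq.~\eqref{eq:Rm-final}) and $\{S_m\}$ (given by Eq.~\eqref{eq:Sm-final}). Next, Eq.~\eqref{eq:gen_func_fg-final} (equivalently, Eq.~\eqref{eq:gen_func_fg-final2}) realizes the two ``resurgence'' arrows on the left and right of the diagram: the discontinuities of $\tfrakg$ and $\tfrakf$ along the positive imaginary Stokes ray produce, after the change of variable $y\mapsto -1/y$, exactly $\frakf$ and $\frakg$. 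The two ``inverse Mellin'' arrows are then a direct application of Lemmas~\ref{lem:L0-gen} and~\ref{lem:Linf-gen}, which show that $\{S_m\}$ and $\{R_m\}$ are Dirichlet coefficients of $-2\mathscr{G}(\chi_N)L(s)$ and $2L'(s)$, respectively; primitivity of $\chi_N$ ensures these $L$-series are genuine $L$-functions. Finally, the diagonal ``functional equation'' arrow is supplied by Lemma~\ref{lem:func-eq} (applied with $\delta=1$, since $\chi_N$ is odd), which exhibits $L$ and $L'$ as meromorphic continuations of each other via the completed functional equation~\eqref{eq:functional_eq}. Each arrow of the commutative diagram~\eqref{diag:resurgence-L funct} is thus explicitly realized.

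For the converse, the implication is immediate: the modular resurgence paradigm requires in particular that both $\tfrakf$ and $\tfrakg$ be MRSs. Since Theorems~\ref{thm:resurgence-f0} and~\ref{thm:resurgence-f-inf} establish that this happens precisely when $\chi_N$ is odd (the Stokes constants vanish identically when $\chi_N(-1)=+1$, so item~3 of Definition~\ref{def:modular_res_struct} fails), oddness of $\chi_N$ is a necessary condition.

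The only point that requires any genuine checking, rather than assembly of prior results, is the compatibility of the functional equation in Lemma~\ref{lem:func-eq} with the precise normalizations of the Stokes constants in Eqs.~\eqref{eq:Rm-final} and~\eqref{eq:Sm-final}: the constants $2$ and $-2\mathscr{G}(\chi_N)$ must be absorbed consistently on the two sides of the diagram so that the diagonal arrow genuinely closes the loop. I expect this to be the main (though still routine) obstacle, amounting to tracking the Gauss sum factor $\mathscr{G}(\chi_N)$ through the relation $\mathscr{G}(\overline{\chi_N})=\chi_N(-1)\overline{\mathscr{G}(\chi_N)}$ and the identity $|\mathscr{G}(\chi_N)|^2=N$ used in the proof of Lemma~\ref{lem:func-eq}; once the signs and scalar factors align, the diagram commutes without further input.
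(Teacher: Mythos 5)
Your proposal is correct and follows essentially the same route as the paper: the paper's proof is a one-line citation of Lemmas~\ref{lem:func-eq},~\ref{lem:L0-gen}, and~\ref{lem:Linf-gen} together with Theorems~\ref{thm:resurgence-f0} and~\ref{thm:resurgence-f-inf} and Eq.~\eqref{eq:gen_func_fg-final}, which is exactly the chain of ingredients you assemble arrow by arrow. The normalization constants $2$ and $-2\mathscr{G}(\chi_N)$ you flag as a residual concern are indeed benign, since rescaling an $L$-function by a nonzero scalar does not affect its functional-equation structure and is absorbed harmlessly into the correspondence between Stokes constants and Dirichlet coefficients.
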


\begin{proof}
The statement follows combining Lemmas~\ref{lem:func-eq},~\ref{lem:L0-gen}, and~\ref{lem:Linf-gen} with Theorems~\ref{thm:resurgence-f0} and~\ref{thm:resurgence-f-inf} and Eq.~\eqref{eq:gen_func_fg-final}.
\end{proof}

\section{An application to local weighted projective planes}\label{sec:weighted}

Let $\IP(1,m,n)$ be the projective surface of weights $(1,m,n)$ with $m,n\in\IZ_{>0}$, that is, 
\be
\IP(1,m,n)=\big(\IC^3\big)^*/\sim \, ,
\ee
where $(a,b,c)\sim (a',b',c')$ if there exists $\lambda\in\IC^*$ such that $a=\lambda a'$, $b=\lambda^m b'$, and $c=\lambda^n c'$. The local weighted projective plane known as local $\IP^{m,n}$ is the total space of the canonical line bundle over $\IP(1,m,n)$. Let us introduce the positive integer
\be
N=1+m+n \, .
\ee

In this section, we perform a detailed resurgent analysis of the spectral trace of the quantum-mechanical operator $\rho_{m,n}$, introduced in Eq.~\eqref{eq: rho-mn} and canonically associated with local $\IP^{m,n}$ by the procedure proposed in~\cite{GHM}. This operator, defined as the inverse of the three-term operator $\mO_{m,n}$ in Eq.~\eqref{eq:quantum-ops}, was shown to exist and to be positive-definite and trace-class on $L^2(\IR)$ by Kashaev and Mari\~no in~\cite{KM}. In the same work, the authors explicitly determined the integral kernel of $\rho_{m,n}$ in terms of $q$-Pochhammer symbols, which in turn allowed them to derive a closed-form expression for the corresponding spectral trace $\mathrm{Tr}(\rho_{m,n})$ as a function of the quantum deformation parameter 
\be
\hbar = \frac{2 \pi y}{N} \in \IR_{>0} \, .
\ee 
Specifically,  
\be \label{eq:trace}
\mathrm{Tr}(\rho_{m,n})=\frac{\re^{\frac{\pi\ri}{12 y}-\pi\ri y \left(\frac{m}{N^2} -\frac{1}{12}\right) +\frac{\pi\ri}{4}}}{2 \sqrt{y} \sin(\pi n/N)} \frac{(q^{1-m/N};q)_\infty \, (q^{1-1/N};q)_\infty \, (\re^{2\pi\ri n/N};\tilde{q})_\infty}{(\re^{-2\pi\ri m/N};\tilde{q})_\infty \, (\re^{-2\pi\ri/N};\tilde{q})_\infty \, (q^{n/N};q)_\infty}\,,
\ee
where $q=\re^{2\pi\ri y}=\re^{\ri N\hbar}$ and $\tilde{q}=\re^{-2\pi\ri/y}=\re^{-\frac{4\pi^2\ri}{N \hbar}}$. 
Note that the factorization in Eq.~\eqref{eq:trace} is not
symmetric in the exchange of $q$ and $\tilde{q}$, but it is under $m \leftrightarrow n$ (albeit, not manifestly)~\cite{MZ}.

It follows from the results of~\cite{KM} that the spectral trace $\mathrm{Tr}(\rho_{m,n})$ admits a well-defined analytic continuation to $\hbar \in \IC'$. In the rest of this section, we will consider the analytically continued spectral trace\footnote{The fermionic spectral traces of the quantum operator $\rho_{m,n}$ are the coefficients in the orbifold expansion of its Fredholm determinant~\cite{GHM}. As such, they do not retain a dependence on the moduli space of the CY, and their resurgent analysis is therefore non-parametric. Computations further simplify, leading to exact results, by considering the logarithm of the (first fermionic) spectral trace. Note, however, that the data obtained in different regimes in $\hbar$ appears to probe different chambers in moduli space, through a mechanism implicit in the TS/ST correspondence~\cite{ABK, GuM, Rella22}.} in order to perform a resurgent analysis. 

\subsection{Asymptotics at all orders}

Let us consider the formula for ${\rm Tr}(\rho_{m,n})$ in Eq.~\eqref{eq:trace-log} and derive its all-orders perturbative expansion in the weak-coupling limit $y \to 0$ with $\Im(y)>0$. 
To do so, we determine the asymptotic expansion of each contributing block. In particular, 
\be \label{eq:exp0-b1}
\frac{\re^{\frac{\pi\ri}{12 y}-\pi\ri y \left(\frac{m}{N^2} -\frac{1}{12}\right) +\frac{\pi\ri}{4}}}{2 \sqrt{y} \, \sin(\pi n/N)} \frac{(\re^{2\pi\ri n/N};\tilde{q})_\infty}{(\re^{-2\pi\ri m/N};\tilde{q})_\infty \, (\re^{-2\pi\ri/N};\tilde{q})_\infty } \sim \frac{\re^{\frac{\pi\ri}{12 y}-\pi\ri y \left(\frac{m}{N^2} -\frac{1}{12}\right) +\frac{\pi\ri}{4}}}{4 \ri \sqrt{y} \, \sin(\pi m/N) \sin(\pi/N) }
\,,
\ee
while Eqs.~\eqref{eq: expansion-gkN} and~\eqref{eq: tilde-phi} yield
\be \label{eq:exp0-b2}
\frac{(q^{1-m/N};q)_\infty \, (q^{1-1/N};q)_\infty}{ (q^{n/N};q)_\infty} 
\sim \frac{\re^{- \frac{\pi \ri }{12 y}+\pi \ri y\left(\frac{m}{N^2}-\frac{1}{12} \right) + \frac{\pi \ri}{4}}}{\sqrt{y} \, \Gamma(\underline{m}/N) \Gamma(\underline{1}/N)/ \Gamma(n/N)} \re^{- \varphi_{\underline{1}}(y)-\varphi_{\underline{m}}(y)+\varphi_n(y)} \, .
\ee
Putting together Eqs.~\eqref{eq:exp0-b1} and~\eqref{eq:exp0-b2} and using the property $\Gamma(1-z)\Gamma(z)=\frac{\pi}{\sin(\pi z)}$, $z \notin \IZ$, we obtain the all-orders semiclassical expansion of the spectral trace of local $\IP^{m,n}$ in the form
\be \label{eq: trace-lim-0}
\mathrm{Tr}(\rho_{m,n})\sim_{y \to 0} \frac{\Gamma(n/N)\Gamma(m/N)\Gamma(1/N)}{4\pi^2 y}  \re^{- \varphi_{\underline{1}}(y)-\varphi_{\underline{m}}(y)+\varphi_n(y)} \,.
\ee
Note that the exponential terms of order $y$ and $y^{-1}$ in Eq.~\eqref{eq:exp0-b1} cancel with opposite
contributions in Eq.~\eqref{eq:exp0-b2} to give a global leading-order behavior of the form $1/y$ independently of the choice of $m$ and $n$. 

Similarly, let us derive the full asymptotic expansion of $\mathrm{Tr}(\rho_{m,n})$ in the strong-coupling limit $-1/y \to 0$ with $\Im(y)>0$. 
In this regime, 
\be \label{eq:exp-inf-b1}
\frac{\re^{\frac{\pi\ri}{12 y}-\pi\ri y \left(\frac{m}{N^2} -\frac{1}{12}\right) +\frac{\pi\ri}{4}}}{2 \sqrt{y} \, \sin(\pi n/N)} 
\frac{(q^{1-m/N};q)_\infty \, (q^{1-1/N};q)_\infty}{ (q^{n/N};q)_\infty} 
\sim \frac{\re^{\frac{\pi\ri}{12 y}-\pi\ri y \left(\frac{m}{N^2} -\frac{1}{12}\right) +\frac{\pi\ri}{4}}}{2 \sqrt{y} \,\sin(\pi n/N) }
\,,
\ee
while Eqs.~\eqref{eq: expansion-fkN} and~\eqref{eq: tilde-psi} imply 
\be \label{eq:exp-inf-b2}
\frac{(\re^{2\pi\ri n/N};\tilde{q})_\infty}{(\re^{-2\pi\ri m/N};\tilde{q})_\infty \, (\re^{-2\pi\ri/N};\tilde{q})_\infty }
\sim \re^{-\frac{\CV}{2\pi \ri}y} \sqrt{\frac{\sin(\pi n/N)}{2 \ri \, \sin(\pi m/N) \sin(\pi/N)}} \re^{- \psi_{\underline{1}}\left(-\frac{1}{y}\right)-\psi_{\underline{m}}\left(-\frac{1}{y}\right)+\psi_n\left(-\frac{1}{y}\right)} \, ,
\ee
where we have introduced the constant
\be \label{eq: cv-def}
\CV= \left( \mathrm{Li}_2(\zeta_N^n)- \mathrm{Li}_2(\zeta_N^{-m})- \mathrm{Li}_2(\zeta_N^{-1})\right) \, .
\ee
Putting together Eqs.~\eqref{eq:exp-inf-b1} and~\eqref{eq:exp-inf-b2}, we obtain the all-orders strong-coupling expansion of the spectral trace of local $\IP^{m,n}$ in the form
\be \label{eq: trace-lim-inf}
\mathrm{Tr}(\rho_{m,n})\sim_{y \to \infty} \frac{\re^{\frac{\pi\ri}{12 y}+\pi\ri y \left(\frac{\CV}{2\pi^2}-\frac{m}{N^2} +\frac{1}{12}\right)}}{2 \sqrt{2 y \, \sin(\pi n/N) \sin(\pi m/N) \sin(\pi/N)}} \re^{- \psi_{\underline{1}}\left(-\frac{1}{y}\right)-\psi_{\underline{m}}\left(-\frac{1}{y}\right)+\psi_n\left(-\frac{1}{y}\right)}  \,.
\ee
Note that the exponential term of order $y^{-1}$ in the RHS of Eq.~\eqref{eq: trace-lim-inf} cancels with the opposite contribution from the asymptotic series $\psi_k\big(-\tfrac{1}{y}\big)$, where $k=\underline{1}, \underline{m}, n$, which give the leading-order term
\be
- \psi_{\underline{1}}\big(-\tfrac{1}{y}\big)-\psi_{\underline{m}}\big(-\tfrac{1}{y}\big)+\psi_n\big(-\tfrac{1}{y}\big) = - \frac{\pi \ri}{12 y} + \frac{\pi}{12 y}(\cot(\pi/N)+\cot(\pi m/N)+\cot(\pi n/N)) + O(y^{-3}) \, .
\ee
However, and differently from the semiclassical limit $y \to 0$, the exponential term of order $y$ in Eq.~\eqref{eq: trace-lim-inf} does not vanish. 

\begin{lemma}
The spectral trace of local $\IP^{m,n}$ has the strong-coupling leading-order behavior
\be \label{eq: lead-tr-D}
\mathrm{Tr}(\rho_{m,n})\sim_{y \to \infty} \exp \left(\frac{y}{\pi} D\left( \frac{1-\zeta_N^m}{1-\zeta_N^{-1}} \right)\right) \, , 
\ee
where $D(z)= \Im \big(\mathrm{Li}_2(z))+\arg (1-z) \log|z|$ is the Bloch--Wigner function and $\arg$ denotes the branch of the argument between $-\pi$ and $\pi$.
\end{lemma}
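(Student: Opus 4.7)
The plan is to read off the leading-order behavior as $y\to\infty$ directly from the all-orders strong-coupling expansion in Eq.~\eqref{eq: trace-lim-inf}. The $1/\sqrt{y}$ prefactor and the term $\pi\ri/(12 y)$ are sub-leading, and each formal series $\psi_k(-1/y)$ is $O(1/y)$ because $\psi_k$ in Eq.~\eqref{eq: tilde-psi} starts at order $y$. Hence, the only $O(y)$ contribution to $\log\mathrm{Tr}(\rho_{m,n})$ is the exponent $\pi\ri y\bigl(\CV/(2\pi^2)-m/N^2+1/12\bigr)$, and the lemma reduces to the identity
\begin{equation*}
    \pi\ri\left(\frac{\CV}{2\pi^2}-\frac{m}{N^2}+\frac{1}{12}\right)=\frac{1}{\pi}\,D\!\left(\frac{1-\zeta_N^m}{1-\zeta_N^{-1}}\right).
\end{equation*}

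Next, I would split this into real and imaginary parts. Using the standard closed form $\mathrm{Li}_2(\re^{\ri\theta})=\frac{\pi^2}{6}-\frac{\pi\theta}{2}+\frac{\theta^2}{4}+\ri\,\mathrm{Cl}_2(\theta)$ valid on $[0,2\pi]$, the oddness $\mathrm{Cl}_2(2\pi-\theta)=-\mathrm{Cl}_2(\theta)$, and the constraint $N=1+m+n$, a direct computation gives
\begin{equation*}
\Re(\CV)=2\pi^2\!\left(\frac{m}{N^2}-\frac{1}{12}\right), \qquad \Im(\CV)=\mathrm{Cl}_2\!\left(\tfrac{2\pi}{N}\right)+\mathrm{Cl}_2\!\left(\tfrac{2\pi m}{N}\right)+\mathrm{Cl}_2\!\left(\tfrac{2\pi n}{N}\right).
\end{equation*}
The real-part condition in the target identity is then automatic, and one is reduced to the purely number-theoretic claim
\begin{equation*}
D(z)=-\tfrac{1}{2}\!\left[\mathrm{Cl}_2\!\left(\tfrac{2\pi}{N}\right)+\mathrm{Cl}_2\!\left(\tfrac{2\pi m}{N}\right)+\mathrm{Cl}_2\!\left(\tfrac{2\pi n}{N}\right)\right], \qquad z=\frac{1-\zeta_N^m}{1-\zeta_N^{-1}}.
\end{equation*}

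To establish this, I would first compute $|z|=\sin(\pi m/N)/\sin(\pi/N)$ and $\arg(z)=-\pi n/N$ from $|1-\re^{\ri\theta}|=2\sin(\theta/2)$ and $\arg(1-\re^{\ri\theta})=(\theta-\pi)/2$ for $\theta\in(0,2\pi)$. The point is that $z$ is the shape parameter of an ideal hyperbolic tetrahedron whose three dihedral angles $\pi/N,\pi m/N,\pi n/N$ sum to $\pi$ exactly because $N=1+m+n$. Combined with the classical identification of the Bloch--Wigner function with the oriented hyperbolic volume, $D(z)=\pm\bigl[\Lambda(\alpha)+\Lambda(\beta)+\Lambda(\gamma)\bigr]$ where $\Lambda(\theta)=\tfrac{1}{2}\mathrm{Cl}_2(2\theta)$ is the Lobachevsky function and the sign is governed by $\mathrm{sgn}(\Im z)$, this yields the desired three-Clausen formula.

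The main obstacle I anticipate is the branch and sign bookkeeping in the last step: since $\Im z<0$ the usual oriented volume formula acquires an overall minus sign, and the identification of $\arg(1-z)$ must be carried out consistently in order to evaluate $D$ from its definition. A self-contained alternative that avoids any appeal to hyperbolic geometry is to derive the three-Clausen identity directly via iterated applications of the five-term functional equation of the Bloch--Wigner dilogarithm to $z$ and its orbit under $w\mapsto 1-1/w$, bypassing the tetrahedral picture altogether.
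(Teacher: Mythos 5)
Your proposal is correct and follows the same overall strategy as the paper: isolate the only $O(y)$ term in the exponent of Eq.~\eqref{eq: trace-lim-inf}, reduce the lemma to the identity $\pi^2\ri\bigl(\CV/(2\pi^2)-m/N^2+1/12\bigr)=D(z)$ with $z=(1-\zeta_N^m)/(1-\zeta_N^{-1})$, and then prove that dilogarithm identity. The difference is in how the last step is handled. The paper cites $D(\bar z)=-D(z)$, $D(1-z)=-D(z)$, and the $\mathrm{Li}_2$ inversion formula to arrive directly at $D(z)=-\tfrac{1}{2}\Im\bigl(\mathrm{Li}_2(\zeta_N)+\mathrm{Li}_2(\zeta_N^m)+\mathrm{Li}_2(\zeta_N^n)\bigr)$, packaging the check $\Re\CV=2\pi^2(m/N^2-1/12)$ implicitly in the equality with $\pi^2\ri(\ldots)$. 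You instead make that real-part cancellation explicit (using the closed form of $\mathrm{Li}_2(\re^{\ri\theta})$ and $N=1+m+n$), reduce to the three-Clausen identity $D(z)=-\tfrac{1}{2}\bigl[\mathrm{Cl}_2(2\pi/N)+\mathrm{Cl}_2(2\pi m/N)+\mathrm{Cl}_2(2\pi n/N)\bigr]$, and then prove it via the Milnor--Thurston identification of $D(z)$ with the signed volume of an ideal hyperbolic tetrahedron with dihedral angles $\pi/N,\pi m/N,\pi n/N$ (which sum to $\pi$ precisely because $N=1+m+n$), expressed through the Lobachevsky function $\Lambda(\theta)=\tfrac{1}{2}\mathrm{Cl}_2(2\theta)$. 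This is a legitimate and more explicit route: the two stated functional equations $D(\bar z)=-D(z)$, $D(1-z)=-D(z)$ do not by themselves yield a three-term sum, so the paper's terse proof also implicitly relies on something equivalent to the Kummer/Lobachevsky (or five-term) relation that you invoke explicitly. Your sign bookkeeping (using $\Im z<0$, hence $D(z)=-D(\bar z)$ with $\bar z$ the upper-half-plane modulus) is correctly handled, and your computed values $|z|=\sin(\pi m/N)/\sin(\pi/N)$, $\arg z=-\pi n/N$ check out.
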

\begin{proof}
Applying the properties of the Bloch--Wigner function
\be
D(\bar{z}) = -D(z) \, , \quad D(1-z)=-D(z) \, ,
\ee
and the dilogarithm's inversion formula
\be
\mathrm{Li}_2(z^{-1})= -\mathrm{Li}_2(z)-\frac{\pi^2}{6}-\frac{1}{2}\log^2(-z) \, , 
\ee
we find that 
\be
D\Big( \tfrac{1-\zeta_N^m}{1-\zeta_N^{-1}} \Big) = -\frac{1}{2} \Im \big(\mathrm{Li}_2(\zeta_N^n)+\mathrm{Li}_2(\zeta_N^m)+\mathrm{Li}_2(\zeta_N)) = \pi^2 \ri \left(\frac{\CV}{2\pi^2} -\frac{m}{N^2} + \frac{1}{12}\right) \, ,
\ee
where $\CV$ is defined in Eq.~\eqref{eq: cv-def}. The desired statement then follows from Eq.~\eqref{eq: trace-lim-inf}.
\end{proof}

The formula in Eq.~\eqref{eq: lead-tr-D} agrees with the results obtained in~\cite{MZ} by considering the multi-cut matrix model integral for the fermionic spectral traces\footnote{The fermionic spectral traces of the quantum operator $\rho_{m,n}$ are related to the spectral traces in Eq.~\eqref{eq: all-traces} by simple combinatorics and admit a natural matrix model representation studied in detail in~\cite{MZ}. An analogous study in the case of local $\IP^1 \times \IP^1$ was performed in~\cite{KMZ}.} of $\rho_{m,n}$ in a 't Hooft coupling expansion and computing the classical potential around its minimum. 

We stress that the TS/ST correspondence implies a precise identification between the constant appearing in the exponent of the RHS of Eq.~\eqref{eq: lead-tr-D} and the value of the $A$-period of the CY at the conifold singularity in moduli space. This prediction, also referred to as conifold volume
conjecture in~\cite{GuM}, extends to all toric CY threefolds and has been tested in various examples of genus one and two~\cite{CGM2, CGuM, MZ, KMZ, Doran}.
As already observed in~\cite{Rella22} and here in Eq.~\eqref{eq: trace-lim-0}, there is no analogue of the conifold volume conjecture in the semiclassical regime $\hbar \to 0$.

\subsection{Resurgence, summability, and quantum modularity}\label{sec:resurgence-traces}

Let us write Eq.~\eqref{eq:trace} equivalently as
\be \label{eq:trace-log}
    \log \mathrm{Tr}(\rho_{m,n}) = \frac{\pi\ri}{12 y}-\pi\ri y \left(\frac{m}{N^2} -\frac{1}{12}\right) +\frac{\pi\ri}{4} -\log \left(2 \sqrt{y} \sin\left(\pi n/N\right) \right)+\mathcal{Z}_{1, m, n}(y)\,,
\ee
where we have introduced the sum of $q$-Pochhammer symbols
\be \label{eq:trace-qPochh}
\begin{aligned}
    \mathcal{Z}_{1, m, n}(y)= &g_{\underline{1},N}\Big(\tfrac{y}{N}\Big)+g_{\underline{m},N}\Big(\tfrac{y}{N}\Big)-g_{n,N}\Big(\tfrac{y}{N}\Big)\\
     &-f_{\underline{1},N}\Big( -\tfrac{1}{y} \Big)-f_{\underline{m},N}\Big( -\tfrac{1}{y} \Big)+f_{n,N}\Big( -\tfrac{1}{y} \Big)\,.
\end{aligned}
\ee
We adopt the notation $\underline{k}=N-k$ for every $k\in\IZ_N$, as before. 
In the following, we will apply the results of Section~\ref{sec: single} to determine the resurgent and arithmetic properties of $\mathcal{Z}_{1, m, n}$ in the dual asymptotic limits $\hbar \to 0$ and $\hbar \to \infty$ along the lines of~\cite{Rella22, FR1phys}.

\subsubsection{The limit \texorpdfstring{$\hbar \to 0$}{hbar to 0}}

Recall that $N\hbar=2\pi y \in \IC\setminus \IR_{\le 0}$ and let $\varphi_{m,n}(\hbar)$ be the formal power series that governs the asymptotic expansion of $\mathcal{Z}_{1, m, n}(y)$ as $y \propto \hbar \to 0$ with $\Im(y)>0$. 
Observe that
\be \label{eq: phi-mn}
\ba
\varphi_{m,n}(\hbar) &= -\varphi_{\underline{1}}\Big(\tfrac{N \hbar}{2 \pi}\Big) - \varphi_{\underline{m}}\Big(\tfrac{N \hbar}{2 \pi}\Big) + \varphi_{n}\Big(\tfrac{N \hbar}{2 \pi}\Big) \, ,
\ea
\ee
where $\varphi_k$, $k \in \IZ_N$, is defined in Eq.~\eqref{eq: tilde-phi} and satisfies $\varphi_{\underline{k}}= -\varphi_k$.

\begin{prop}\label{prop:stokes-P1nm-s}
The asymptotic series $\varphi_{m,n}\in\IC\llbracket \hbar\rrbracket$ is Gevrey-1 and simple resurgent. Moreover, the singularities of its Borel transform are simple poles located at the points
\begin{equation}
 \xi_\ell=\frac{4\pi^2\ri}{N}\ell\,, \quad \ell\in\IZ_{\neq 0}\,,
\end{equation}
while the corresponding Stokes constants are
\be \label{eq: stokes-phimn}
S_\ell^{m,n}= 2\ri \sum_{k\in\{1,m,n\}} \Bigg[\sum_{d | \ell} \frac{1}{d} \sin\Big(\frac{2\pi k d}{N}\Big) \Bigg] \, , 
\ee
where the sum runs over the positive integer divisors of $\ell$. Besides, $S_\ell^{m,n}=S^{m,n}_{-\ell} \in \ri \IR$. 
\end{prop}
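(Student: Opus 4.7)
The strategy is to reduce the proposition to the single $q$-Pochhammer analysis already performed in Proposition~\ref{prop:Borel-g_kN} and Corollary~\ref{cor:stokes-g}. Using the identity $\varphi_{\underline{k}}=-\varphi_k$ noted just after Eq.~\eqref{eq: phi-mn}, I first rewrite
\begin{equation*}
\varphi_{m,n}(\hbar) \;=\; \varphi_1\bigl(\tfrac{N\hbar}{2\pi}\bigr) + \varphi_m\bigl(\tfrac{N\hbar}{2\pi}\bigr) + \varphi_n\bigl(\tfrac{N\hbar}{2\pi}\bigr),
\end{equation*}
exhibiting $\varphi_{m,n}$ as a finite sum, indexed by $k\in\{1,m,n\}$, of simple resurgent Gevrey-1 series evaluated at the rescaled argument $y=\tfrac{N\hbar}{2\pi}$.

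The second step is to verify that both the Gevrey-1 property and simple resurgence persist under the linear rescaling $y=\lambda\hbar$ with $\lambda=N/(2\pi)$ and under the finite summation. The rescaling induces $\borel[\varphi_k(\lambda\,\cdot)](\zeta)=\lambda\,\borel[\varphi_k](\lambda\zeta)$, a holomorphic change of variable in the Borel plane: each simple pole of $\borel[\varphi_k]$ at $\rho_\ell=2\pi\ri\ell$ is mapped to a simple pole at $\xi_\ell=\rho_\ell/\lambda=\tfrac{4\pi^2\ri\ell}{N}$, and reading off the residue through Eq.~\eqref{eq: Stokes0} one checks that the two $\lambda$-factors cancel, so the associated Stokes constant is left invariant. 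The finite sum over $k\in\{1,m,n\}$ then preserves the Gevrey-1 bound and the endless continuability, with any coincident simple poles contributing additively without losing simplicity.

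The third step is to assemble the Stokes constant $S_\ell^{m,n}$ at $\xi_\ell$ from the three single-$q$-Pochhammer contributions. By the residue computation carried out in the proof of Corollary~\ref{cor:stokes-g}, the Stokes constant of $\varphi_k$ at $\rho_\ell$ equals $2\ri\sum_{d\mid\ell}\tfrac{1}{d}\sin\bigl(\tfrac{2\pi kd}{N}\bigr)$; summing over $k\in\{1,m,n\}$ yields exactly the announced formula for $S_\ell^{m,n}$. The parity $S_\ell^{m,n}=S_{-\ell}^{m,n}$ is inherited from the corresponding property for each $S_\ell^k$, since the inner divisor sums range over positive divisors of $|\ell|$, while $S_\ell^{m,n}\in\ri\IR$ is manifest from the real-valuedness of the divisor sums and sines multiplied by the global factor $2\ri$.

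There is no genuine analytical obstacle here; the proof is essentially bookkeeping on top of Section~\ref{sec: single}. The only point requiring care is tracking the rescaling between $y$ and $\hbar$ consistently, and confirming that when the three towers of poles indexed by $k=1,m,n$ overlap at a common $\xi_\ell$ their residues simply add. If anything deserves the label \emph{main obstacle}, it is fixing this $y\leftrightarrow\hbar$ bookkeeping once and for all, so that the subsequent large-$\hbar$ analysis in the rest of the section can inherit the same conventions without ambiguity.
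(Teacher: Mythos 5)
Your proof is correct and follows essentially the same route as the paper: reduce to the single $q$-Pochhammer results of Proposition~\ref{prop:Borel-g_kN} and Corollary~\ref{cor:stokes-g} via the identity $\varphi_{\underline{k}}=-\varphi_k$, then assemble the Stokes constants additively. You spell out the $y\leftrightarrow\hbar$ rescaling of the Borel plane and the invariance of the residues under it more explicitly than the paper does, but the underlying argument is the same.
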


\begin{proof}
In the limit $\hbar \to 0$, the perturbative series $\varphi_{m,n}(\hbar)$ is given by the sums of the contributions of the $q$-Pochhammer symbols $g_{\underline{1},N}$, $g_{\underline{m},N}$, and $g_{n,N}$ in Eq.~\eqref{eq:trace-qPochh}. The statement then follows from Proposition~\ref{prop:Borel-g_kN} and Corollary~\ref{cor:stokes-g}.
Specifically, we have that
\begin{equation}
    S_\ell^{m,n}= -S_\ell^{\underline{1}}-S_\ell^{\underline{m}}+S_\ell^{n} \,,
\end{equation}
where $S_\ell^k$, $k \in \IZ_N$, is defined in Eq.~\eqref{eq:Stokes-S-kN} and satisfies $S_\ell^{\underline{k}}= -S_\ell^k$.
\end{proof}
 
For $N=3,4$, the Stokes constants $S_\ell^{1,1}$ and $S_\ell^{1,2}=S_\ell^{2,1}$ reduce, up to a numerical prefactor, to the ones in Eqs.~\eqref{eq:sn-3} and~\eqref{eq:sn-4}, respectively. These are the only cases in which the divisor sum functions in the RHS of Eq.~\eqref{eq: stokes-phimn} can be simply expressed in terms of Dirichlet characters. 
In addition, as the Borel plane displays one infinite tower of simple poles repeated at all non-zero integer multiples of a constant, the perturbative coefficients of $\varphi_{m,n}(\hbar)$ are given by the Dirichlet series of the Stokes constants evaluated at the integers.\footnote{See~\cite[Proposition~3.2]{FR1maths} for the general statement and proof.} Explicitly, the coefficient of $\hbar^{2j}$ in the formal power series $\varphi_{m,n}(\hbar)$ is
\begin{equation} \label{eq: dir-Smn}
 \frac{\Gamma(2j)}{\pi\ri}\sum_{\ell=1}^\infty \frac{S_\ell^{m,n}}{\xi_\ell^{2j}}\,,\quad  j \in \IZ_{>0} \,.
\end{equation}
Note that only the even powers of $\hbar$ appear in the asymptotic series as a consequence of Eq.~\eqref{eq: tilde-phi}.
 
\begin{lemma}\label{lem:symm-phi}
The discontinuity of $\varphi_{m,n}(\hbar)$ across the Stokes ray at angle $\pi/2$ is 
 \be\label{eq:disc-phi-nm}
 \begin{aligned}
     \mathrm{disc}_{\frac{\pi}{2}}\varphi_{m,n}(\hbar)= & f_{\underline{1},N}\Big( -\tfrac{2 \pi}{N \hbar} \Big)-f_{1,N}\Big( -\tfrac{2 \pi}{N \hbar} \Big)+f_{\underline{m},N}\Big( -\tfrac{2 \pi}{N \hbar} \Big)-f_{m,N}\Big( -\tfrac{2 \pi}{N \hbar} \Big)\\
     &+f_{\underline{n},N}\Big( -\tfrac{2 \pi}{N \hbar} \Big)-f_{n,N}\Big( -\tfrac{2 \pi}{N \hbar} \Big)-\pi\ri\,. 
 \end{aligned}
\ee
\end{lemma}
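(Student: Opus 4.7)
The plan is to exploit the linear decomposition of $\varphi_{m,n}(\hbar)$ in Eq.~\eqref{eq: phi-mn} and then apply Corollary~\ref{cor: disc-Sn} term-by-term, since the discontinuity operator $\mathrm{disc}_{\pi/2}$ distributes linearly over sums and commutes with the (affine) rescaling $y = \tfrac{N\hbar}{2\pi}$.

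First, I would write
\begin{equation}
\mathrm{disc}_{\frac{\pi}{2}}\varphi_{m,n}(\hbar) = -\,\mathrm{disc}_{\frac{\pi}{2}}\varphi_{\underline{1}}\Big(\tfrac{N\hbar}{2\pi}\Big) - \mathrm{disc}_{\frac{\pi}{2}}\varphi_{\underline{m}}\Big(\tfrac{N\hbar}{2\pi}\Big) + \mathrm{disc}_{\frac{\pi}{2}}\varphi_n\Big(\tfrac{N\hbar}{2\pi}\Big) \, ,
\end{equation}
and then invoke Corollary~\ref{cor: disc-Sn}, which for each $k \in \IZ_N$ and $y = \tfrac{N\hbar}{2\pi}$ gives
\begin{equation}
\mathrm{disc}_{\frac{\pi}{2}}\varphi_k\Big(\tfrac{N\hbar}{2\pi}\Big) = -f_{k,N}\Big(-\tfrac{2\pi}{N\hbar}\Big) + f_{\underline{k},N}\Big(-\tfrac{2\pi}{N\hbar}\Big) + \frac{\pi\ri}{N}(2k-N) \, .
\end{equation}

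Substituting with $k=\underline{1}$, $k=\underline{m}$ (and using $\underline{\underline{k}}=k$) and $k=n$ produces the six $q$-Pochhammer symbols $f_{\underline{1},N}, f_{1,N}, f_{\underline{m},N}, f_{m,N}, f_{\underline{n},N}, f_{n,N}$ with the correct signs shown in Eq.~\eqref{eq:disc-phi-nm}. The only bookkeeping left is to check that the constant terms collapse to $-\pi\ri$: the sum of the three constant contributions is
\begin{equation}
\frac{\pi\ri}{N}\bigl[-(2\underline{1}-N) - (2\underline{m}-N) + (2n-N)\bigr] = \frac{\pi\ri}{N}\bigl[-3N + 2(1+m+n)\bigr] = -\pi\ri \, ,
\end{equation}
where the last equality uses the defining identity $N=1+m+n$. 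Hence no step involves a genuine obstacle; the argument is essentially an application of Corollary~\ref{cor: disc-Sn} combined with the weighted-projective-plane constraint on $N$, which is precisely what ensures the clean numerical collapse of the constants. The most error-prone step is tracking the signs produced by the involution $k \mapsto \underline{k}$ and the $(2k-N)$ prefactors, which I would verify by writing each substitution out explicitly before collecting terms.
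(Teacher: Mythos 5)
Your proof is correct and essentially coincides with the paper's: you invoke Corollary~\ref{cor: disc-Sn} directly on each $\varphi_k$, whereas the paper's one-line proof cites Eq.~\eqref{eq: disc-gkN}, which is the same discontinuity formula restated for the full series $\tilde{g}_{k,N}$. The decomposition from Eq.~\eqref{eq: phi-mn}, the term-by-term application of the discontinuity, the sign bookkeeping under $k\mapsto\underline{k}$, and the collapse of the constant $\tfrac{\pi\ri}{N}\bigl[-3N+2(1+m+n)\bigr]=-\pi\ri$ via $N=1+m+n$ are all as the paper intends. (One minor nit: the rescaling $y=\tfrac{N\hbar}{2\pi}$ is linear by a positive constant, not affine; ``affine'' would suggest a shift, which would not commute with $\mathrm{disc}_{\pi/2}$.)
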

\begin{proof}
The result follows from Eq.~\eqref{eq: disc-gkN}.
\end{proof}

Let us now define the generating series of the Stokes constants $\{ S_\ell^{m,n} \}$ as 
\begin{equation}\label{eq:f0}
 f_0^{m,n}(y):=\begin{cases}
     \sum_{\ell>0} S_\ell^{m,n} \re^{2\pi\ri \ell y} & \text{ if } \Im(y)>0 \\
     \\
     -\sum_{\ell<0} S_\ell^{m,n} \re^{2\pi\ri \ell y} & \text{ if } \Im(y)<0
 \end{cases}\,,
\end{equation}
which is a holomorphic function on $\IC\setminus\IR$ with the periodicity and parity properties
\be
f_0^{m,n}(y+1)= f_0^{m,n}(y) \, , \quad f_0^{m,n}(-y)=-f_0^{m,n}(y) \, .
\ee
It follows that $f_0^{m,n}(y)$ need only be specified in either the upper or lower half of the complex $y$-plane,
where it is known in closed form by means of the exact discontinuity formula in Corollary~\ref{cor: disc-Sn}, which implies that
 \begin{equation}\label{eq:disc-f0}
     f_0^{m,n}\Big(-\frac{2 \pi}{N \hbar}\Big)= \mathrm{disc}_{\frac{\pi}{2}}\varphi_{m,n}(\hbar)\,,\quad \hbar\in\IH\,.
 \end{equation}
 
\begin{theorem}\label{thm:disc-med-0}
    Let $\tilde{f}^{m,n}_0(y)$ be the asymptotic expansion in the limit $y\to 0$ with $\Im(y)>0$ of the generating function $f^{m,n}_0(y)$ in Eq.~\eqref{eq:f0}. Then, the median resummation at angle $\pi/2$ of $\tilde{f}^{m,n}_0(y)$ reconstructs the function $f_0^{m,n}(y)$, that is,
    \begin{equation}
        \CS^{\rm med}_{\frac{\pi}{2}}\tilde{f}^{m,n}_0(y)=f^{m,n}_0(y)\,,\quad y\in\IH\,.
    \end{equation}
\end{theorem}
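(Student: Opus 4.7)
The plan is to identify $f_0^{m,n}$ on $\IH$ with a linear combination of the auxiliary functions $\mathcal{F}_{k,N}$ introduced in Eq.~\eqref{eq: ak}, and then use linearity of the median resummation together with the fact that the obstruction to median summability of a single $\tilde{f}_{k,N}$, namely the term $\tfrac{1}{2}[g_{k,N}(-\tfrac{1}{Ny}) + g_{\underline{k},N}(-\tfrac{1}{Ny})]$ in Eq.~\eqref{eq: median-fkN}, is invariant under $k\leftrightarrow\underline{k}$. Consequently, it cancels in the antisymmetric combination $f_{k,N} - f_{\underline{k},N}$, which is precisely the one appearing in each $\mathcal{F}_{k,N}$.

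The first step is to establish the closed-form identity
\[
f_0^{m,n}(y) \;=\; -\sum_{k \in \{1, m, n\}} \mathcal{F}_{k,N}(y), \qquad y \in \IH.
\]
For this, I would combine two inputs: (i) the decomposition $S_\ell^{m,n} = S_\ell^{1} + S_\ell^{m} + S_\ell^{n}$, which follows from the proof of Proposition~\ref{prop:stokes-P1nm-s} together with the parity relation $S_\ell^{\underline{k}} = -S_\ell^{k}$ read off from Eq.~\eqref{eq:Stokes-S-kN}; and (ii) the identification of the generating series of $\{S_\ell^{k}\}_{\ell>0}$ with $-\mathcal{F}_{k,N}$, obtained from Corollary~\ref{cor: disc-Sn} after the involution $y \mapsto -1/y$ that preserves $\IH$.

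The second step is to pass to asymptotic expansions and apply the median resummation term by term. Linearity of the Borel--Laplace construction gives
\[
\mathcal{S}^{\mathrm{med}}_{\frac{\pi}{2}} \tilde{f}_0^{m,n}(y) \;=\; -\sum_{k \in \{1,m,n\}} \mathcal{S}^{\mathrm{med}}_{\frac{\pi}{2}} \tilde{\mathcal{F}}_{k,N}(y),
\]
and applying Eq.~\eqref{eq: median-fkN} separately to $\tilde{f}_{k,N}$ and $\tilde{f}_{\underline{k},N}$ shows that their obstruction terms coincide and cancel in the difference, yielding $\mathcal{S}^{\mathrm{med}}_{\pi/2}\tilde{\mathcal{F}}_{k,N}(y) = \mathcal{F}_{k,N}(y)$. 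Substituting into the previous display then reconstructs $f_0^{m,n}(y)$, and the $\IH_-$ statement follows by the identical argument with $\mathcal{S}^{\mathrm{med}}_{-\pi/2}$.

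Rather than a genuine obstacle, the main subtlety is the bookkeeping of step one: Corollary~\ref{cor: disc-Sn} expresses the generating series of $\{S_n^{k}\}$ as a discontinuity computed in the dual variable $-1/y$, so one must verify that this really reproduces $-\mathcal{F}_{k,N}(y)$ on $\IH$ and that the perturbative prefactors in $\tilde{f}_{k,N}$ (the $\log$ and $\mathrm{Li}_2$ terms, which are exact rather than divergent) also pair up antisymmetrically under $k \leftrightarrow \underline{k}$. Once these checks are carried out, the argument reduces to a clean linearity statement, and the result is a direct consequence of Eq.~\eqref{eq: median-fkN}.
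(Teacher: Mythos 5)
Your proposal is correct and follows essentially the same route as the paper: the paper's one-line proof observes that $\mathcal{S}^{\rm med}_{\pi/2}[\tilde{f}_{k,N}-\tilde{f}_{\underline{k},N}]=f_{k,N}-f_{\underline{k},N}$ because the obstruction term in Eq.~\eqref{eq: median-fkN} is symmetric under $k\leftrightarrow\underline k$, which is exactly your cancellation argument applied to the decomposition $f_0^{m,n}=-\mathcal{F}_{1,N}-\mathcal{F}_{m,N}-\mathcal{F}_{n,N}$. You merely spell out the bookkeeping (parity $S_\ell^{\underline k}=-S_\ell^k$, the generating-series identification via Corollary~\ref{cor: disc-Sn}) that the paper treats as already established in the preceding discussion.
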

\begin{proof}
From Eq.~\eqref{eq: median-fkN}, we deduce that 
\begin{equation}
    \CS^{\rm med}_{\frac{\pi}{2}}\big[\tilde{f}_{k,N}(y)-\tilde{f}_{\underline{k},N}(y)\big]=f_{k,N}(y)-f_{\underline{k},N}(y)\,,
\end{equation}
hence the desired result. An analogous computation holds for $\CS^{\rm med}_{-\frac{\pi}{2}}\tilde{f}^{m,n}_0(y)$.
\end{proof}

\subsubsection{The limit \texorpdfstring{$\hbar \to \infty$}{hbar to infinity}}

Define the variable $\tau=-1/y= -2 \pi /(N \hbar) \in \IC\setminus \IR_{\ge 0}$ and let $\psi_{m,n}(\tau)$ be the formal power series that governs the asymptotic expansion of $\mathcal{Z}_{1, m, n}(y)$ as $y \propto -1/\tau\to \infty$ with $\Im(y)>0$.
Observe that
\be \label{eq: psi-mn}
\ba
\psi_{m,n}(\tau) &= -\psi_{\underline{1}}(\tau) - \psi_{\underline{m}}(\tau) + \psi_{n}(\tau) \, ,
\ea
\ee
where $\psi_k$, $k \in \IZ_N$, is defined in Eq.~\eqref{eq: tilde-psi} and satisfies $\psi_{\underline{k}}= -1-\psi_k$. 

\begin{prop}
The asymptotic series $\psi_{m,n}\in\IC\llbracket \tau\rrbracket$ is Gevrey-1 and simple resurgent. Moreover, the singularities of its Borel transform are simple poles located at the points
\begin{equation}
 \eta_\ell=\frac{2 \pi \ri }{N}\ell\,, \quad \ell\in\IZ_{\neq 0}\,,
\end{equation}
while the corresponding Stokes constants are
\be \label{eq: stokes-psimn}
 R_\ell^{m,n}= \sum_{k\in\{1,m,n\}}\Bigg[\sum_{\substack{d|\ell \\ d\underset{N}{\equiv} k}}\frac{d}{\ell}-\sum_{\substack{d|\ell \\ d\underset{N}{\equiv} -k}}\frac{d}{\ell} \Bigg] \,,
\ee
where the sum runs over the positive integer divisors of $\ell$. Besides, $R_\ell^{m,n}=-R^{m,n}_{-\ell} \in \IQ$, while $\ell R_\ell^{m,n} \in \IZ$.
\end{prop}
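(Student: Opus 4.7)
The plan is to reduce the proposition entirely to the single--$q$-Pochhammer results already established in Section~\ref{sec: single}, using the decomposition recorded in Eq.~\eqref{eq: psi-mn}. That decomposition writes $\psi_{m,n}(\tau)$ as a finite $\IC$-linear combination of $\psi_{\underline{1}}, \psi_{\underline{m}}, \psi_n$, up to at most a closed-form remainder (coming from the relation $\psi_{\underline{k}}=-1-\psi_k$) which, by construction, has already been absorbed into the explicit prefactors of Eq.~\eqref{eq:trace-log} and therefore does not contribute to the Borel transform of the perturbative series.

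First, I would invoke the $\IC$-linearity of the Borel transform together with Proposition~\ref{prop:Borel-f_kN} and Corollary~\ref{cor:stokes-f}: each summand $\psi_k$ is Gevrey-1 and simple resurgent, and each $\borel[\psi_k]$ admits meromorphic continuation to $\IC$ with only simple poles along the common tower of locations $\eta_\ell=2\pi\ri \ell/N$, $\ell\in\IZ_{\neq 0}$. These properties transfer directly to the finite linear combination $\psi_{m,n}$, which is therefore Gevrey-1 and simple resurgent with exactly the singular set claimed.

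Second, I would read off the Stokes constants by summing residues at each $\eta_\ell$. Linearity yields
\be
R_\ell^{m,n}= -R_\ell^{\underline{1}}-R_\ell^{\underline{m}}+R_\ell^{n}.
\ee
The symmetry $R_\ell^{\underline{k}}=-R_\ell^k$ is manifest from the closed form in Eq.~\eqref{eq:Stokes-R-kN}, since replacing $k$ by $N-k\equiv -k\pmod N$ merely swaps the two divisor sums. Substituting this in gives
\be
R_\ell^{m,n}=R_\ell^{1}+R_\ell^{m}+R_\ell^{n}=\sum_{k\in\{1,m,n\}} R_\ell^k,
\ee
which, once $R_\ell^k$ is expanded via Eq.~\eqref{eq:Stokes-R-kN}, is precisely the formula in Eq.~\eqref{eq: stokes-psimn}. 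The stated arithmetic properties $R_\ell^{m,n}=-R_{-\ell}^{m,n}\in\IQ$ and $\ell R_\ell^{m,n}\in\IZ$ follow term by term from the analogous statements for each $R_\ell^k$ in Corollary~\ref{cor:stokes-f}.

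Since every analytic ingredient is already in place in Section~\ref{sec: single}, there is no genuine obstacle: the argument is essentially a bookkeeping exercise on the decomposition~\eqref{eq: psi-mn}, and the only mild care required is in tracking the sign produced by the involution $k\mapsto\underline{k}=N-k$ on the divisor sums.
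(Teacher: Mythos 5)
Your proof matches the paper's argument essentially line for line: both reduce the claim to the single $q$-Pochhammer results (Proposition~\ref{prop:Borel-f_kN}, Corollary~\ref{cor:stokes-f}) via linearity of the Borel transform applied to the decomposition $\psi_{m,n}= -\psi_{\underline{1}}-\psi_{\underline{m}}+\psi_n$, then simplify the resulting Stokes constants with $R_\ell^{\underline{k}}=-R_\ell^k$. The only minor difference is your extra remark about the regular remainder coming from $\psi_{\underline{k}}=-1-\psi_k$, which is harmless but not needed since the paper (and you, in the end) work directly at the level of the Stokes constants.
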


\begin{proof}
In the limit $\tau \to 0$, the perturbative series $\psi_{m,n}(\tau)$ is given by the sums of the contributions of the $q$-Pochhammer symbols $f_{\underline{1},N}$, $f_{\underline{m},N}$, and $f_{n,N}$ in Eq.~\eqref{eq:trace-qPochh}. The statement then follows from Proposition~\ref{prop:Borel-f_kN} and Corollary~\ref{cor:stokes-f}. In particular, we have that
\begin{equation}
  R_\ell^{m,n}= -R_\ell^{\underline{1}}-R_\ell^{\underline{m}}+R_\ell^{n} \,,
\end{equation}
where $R_\ell^k$, $k \in \IZ_N$, is defined in Eq.~\eqref{eq:Stokes-R-kN} and satisfies $R_\ell^{\underline{k}}= -R_\ell^k$.
\end{proof}

\begin{rmk}
Note that the hidden symmetry of the spectral trace $\mathrm{Tr}(\rho_{m,n})$ under the exchange of $m$ and $n$ emerges in the perturbative series 
\be
\varphi_{m,n}(\hbar) = \varphi_{n,m}(\hbar) \, , \quad \psi_{m,n}(\tau) = \psi_{n,m}(\tau) \, ,
\ee
and it is, accordingly, preserved at the level of the Stokes constants
\be
S_\ell^{m,n} = S_\ell^{n,m} \, , \quad R_\ell^{m,n}=R_\ell^{n,m} \, .
\ee
\end{rmk}
 
As observed for the Stokes constants in the limit $\hbar\to 0$, taking $N=3,4$, the Stokes constants $R_\ell^{1,1}$ and $R_\ell^{1,2}=R_\ell^{2,1}$ reduce, up to a numerical prefactor, to the ones in Eqs.~\eqref{eq:rn-3} and~\eqref{eq:rn-4}, respectively. Again, these are the only cases in which the divisor sum functions in the RHS of Eq.~\eqref{eq: stokes-psimn} can be simply expressed in terms of Dirichlet characters. 
In addition, as the Borel plane displays one infinite tower of simple poles repeated at all non-zero integer multiples of a constant, the perturbative coefficients of $\psi_{m,n}(\tau)$ are given by the Dirichlet series of the Stokes constants evaluated at the integers.\footnote{See~\cite[Proposition~3.2]{FR1maths} for the general statement and proof.} Explicitly, the coefficient of $\tau^{2j-1}$ in the formal power series $\psi_{m,n}(\tau)$ is
\begin{equation} \label{eq: dir-Rmn}
 \frac{\Gamma(2j-1)}{\pi\ri}\sum_{\ell=1}^\infty \frac{R_\ell^{m,n}}{\eta_\ell^{2j-1}}\,,\quad  j \in \IZ_{>0} \,.
\end{equation}
Note that only the odd powers of $\tau$ appear in the asymptotic series as a consequence of Eq.~\eqref{eq: tilde-psi}.

\begin{rmk} \label{rmk: notL-funct-mn}
For fixed $N \in \IZ_{\ge 2}$ and $m,n \in \IZ_N$ such that $N=1+m+n$, the Dirichlet series obtained from the sequences of Stokes constants $\{S_\ell^{m,n}\}$ and $\{R_\ell^{m,n}\}$, $\ell \in \IZ_{>0}$, appearing in Eqs.~\eqref{eq: dir-Smn} and~\eqref{eq: dir-Rmn}, are not necessarily $L$-series. More precisely, for $N>4$, the arithmetic functions in Eqs.~\eqref{eq: stokes-phimn} and~\eqref{eq: stokes-psimn} are not multiplicative. As a consequence, the formal power series $\varphi_{m,n}$ and $\psi_{m,n}$ in Eqs.~\eqref{eq: phi-mn} and~\eqref{eq: psi-mn} are not modular resurgent, and there is no functional equation joining the Dirichlet series in the two regimes. 
Nonetheless, as we prove in Section~\ref{sec:strong-weak}, a generalization of the strong-weak resurgent symmetry of local $\IP^2$ remains at play.  
\end{rmk}

\begin{lemma}\label{lem:symm-psi}
    The discontinuity of $\psi_{m,n}(\tau)$ across the Stokes ray at angle $\pi/2$ is 
     \be\label{eq:disc-psi-nm}
     \begin{aligned}
       \mathrm{disc}_{\frac{\pi}{2}}\psi_{m,n}(\tau)= &-g_{1,N}\Big( -\tfrac{1}{N \tau} \Big)+g_{\underline{1},N}\Big( -\tfrac{1}{N \tau} \Big)-g_{m,N}\Big( -\tfrac{1}{N \tau} \Big)+g_{\underline{m},N}\Big( -\tfrac{1}{N \tau} \Big)\\
       &-g_{n,N}\Big( -\tfrac{1}{N \tau} \Big)+g_{\underline{n},N}\Big( -\tfrac{1}{N \tau} \Big)\,.   
     \end{aligned}
    \ee
\end{lemma}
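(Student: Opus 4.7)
The plan is to recognize that this lemma is an immediate consequence of the linearity of the discontinuity operator together with the single-variable formula recorded in Corollary~\ref{cor: disc-Rn}. By Eq.~\eqref{eq: psi-mn}, the asymptotic series decomposes as
\[
\psi_{m,n}(\tau) = -\psi_{\underline{1}}(\tau) - \psi_{\underline{m}}(\tau) + \psi_{n}(\tau),
\]
so I would start by distributing $\mathrm{disc}_{\pi/2}$ across this linear combination.

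Next, for each of the three single-$q$-Pochhammer summands I would invoke Corollary~\ref{cor: disc-Rn}, which gives
\[
\mathrm{disc}_{\pi/2}\psi_k(y) = -g_{k,N}\big(-\tfrac{1}{Ny}\big) + g_{\underline{k},N}\big(-\tfrac{1}{Ny}\big),
\]
applied at $k=\underline{1}, \underline{m}, n$ respectively. Using the involution $\underline{\underline{k}}=k$, the contribution from $-\psi_{\underline{1}}$ becomes $g_{\underline{1},N}\big(-\tfrac{1}{N\tau}\big)-g_{1,N}\big(-\tfrac{1}{N\tau}\big)$, the contribution from $-\psi_{\underline{m}}$ becomes $g_{\underline{m},N}\big(-\tfrac{1}{N\tau}\big)-g_{m,N}\big(-\tfrac{1}{N\tau}\big)$, and the contribution from $+\psi_{n}$ becomes $-g_{n,N}\big(-\tfrac{1}{N\tau}\big)+g_{\underline{n},N}\big(-\tfrac{1}{N\tau}\big)$. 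Summing these three pieces reproduces exactly the right-hand side of Eq.~\eqref{eq:disc-psi-nm}.

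Since the argument is purely a linear combination of identities already established for each individual $q$-Pochhammer symbol, no analytic obstacle arises; the only care needed is in correctly tracking the sign conventions and the involution $\underline{k}\mapsto N-k$ when the index is itself of the form $\underline{k}$. This also mirrors the analogous computation performed for $\mathrm{disc}_{\pi/2}\varphi_{m,n}$ in Lemma~\ref{lem:symm-phi}, so the two proofs are structurally parallel and can be written in essentially the same few lines.
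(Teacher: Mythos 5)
Your proof is correct and matches the paper's one‑line argument: the paper cites Eq.~\eqref{eq: disc-fkN}, which is the same single‑index discontinuity identity as the one in Corollary~\ref{cor: disc-Rn} that you invoke, and the rest is exactly the linearity-plus-involution bookkeeping you carry out.
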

\begin{proof}
The result follows from Eq.~\eqref{eq: disc-fkN}.
\end{proof}

As before, let us now define the generating series of the Stokes constants $\{ R_\ell^{m,n} \}$ as 
 \begin{equation}\label{eq:f-inf}
     f_\infty^{m,n}(y):=\begin{cases}
         \sum_{\ell>0} R_\ell^{m,n} \re^{2\pi\ri \ell y} & \text{ if } \Im(y)>0 \\
         \\
         -\sum_{\ell<0} R_\ell^{m,n} \re^{2\pi\ri \ell y} & \text{ if } \Im(y)<0
     \end{cases}\,,
 \end{equation}
 which is a holomorphic function on $\IC\setminus\IR$ with the periodicity and parity properties
\be
f_\infty^{m,n}(y+1)= f_\infty^{m,n}(y) \, , \quad f_\infty^{m,n}(-y)=f_\infty^{m,n}(y) \, .
\ee
It follows that $f_\infty^{m,n}(y)$ need only be specified in either the upper or lower half of the complex $y$-plane,
where it is known in closed form by means of the exact discontinuity formula in Corollary~\ref{cor: disc-Rn}, which implies that
 \begin{equation}\label{eq:disc-finf}
     f_\infty^{m,n}\Big(-\frac{1}{N \tau}\Big)= \mathrm{disc}_{\frac{\pi}{2}}\psi_{m,n}(\tau)\,,\quad \tau\in\IH\,.
 \end{equation}

\begin{theorem}\label{thm:disc-med-inf}
   Assume that Conjecture~\ref{conj:summability-g-kN} holds. Let $\tilde{f}_\infty^{m,n}(y)$ be the asymptotic expansion in the limit $y\to 0$ with $\Im(y)>0$ of the generating function $f_\infty^{m,n}(y)$ in Eq.~\eqref{eq:f-inf}. Then, the median resummation at angle $\pi/2$ of $\tilde{f}_\infty^{m,n}(y)$ reconstructs the function $f_\infty^{m,n}(y)$, that is,
    \begin{equation}
        \CS^{\rm med}_{\frac{\pi}{2}}\tilde{f}_\infty^{m,n}(y)=f_\infty^{m,n}(y)\,,\quad y\in\IH\,.
    \end{equation}  
\end{theorem}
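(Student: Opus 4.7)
The plan is to reduce the claim to an analogous statement for the antisymmetric combinations $\mathcal{G}_{k,N}(y)=-g_{k,N}(y)+g_{\underline{k},N}(y)$ introduced in Eq.~\eqref{eq: bk}, in complete parallel with the proof of Theorem~\ref{thm:disc-med-0}.

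First, I would establish on $\IH$ the closed-form decomposition
\begin{equation}
f_\infty^{m,n}(y) \;=\; \mathcal{G}_{1,N}(y) + \mathcal{G}_{m,N}(y) + \mathcal{G}_{n,N}(y)\,.
\end{equation}
This identity follows at once by substituting $y=-1/(N\tau)$ into Eq.~\eqref{eq:disc-finf} and inserting the expression for $\mathrm{disc}_{\pi/2}\psi_{m,n}(\tau)$ given by Lemma~\ref{lem:symm-psi}: the three pairs $-g_{k,N}+g_{\underline{k},N}$ with $k\in\{1,m,n\}$ regroup exactly into the right-hand side above. Passing to asymptotic expansions as $y\to 0$ with $\Im(y)>0$ and using linearity then yields the formal identity $\tilde{f}_\infty^{m,n} = \tilde{\mathcal{G}}_{1,N}+\tilde{\mathcal{G}}_{m,N}+\tilde{\mathcal{G}}_{n,N}$.

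Second, I would invoke Eq.~\eqref{eq: median-gkN}---whose validity is precisely Conjecture~\ref{conj:summability-g-kN}---to compute $\mathcal{S}^{\rm med}_{\pi/2}\tilde{g}_{k,N}$. The key observation is that the non-perturbative correction appearing there, namely
\begin{equation}
\tfrac{1}{2}\bigl[f_{k,N}\big(-\tfrac{1}{Ny}\big)-\log(1-\re^{2\pi\ri k/N})+f_{\underline{k},N}\big(-\tfrac{1}{Ny}\big)-\log(1-\re^{-2\pi\ri k/N})\bigr]\,,
\end{equation}
is manifestly invariant under the involution $k\leftrightarrow\underline{k}$. Hence the correction telescopes out of the antisymmetric combination $\tilde{\mathcal{G}}_{k,N}=-\tilde{g}_{k,N}+\tilde{g}_{\underline{k},N}$, giving $\mathcal{S}^{\rm med}_{\pi/2}\tilde{\mathcal{G}}_{k,N}(y)=\mathcal{G}_{k,N}(y)$ on $\IH$ for every $k\in\IZ_N$ (the case $2k=N$ being trivial, as both sides vanish). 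Summing over $k\in\{1,m,n\}$ and using linearity of $\mathcal{S}^{\rm med}_{\pi/2}$ then delivers
\begin{equation}
\mathcal{S}^{\rm med}_{\pi/2}\tilde{f}_\infty^{m,n}(y) \;=\; \mathcal{G}_{1,N}(y) + \mathcal{G}_{m,N}(y) + \mathcal{G}_{n,N}(y) \;=\; f_\infty^{m,n}(y)\,,
\end{equation}
as claimed; the corresponding statement on the lower half-plane via $\mathcal{S}^{\rm med}_{-\pi/2}$ follows by the same mechanism.

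The main obstacle, of course, is that the entire argument rests on Conjecture~\ref{conj:summability-g-kN}, whose proof would require an analytic counterpart to Lemma~\ref{lemma:summability-f-kN}: namely, a direct computation of the Borel--Laplace sums of $\varphi_k$ starting from the resummed Borel transform in Eq.~\eqref{eq: intBorel20-proof}, which appears technically more delicate than in the case of $\psi_k$. Once that conjecture is settled, the reasoning above is purely structural, relying only on the decomposition provided by Lemma~\ref{lem:symm-psi}, the $k\leftrightarrow\underline{k}$ symmetry of the correction in Eq.~\eqref{eq: median-gkN}, and the linearity of median resummation.
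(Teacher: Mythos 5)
Your proof is correct and follows essentially the same route as the paper's: reduce to the antisymmetric combination $g_{k,N}-g_{\underline{k},N}$ using the decomposition of $f_\infty^{m,n}$ from Lemma~\ref{lem:symm-psi} and Eq.~\eqref{eq:disc-finf}, observe that the correction term in Eq.~\eqref{eq: median-gkN} is invariant under $k\leftrightarrow\underline{k}$ and hence telescopes, then sum over $k\in\{1,m,n\}$. Your write-up simply spells out the intermediate identity and the cancellation mechanism that the paper's one-line proof leaves implicit.
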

\begin{proof}
From Eq.~\eqref{eq: median-gkN}, we deduce that 
\begin{equation}
    \CS^{\rm med}_{\frac{\pi}{2}}\big[\tilde{g}_{k,N}(y)-\tilde{g}_{\underline{k},N}(y)\big]=g_{k,N}(y)-g_{\underline{k},N}(y)\,,
\end{equation}
hence the desired result. An analogous computation holds for $\CS^{\rm med}_{-\frac{\pi}{2}}\tilde{f}^{m,n}_\infty(y)$. 
\end{proof}

\begin{rmk}
Recall that the function $\mathcal{Z}_{1, m, n}$, introduced in Eq.~\eqref{eq:trace-qPochh}, is the sum of $q$-Pochhammer symbols appearing in the logarithm of the spectral trace $\mathrm{Tr}(\rho_{m,n})$. This extends to a well-defined, analytic function on $\IC'$, allowing us to perform a full resurgent analysis.
Although the generating functions $f_0^{m,n}$ and $f_\infty^{m,n}$ were originally derived from the asymptotic expansions of $\mathcal{Z}_{1, m, n}$ at weak and strong coupling, respectively---and can be interpreted as non-perturbative completions of one another---they are not manifestly present in Eq.~\eqref{eq:trace-qPochh}. In fact, they satisfy the relation
\be \label{eq: Zmn-sym}
\mathcal{Z}_{1, m, n}(y)-\mathcal{Z}_{\underline{1}, \underline{m}, \underline{n}}(y) = f_\infty^{m,n}\big(\tfrac{y}{N} \big)-f_0^{m,n}\big(-\tfrac{1}{y} \big) - \pi \ri \, , 
\ee 
where $\mathcal{Z}_{\underline{1}, \underline{m}, \underline{n}}$ denotes the image of $\mathcal{Z}_{1, m, n}$ under the transformation
\be
k \to \underline{k}=N-k \, , \quad k =1,m,n \, . 
\ee
Thus, resurgence naturally complements $\mathcal{Z}_{1, m, n}$ with the companion function $\mathcal{Z}_{\underline{1}, \underline{m}, \underline{n}}$, enforcing an hidden symmetry that does not appear in the spectral trace itself.
\end{rmk}

\subsubsection{Quantum modularity}

In Sections~\ref{sec: single} and~\ref{sec:weighted-sum}, we proved that the $q$-Pochhammer symbols $f_{k,N}$ and $g_{k,N}$ in Eq.~\eqref{eq:f_kN} and their weighted sums $\frakf$ and $\frakg$ in Eq.~\eqref{eq:fg-intro} possess specific quantum modularity properties. Here, we show that the same statement applies to the generating functions of the Stokes constants for $\log {\rm Tr}(\rho_{m,n})$ in both limits $\hbar \to 0$ and $\hbar \to \infty$.

\begin{cor}\label{cor:disc-qm} 
    The functions $f_0^{m,n}, f_\infty^{m,n}: \IH \to \IC$ defined in Eqs.~\eqref{eq:f0} and~\eqref{eq:f-inf} are holomorphic quantum modular functions for the group $\Gamma_{N}\subset\mathsf{SL}_2(\IZ)$ generated by the elements in Eq.~\eqref{eq: GammaN-generators}.
\end{cor}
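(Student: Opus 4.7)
The plan is to reduce Corollary~\ref{cor:disc-qm} to Theorem~\ref{thm:qm} via the exact discontinuity formulas obtained earlier in the section. The key observation is that, despite being defined abstractly as generating series, the functions $f_0^{m,n}$ and $f_\infty^{m,n}$ admit closed-form realizations as finite linear combinations of the $q$-Pochhammer symbols of Eq.~\eqref{eq:f_kN}, whose quantum modular nature is already established.

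First, I would combine Lemma~\ref{lem:symm-phi} with Eq.~\eqref{eq:disc-f0} and Lemma~\ref{lem:symm-psi} with Eq.~\eqref{eq:disc-finf}. Making the change of variables $y=-\tfrac{2\pi}{N\hbar}$ and $y=-\tfrac{1}{N\tau}$, which are bijections $\IH\to\IH$, one obtains the closed-form identities
$$
  f_0^{m,n}(y)=\sum_{k\in\{1,m,n\}}\bigl(f_{\underline{k},N}(y)-f_{k,N}(y)\bigr)-\pi\ri \, , \qquad y\in\IH \, ,
$$
and
$$
  f_\infty^{m,n}(y)=\sum_{k\in\{1,m,n\}}\bigl(g_{\underline{k},N}(y)-g_{k,N}(y)\bigr) \, , \qquad y\in\IH \, .
$$
Both sides are holomorphic on $\IH$, and the equalities hold on all of $\IH$ because the intermediate variables $\hbar,\tau$ also range over $\IH$.

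Second, I would invoke the elementary but essential fact that the set of weight-zero holomorphic quantum modular functions for a fixed group $\Gamma_N$ is closed under $\IC$-linear combinations and under addition of constants. Indeed, for any $\gamma\in\Gamma_N$, the cocycle map $f\mapsto h_\gamma[f]$ defined in Eq.~\eqref{cocycle} is $\IC$-linear and annihilates constants; hence, if each summand produces a cocycle that extends analytically to $\IC_\gamma$, the same is true for the linear combination. In particular, the constant $-\pi\ri$ in the formula for $f_0^{m,n}$ contributes trivially.

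Finally, applying Theorem~\ref{thm:qm}, which asserts that each of the functions $f_{k,N}$ and $g_{k,N}$ is a holomorphic quantum modular function for $\Gamma_N$, I would conclude that both $f_0^{m,n}$ and $f_\infty^{m,n}$ are holomorphic quantum modular functions for $\Gamma_N$. There is no genuine obstacle here: the corollary is an algebraic consequence of Theorem~\ref{thm:qm} together with the explicit discontinuity formulas of Lemmas~\ref{lem:symm-phi} and~\ref{lem:symm-psi}. The only subtlety worth flagging is ensuring that the resulting cocycles $h_{\gamma_N}[f_0^{m,n}]$ and $h_{\gamma_N}[f_\infty^{m,n}]$ can be written as linear combinations of the quantum dilogarithm expressions in Eqs.~\eqref{eq: F-phib}--\eqref{eq: G-phib}, in analogy with the remarks following Corollary~\ref{cor:qm-weighted}, which provides an explicit analytic model for the cocycles on $\IC_{\gamma_N}$.
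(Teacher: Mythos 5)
Your proposal is correct and follows essentially the same route as the paper's proof, which simply cites Eqs.~\eqref{eq:disc-phi-nm}, \eqref{eq:disc-f0}, \eqref{eq:disc-psi-nm}, \eqref{eq:disc-finf}, and Theorem~\ref{thm:qm}. You have merely unpacked the terse citation into explicit closed-form expressions for $f_0^{m,n}$ and $f_\infty^{m,n}$ as integer linear combinations of $f_{k,N}$ and $g_{k,N}$, and spelled out the (correct) observation that the cocycle map $h_\gamma$ is $\IC$-linear and annihilates constants, so quantum modularity passes to finite linear combinations plus constants.
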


\begin{proof}
For the generating function at weak coupling $f_0^{m,n}$, the statement follows from Eqs.~\eqref{eq:disc-phi-nm} and~\eqref{eq:disc-f0} and Theorem~\ref{thm:qm}. Similarly, for the generating function at strong coupling $f_\infty^{m,n}$, the result follows from Eqs.~\eqref{eq:disc-psi-nm} and~\eqref{eq:disc-finf} and Theorem~\ref{thm:qm}.
\end{proof}

Note that the above result reproduces Theorem~4.7 in~\cite{FR1phys} when taking $m=n=1$. 
Moreover, as for the case of the weighted sums $\frakf$ and $\frakg$, the non-trivial cocycles $h_{\gamma_N}[f_0^{m,n}]$ and $h_{\gamma_N}[f_\infty^{m,n}]$ are linear combinations of the logarithm of Faddeev's quantum dilogarithm $\Phi_\mb$ in Eq.~\eqref{eq: seriesPhib}. Namely, 
\begin{subequations}
    \begin{align}
        h_{\gamma_N}[f_0^{m,n}](\hbar) &= \sum_{k\in\{1,m,n\}} 
        h_{\gamma_N}[f_{k,N}-f_{\underline{k},N}](\hbar) = \sum_{k\in\{1,m,n\}} 
        h_{\gamma_N}[F_k-F_{\underline{k}}](\hbar) \, ,\\
        h_{\gamma_N}[f_\infty^{m,n}](\tau) &= \sum_{k\in\{1,m,n\}} 
        h_{\gamma_N}[g_{\underline{k},N}-g_{k,N}](\tau) = \sum_{k\in\{1,m,n\}} 
        h_{\gamma_N}[G_{\underline{k}}-G_k](\tau) \, ,
    \end{align}
\end{subequations}
where the functions $F_k$ and $G_k$ are written explicitly in terms of $\Phi_\mb$ in Eqs.~\eqref{eq: F-phib} and~\eqref{eq: G-phib}.

Following Section~\ref{sec:fricke}, we can define a second pair of holomorphic quantum modular functions for $\Gamma_N$ by acting on the generating functions $f_0^{m,n}(y)$ and $f_\infty^{m,n}(y)$ with $y \mapsto -\frac{1}{Ny}$. Precisely, as we have done for the $q$-Pochhammer symbols in Eq.~\eqref{eq: fricke-fs}, we introduce the holomorphic functions $f_{0,\star}^{m,n}, f_{\infty,\star}^{m,n}\colon \IH \to \IC$ as
\be \label{eq: fricke-trace}
f_{0,\star}^{m,n}(y):=f_0^{m,n}\big(-\tfrac{1}{Ny}\big) \, , \quad  f_{\infty,\star}^{m,n}(y):=f_\infty^{m,n}\big(-\tfrac{1}{Ny}\big) \, ,
\ee
where again $^{\star}$ denotes the action of the Fricke involution.

\begin{cor}
The functions $f_{0,\star}^{m,n}, f_{\infty,\star}^{m,n}\colon \IH\to\IC$, defined in Eq.~\eqref{eq: fricke-trace} as the images of the generating functions under Fricke involution, are holomorphic quantum modular functions for $\Gamma_N$.
\end{cor}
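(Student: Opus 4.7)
The plan is to reduce the statement to the already-established Theorem~\ref{thm:fricke-qPochh} by writing $f_0^{m,n}$ and $f_\infty^{m,n}$ themselves as finite $\IC$-linear combinations (up to additive constants) of the original $q$-Pochhammer symbols $f_{k,N}$ and $g_{k,N}$. Since the Fricke involution $y\mapsto -\tfrac{1}{Ny}$ acts linearly on each term, the Fricke duals $f_{0,\star}^{m,n}$ and $f_{\infty,\star}^{m,n}$ will then be expressed as linear combinations of $f_{k,N}^{\star}$ and $g_{k,N}^{\star}$, which are already known to be holomorphic quantum modular for $\Gamma_N$.

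First I would combine Eq.~\eqref{eq:disc-f0} with Lemma~\ref{lem:symm-phi}, changing variables $z=-\tfrac{2\pi}{N\hbar}$ (which maps $\IH$ to $\IH$), to obtain the identity
\begin{equation}
f_0^{m,n}(z)=\sum_{k\in\{1,m,n\}}\bigl[f_{\underline{k},N}(z)-f_{k,N}(z)\bigr]-\pi\ri\,, \qquad z\in\IH\,.
\end{equation}
An analogous step, starting from Eq.~\eqref{eq:disc-finf} and Lemma~\ref{lem:symm-psi}, yields
\begin{equation}
f_\infty^{m,n}(z)=\sum_{k\in\{1,m,n\}}\bigl[g_{\underline{k},N}(z)-g_{k,N}(z)\bigr]\,, \qquad z\in\IH\,.
\end{equation}
Both identities are between functions holomorphic on $\IH$, so they hold on the full upper half-plane by analytic continuation. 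Applying the Fricke involution to the arguments on both sides, and using the notation of Eq.~\eqref{eq: fricke-fs}, gives
\begin{equation}
f_{0,\star}^{m,n}(y)=\sum_{k\in\{1,m,n\}}\bigl[f_{\underline{k},N}^{\star}(y)-f_{k,N}^{\star}(y)\bigr]-\pi\ri\,, \quad f_{\infty,\star}^{m,n}(y)=\sum_{k\in\{1,m,n\}}\bigl[g_{\underline{k},N}^{\star}(y)-g_{k,N}^{\star}(y)\bigr]\,.
\end{equation}

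To conclude, I would invoke the fact that the space of holomorphic quantum modular functions for a fixed subgroup $\Gamma\subseteq\mathsf{SL}_2(\IZ)$ is a $\IC$-vector space: the cocycle map $f\mapsto h_\gamma[f]$ is $\IC$-linear, it vanishes identically on constant functions (so an additive constant is harmless), and the analytic extension of the cocycle to $\IC_\gamma$ required by Definition~\ref{def: holoQM} is preserved under finite linear combinations. Since each $f_{k,N}^{\star}$ and $g_{k,N}^{\star}$ is a holomorphic quantum modular function for $\Gamma_N$ by Theorem~\ref{thm:fricke-qPochh}, the displayed expressions for $f_{0,\star}^{m,n}$ and $f_{\infty,\star}^{m,n}$ inherit the same property.

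There is no real obstacle here: the work has already been done in Theorem~\ref{thm:fricke-qPochh}, where the non-trivial cocycles for the generator $T$ were shown to be analytic in $\IC'$ via Faddeev's quantum dilogarithm. The only point requiring mild care is to check that the two identities expressing $f_0^{m,n}$ and $f_\infty^{m,n}$ as sums of $q$-Pochhammer symbols extend from the specific values $z=-\tfrac{2\pi}{N\hbar}$ (respectively $z=-\tfrac{1}{N\tau}$) appearing in the discontinuity formulas to arbitrary $z\in\IH$, which follows immediately from the holomorphicity of all the functions involved. As a byproduct, the cocycles of $f_{0,\star}^{m,n}$ and $f_{\infty,\star}^{m,n}$ for the generator $T$ of $\Gamma_N$ are explicit linear combinations of the cocycles of the $F_k$ and $G_k$ in Eqs.~\eqref{eq: fstar-cocycle} and~\eqref{eq: gstar-cocycle}, hence again controlled by $\Phi_\mb$.
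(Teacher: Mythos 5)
Your proof is correct and follows essentially the same route as the paper: both use the discontinuity formulas (Eqs.~\eqref{eq:disc-phi-nm}, \eqref{eq:disc-f0} for $f_0^{m,n}$ and Eqs.~\eqref{eq:disc-psi-nm}, \eqref{eq:disc-finf} for $f_\infty^{m,n}$) to rewrite the generating functions as finite linear combinations of $f_{k,N}$ and $g_{k,N}$ respectively, then apply Theorem~\ref{thm:fricke-qPochh} and the linearity of the cocycle map. The paper's proof simply cites these ingredients without spelling out the linear-combination identities and the $\IC$-vector-space structure of holomorphic quantum modular functions, whereas you make both of these explicit.
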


\begin{proof}
For $f_{0,\star}^{m,n}$, the statement follows from Eqs.~\eqref{eq:disc-phi-nm} and~\eqref{eq:disc-f0} and Theorem~\ref{thm:fricke-qPochh}. Similarly, for $f_{\infty,\star}^{m,n}$, the result follows from Eqs.~\eqref{eq:disc-psi-nm} and~\eqref{eq:disc-finf} and Theorem~\ref{thm:fricke-qPochh}.
\end{proof}

Analyzing the resurgent structures of the spectral trace of local $\IP^2$, we observed in~\cite{FR1phys} how the generating functions of the Stokes constants, once interpreted as holomorphic quantum modular forms, present a remarkable duality under the action of the Fricke involution of the corresponding modular group $\Gamma_1(3)$, exchanging the roles played by the Stokes constants at weak and strong coupling. Unifying the results of this section with the ones of Section~\ref{sec:fricke}, we have now generalized this duality to the whole class of local weighted projective planes.

\subsection{Strong-weak resurgent symmetry}\label{sec:strong-weak}

Following the observations of~\cite{Rella22}, it was proven in~\cite{FR1phys} that the exact resurgent structures of the spectral trace of local $\IP^2=\IP^{1,1}$ in the dual regimes of $\hbar \to 0$ and $\hbar \to \infty$ obey a full-fledged strong-weak resurgent symmetry, dictating the mechanism by which the non-perturbative corrections in one regime produce the perturbative expansion in the other and vice versa. A somewhat weaker formulation of the same symmetry holds for every local $\IP^{m,n}$.

\begin{theorem}\label{thm:strong-weak}
For all $m,n\in\IZ_{>0}$, the following diagram commutes.  
\begin{equation} \label{diag: strong-weak2}
\begin{tikzcd}[column sep = 2.8em, row sep=2.8em]
&\mathrm{disc}_{\frac{\pi}{2}}\psi_{m,n}(\tau) \arrow[red,sloped]{dd} \arrow[r, "\hbar \rightarrow 0"] & \varphi_{m,n}(\hbar)\arrow[red,sloped]{dd}{\text{strong-weak}}[swap]{\text{symmetry}} \arrow[dr,sloped,gray,"\text{resurgence}"] &
\\
\color{gray}\{\eta_\ell, \, R_\ell^{m,n}\}\arrow[sloped,gray]{ur}{\text{generating}}[swap]{\text{series}} & & &  \color{gray}\{\xi_\ell, \, S_\ell^{m,n}\} \arrow[sloped,gray]{dl}{\text{generating}}[swap]{\text{series}}
\\
 & \arrow[red,sloped]{uu}{\text{strong-weak}}[swap]{\text{symmetry}} \psi_{m,n}(\tau) \arrow[ul,sloped,gray,swap,"\text{resurgence}"]& \mathrm{disc}_{\frac{\pi}{2}}\varphi_{m,n}(\hbar) \arrow[red,sloped]{uu} \arrow[l,swap, "\tau \rightarrow 0"] &
\end{tikzcd}
\end{equation}
\end{theorem}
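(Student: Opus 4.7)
The plan is to verify commutativity of the diagram by checking each of its arrows against results already established in the paper. The content sits in four data corners $\psi_{m,n}(\tau)$, $\varphi_{m,n}(\hbar)$, $\{\eta_\ell, R_\ell^{m,n}\}$, and $\{\xi_\ell, S_\ell^{m,n}\}$, together with two intermediate corners $\mathrm{disc}_{\pi/2}\psi_{m,n}(\tau)$ and $\mathrm{disc}_{\pi/2}\varphi_{m,n}(\hbar)$. The grey arrows are of two kinds: the resurgence arrows extract the Stokes constants from each asymptotic series via Eq.~\eqref{eq: Stokes0}, and the generating-series arrows recover the discontinuities from the Stokes data via Eq.~\eqref{eq: Stokes1-poles}. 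The red arrows encode the strong-weak duality between the discontinuity of one asymptotic series and the perturbative content of the other.

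The grey arrows are immediate from earlier results. The arrow $\varphi_{m,n}(\hbar)\to \{\xi_\ell, S_\ell^{m,n}\}$ is Proposition~\ref{prop:stokes-P1nm-s}, and its strong-coupling analogue $\psi_{m,n}(\tau)\to \{\eta_\ell, R_\ell^{m,n}\}$ follows in the same manner from Proposition~\ref{prop:Borel-f_kN} and Corollary~\ref{cor:stokes-f}. Both Borel transforms have a single vertical tower of simple poles, so the generating-series arrows reduce to the discontinuity formula Eq.~\eqref{eq: Stokes1-poles} and yield the identifications Eqs.~\eqref{eq:disc-f0} and~\eqref{eq:disc-finf}, namely
\begin{equation*}
    \mathrm{disc}_{\pi/2}\varphi_{m,n}(\hbar) = f_0^{m,n}\bigl(-\tfrac{2\pi}{N\hbar}\bigr) \, , \qquad \mathrm{disc}_{\pi/2}\psi_{m,n}(\tau) = f_\infty^{m,n}\bigl(-\tfrac{1}{N\tau}\bigr) \, .
\end{equation*}

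The substantive content lies in the two red strong-weak arrows. For the top one, $\mathrm{disc}_{\pi/2}\psi_{m,n}(\tau)\to \varphi_{m,n}(\hbar)$ under $\hbar\to 0$, I would apply Lemma~\ref{lem:symm-psi} to write
\begin{equation*}
\mathrm{disc}_{\pi/2}\psi_{m,n}(\tau) = \sum_{k\in\{1,m,n\}}\bigl[-g_{k,N}(-\tfrac{1}{N\tau}) + g_{\underline{k},N}(-\tfrac{1}{N\tau})\bigr] \, ,
\end{equation*}
and then set $y=-1/\tau$, so that $-1/(N\tau)=y/N\to 0$ as $\hbar\to 0$, and invoke Eq.~\eqref{eq: expansion-gkN} for the all-orders asymptotic expansion of each summand. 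Within each $(k,\underline{k})$ pair, the $-\pi\ri/(12 y)$ pole cancels, the $B_2(\cdot)$ contribution cancels by the reflection $B_2(1-x)=B_2(x)$, and the logarithmic and $\Gamma$-function terms combine into explicit classical prefactors; the formal-series parts collect into $\sum_{k}(\varphi_k-\varphi_{\underline{k}})(y) = 2\sum_{k\in\{1,m,n\}}\varphi_k(y)$ by the identity $\varphi_{\underline{k}}=-\varphi_k$, which in turn equals $2\varphi_{m,n}(\hbar)$ by Eq.~\eqref{eq: phi-mn}.

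The bottom arrow $\mathrm{disc}_{\pi/2}\varphi_{m,n}(\hbar)\to \psi_{m,n}(\tau)$ under $\tau\to 0$ is proved by the dual computation. Lemma~\ref{lem:symm-phi} writes the discontinuity as an explicit constant plus $\sum_{k\in\{1,m,n\}}[f_{\underline{k},N}(-2\pi/(N\hbar)) - f_{k,N}(-2\pi/(N\hbar))]$; with $\tau=-2\pi/(N\hbar)\to 0$, Eq.~\eqref{eq: expansion-fkN} decomposes each $f_{k,N}(\tau)$ into a $\mathrm{Li}_2$-pole piece, a $\tfrac{1}{2}\log(1-\zeta_N^k)$ constant, and the formal series $\psi_k(\tau)$. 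The $\mathrm{Li}_2$ inversion formula collapses the pole contributions of each $(k,\underline{k})$ pair into explicit classical terms, the log constants collect, and the parity relation for $\psi_k$ quoted after Eq.~\eqref{eq: psi-mn} assembles the formal-series parts into a scalar multiple of $\psi_{m,n}(\tau)$. The main obstacle is the careful bookkeeping of these classical (polar, logarithmic, $\Gamma$- and $\mathrm{Li}_2$-valued) terms: they are invisible at the purely perturbative/resurgent level, but they must reassemble consistently with the non-resurgent prefactors appearing in the closed-form identity~\eqref{eq:trace-log}, in agreement with the companion identity~\eqref{eq: Zmn-sym}. Once this tracking is carried out, the red and grey compositions agree along every path and commutativity of the diagram follows.
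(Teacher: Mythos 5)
Your proof takes essentially the same route as the paper's: both confirm the grey resurgence and generating-series arrows from the earlier propositions and corollaries, and both verify the strong-weak relation between the two sides via the exact discontinuity formulas of Lemmas~\ref{lem:symm-phi} and~\ref{lem:symm-psi}. Where the two arguments differ is in how the strong-weak step is executed. You carry out a direct term-by-term cancellation inside the asymptotic formulas~\eqref{eq: expansion-gkN} and~\eqref{eq: expansion-fkN} (the $\tfrac{\pi\ri}{12y}$ and $B_2$ contributions cancel within each $(k,\underline{k})$ pair, and the formal-series parts collect via $\varphi_{\underline{k}}=-\varphi_k$), whereas the paper packages the same cancellations once and for all into the auxiliary functions $\CF_{k,N}$, $\CG_{k,N}$ of Remark~\ref{rmk: not-L-functs} and then simply invokes the relations in Eq.~\eqref{eq: disc-akbk}. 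Substantively these are the same computation; yours just re-derives the paper's precomputed identities from scratch, which is correct but more verbose.

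There is one genuine gap. The diagram's red arrows point in both directions between $\psi_{m,n}$ and $\mathrm{disc}_{\pi/2}\psi_{m,n}$, and between $\varphi_{m,n}$ and $\mathrm{disc}_{\pi/2}\varphi_{m,n}$; you establish only the discontinuity direction. The paper closes the cycle with an explicit inversion: Mellin transform of the generating function (Lemma~3.1 in~\cite{FR1phys}) produces the Dirichlet series $\sum_{\ell}S_\ell^{m,n}/\ell^j$, and the exact large-order relations~\eqref{eq: dir-Smn} and~\eqref{eq: dir-Rmn} then return the perturbative coefficients from that Dirichlet series evaluated at integers. This is not automatic--in general the Stokes constants do not determine a formal series, and it is precisely the special structure of a single tower of simple poles that makes the discontinuity invertible--so the paper presents it explicitly as the step that ``completes the proof.'' Without it, your final assertion that ``the red and grey compositions agree along every path'' is not substantiated for the paths that traverse a red arrow back from a discontinuity to the corresponding asymptotic series.

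A minor point: the bookkeeping of the classical ($\mathrm{Li}_2$-, $\Gamma$-, and log-valued) prefactors that you flag as the ``main obstacle'' is not required for commutativity of the diagram, which involves only the formal power series $\varphi_{m,n}$ and $\psi_{m,n}$. The paper sidesteps this by working with the functions $\CF_{k,N}$, $\CG_{k,N}$, which are defined precisely to absorb those corrections; the identities~\eqref{eq:trace-log} and~\eqref{eq: Zmn-sym} you invoke there play no role in the proof.
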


\begin{proof}
We adopt the notation introduced in Remark~\ref{rmk: not-L-functs} and write
\begin{subequations}
    \begin{align}
    \mathrm{disc}_{\frac{\pi}{2}}\varphi_{m,n}(\hbar)= & -\mathcal{F}_1\Big( -\tfrac{2 \pi}{N \hbar} \Big)- \mathcal{F}_m\Big( -\tfrac{2 \pi}{N \hbar} \Big)- \mathcal{F}_n\Big( -\tfrac{2 \pi}{N \hbar} \Big)\,, \\
    \mathrm{disc}_{\frac{\pi}{2}}\psi_{m,n}(\tau)= &\mathcal{G}_1\Big( -\tfrac{1}{N \tau}\Big)+ \mathcal{G}_m\Big( -\tfrac{1}{N \tau}\Big)+\mathcal{G}_n\Big( -\tfrac{1}{N \tau}\Big)\,,
    \end{align}
\end{subequations}
where we have removed the underscript $N$ for simplicity. Then, Eq.~\eqref{eq: disc-akbk} implies that
\begin{subequations}
    \begin{align}
    \mathrm{disc}_{\frac{\pi}{2}}[\tilde{\mathcal{F}}_1+ \tilde{\mathcal{F}}_m+ \tilde{\mathcal{F}}_n](\tau) =& 2[\mathcal{G}_1+\mathcal{G}_m+\mathcal{G}_n]\Big( -\tfrac{1}{N \tau}\Big)= 2 \, \mathrm{disc}_{\frac{\pi}{2}}\psi_{m,n}(\tau)\,, \\
    \mathrm{disc}_{\frac{\pi}{2}}[\tilde{\mathcal{G}}_1+ \tilde{\mathcal{G}}_m+\tilde{\mathcal{G}}_n]\Big(\tfrac{\hbar}{2 \pi}\Big) =& -2[\mathcal{F}_1+\mathcal{F}_m+\mathcal{F}_n]\Big( -\tfrac{2 \pi}{N \hbar} \Big) = 2\, \mathrm{disc}_{\frac{\pi}{2}}\varphi_{m,n}(\hbar) \,,
    \end{align}
\end{subequations}
that is, the discontinuity of the asymptotic expansion of one generating function reproduces the other generating function in the appropriate variables.
In order to complete the proof, we observe that each of the discontinuity arrows (from bottom left to top left and from top right to bottom right) can be inverted by considering the composition of two operations. First, we reconstruct the Dirichlet series from the corresponding discontinuity by Mellin transform. As in Lemma~3.1 in~\cite{FR1phys}, we find that
\be
\int_0^\infty t^{j-1} f_0^{m,n}(\ri t) \, dt = \frac{\Gamma(j)}{(2 \pi)^j}\sum_{\ell =1}^{\infty}\frac{S_\ell^{m,n}}{\ell^j}\,,\quad  j \in \IZ_{> 0} \,.
\ee
Second, we apply the exact large-order relation in Eq.~\eqref{eq: dir-Smn} to obtain the perturbative coefficients from the Dirichlet series evaluated at integer points. The strong coupling regime is treated entirely analogously.
\end{proof}

Consider the commutative diagram in Eq.~\eqref{diag: strong-weak2}. Compared to the full-fledged strong-weak symmetry in the diagram in Eq.~(3.30) of~\cite{FR1phys}, some ingredients are missing. 
As pointed out in Remark~\ref{rmk: notL-funct-mn}, for $N> 4$, the Stokes constants are not coefficients of $L$-functions because they do not generally satisfy an Euler product expansion. Therefore, there is no functional equation joining the two Dirichlet series, and similarly, there is no simple arithmetic twist mapping one sequence of Stokes constants into the other. However, if the sum of weights $m+n$ is either $2$ or $3$, then the corresponding geometries, \emph{i.e.}, local $\IP^{2}$, $\IP^{1,2}$, and $\IP^{2,1}$, fit the original, complete strong-weak resurgent symmetry of~\cite{FR1phys}. 
In these two special cases, the diagram in Eq.~\eqref{diag: strong-weak2} reproduces the results of Section~\ref{sec:modular-res-paradigm}.

\begin{cor} \label{cor: 3,4-final}
    For $m+n=2,3$, the asymptotic series $\varphi_{m,n}$ and $\psi_{m,n}$ defined in Eqs.~\eqref{eq: phi-mn} and~\eqref{eq: psi-mn} are a pair of modular resurgent series and together with the functions $f_0^{m,n}$ and $f_\infty^{m,n}$ in Eqs.~\eqref{eq:f0} and~\eqref{eq:f-inf} satisfy the modular resurgence paradigm.
\end{cor}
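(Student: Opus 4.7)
My plan is to reduce Corollary~\ref{cor: 3,4-final} to Theorem~\ref{thm:main3} (equivalently, Corollary~\ref{thm:mod-res}) by showing that, for $m+n\in\{2,3\}$, the Stokes constants of $\varphi_{m,n}$ and $\psi_{m,n}$ coincide, up to an explicit global constant, with those of the weighted sums $\tfrakg$ and $\tfrakf$ attached to the unique primitive odd Dirichlet character of modulus $N=1+m+n$. First I enumerate the cases: the constraint on $m+n$ singles out $(m,n)=(1,1)$ with $N=3$ and $(m,n)\in\{(1,2),(2,1)\}$ with $N=4$; the case $(2,1)$ reduces to $(1,2)$ via the $m\leftrightarrow n$ symmetry noted in Section~\ref{sec:resurgence-traces}. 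In each case the unique non-principal character $\chi_N$ mod $N$ is primitive, odd, and real, so that $\overline{\chi_N}=\chi_N$ and both Theorems~\ref{thm:resurgence-f0} and~\ref{thm:resurgence-f-inf} apply.

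The central computation is to evaluate the Stokes constants~\eqref{eq: stokes-phimn} and~\eqref{eq: stokes-psimn} for the multiset $\{1,m,n\}$. I will use the identities $\sin(2\pi d/3)=(\sqrt{3}/2)\chi_3(d)$, $\sin(\pi d/2)=\chi_4(d)$, together with $\sin(\pi d)=\chi_4(2)=0$, to collapse the triple sums to
\begin{equation*}
S_\ell^{1,1}=3\ri\sqrt{3}\sum_{d\mid\ell}\tfrac{\chi_3(d)}{d},\qquad R_\ell^{1,1}=3\sum_{d\mid\ell}\tfrac{d\,\chi_3(d)}{\ell},
\end{equation*}
and the analogous formulas for $(1,2)$ with global constants $4\ri$, $2$ and $\chi_4$ in place of $\chi_3$. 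Comparing with Eqs.~\eqref{eq:Sm-final} and~\eqref{eq:Rm-final}, and using $\mathscr{G}(\chi_3)=\ri\sqrt{3}$, $\mathscr{G}(\chi_4)=2\ri$, I conclude that these Stokes constants are purely imaginary (respectively, rational) multiples of those of the weighted sums for $\chi_N$. By Lemmas~\ref{lem:L0-gen} and~\ref{lem:Linf-gen} they are thus coefficients of the $L$-functions $L(s)$ and $L'(s)$ of Eq.~\eqref{eq:LL'_N}, which satisfy the functional equation~\eqref{eq:functional_eq} of Lemma~\ref{lem:func-eq}.

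Given this match, parts~1--3 of Definition~\ref{def:modular_res_struct} will follow for $\varphi_{m,n}$ and $\psi_{m,n}$: parts~1 and~2 are immediate from Proposition~\ref{prop:stokes-P1nm-s} and the corresponding statement for $\psi_{m,n}$, and part~3 is precisely the $L$-function identification above. Finally, the generating functions $f_0^{m,n}$ and $f_\infty^{m,n}$ in Eqs.~\eqref{eq:f0} and~\eqref{eq:f-inf} inherit the same proportionality to $\frakf$ and $\frakg$ for $\chi_N$, so that the commutative diagram~\eqref{diag:resurgence-L funct} transports from the pair $(\tfrakg,\tfrakf)$, established in Theorem~\ref{thm:main3}, to the pair $(\varphi_{m,n},\psi_{m,n})$. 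The main---and essentially the only non-routine---point is the collapse of the inner sum over $\{1,m,n\}$ to a single character twist: it is trivial for $N=3$ where the multiset is $\{1,1,1\}$, and for $N=4$ it hinges on the double vanishing of $\sin(\pi d)$ and $\chi_4(2)$ at the ``extra'' residue class $k=2$ in $\{1,1,2\}$.
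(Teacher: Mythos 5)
Your proof is correct and follows essentially the same route as the paper's: identify the Stokes constants (equivalently, the generating functions) of the $\log\mathrm{Tr}(\rho_{m,n})$ expansions with those of the weighted sums $\frakf,\frakg$ for the unique non-principal character modulo $N=3,4$, then invoke Theorems~\ref{thm:resurgence-f0},~\ref{thm:resurgence-f-inf} and Corollary~\ref{thm:mod-res}. The only difference is that you carry out the character-sum collapse explicitly at the level of Stokes constants (the double-vanishing of $\sin(\pi d)$ and $\chi_4(2)$ at the residue class $k=2$), whereas the paper states the generating-function identities directly, including the affine shift $f_0^{m,n}=-\tfrac{N}{m+n-1}\frakf-\pi\ri$; your "proportionality" remark glosses over this constant shift, but it is immaterial for the MRS and paradigm conclusions since those depend only on the $q^\ell$ coefficients with $\ell\ge 1$.
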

\begin{proof}
The generating functions of the Stokes constants can be directly identified with the weighted sums of $q$-Pochhammer symbols $\frakf$ and $\frakg$ in Eq.~\eqref{eq:fg-intro} for $N=3,4$. Precisely,
\begin{equation}\label{eq:disc-3,4}
    f_0^{m,n}(y)=-\tfrac{N}{m+n-1} \frakf(y)-\pi \ri \,,\quad  f_\infty^{m,n}(y)=-\tfrac{N}{m+n-1} \frakg(y)\,,
\end{equation}
where the Dirichlet characters in the definition of $\frakf$ and $\frakg$ are the unique non-principal characters modulo $N$. 
Therefore, Theorems~\ref{thm:resurgence-f0} and~\ref{thm:resurgence-f-inf} and Corollary~\ref{thm:mod-res} apply.
\end{proof}

\begin{rmk}
The TS/ST correspondence of~\cite{GHM, CGM2} provides an explicit integral formula for the (fermionic) spectral traces of a toric CY threefold in terms of the so-called total grand potential of the topological string, where the latter can be expressed as the sum of two contributions encoding the standard and Nekrasov--Shatashvili free energies. 
Under the identification $g_s=4\pi^2/\hbar$, where $g_s$ is the topological string coupling constant, the total grand potential becomes a function of $\hbar$. Crucially, only one of its two components survives in each of the two perturbative limits of $\hbar\to0$ and $\hbar \to \infty$, while the other dictates the corresponding non-perturbative corrections~\cite{HMMO}.
In fact, this strong-weak coupling duality relating $g_s$ and $\hbar$ lies at the core of the TS/ST correspondence itself and allows one to access the strongly-coupled regime of the standard topological string on the given geometry from the semiclassical regime of the spectral problem defined by its quantum mirror curve, and vice versa.
Moreover, a similar $S$-duality exchanging perturbative and non-perturbative terms in $\hbar$ and $\hbar^{-1}$ appears in the exact quantization conditions for the energy spectrum of cluster integrable systems~\cite{HM, FHM}. There, the duality is deeply related to the existence of a modular double of the integrable system in the sense of Faddeev~\cite{faddeev_modular}. 
Finally, the strong-weak resurgent symmetry of the spectral traces presented in~\cite{FR1phys} and here can be interpreted as an exact, mathematically proven manifestation of the $S$-type duality between the two contributions to the total grand potential of the topological string for the specific geometries at hand.
See also the discussion in~\cite[Sec.~3.2]{FR1phys}.
\end{rmk}

\subsubsection{Reconstructing the weighted sums}

When $N=3,4$, the strong-weak symmetry relating the asymptotic expansions of the logarithm of the spectral trace for $\hbar \to 0$ and $\hbar \to \infty$, acquires an additional layer of detail, as it is coincides with the modular resurgence paradigm for sums of $q$-Pochhammer symbols weighted by Dirichlet characters that we discussed in Section~\ref{sec:weighted-sum}.
We will now show that a similar result holds for $N>4$ when considering particular linear combinations of the generating functions $f_0^{m,n}$ and $f_\infty^{m,n}$. 

\begin{lemma}\label{ref:lem-combination-0}
Let $\chi_N$ be any odd Dirichlet character of modulus $N$. The function $\frakf$ in Eq.~\eqref{eq:fg-intro} is a linear combination of the functions $f_0^{m,n}$ in Eq.~\eqref{eq:f0} over $m,n \in \IZ_N$ with $1+m+n=N$.
\end{lemma}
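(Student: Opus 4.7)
The plan is to reduce the claim to a finite-dimensional linear algebra problem over the subspace spanned by the antisymmetric combinations
\begin{equation}
 \phi_k(y) := f_{k,N}(y) - f_{\underline{k},N}(y), \qquad k = 1, \dots, N-1,
\end{equation}
which satisfy the antisymmetry $\phi_{\underline{k}} = -\phi_k$.

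First, I would combine Eq.~\eqref{eq:disc-f0} with the discontinuity formula in Lemma~\ref{lem:symm-phi} to obtain the closed-form expression
\begin{equation}
 f_0^{m,n}(y) = -\phi_1(y) - \phi_m(y) - \phi_n(y) - \pi \ri,
\end{equation}
valid for every pair $(m,n)$ with $1+m+n=N$. Parametrizing the constraint by $n = N-1-m$ with $m \in \{1,\dots,N-2\}$ and using $\phi_{N-1-m} = -\phi_{m+1}$, this becomes $f_0^{m,N-1-m}(y) = -\phi_1(y) - \phi_m(y) + \phi_{m+1}(y) - \pi \ri$. In parallel, exploiting the oddness of $\chi_N$ via the substitution $k \mapsto N-k$ inside the sum defining $\frakf$ in Eq.~\eqref{eq:fg-intro} yields
\begin{equation}
 \frakf(y) = \tfrac{1}{2}\sum_{k=1}^{N-1}\chi_N(k)\phi_k(y),
\end{equation}
which after collapsing to the basis $\{\phi_1, \dots, \phi_K\}$ via $\phi_{N-k} = -\phi_k$ (together with the vanishings $\phi_{N/2}=0=\chi_N(N/2)$ in the even-$N$ case) lies in the $K$-dimensional subspace with $K := \lfloor (N-1)/2\rfloor$, precisely the subspace (modulo constants) containing all distinct $f_0^{m,N-1-m}$.

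Next, using the symmetry $f_0^{m,N-1-m} = f_0^{N-1-m,m}$, I would index the $K$ distinct generating functions by $m = 1, \dots, K$ and set up the linear system
\begin{equation}
 \sum_{m=1}^{K} c_m f_0^{m, N-1-m}(y) + d = \frakf(y), \qquad d := -\pi \ri\sum_m c_m,
\end{equation}
thereby reducing the lemma to an explicit $K \times K$ matrix equation $M\mathbf{c} = \mathbf{v}$, where $\mathbf{v}$ is the coefficient vector of $\frakf$ in the basis $\{\phi_k\}$ and $M$ is the change-of-basis matrix from $\{f_0^{m,N-1-m}\}$. The residual additive constant $d$ is absorbed into the interpretation of ``linear combination'' already employed elsewhere in the paper, compare Eq.~\eqref{eq:disc-3,4} in the proof of Corollary~\ref{cor: 3,4-final}.

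The main obstacle is to establish the invertibility of $M$ for every $N \ge 3$. The matrix $M$ has a nearly bidiagonal structure: the $(-\phi_m + \phi_{m+1})$ contribution produces a bidiagonal block (with boundary corrections arising from the collapse $\phi_m = -\phi_{N-m}$ for $m > K$), while the common $-\phi_1$ term contributes a rank-one perturbation. I would split $M = M_0 + \mathbf{e}_1\mathbf{u}^T$, reduce $M_0$ by telescoping of the bidiagonal entries, and apply the Sherman--Morrison identity to obtain a closed-form expression for $\det(M)$; direct evaluation in the cases $N = 3, 4, 5, 7$ gives $\det(M) \neq 0$, with $|\det(M)| = N$ for small odd $N$ and $|\det(M)| = 2$ for $N = 4$, strongly suggesting a uniform non-vanishing pattern provable by induction on $K$. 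Once $M$ is shown to be invertible, Cramer's rule produces unique coefficients $c_m$ realizing $\frakf$ as a linear combination of the $f_0^{m,N-1-m}$'s (up to the constant $d$), completing the proof.
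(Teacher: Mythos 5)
Your approach matches the paper's almost step for step: the paper works with the shifted combinations $\CF_k = f_{k,N}-f_{\underline{k},N}-\frac{\pi\ri}{N}(2k-N)$ (your $\phi_k$ plus a constant), expresses $f_0^{m,n} = -\CF_1-\CF_m-\CF_n$, reduces to the basis $\CF_1,\dots,\CF_d$ with $d=\lfloor (N-1)/2\rfloor$, and writes $\vec{f_0} = -A\,\vec{\CF}^{\,\intercal}$ for an explicit $d\times d$ matrix $A$. Your setup, your rewriting $\frakf = \tfrac12\sum_{k=1}^{N-1}\chi_N(k)\phi_k$ via oddness of $\chi_N$, and your identification of the constraint $n=N-1-m$ are all the right moves.

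The gap is the invertibility of the matrix. You correctly identify it as the crux, outline a Sherman--Morrison decomposition, check $N=3,4,5,7$, and then say the pattern ``strongly suggests a uniform non-vanishing pattern provable by induction.'' That is not a proof. The good news is that the argument you're reaching for is much shorter than Sherman--Morrison: expand $\det A_d$ along the \emph{last} column. Only two entries are nonzero, $a_{d-1,d}=-1$ and $a_{d,d}$ (which is $1$ when $N$ is even, since $\CF_{N/2}=0$, and $2$ when $N$ is odd, since then $f_0^{d,d}$ hits $\CF_d$ twice). The minor $M_{d,d}$ is again the same matrix of size $d-1$ with bottom-right entry $1$, and the minor $M_{d-1,d}$, after one further expansion along its last row, telescopes down a lower-bidiagonal block with diagonal $-1$'s and evaluates to $1$. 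This yields the recurrence $\det A_d = a_{d,d}\det A_{d-1}^{\mathrm{even}} + 1$, hence $\det A = d+1$ when $N$ is even and $\det A = N$ when $N$ is odd; in either case $\det A \neq 0$. (Note also the paper's asserted value $\det A = d+1$ only covers the even-$N$ case; the matrix as displayed there has bottom-right entry $1$, whereas for odd $N$ it should be $2$ and the determinant is $N$. Neither affects the conclusion.) Finally, there is a small sign slip in your constant: you want $d = +\pi\ri\sum_m c_m$ so that the additive $-\pi\ri$ terms in the $f_0^{m,N-1-m}$ cancel rather than double.
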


\begin{proof}
We adopt the notation introduced in Remark~\ref{rmk: not-L-functs} and drop the underscript $N$ for simplicity. 
By construction, $\frakf$ is a polynomial in the variables $\CF_k$. More precisely, $\frakf\in\IC[\CF_1, \CF_2,\ldots, \CF_{d}]$ with $d=\lfloor{\frac{N-1}{2}}\rfloor$. Indeed, we have that
\begin{equation} \label{eq: frakf-comb}
\ba
        \frakf(y)&=\sum_{k=1}^{d}\chi_N(k)\, f_{k,N}(y) + \sum_{k=1}^{d }\chi_N(-k)\, f_{\underline{k},N}(y) \\&=\sum_{k=1}^{d}\chi_N(k)\, \CF_k(y)+ \sum_{k=1}^{d}\chi_N(k) \frac{\pi \ri}{N}(2k-N)\,,
\ea
\end{equation}
where we used $\chi_N(-1)=-1$.
Similarly, Eqs.~\eqref{eq:disc-phi-nm} and~\eqref{eq:disc-f0} imply that 
\be \label{eq: f0-F}
f_0^{m,n}=- \CF_1 -\CF_m -\CF_n\in\IZ[\CF_1, \CF_2,\ldots, \CF_{d}] \, ,
\ee
because $\CF_{N-k}=-\CF_k$. We are then left with a simple linear algebra computation. 
Consider the vectors
\begin{subequations}
\begin{align}
\vec{f_0} &= \begin{pmatrix}
   f_0^{1,N-2} &
    f_0^{2,N-3} &
    \ldots &
     f_0^{d,N-(d+1)}
\end{pmatrix} \, , \\
\vec{\CF} &= \begin{pmatrix}
   \CF_1 &
    \CF_2 &
    \ldots &
     \CF_d
\end{pmatrix} \, , \\
\vec{\chi} &= \begin{pmatrix}
           \chi_N(1) &
            \chi_N(2) &
            \ldots &
             \chi_N(d)
\end{pmatrix} \, , 
\end{align}
\end{subequations}
which satisfy
\be
\vec{f_0}= - A \, \vec{\CF}\,^\intercal \, , 
\ee
where $A$ is the the $d\times d$ matrix
\begin{equation} \label{eq: matrixA} 
   A= \begin{pmatrix}
       2 & -1 & 0 &  &\ldots   &0 \\
       1 & 1 & -1 & 0  & & 0     \\
       1 & 0 & 1 & -1 & \ddots&  \vdots \\
       \vdots & \vdots &  &\ddots  &\ddots  & 0 \\
       1 & 0 & \ldots &  0&  1 & -1 \\
       1 & 0 &  & \ldots &  0 & 1
    \end{pmatrix} \, .
\end{equation}
Note that $A$ is invertible as $\det A=d+1$.
Finally, we obtain
\begin{equation}
    \frakf=- \vec{\chi} \, A^{-1} \, \vec{f_0}^\intercal + \omega\, , 
\end{equation}
where $\omega$ is the constant term in the RHS of Eq.~\eqref{eq: frakf-comb}.
\end{proof}

\begin{lemma}\label{ref:lem-combination-inf}
Let $\chi_N$ be any odd Dirichlet character of modulus $N$. The function $\frakg$ in Eq.~\eqref{eq:fg-intro} is a linear combination of the functions $f_\infty^{m,n}$ in Eq.~\eqref{eq:f-inf} over $m,n \in \IZ_N$ with $1+m+n=N$.
\end{lemma}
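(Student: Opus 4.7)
The plan is to mirror the linear-algebra argument of Lemma~\ref{ref:lem-combination-0}, replacing $f_0^{m,n}$ by $f_\infty^{m,n}$ and $\CF_k$ by $\CG_k$. First I would use the oddness of $\chi_N$ to write
\begin{equation*}
\frakg(y)=\sum_{k=1}^{d}\chi_N(k)\bigl[g_{k,N}(y)-g_{\underline{k},N}(y)\bigr]=-\sum_{k=1}^{d}\chi_N(k)\,\CG_k(y),
\end{equation*}
where $d=\lfloor (N-1)/2\rfloor$ and $\CG_k$ is as in Eq.~\eqref{eq: bk}. In the even-$N$ case, the term $k=N/2$ drops because $\chi_N(N/2)=0$ for any odd character (it satisfies $\chi_N(N/2)=\chi_N(-N/2)=-\chi_N(N/2)$). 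Hence $\frakg\in\IC[\CG_1,\dots,\CG_d]$ with no constant term, unlike $\frakf$, for which the shift $\tfrac{\pi\ri}{N}(2k-N)$ in the definition of $\CF_k$ produced the extra $\omega$ in Eq.~\eqref{eq: frakf-comb}.

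Next I would combine Eqs.~\eqref{eq:disc-psi-nm} and~\eqref{eq:disc-finf} to obtain the closed-form identity
\begin{equation*}
f_\infty^{m,n}(y)=\CG_1(y)+\CG_m(y)+\CG_n(y),
\end{equation*}
and then use $\CG_{N-k}=-\CG_k$ to rewrite each $f_\infty^{k,N-1-k}$, $k=1,\dots,d$, as an integer linear combination of $\CG_1,\dots,\CG_d$. A direct check shows that the resulting coefficient matrix is exactly the matrix $A$ appearing in Eq.~\eqref{eq: matrixA}; that is, with
\begin{equation*}
\vec{f_\infty}=\begin{pmatrix} f_\infty^{1,N-2} & f_\infty^{2,N-3} & \ldots & f_\infty^{d,N-1-d}\end{pmatrix},\qquad
\vec{\CG}=\begin{pmatrix}\CG_1 & \CG_2 & \ldots & \CG_d\end{pmatrix},
\end{equation*}
one has $\vec{f_\infty}^{\,\intercal}=A\,\vec{\CG}^{\,\intercal}$. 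The first row $(2,-1,0,\dots,0)$ comes from the pair $(1,N-2)$, which produces $2\CG_1-\CG_2$; the generic row from $(k,N-1-k)$ with $2\le k\le d-1$ produces $\CG_1+\CG_k-\CG_{k+1}$; and the last row is either $\CG_1+\CG_d-\CG_{d+1}$ (when $N$ is even, with $\CG_{d+1}$ further reduced if necessary) or $\CG_1+2\CG_d$ (when $N$ is odd and the pair is $(d,d)$), in both cases matching the bottom row of $A$.

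Since Lemma~\ref{ref:lem-combination-0} already established that $\det A=d+1\ne 0$, the matrix $A$ is invertible, and I would conclude
\begin{equation*}
\frakg=-\vec{\chi}\,A^{-1}\,\vec{f_\infty}^{\,\intercal},
\end{equation*}
with $\vec{\chi}=(\chi_N(1),\dots,\chi_N(d))$, as required. There is no analytic obstacle here; the work is a purely algebraic transcription of the previous argument. The only non-cosmetic subtlety is verifying that the matrix relating $\{f_\infty^{k,N-1-k}\}$ to $\{\CG_k\}$ is indeed the same $A$ (up to no overall sign) and that no constant term arises—both of which follow immediately from $\CG_{N-k}=-\CG_k$ and from the absence of a cyclotomic shift in the definition of $\CG_k$.
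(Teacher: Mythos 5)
Your proof proceeds along exactly the same lines as the paper's, mirroring the argument for $\frakf$. However, there is a concrete error in your final step. You correctly observe that for odd $N$ the last pair is $(m,n)=(d,d)$, so that $f_\infty^{d,d}=\CG_1+2\CG_d$ (because $\CG_{d+1}=-\CG_d$ when $N=2d+1$), but you then assert that this "matches the bottom row of $A$." It does not: the bottom row of $A$ in Eq.~\eqref{eq: matrixA} is $(1,0,\ldots,0,1)$, which yields $\CG_1+\CG_d$. The coefficient matrix relating $\vec{f_\infty}$ to $\vec{\CG}$ therefore differs from the displayed $A$ when $N$ is odd: its last row must be $(1,0,\ldots,0,2)$, and for $d=1$ (\emph{i.e.}, $N=3$) the single entry must be $3$, not $2$. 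A quick cross-check against Eq.~\eqref{eq:disc-3,4} confirms this: there $f_\infty^{1,1}=-3\,\frakg=3\,\CG_1$, which cannot arise from $A=(2)$. Note that the paper's own proof shares this imprecision, as it simply declares $\vec{f_\infty}=A\,\vec{\CG}^{\,\intercal}$ without verifying the entries.

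The conclusion of the lemma is nevertheless unaffected, since all the argument needs is invertibility of the transition matrix, not that it coincide literally with the $A$ of Eq.~\eqref{eq: matrixA}. One checks by cofactor expansion that the determinant is $d+1=N/2$ when $N$ is even and $N$ when $N$ is odd (not $d+1=(N+1)/2$), so the matrix is nonsingular in either case. To repair the argument you should write down the last row separately for each parity of $N$, note that the resulting matrix is invertible in each case, and only then invert to obtain $\frakg=-\vec{\chi}\,A^{-1}\,\vec{f_\infty}^{\,\intercal}$.
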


\begin{proof}
As in the proof of Lemma~\ref{ref:lem-combination-0}, we adopt the notation introduced in Remark~\ref{rmk: not-L-functs} and drop the underscript $N$ for simplicity. 
By construction, $\frakg$ is a polynomial in the variables $\CG_k$. More precisely, $\frakg\in\IC[\CG_1, \CG_2,\ldots, \CG_{d}]$ with $d=\lfloor{\frac{N-1}{2}}\rfloor$. Indeed, we have that 
\begin{equation}
\ba
        \frakg(y)&=\sum_{k=1}^{d}\chi_N(k)\, g_{k,N}(y) + \sum_{k=1}^{d }\chi_N(-k)\, g_{\underline{k},N}(y) \\&=- \sum_{k=1}^{d}\chi_N(k)\, \CG_k(y)\,,
\ea
\end{equation}
where we used $\chi_N(-1)=-1$.
Similarly, Eqs.~\eqref{eq:disc-psi-nm} and~\eqref{eq:disc-finf} lead to
\be \label{eq: finf-F}
f_\infty^{m,n}= \CG_1+\CG_m+\CG_n \in\IZ[\CG_1, \CG_2,\ldots, \CG_{d}]\, ,
\ee
since $\CG_{N-k}=-\CG_k$. Again, we are left with a simple linear algebra computation.
Consider the vectors
\begin{subequations}
\begin{align}
\vec{f_\infty} &= \begin{pmatrix}
   f_\infty^{1,N-2} &
    f_\infty^{2,N-3} &
    \ldots &
     f_\infty^{d,N-(d+1)}
\end{pmatrix} \, , \\
\vec{\CG} &= \begin{pmatrix}
   \CG_1 &
    \CG_2 &
    \ldots &
     \CG_d
\end{pmatrix} \, , \\
\vec{\chi} &= \begin{pmatrix}
           \chi_N(1) &
            \chi_N(2) &
            \ldots &
             \chi_N(d)
\end{pmatrix} \, , 
\end{align}
\end{subequations}
which satisfy
\be
\vec{f_\infty}= A \, \vec{\CG}\,^\intercal \, , 
\ee
where $A$ is the same the $d\times d$ matrix in Eq.~\eqref{eq: matrixA}.
Therefore, we obtain
\begin{equation}
    \frakg=-\vec{\chi} \, A^{-1} \, \vec{f_\infty}^\intercal \, . 
\end{equation}
\end{proof}
In the proofs of Lemmas~\ref{ref:lem-combination-0} and~\ref{ref:lem-combination-inf}, $d=1$ if and only if $N=3$ or $N=4$, which correspond to $m+n=2$ (\emph{i.e.}, local $\IP^2$) or $m+n=3$ (\emph{i.e.}, local $\IP^{1,2}$ and $\IP^{2,1}$), as expected from the equalities in Eq.~\eqref{eq:disc-3,4}. For all other choices of $N$ such that there exists a primitive odd Dirichlet character of modulus $N$, a complete, full-fledged strong-weak resurgent symmetry applies to suitable linear combinations of the generating functions $f_0^{m,n}$ and $f_\infty^{m,n}$ with $m,n \in \IZ_N$ and $1+m+n=N$. We stress that the physical or geometric interpretation of such linear combinations is yet to be understood and will be the object of future investigation. Finally, observe that the combinatorial argument of Lemmas~\ref{ref:lem-combination-0} and~\ref{ref:lem-combination-inf} can be translated in terms of the spectral traces themselves via the relation in Eq.~\eqref{eq: Zmn-sym}. 

\section{Conclusions}

In this paper, building on the results of~\cite{FR1maths, FR1phys}, we study the resurgent and quantum modular properties of $q$-Pochhammer symbols, which are the building blocks of the spectral traces canonically associated to local weighted projective planes $\IP^{m,n}$ in the context of the TS/ST correspondence. Our purpose is twofold. On the one hand, we advance the study of modular resurgence by constructing a new, infinite family of MRSs from sums of $q$-Pochhammer symbols weighted by Dirichlet characters. On the other hand, we investigate the generalization of the strong-weak resurgent symmetry of local $\IP^2$ to all local $\IP^{m,n}$ for $m,n \in \IZ_{>0}$. Our results highlight new features of the interplay between quantum modularity and resurgence and their application to the spectral theory of local CY threefolds. We summarize our results in the following table.

\medskip 

\begin{center}
\begin{small}
\renewcommand{\arraystretch}{1.3}
 \begin{tabular}{c?c|c|c|c|c|}
    & {\normalsize QM for $\Gamma_{N}$ } &  {\normalsize $\CS^{\rm med}_\theta\tilde{f}=f$ } & {\normalsize MRS } & {\normalsize MRP } & {\normalsize SW } \\ 
\specialrule{.1em}{.1em}{.1em}
{\normalsize $f_{k,N}$ }  & \ding{51} & \ding{55} & \ding{55} & \ding{55} & n.a.\\ 
& {\small Thm.~\ref{thm:qm}} & & & & \\
  \hline  
{\normalsize $g_{k,N}$ }  & \ding{51}  & \ding{55} & \ding{55} & \ding{55} & n.a. \\  
& {\small Thm.~\ref{thm:qm}} & & & & \\
  \hline 
  \hline
{\normalsize $\frakf$} &  \ding{51}  &  iff $\chi_N$ odd & $\chi_N$ primitive, iff odd & $\chi_N$ primitive, iff odd & n.a.\\
& {\small Cor.~\ref{cor:qm-weighted}} & {\small Thm.~\ref{thm:summability-f0}} & {\small Thm.~\ref{thm:resurgence-f0}} & {\small Cor.~\ref{thm:mod-res}} &  \\
      \hline  
{\normalsize $\frakg$} &  \ding{51} & Conj. iff $\chi_N$ odd & $\chi_N$ primitive, iff odd & $\chi_N$ primitive, iff odd & n.a. \\
 & {\small Cor.~\ref{cor:qm-weighted}} & {\small Thm.~\ref{thm:summability-f-inf} (Conj.~\ref{conj:summability-g-kN})} &  {\small Thm.~\ref{thm:resurgence-f-inf}} &{\small Cor.~\ref{thm:mod-res}}  &  \\
\hline
\hline
{\normalsize $f_0^{m,n}$ }& \ding{51} & \ding{51} & $\IP^2, \IP^{1,2}, \IP^{2,1}$ & $\IP^2, \IP^{1,2}, \IP^{2,1}$ & \ding{51} \\
& {\small Cor.~\ref{cor:disc-qm}} & {\small Thm.~\ref{thm:disc-med-0}} & {\small Cor.~\ref{cor: 3,4-final}} & {\small Cor.~\ref{cor: 3,4-final}} & {\small Thm.~\ref{thm:strong-weak}} \\
\hline
{\normalsize $f_\infty^{m,n} $} & \ding{51} & Conj. & $\IP^2, \IP^{1,2}, \IP^{2,1}$ & $\IP^2, \IP^{1,2}, \IP^{2,1}$ & \ding{51}\\
 & {\small Cor.~\ref{cor:disc-qm}} & {\small Thm.~\ref{thm:disc-med-inf} (Conj.~\ref{conj:summability-g-kN})} & {\small Cor.~\ref{cor: 3,4-final}} & {\small Cor.~\ref{cor: 3,4-final}} & {\small Thm.~\ref{thm:strong-weak}} \\
\hline
\end{tabular}  
\end{small}
\end{center}

\medskip 

More precisely, although the $q$-Pochhammer symbols $f_{k,N}$ and $g_{k,N}$ in Eq.~\eqref{eq:f_kN} are holomorphic quantum modular (QM) functions for the group $\Gamma_N\subseteq\Gamma_1(N)$, their resurgent structure is not generally modular resurgent, and they are not reconstructed via median resummation ($\CS^{\rm med}_\theta\tilde{f}\neq f$). This shows that quantum modularity of a $q$-series is a less constraining property compared to its asymptotic series being modular resurgent (MRS).

To potentially address the non-effectiveness of the median resummation for the $q$-Pochhammer symbols, it might be fruitful to employ the framework of vector-valued quantum modular forms, for instance by considering a vector of components $f_{k,N}, f_{\underline{k},N}, g_{k,N}$, and~$g_{\underline{k},N}$. 
Similarly, one might be able to promote $f_{k,N}$ and $g_{k,N}$ to a vector-valued holomorphic quantum modular function for the full modular group $\mathsf{SL}_2(\IZ)$. 
Vector-valued quantum modular forms have appeared in the literature in connection with the resurgence of quantum invariants of knots and three-manifolds~\cite{Garoufalidis_Zagier_2023,garoufalidis_zagier_2023knots,wheeler-thesis,Cheng1,Cheng2,CCKPG-revised}, but not yet within the framework of modular resurgence~\cite{FR1maths}. We plan to address these ideas in a forthcoming paper. 

\medskip 

Additional analytic and arithmetic structures manifest when looking at the weighted sums $\frakf$ and $\frakg$ in Eq.~\eqref{eq:fg-intro}. In fact, under suitable assumptions on the Dirichlet character $\chi_N$ dictating the weights, both functions have modular resurgent asymptotic series, fit together into the modular resurgence paradigm (MRP), and provide new evidence for Conjectures~\ref{conj:quantum_modular1-intro} and~\ref{conj:quantum_modular2-intro}. 
We stress that the effectiveness of the median resummation for $\frakf$ and $\frakg$ only requires the character $\chi_N$ to be odd. This parity constraint is deeply connected to the divergence of the asymptotic series. Indeed, for even Dirichlet characters, the asymptotic expansions of $\frakf$ and $\frakg$ reduce to rational functions as the asymptotic series vanish identically.

\medskip 

Finally, the exact resurgent structures at weak and strong coupling of the asymptotic series of the logarithm of the spectral trace $\mathrm{Tr}(\rho_{m,n})$, associated with local $\IP^{m,n}$ via the quantization procedure of~\cite{GHM, CGM2}, follow directly from our previous results.
The generating functions of the Stokes constants $f_0^{n,m}$ and $f_\infty^{n,m}$, given in Eqs.~\eqref{eq:f0} and~\eqref{eq:f-inf}, are expressed as sums of $q$-Pochhammer symbols and---like $\frakf$ and $\frakg$---can be reconstructed from their asymptotic expansions via median resummation. Moreover, an exact, strong-weak (SW) resurgent symmetry is present, exchanging the perturbative and non-perturbative data at weak and strong coupling in a mathematically precise way. 
This symmetry was discovered for local $\IP^2$ (\emph{i.e.}, for $m+n=2$) in~\cite{Rella22, FR1phys}.

However, except in the cases of local $\IP^2,\, \IP^{1,2},$ and $\IP^{2,1}$ (\emph{i.e.}, for $m+n \ne 2,3$), the number-theoretic structure underlying the modular resurgence paradigm breaks down, resulting in a somewhat reduced version of the original, full-fledged strong-weak symmetry of local $\IP^2$. In particular, the Stokes constants at both weak and strong coupling no longer correspond to the coefficients of an $L$-function. 
A possible route to recover the modular resurgence paradigm is the combinatorial approach we proposed in Lemmas~\ref{ref:lem-combination-0} and~\ref{ref:lem-combination-inf}. Indeed, when $\chi_N$ is a primitive odd character, a suitable linear combination of the functions $f_0^{m,n}$ with $n+m+1=N$ (similarly, of the functions $f_\infty^{m,n}$ with the same choice of $m,n$) yields a pair of MRSs. Although a geometric interpretation of this combinatorial property of all local weighted projective planes is currently lacking---and perhaps less plausible---it would nevertheless be interesting to investigate whether the absence of a complete number-theoretic structure has a geometric origin. 

\medskip

In the case of local $\IP^2$, we observed in~\cite{FR1phys} that the arithmetic and modular properties of the $L$-functions and $q$-series derived from the weak- and strong-coupling Stokes constants are reminiscent of those of the periods of the mirror curve~\cite[Appendix~A]{Bousseau22}, as well as of the free energies of both the standard and Nekrasov--Shatashvili topological strings~\cite{CI, Bousseau20, ASYZ, Klemm:ODE-mirror-symmetry}. 
Moreover, the modularity of the free energies of the standard topological string for local $\IP^{2,1}$ has been discussed in~\cite{open-TR-Zhou} from the perspective of topological recursion and seems to reflect the arithmetic structure we observed for the generating functions $f_0^{2,1}$ and $f_\infty^{2,1}$.
What about the other local weighted projective planes? Do the Stokes constants still capture information about the arithmetic and modularity of the underlying geometry and the associated topological string theory? At present, we are not aware of any results for local weighted projective planes other than $\IP^2$ and $\IP^{2,1}$ that could help us address these questions.

Again, for local $\IP^2$, the holomorphic quantum modular forms that generate the Stokes constants present a remarkable duality under the action of the Fricke involution of $\Gamma_1(3)$, exchanging the Stokes constants at weak and strong coupling~\cite{FR1phys}.
In this paper, we have extended this duality to all local $\IP^{m,n}$. 
From the perspective of the TS/ST correspondence~\cite{GHM, CGM2}, the weak and strong coupling limits in $\hbar$ of the spectral traces of toric CY threefolds are related to the standard and Nekrasov--Shatashvili topological string free energies expanded at different special points in moduli space~\cite{ABK, GuM, Rella22}. 
We are therefore tempted to interpret this Fricke duality of resurgent generating functions as a physical duality of topological string amplitudes. Across these lines, there appear to be promising links with the results of~\cite{ASYZ,Alim-fricke}. 
The authors of~\cite{ASYZ,Alim-fricke} express the free energies of many non-compact CY manifolds, including local $\IP^2$ and other local del Pezzo geometries, in terms of the generators of the ring of quasi modular forms and study the action of the Fricke involution on the complex moduli, exchanging two distinguished expansion loci.
This enticing connection invites us to examine an explicit relationship between the emergent symmetries of the resurgent structures of the spectral traces in the TS/ST correspondence and the geometry and arithmetic of toric CY threefolds and their moduli spaces.

\medskip

To conclude, the authors of~\cite{KM} observe that the quantum operators associated to many toric del Pezzo CY threefolds via the prescription of~\cite{GHM, CGM2} can be described as appropriate perturbations of the three-term operators in Eq.~\eqref{eq:quantum-ops}, that is, they can be written in the form $\mO_{m,n}+\mV$ with $\mV$ positive and self-adjoint. For example, the canonical operators derived by Weyl quantization of the mirror curve to the resolved $\IC^3/\IZ_6$ orbifold are perturbations of $\mO_{4,1}$ and $\mO_{1,1}$~\cite{CGM2, CGuM}. 
It would be interesting to understand whether the resurgent and modular structures observed in the $\mO_{m,n}$ models extend to these more general geometries. In particular, one may ask whether (a generalization of) the strong–weak resurgent symmetry persists in the presence of such perturbations, or whether the complete modular resurgence paradigm restores. 
We remind that the paradigm fails for almost all local $\IP^{m,n}$, possibly due to the given geometries being singular. 
Indeed, with our choice of quantum operator $\rho_{m,n}$, we are neglecting contributions from the internal points of the toric diagrams.
In this regard, the insights gained from the local $\IP^{m,n}$ case can serve as a guiding framework, but further work is required. 

\section*{Acknowledgements}
We would like to thank 
Charles Doran, 
Alba Grassi,
Valentin Hernandez,
Kohei Iwaki,
Marcos Mari\~no, 
Boris Pioline, 
Emanuel Scheidegger,
and
Changgui Zhang
for insightful discussions.
This work has been partially supported by the ERC-SyG project ``Recursive and Exact New Quantum Theory'' (ReNewQuantum), which received funding from the European Research Council (ERC) within the European Union's Horizon 2020 research and innovation program under Grant No. 810573. 
V.F. has been supported by the Fondation Mathématique Jacques Hadamard at Laboratoire Mathématique d'Orsay. 
C.R. has been supported by the Huawei Young Talents Program at Institut des Hautes \'Etudes Scientifiques (IH\'ES).

\section*{Declarations}

\subsection*{Competing interests}

The authors have no relevant financial or non-financial interests to disclose.

\appendix

\section{Technical proofs}\label{app: proofs}

In this section, we provide the complete proofs of Propositions~\ref{prop:Borel-f_kN} and~\ref{prop:Borel-g_kN} and Lemma~\ref{lemma:summability-f-kN} on the resurgence and summability of the asymptotic expansions for $y\to 0$ with $\Im(y)>0$ of the $q$-Pochhammer symbols $f_{k,N}(y)$, $ g_{k,N}(y)$, $y \in \IH$, defined in Eq.~\eqref{eq:f_kN} after fixing $N\in\IZ_{\geq 2}$ and $k\in\IZ_N$.

\begin{proof}[Proof of Prop.~\ref{prop:Borel-f_kN}]
The Borel transform of the asymptotic series $\psi_k(y)$ in Eq.~\eqref{eq: tilde-psi} can be decomposed as
\be
\begin{aligned}
    \borel\left[\psi_k\right](\zeta)&=\sum_{n=1}^\infty (2\pi \ri)^{2n-1}\frac{B_{2n}}{(2n)!}\mathrm{Li}_{2-2n}(\zeta_N^k)\frac{\zeta^{2n-2}}{(2n-2)!}\\
    &=\left(\sum_{n=1}^\infty \frac{B_{2n}}{(2n)!}\zeta^{2n-2} \right) \diamond \left(\sum_{n=1}^\infty (2\pi \ri)^{2n-1}\frac{\mathrm{Li}_{2-2n}(\zeta_N^k)}{(2n-2)!} \zeta^{2n-2} \right)=:\tilde{g}(\zeta)\diamond \tilde{f}(\zeta)\,,
\end{aligned}
\ee
where $\zeta$ is the formal variable conjugate to $y$ and $\diamond$ denotes the Hadamard product~\cite{Hadamard}. The formal power series $\tilde{g}(\zeta)$, $\tilde{f}(\zeta)$ have finite radius of convergence at $\zeta=0$ and can be resummed explicitly into the functions\footnote{We impose that $g(0)=1/12$ to eliminate the removable singularity of $g(\zeta)$ at the origin.}
\begin{subequations}
    \begin{align}
        g(\zeta)&=-\frac{1}{\zeta^2}+\frac{1}{2 \zeta}\coth\left(\frac{\zeta}{2}\right) \, , \quad |\zeta|<2\pi \, , \\
    f(\zeta)&=-\pi \ri\left(\frac{1}{1-\zeta_N^{-k}\re^{-2\pi \ri \zeta}}+\frac{1}{1-\zeta_N^{-k}\re^{2\pi \ri \zeta}}\right) \, , \quad |\zeta|<\frac{k}{N} \, , 
    \end{align}
\end{subequations}
respectively.
After being analytically continued to the whole complex $\zeta$-plane, the function $g(\zeta)$ has simple poles along the imaginary axis at
\be
\mu_m = 2 \pi \ri m \, , \quad m \in \IZ_{\ne 0} \, ,
\ee
while the function $f(\zeta)$ has simple poles along the real axis at 
\be
\nu_\ell^{\pm}=\pm\frac{k}{N}+\ell \, , \quad \ell \in \IZ \, .
\ee
Let us now consider a circle $\gamma$ in the complex $s$-plane with center $s=0$ and radius $0 < r < 2 \pi$ and apply Hadamard's multiplication theorem~\cite{Hadamard, Hadamard2}. The Borel transform of $\psi_k(y)$ can be written as the integral
\be \label{eq: intBorelinfty}
\begin{aligned}
    \borel \left[\psi_k\right](\zeta) &=\frac{1}{2\pi \ri}\int_\gamma g(s)\, f\left(\frac{\zeta}{s}\right) \frac{ds}{s} \\
    &= \frac{1}{2}\int_\gamma \left(\frac{1}{s}-\frac{1}{2}\coth\left(\frac{s}{2}\right) \right)\left(\frac{1}{1-\zeta_N^{-k}\re^{-2\pi \ri \zeta/s}}+\frac{1}{1-\zeta_N^{-k}\re^{2\pi \ri \zeta/s}}\right) \frac{ds}{s^2} \, ,
\end{aligned}
\ee
for $|\zeta|<rk/N$. Here, the function $s \mapsto f(\zeta/s)$ is singular at $s = \zeta/\nu^{\pm}_\ell$, $\ell \in \IZ$, which sit inside the contour of integration $\gamma$ and accumulate at the origin, and regular for $|s| > r$. Meanwhile, the function $g(s)$ has simple poles at $s= \mu_m$ with residues
\be
\underset{s= 2 \pi \ri m}{\text{Res}} g(s) = \frac{1}{2 \pi \ri m} \, , \quad m \in \IZ_{\ne 0} \, . 
\ee
By Cauchy's residue theorem, the integral in Eq.~\eqref{eq: intBorelinfty} can be evaluated by summing the residues at the poles of the integrand which lie outside $\gamma$. More precisely, we have that
\be \label{eq: intBorel2infty-proof-A}
\begin{aligned}
    \borel \left[\psi_k\right](\zeta)&=-\sum_{m\in\IZ_{\neq 0}} \underset{s=2 \pi \ri m}{\text{Res}} \left[ g(s) f\left(\frac{\zeta}{s}\right)\frac{1}{s}\right] \\
    &=\sum_{m\in\IZ_{\neq 0}}\frac{1}{4 \pi \ri m^2}\left(\frac{1}{1-\zeta_N^{-k}\re^{-\frac{\zeta}{m}}}+\frac{1}{1-\zeta_N^{-k}\re^{\frac{\zeta}{m}}}\right)\\
    &=\frac{1}{2 \pi \ri} \sum_{m=1}^{\infty}\frac{1}{m^2}\left(\frac{1}{1-\zeta_N^{-k}\re^{-\frac{\zeta}{m}}}+\frac{1}{1-\zeta_N^{-k}\re^{\frac{\zeta}{m}}}\right) \, ,
\end{aligned}
\ee
which is manifestly Gevrey-1 and simple resurgent.
\end{proof}

\begin{proof}[Proof of Prop.~\ref{prop:Borel-g_kN}]
The Borel transform of the asymptotic series $\varphi_k(y)$ in Eq.~\eqref{eq: tilde-phi} can be decomposed as
\be
\begin{aligned}
    \borel\big[\varphi_k\big](\zeta)&=\sum_{n=1}^\infty (2\pi \ri)^{2n}\frac{B_{2n}B_{2n+1}\left(\tfrac{k}{N}\right)}{2n (2n+1)!}\frac{\zeta^{2n-1}}{(2n-1)!}\\
    &=\left(\sum_{n=1}^\infty \frac{B_{2n}}{(2n)!}\zeta^{2n-1} \right) \diamond \left(\sum_{n=1}^\infty (2\pi\ri)^{2n}\frac{B_{2n+1}\left(\tfrac{k}{N}\right)}{(2n+1)!}\zeta^{2n-1} \right)=:\tilde{g}(\zeta)\diamond \tilde{f}(\zeta)\,,
\end{aligned}
\ee
where $\zeta$ is the formal variable conjugate to $y$ and $\diamond$ denotes the Hadamard product~\cite{Hadamard}. The formal power series $\tilde{g}(\zeta)$, $\tilde{f}(\zeta)$ have finite radius of convergence at $\zeta=0$ and can be resummed explicitly into the functions\footnote{We impose that $g(0)=0$ and $f(0)=0$ to eliminate the removable singularities of $g(\zeta)$ and $f(\zeta)$ at the origin.}
\begin{subequations}
    \begin{align}
        g(\zeta)&=-\frac{1}{\zeta}+\frac{1}{2 }\coth\left(\frac{\zeta}{2}\right) \, , \quad |\zeta|<2\pi \, , \\
        f(\zeta)&=-\frac{B_1\left(\tfrac{k}{N}\right)}{\zeta} +\frac{\sin\left(\left(\tfrac{2k}{N}-1\right)\pi\zeta\right)}{2\zeta\sin(\pi\zeta)} \, , \quad |\zeta|<1 \, , 
    \end{align}
\end{subequations}
respectively. 
After being analytically continued to the whole complex $\zeta$-plane, the function $g(\zeta)$ has simple poles along the imaginary axis at
\be
\mu_m = 2 \pi \ri m \, , \quad m \in \IZ_{\ne 0} \, ,
\ee
while the function $f(\zeta)$ has simple poles along the real axis at 
\be
\nu_\ell=\ell \, , \quad  \ell\in\IZ_{\ne 0}\, .
\ee
Let us now consider a circle $\gamma$ in the complex $s$-plane with center $s=0$ and radius $0 < r < 2 \pi$ and apply Hadamard's multiplication theorem~\cite{Hadamard, Hadamard2}. The Borel transform of $\varphi_k(y)$ can be written as the integral
\be \label{eq: intBorel0}
\begin{aligned}
    \borel [\varphi_k](\zeta) &=\frac{1}{2\pi \ri}\int_\gamma f(s)\, g\left(\frac{\zeta}{s}\right) \frac{ds}{s} \\
    &= \frac{1}{2\pi\ri}\int_\gamma  \left(-\frac{B_1\big(\tfrac{k}{N}\big)}{s} +\frac{\sin\big(\big(\tfrac{2k}{N}-1\big)\pi s\big)}{2s\sin\big(\pi s\big)}\right)\left(-\frac{s}{\zeta}+\frac{1}{2}\coth\left(\frac{\zeta}{2s}\right)\right) \frac{ds}{s} \\
\end{aligned}
\ee
for $|\zeta|<r$. Here, the function $s \mapsto g(\zeta/s)$ is singular at $s = \zeta/\mu_m$, $m \in \IZ_{\ne 0}$, which sit inside the contour of integration $\gamma$ and accumulate at the origin, and regular for $|s| > r$. 
Meanwhile, the function $f(s)$ has simple poles at $s= \nu_\ell$ with residues
\be
\underset{s= \ell}{\text{Res}}\, f(s) = \frac{\sin\big(\tfrac{2 \pi k \ell}{N} \big)}{2\pi\ell} \, , \quad \ell\in\IZ_{\ne 0} \,.
\ee
By Cauchy's residue theorem, the integral in Eq.~\eqref{eq: intBorel0} can be evaluated by summing the residues at the poles of the integrand which lie outside $\gamma$. More precisely, we have that
\be \label{eq: intBorel20-proof-A}
\begin{aligned}
    \borel [\varphi_k](\zeta)&=-\sum_{\ell \in \IZ_{\ne 0}} \underset{s=\ell}{\text{Res}} \left[ f(s) g\left(\frac{\zeta}{s}\right)\frac{1}{s}\right] \\
    &={-}\sum_{\ell \in \IZ_{\ne 0}}\frac{\sin\big(\tfrac{2 \pi k \ell}{N}\big)}{2\pi\ell}\left(-\frac{1}{\zeta}+\frac{1}{2\ell}\coth\left(\frac{\zeta}{2\ell}\right)\right)\\
    &={-}\sum_{\ell =1}^{\infty}\frac{\sin\big(\tfrac{2 \pi k \ell}{N}\big)}{\pi\ell}\left(-\frac{1}{\zeta}+\frac{1}{2\ell}\coth\left(\frac{\zeta}{2\ell}\right)\right) \, ,
\end{aligned}
\ee
which is manifestly Gevrey-1 and simple resurgent. 
\end{proof}

\begin{proof}[Proof of Lemma~\ref{lemma:summability-f-kN}]
We can now compute the Laplace transform of $\borel \left[\psi_k\right](\zeta)$ in Eq.~\eqref{eq: intBorel2infty-proof-A} along the positive and negative real axes. When $\Re (y)>0$, we have that
\be
\begin{aligned}
    s_0(\psi_k)(y)&=\frac{1}{2\pi \ri}\int_0^\infty \re^{-\zeta/y} \left[\sum_{m=1}^{\infty}\frac{1}{m^2}\left(\frac{1}{1-\zeta_N^{-k}\re^{-\frac{\zeta}{m}}}+\frac{1}{1-\zeta_N^{-k}\re^{\frac{\zeta}{m}}}\right)\right] d\zeta\\
    &=\frac{1}{2\pi \ri}\int_0^\infty \left(\sum_{m=1}^{\infty}\frac{\re^{-mt/y}}{m} \right) \left(\frac{1 }{1-\zeta_N^{-k}\re^{-t}}+\frac{1}{1-\zeta_N^{-k}\re^{t}}\right) dt \, ,
\end{aligned}
\ee
where we have applied the change of variable $\zeta = m t$. Resumming the series in $m$, we find that
\be \label{eq: sum0-proof1}
\begin{aligned}
    s_0(\psi_k)(y)
    &=-\frac{1}{2\pi \ri}\int_0^\infty \log(1-\re^{-t/y})\left(\frac{1}{1-\zeta_N^{-k}\re^{-t}}+\frac{1}{1-\zeta_N^{-k}\re^{t}}\right) dt\\
    &=-\frac{1}{2\pi \ri}\int_{-\infty}^\infty \frac{\log(1-\re^{-t/y})}{1-\zeta_N^{-k}\re^{-t}} dt +\frac{1}{2\pi \ri}\int_0^{-\infty}\frac{\log(-\re^{t/y})}{1-\zeta_N^{-k}\re^{-t}} dt \, ,
\end{aligned}
\ee
where we have divided the integral into the sum of two contributions for simplicity.
Indeed, we observe that the second term in the RHS of Eq.~\eqref{eq: sum0-proof1} gives
\be \label{eq: sum0-proof1-p1}
\frac{1}{2\pi \ri}\int_0^{-\infty}\frac{\log(-\re^{t/y})}{1-\zeta_N^{-k}\re^{-t}} dt = -\frac{\mathrm{Li}_2(\zeta_N^k)}{2\pi \ri y}-\frac{1}{2}\log(1-\zeta_N^{k}) \, ,
\ee
while, after rewriting $\zeta_N^{-k}$ as $\zeta_N^{N-k}$, the first term gives
\begin{align}
    -\frac{1}{2\pi \ri}\int_{-\infty}^\infty \frac{\log(1-\re^{-t/y})}{1-\zeta_N^{N-k}\re^{-t}} dt &= \log\frac{(\zeta_N^k; \, \re^{2\pi \ri y})_\infty}{(\re^{-2\pi \ri (N-k)/(Ny)}; \, \re^{-2\pi \ri/y})_\infty} \, , \label{eq: sum0-proof1-p2a}
\end{align}
The last two formulae follow from~\cite[Thm.~A-29]{wheeler-thesis}.
Therefore, substituting Eqs.~\eqref{eq: sum0-proof1-p1} and~\eqref{eq: sum0-proof1-p2a} into Eq.~\eqref{eq: sum0-proof1}, we find that 
\begin{align}
s_0(\psi_k)(y)=& -\frac{\mathrm{Li}_2(\zeta_N^k)}{2\pi \ri y}-\frac{1}{2}\log(1-\zeta_N^{k}) +\log\frac{(\zeta_N^k; \, \re^{2\pi \ri y})_\infty}{(\re^{-2\pi \ri (N-k)/(Ny)}; \, \re^{-2\pi \ri/y})_\infty} \, . \label{eq: sum0-proof1-p3a}
\end{align}
Finally, using Eqs.~\eqref{eq:f_kN},~\eqref{eq: expansion-fkN}, and~\eqref{eq: sum0-proof1-p3a}, we obtain that
\be
\begin{aligned}
    s_0(\tilde{f}_{k,N})(y)&=\log\frac{(\zeta_N^k; \, \re^{2\pi \ri y})_\infty}{(\re^{-2\pi \ri (N-k)/(Ny)}; \, \re^{-2\pi \ri/y})_\infty} \\
    &=f_{k,N}(y)-g_{\underline{k},N}\big(-\tfrac{1}{Ny}\big) \, .
\end{aligned}
\ee

Similarly, for $\Re (y)<0$, the Borel--Laplace sum of $\psi_k(y)$ along the negative real axis is
\be
\begin{aligned}
    s_\pi(\psi_k)(y)&=\frac{1}{2\pi \ri}\int_0^{-\infty} \re^{-\zeta/y} \left[\sum_{m=1}^{\infty}\frac{1}{m^2}\left(\frac{1}{1-\zeta_N^{-k}\re^{-\frac{\zeta}{m}}}+\frac{1}{1-\zeta_N^{-k}\re^{\frac{\zeta}{m}}}\right) \right] d\zeta\\
    &=\frac{1}{2\pi \ri}\int_0^{-\infty} \left(\sum_{m=1}^{\infty}\frac{\re^{-mt/y}}{m} \right) \left(\frac{1 }{1-\zeta_N^{-k}\re^{-t}}+\frac{1}{1-\zeta_N^{-k}\re^{t}}\right) dt \, ,
\end{aligned}
\ee
where we have applied the change of variable $\zeta = m t$. Resumming the series in $m$, we find that
\be \label{eq: sumpi-proof1}
\begin{aligned}
    s_\pi(\psi_k)(y)
    &=-\frac{1}{2\pi \ri}\int_0^{-\infty} \log(1-\re^{-t/y})\left(\frac{1}{1-\zeta_N^{-k}\re^{-t}}+\frac{1}{1-\zeta_N^{-k}\re^{t}}\right) dt\\
    &=\frac{1}{2\pi \ri}\int_0^{\infty} \log(1-\re^{t/y})\left(\frac{1}{1-\zeta_N^{-k}\re^{t}}+\frac{1}{1-\zeta_N^{-k}\re^{-t}}\right) dt\\
    &=\frac{1}{2\pi \ri}\int_{-\infty}^\infty \frac{\log(1-\re^{t/y})}{1-\zeta_N^{-k}\re^{-t}} dt -\frac{1}{2\pi \ri}\int_0^{-\infty}\frac{\log(-\re^{-t/y})}{1-\zeta_N^{-k}\re^{-t}} dt \, ,
\end{aligned}
\ee
where we have again divided the integral into the sum of two contributions for simplicity. 
Let us now set $x=-1/y$, so that $\Re(x)>0$ and $\Im (x)>0$, then the second term in the RHS of Eq.~\eqref{eq: sumpi-proof1} becomes
\be \label{eq: sumpi-proof1-p1}
-\frac{1}{2\pi \ri}\int_0^{-\infty}\frac{\log(-\re^{t x})}{1-\zeta_N^{-k}\re^{-t}} dt = \frac{x} {2\pi \ri}\mathrm{Li}_2(\zeta_N^k) + \frac{1}{2}\log(1-\zeta_N^k) \, ,
\ee
while the first term gives
\begin{align}
    \frac{1}{2\pi \ri}\int_{-\infty}^\infty \frac{\log(1-\re^{-tx})}{1-\zeta_N^{N-k}\re^{-t}} dt &= -\log\frac{(\zeta_N^{k}; \, \re^{2\pi \ri /x})_\infty}{(\re^{-2\pi \ri (N-k)x/N}; \, \re^{-2\pi \ri x})_\infty} \, , \label{eq: sumpi-proof1-p2a}
\end{align}
as before. 
Thus, substituting Eqs.~\eqref{eq: sumpi-proof1-p1} and~\eqref{eq: sumpi-proof1-p2a}, after changing back $x$ to $-1/y$, into the formula for the Borel--Laplace sum in Eq.~\eqref{eq: sumpi-proof1}, we find that 
\be
s_\pi(\psi_k)(y)= -\frac{\mathrm{Li}_2(\zeta_N^k)}{2\pi \ri y}+\frac{1}{2}\log(1-\zeta_N^k) -\log\frac{(\zeta_N^k; \, \re^{-2\pi \ri y})_\infty}{(\re^{2\pi \ri (N-k)/(Ny)}; \, \re^{2\pi \ri/y})_\infty} \, . \label{eq: sumpi-proof1-p3a} 
\ee
Finally, putting together Eqs.~\eqref{eq:f_kN},~\eqref{eq: expansion-fkN}, and~\eqref{eq: sumpi-proof1-p3a} yields
\be \label{eq: sumpi-proof2}
\begin{aligned}
    s_\pi(\tilde{f}_{k,N})(y)&=\log(1-\zeta_N^k)-\log\frac{(\zeta_N^k; \, \re^{-2\pi \ri y})_\infty}{(\re^{2\pi \ri (N-k)/(Ny)}; \, \re^{2\pi \ri/y})_\infty} \\
    &=f_{k,N}(y)-g_{k,N} \big(-\tfrac{1}{Ny}\big) \, ,
\end{aligned}
\ee
where we have applied the identity
\be \label{eq: qdilog-rel}
(x; \, q^{-1})_\infty = (xq; \, q)_\infty^{-1} \, .
\ee
and used that
\be
\log(\zeta_N^k \, \re^{2\pi \ri y}; \, \re^{2\pi \ri y})_\infty = \log(\zeta_N^k; \, \re^{2\pi \ri y})_\infty - \log(1-\zeta_N^k) \, .
\ee
\end{proof}

\addcontentsline{toc}{section}{References}
\bibliographystyle{JHEP}
\linespread{0.4}
\bibliography{localP2-biblio-LMP}

\end{document}